\titlespacing*{\section}{6pt}{6pt}{6pt}
\titlespacing*{\subsection}{4pt}{4pt}{4pt}
\def\mathcolor#1#{\@mathcolor{#1}}
\def\@mathcolor#1#2#3{%
  \protect\leavevmode
  \begingroup
    \color#1{#2}#3%
  \endgroup
}
\g@addto@macro\normalsize{%
  \setlength\abovedisplayskip{3pt}
  \setlength\belowdisplayskip{3pt}
  \setlength\abovedisplayshortskip{3pt}
  \setlength\belowdisplayshortskip{3pt}
}
\newcommand{\wratio}{R_w}
\def\thm@space@setup{\thm@preskip=0pt
\thm@postskip=3pt}
\newtheorem{theorem}{Theorem}[section]
\newtheorem{lemma}[theorem]{Lemma}
\newtheorem{proposition}[theorem]{Proposition}
\newtheorem{fact}[theorem]{Fact}
\newtheorem{corollary}[theorem]{Corollary}
\newtheorem{definition}[theorem]{Definition}
\newenvironment{proof-sketch}{{\bf Proof Sketch:}}{\hfill\rule{2mm}{2mm}}
\newenvironment{proofof}[1]{\par
  \pushQED{\qed}%
  \normalfont \topsep3\p@\relax
  \trivlist
  \item[\hskip\labelsep
        \bfseries
    Proof of #1\@addpunct{.}]\ignorespaces
}{%
  \popQED\endtrivlist\@endpefalse
}
\renewcommand{\paragraph}{%
  \@startsection{paragraph}{4}%
  {\z@}{1ex \@plus 1ex \@minus .2ex}{-.5em}%
  {\normalfont\normalsize\bfseries}%
}
\newif\iffullpaper
\title{{A Fast Distributed Stateless Algorithm for $\alpha$-Fair Packing Problems}}
\author{
Jelena Mara\v{s}evi\'{c}\thanks{Supported in part by the NSF grant CNS-10-54856 and a Qualcomm Innovation Fellowship.}\\ 
Columbia University\\
{\tt jelena@ee.columbia.edu}\\
\and
Cliff Stein \thanks{Supported in part by the NSF grants CCF-1349602 and CCF-1421161.}\\
{Columbia University}\\
{\tt cliff@ieor.columbia.edu}\\
\and
Gil Zussman\thanks{Supported in part by the NSF grant CNS-10-54856 and the People Programme (Marie Curie Actions) of the European Union's Seventh Framework Programme (FP7/2007-2013) under REA grant agreement n${^{\text{o}}} $[PIIF-GA-2013-629740].11.}\\
       {Columbia University}\\
       {\tt gil@ee.columbia.edu}
} 
\date{}
\newcommand{\littlesum}{\mathop{\textstyle\sum}}
\newcommand{\red}[1]{\textcolor{red}{#1}}
\begin{document}
\maketitle
\begin{abstract}
Over the past two decades, fair resource allocation problems have received considerable attention in a variety of application areas. 
However, \emph{little progress has been made in the design of distributed algorithms with convergence guarantees for general and commonly used $\alpha$-fair allocations}.  In this paper, we study weighted $\alpha$-fair packing problems, that is, the problems of maximizing the objective functions (i) $\sum_j w_j x_j^{1-\alpha}/(1-\alpha)$ when $\alpha > 0$, $\alpha \neq 1$ and (ii) $\sum_j w_j \ln x_j$ when $\alpha = 1$, over linear constraints $Ax \leq b$, $x\geq 0$, where $w_j$ are positive weights and $A$ and $b$ are non-negative. We consider the distributed computation model that was used for packing linear programs and network utility maximization problems. Under this model, \emph{we provide a distributed algorithm for general $\alpha$} that converges to an $\varepsilon-$approximate solution in time (number of distributed iterations) that has an inverse polynomial dependence on the approximation parameter $\varepsilon$ and poly-logarithmic dependence on the problem size. \emph{This is the first distributed algorithm for weighted $\alpha-$fair packing with poly-logarithmic convergence in the input size.} The algorithm uses simple local update rules and is stateless (namely,  it allows asynchronous updates, is self-stabilizing, and allows incremental and local adjustments). We also obtain a number of structural results that characterize $\alpha-$fair allocations as the value of $\alpha$ is varied. These results deepen our understanding of fairness guarantees in $\alpha-$fair packing allocations, and also provide insight into the behavior of $\alpha-$fair allocations in the asymptotic cases $\alpha\rightarrow 0$, $\alpha \rightarrow 1$, and $\alpha \rightarrow \infty$.  
\end{abstract}

\thispagestyle{empty}
\newpage


\section{Introduction}\label{section:intro}
\setcounter{page}{1}
\pagestyle{plain}

Over the past two decades, \emph{fair resource allocation} problems have received considerable attention in many  application areas, including  Internet congestion control \cite{low2002internet}, rate control in software defined networks \cite{mccormick2014real}, scheduling in wireless networks \cite{yi2008stochastic}, {multi-resource allocation and scheduling in datacenters} \cite{bonald2015multi, ghodsi2011dominant, joe2013multiresource, Im2014competitive}, and a variety of applications in operations research,  economics, and game theory
\cite{bertsimas2012efficiency, jain2007eisenberg}. In most of these applications, positive linear (packing) constraints arise as a natural model of the allowable allocations. 

In this paper, we focus on the problem of finding an {\em $\alpha$-fair} vector on the set determined by packing constraints $Ax\leq \mathds{1}, x\geq 0$ where all $A_{ij} \geq 0$.\footnote{Although in the network congestion control literature the constraint matrix $A$ is commonly assumed to be a 0-1 matrix \cite{kelly1998rate, kelly2014stochastic, yi2008stochastic, paganini2005congestion, MoWalrand2000, low2002internet}, important applications (such as, e.g., multi-resource allocation in datacenters) are modeled by a more general constraint matrix $A$ with arbitrary non-negative elements \cite{bonald2015multi, ghodsi2011dominant, joe2013multiresource, Im2014competitive}.} We refer to this problem as $\alpha-$fair packing. 
For a vector of positive weights $w$ and $\alpha\geq 0$, an allocation vector $x^*$ of size $n$ is weighted $\alpha$-fair
, if for any alternative feasible vector $x$: $\sum_j w_j \frac{x_j-x_j^*}{(x_j^*)^\alpha}\leq 0$ \cite{MoWalrand2000}. For a compact and convex feasible region, $x^*$ can be equivalently defined as a vector that solves the problem of maximizing $p_{\alpha}(x) = \sum_j w_j f_{\alpha}(x_j)$ \cite{MoWalrand2000}, where: 
\begin{equation}
f_{\alpha}(x_j) = 
\begin{cases} \ln(x_j), & \mbox{if } \alpha=1 \\ \frac{x_j^{1-\alpha}}{1-\alpha}, & \mbox{if } \alpha\neq 1 \end{cases} \label{eq:f-alpha}.
\end{equation}
$\alpha$-fairness provides a trade-off between efficiency (sum of allocated resources) and fairness (minimum allocated resource) as a function of $\alpha$: the higher the $\alpha$, the better the fairness guarantees and the lower the efficiency \cite{atkinson1970measurement,bertsimas2012efficiency,lan2010axiomatic}.  Important special cases are 
proportional fairness ($\alpha = 1$) and max-min fairness ($\alpha \rightarrow \infty$).  When $\alpha = 0$, we have the ``unfair" case of  linear optimization.

Distributed algorithms for $\alpha-$fair packing are of particular interest, as many applications are inherently distributed (such as, e.g., network congestion control), while in others parallelization is highly desirable due to the large problem size (as in, e.g., resource allocation in datacenters). We adopt the model of distributed computation commonly used in the design of 
packing linear programming (LP) algorithms \cite{AwerbuchKhandekar2009, d-allen2014using, d-bartal1997global, d-kuhn2006price, d-luby1993parallel, d-papadimitriou1993linear} and which generalizes the model from network congestion control \cite{kelly2014stochastic}.  In this model, an agent $j$ controls the variable $x_j$ and has information about: (i) the $j^\text{th}$ column of the $m\times n$ constraint matrix $A$, (ii) the weight $w_j$, (iii) upper bounds on the global problem parameters $m, n, w_{\max}$, and $A_{\max}$, where $w_{\max} = \max_j w_j$, and $A_{\max} = \max_{ij} A_{ij}$, and (iv) in each round, the relative slack of each constraint $i$ in which $x_j$ takes part. 

Distributed algorithms for $\alpha-$fair resource allocations have been most widely studied in the network congestion control literature, using a control-theoretic approach \cite{kelly1998rate, kelly2014stochastic, yi2008stochastic, paganini2005congestion, MoWalrand2000, low2002internet}. Such an approach yields continuous-time algorithms that converge after ``finite'' time; however, the convergence time of these algorithms as a function of the input size is poorly understood. 
Some  other distributed pseudo-polynomial-time approximation algorithms that can address $\alpha$-fair packing are described in Table~\ref{table:prev}. These algorithms all have convergence times that are at least linear in the parameters describing the problem.  

No previous work has given truly fast (poly-log iterations) distributed algorithms for the general case of $\alpha$-fair packing.  Only for the unfair $\alpha=0$ case (packing LPs), are such algorithms  known \cite{AwerbuchKhandekar2009, d-luby1993parallel, d-bartal1997global, dc-young2001sequential, d-kuhn2006price, d-allen2014using}.

\paragraph{Our Results.} 
\emph{We provide the first  efficient, distributed, {and stateless} algorithm for 
weighted $\alpha$-fair packing}, namely, for the problem 
$\max\{p_\alpha(x): Ax\leq \mathds{1}, x\geq 0\}$,  
{where distributed agents update the values of $x_j$'s asynchronously and react only to the current state of the constraints}. We assume that all non-zero entries $A_{ij}$ of matrix $A$ satisfy $A_{ij}\geq 1$. Considering such a normalized form of the problem is without loss of generality (see Appendix \iffullpaper \ref{appendix:scaling}\else {A} in the full version of the paper\fi).  

The approximation provided by the algorithm, to which we refer as the  $\varepsilon$-approximation, is (i) $(1+\varepsilon)$-multiplicative for $\alpha \neq 1$, and (ii) $W\varepsilon$-additive\footnote{Note that $W$ cannot be avoided here, as additive approximation is not invariant to the scaling of the objective.} for $\alpha = 1$, where $W = \sum_j w_j$. The main results are summarized in the following theorem, where, to unify the statement of the results, we treat $\alpha$ as a constant that is either equal to 1 or bounded away from 0 and 1, and we also loosen the bound in terms of $\varepsilon^{-1}, n, m, \wratio = \max_{j, k} {w_j}/{w_k},$ and $A_{\max}$. 
For a more detailed statement, see Theorems \ref{thm:convergence-alpha<1} -- \ref{thm:convergence-alpha>1}. 
\begin{theorem}
(Main Result
) For a given weighted $\alpha$-fair packing problem $\max\{\sum_j w_j f_\alpha(x_j): Ax\leq \mathds{1}, x\geq 0\}$, where $f_{\alpha}(x_j)$ is given by (\ref{eq:f-alpha}), there exists a stateless and distributed algorithm (\textsc{$\alpha$-FairPSolver}) that computes an $\varepsilon$-approximate solution in ${O}(\varepsilon^{-5}\ln^4(\wratio nmA_{\max}\varepsilon^{-1}))$ rounds.   
\end{theorem}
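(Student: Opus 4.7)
The plan is to reduce the theorem to a case analysis on $\alpha$, proving separate but parallel results for $\alpha\in(0,1)$, $\alpha=1$, and $\alpha>1$ (which will be the three theorems \ref{thm:convergence-alpha<1}--\ref{thm:convergence-alpha>1} teased in the excerpt). In each case I would first characterize optimal solutions via KKT: at the optimum $w_j(x_j^*)^{-\alpha}=\sum_i A_{ij}\lambda_i^*$ whenever $x_j^*>0$, where $\lambda_i^*$ are the dual prices. This suggests an algorithm in which each agent $j$ compares its local gradient $w_j x_j^{-\alpha}$ against a locally computable aggregate price $\mu_j(x)=\sum_i A_{ij}\,\phi_i(b_i-\sum_k A_{ik}x_k)$, where $\phi_i$ is an exponential-penalty-style function of the slack (as used in Awerbuch--Khandekar, Young, and Allen-Zhu--Orecchia for packing LPs). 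Agent $j$ then applies a multiplicative update that increases $x_j$ when the gradient dominates the price and decreases it otherwise. Because $\mu_j$ depends only on the current iterate, the resulting rule is stateless.

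Next, I would use the normalization $A_{ij}\geq 1$ (justified in Appendix A) together with $x\leq 1/A_{\min}$ and the blow-up of $w_j x_j^{-\alpha}$ as $x_j\to 0$ to confine the iterates to a range whose logarithm is $O(\ln(\wratio nm A_{\max}\varepsilon^{-1}))$. From any sufficiently small starting point I would run the distributed updates in synchronous rounds; statelessness then lifts the analysis to arbitrary asynchrony. The progress argument would use a potential $\Phi(x)=\sum_i e^{\kappa(\sum_j A_{ij}x_j-1)/\varepsilon}-\gamma\, p_\alpha(x)$ for carefully chosen $\kappa,\gamma$: it simultaneously penalizes (approximate) infeasibility and rewards objective improvement. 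A per-round drift argument would show that each round either raises $p_\alpha$ by an amount proportional to the current suboptimality gap or else drives $x$ back toward the feasible region, and one would compare the change in $\Phi$ to the dual gap $\sum_j x_j(w_j x_j^{-\alpha}-\mu_j(x))$ that the KKT characterization suggests.

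To extract the stated iteration count I would partition execution into epochs by the current scale of $p_\alpha(x)$. Inside each epoch, a potential-decrease argument gives convergence to the epoch's $\varepsilon$-approximation in $O(\varepsilon^{-c}\ln(\cdot))$ rounds; the geometric range of feasible $p_\alpha$ values yields only $O(\ln(\cdot))$ epochs, and the remaining logarithmic factors come from the penalty potential and from discretizing the approximation target. Combining these gives the $O(\varepsilon^{-5}\ln^4(\wratio nm A_{\max}\varepsilon^{-1}))$ bound. The $\alpha=1$ case requires tracking $\ln x_j$ rather than $x_j^{1-\alpha}/(1-\alpha)$, which automatically turns the multiplicative guarantee into a $W\varepsilon$-additive one, while $\alpha>1$ needs extra care because $x_j^{-\alpha}$ blows up near zero and therefore the ``safe'' regime of iterates has to be entered explicitly via a warm-up phase.

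The main obstacle I anticipate is controlling the interaction between the nonlinearity of $w_j x_j^{-\alpha}$ and the exponential prices under stateless updates: packing-LP analyses can exploit the linearity of the objective to match ``price $=$ gradient'' cleanly, but here a too-aggressive multiplicative step can overshoot and cause oscillation because the gradient changes sharply in $x_j$, while a too-gentle one weakens the iteration bound. Picking the step size as a function of $\varepsilon$ and $\alpha$ so that the potential decreases monotonically in expectation each round, and showing that the induced ``effective'' dual variables stay close to the KKT multipliers, is the crux of the argument; this tension is also what forces the $\varepsilon^{-5}$ dependence rather than the $\varepsilon^{-2}$ typical for linear packing.
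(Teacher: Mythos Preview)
Your high-level scaffolding matches the paper closely: the three-way case split on $\alpha$, the exponential penalty/dual prices $y_i=C\,e^{\kappa(\sum_j A_{ij}x_j-1)}$, multiplicative updates triggered by comparing $w_j x_j^{-\alpha}$ with $\sum_i A_{ij}y_i$, and a potential combining $p_\alpha(x)$ with $\frac{1}{\kappa}\sum_i y_i$ are exactly what the paper uses. You also correctly anticipate that the nonlinearity of $w_jx_j^{-\alpha}$ is where the analysis departs from the packing-LP case.

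However, there is a real gap in the certification step. Your plan says that ``inside each epoch, a potential-decrease argument gives convergence to the epoch's $\varepsilon$-approximation,'' but you never say \emph{how} you certify $\varepsilon$-optimality once the potential stops moving. The paper stresses that the Awerbuch--Khandekar stationary-interval argument breaks here: for $\alpha\geq 1$ the objective is negative, so ``not $\varepsilon$-approximate'' gives no usable inequality, and for $\alpha<1$ the conclusion one can extract concerns $(x_j^*)^\alpha\sum_i y_i A_{ij}$ rather than $(x_j)^\alpha\sum_i y_i A_{ij}$, which does not force any variable to keep increasing. The paper's fix is a new notion of \emph{stationary round} (defined differently for each of the three $\alpha$-regimes) together with an explicit bound on the Lagrangian duality gap $G_\alpha(x,y(x))$ in any such round; this is the step your proposal is missing and your epoch-by-scale argument does not supply it.

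A second, smaller omission is the lower-threshold mechanism. The paper clamps $x_j\geq\delta_j$ with $\delta_j$ chosen so that $C=w_j/\delta_j^\alpha$; this is what keeps the potential bounded and guarantees feasibility after the warm-up, but it creates a new difficulty---variables at the floor decrease by less than the nominal factor $(1-\beta_2)$---and the paper needs separate lemmas (your analogues of Lemmas~\ref{lemma:small-x-tight-yi}, \ref{lemma:alpha>1-mul-increase-over-S-}, etc.) showing that the potential increase from such ``small'' variables is dominated by that from variables decreasing multiplicatively. Without this, the per-round drift bound you sketch does not go through.
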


\emph{To the best of our knowledge, for any constant approximation parameter $\varepsilon$, our algorithm is the first distributed algorithm for weighted $\alpha$-fair packing problems with a poly-logarithmic convergence time.} 

The algorithm is \emph{stateless} according to the definition given by  Awerbuch and Khandekar \cite{AwerbuchKhandekar2009, awerbuch2007greedy}: it starts from any initial state, the agents update the variables $x_j$ in a cooperative but uncoordinated manner, reacting only to the current state of the constraints that they observe, and without access to a global clock. Statelessness implies various desirable properties of a distributed algorithm, such as: asynchronous updates, self-stabilization, and incremental and local adjustments \cite{AwerbuchKhandekar2009, awerbuch2007greedy}.  

We also obtain the following structural results that characterize $\alpha-$fair packing allocations as a function of the value of 
$\alpha$: 
\begin{itemize}[topsep=5pt, leftmargin=10pt]
\itemsep0em 
\item We derive a lower bound on the minimum coordinate of the $\alpha-$fair packing allocation as a function of $\alpha$ and the problem parameters (Lemma \ref{lemma:lower-bound}). This bound deepens our understanding of how the fairness (a minimum allocated value) changes with $\alpha$. 
\item We prove that for $\alpha \leq \frac{\varepsilon/4}{\ln(nA_{\max}/\varepsilon)}$, $\alpha-$fair packing can be $O(\varepsilon)-$approximated by any $\varepsilon-$approximation packing LP solver (Lemma \ref{lemma:LP-close-to-small-alpha-fair}). This result allows us to focus on the $\alpha > \frac{\varepsilon/4}{\ln(nA_{\max}/\varepsilon)}$ cases.
\item We show that for $|\alpha-1| = O({\varepsilon^2}/{\ln^2(\varepsilon^{-1}\wratio mnA_{\max})})$, $\alpha-$fair allocation is $\varepsilon-$approximated by a $1-$fair allocation returned by our algorithm (Lemmas \ref{lemma:alpha-close-to-1-below} and \ref{lemma:alpha-close-to-1-above}).
\item We show that for $\alpha \geq \ln(\wratio n A_{\max})/\varepsilon$, the $\alpha-$fair packing allocation $x^*$ and the max-min fair allocation $z^*$ are $\varepsilon$-close to each other: $(1-\varepsilon)z^*\leq x^* \leq (1+\varepsilon)z^*$ element-wise. This result is especially interesting as {(i)} max-min fair packing is not a convex problem, but rather a multi-objective problem 
(see, e.g., \cite{kleinberg1999fairness, radunovic2007unified}) {and (ii) the result yields the first convex relaxation of max-min fair allocation problems with a $1\pm \varepsilon$ gap}.
\end{itemize}

We now 
overview some of the main technical details of \textsc{$\alpha$-FairPSolver}.  In doing so, we 
point out connections to the two main bodies of previous work, from packing LPs\cite{AwerbuchKhandekar2009} and network congestion control \cite{kelly1998rate}. We also outline the new algorithmic ideas and proofs that were needed to obtain the results.  

\paragraph{The algorithm and KKT conditions.}
The algorithm  maintains primal and dual feasible solutions and updates each primal variable $x_j$ whenever a Karush-Kuhn-Tucker (KKT) condition ${x_j}^\alpha \sum_{i} y_iA_{ij} = w_j$ is not \emph{approximately} satisfied. {In previous work, relevant update rules include}: \cite{kelly1998rate} (for $\alpha=1$), where the update of each variable $x_j$ is proportional to the difference $w_j - {x_j}^{\alpha}\sum_i y_i A_{ij}$, and \cite{AwerbuchKhandekar2009} (for $\alpha=0$), where each $x_j$ is updated by a multiplicative factor $1\pm \beta$, whenever ${x_j}^\alpha \sum_{i} y_iA_{ij} = w_j$ is not {approximately} satisfied. {For our techniques (addressing a general $\alpha$) such rules do not suffice and we introduce the following modifications}: (i) in the $\alpha<1$ case we use multiplicative updates by factors $(1+\beta_1)$ and $(1-\beta_2)$,  where $\beta_1 \neq \beta_2$ and (ii) we use additional threshold values $\delta_j$ to make sure that $x_j$'s do not become too small.  These thresholds guarantee that we maintain a feasible solution, but they significantly complicate  (compared to the linear case) the argument that each step makes a significant progress.

\paragraph{Dual Variables.} 
In \textsc{$\alpha$-FairPSolver}, a dual variable $y_i$ 
is an exponential function of the $i^\text{th}$ constraint's relative slack: $y_i(x) = C\cdot e^{\kappa(\sum_jA_{ij}x_j - 1)}$, where $C$ and $\kappa$ are functions of global input parameters $\alpha, w_{\max}, n, m,$ and $A_{\max}$. Packing LP algorithms \cite{AwerbuchKhandekar2009, d-allen2014using, c-plotkin1995fast, d-bartal1997global, c-garg2007faster, c-fleischer2000approximating, c-koufogiannakis2007beating} use similar dual variables {with $C=1$}. Our work requires choosing $C$ to be a function of $\alpha, w_{\max}, n, m,A_{\max}$ rather than a constant.

\paragraph{Convergence Argument.}
The convergence analysis of \textsc{$\alpha$-FairPSolver} relies on the appropriately chosen concave potential function that is bounded below and above for $x_j\in[\delta_j, 1]$, $\forall j$, and that increases with every primal update. The algorithm can also be interpreted as a gradient ascent on a regularized objective function (the potential function), using a generalized entropy regularizer (see \cite{d-allen2014using, c-allen2015nearly}). A  similar potential function was used in many works on packing and covering linear programs, such as, e.g., in \cite{AwerbuchKhandekar2009} and (implicitly) in \cite{dc-young2001sequential}. The Lyapunov function from \cite{kelly1998rate} is also equivalent to this potential function when $y_i(x) = C\cdot e^{\kappa(\sum_jA_{ij}x_j - 1)}$, $\forall i$.  As in these works, the 
main idea in the analysis is to show that whenever a solution $x$ is not ``close" to the optimal one, the potential function increases substantially.
However, our work requires several new ideas in the convergence proofs, the most notable being 
{\em stationary rounds}. A stationary round is roughly a time when the variables $x_j$ do not change much and are close to the optimum. Poly-logarithmic convergence time is then obtained by showing that: (i) there is at most a poly-logarithmic number of non-stationary rounds where the potential function increases additively and the increase is ``large enough'', and (ii) in all the remaining non-stationary rounds, the potential function increases multiplicatively. 
Our use of stationary rounds is new, as is the use of Lagrangian duality and all the arguments that follow.  

\begin{table}[t]
\begin{center}
\small 
\renewcommand{\arraystretch}{1.1}
  \begin{tabular}{| c | c | c | c |}
    \hline
    \textbf{Paper} & \textbf{Number of Distributed Iterations}\footnote{The convergence times in \cite{cheung2013tatonnement, Beck2014Gradient, mosk2010fully} are not stated only in terms of the input parameters, but also in terms of intermediary parameters that depend on the problem structure. Stated here are our lowest estimates of the worst-case convergence times.
    } & \textbf{Statelessness} & \textbf{Notes} \\ \hline
     \cite{cheung2013tatonnement} & $\Omega({\varepsilon}^{-1}{nA_{\max}})$ &  Semi-stateless\footnote{A distributed algorithm is semi-stateless, if all the updates depend only on the current state of the constraints, the updates are performed in a cooperative but non-coordinated manner, and \emph{the updates need to be synchronous} \cite{d-allen2014using}.} & Only for $\alpha = 1$\\ \hline
     \cite{Beck2014Gradient} & $\Omega({\varepsilon}^{-1}{mn{A_{\max}}^2})$ & Not stateless &\\
    \hline
    \cite{mosk2010fully} & poly($\varepsilon^{-1}, m, n, A_{\max}$) & Semi-stateless & \\
    \hline
    \red{[this work]} & \mathcolor{red}{$O({\varepsilon^{-5}}{\ln^4(R_w mn A_{\max}/\varepsilon)})$} & \red{Stateless} &\\
    \hline
  \end{tabular}
\end{center}
\caption{Comparison among distributed algorithms for $\alpha-$fair packing.}\vspace{-15pt}
\label{table:prev}
\end{table}

\paragraph{Relationship to Previous Work.}

Very little progress has been made in the design of efficient distributed algorithms for the general class of $\alpha$-fair objectives. 
Classical work on distributed rate control algorithms in the networking literature uses a control-theoretic approach to optimize $\alpha$-fair objectives. While such an approach has been extensively studied and applied to various network settings \cite{kelly1998rate, kelly2014stochastic, yi2008stochastic, paganini2005congestion, MoWalrand2000, low2002internet}, it {has never been proven to have polynomial} convergence time {(and it is unclear whether such a result can be established)}.

Since $\alpha$-fair objectives are concave, their optimization over a region determined by linear constraints is solvable in polynomial time in a centralized setting through convex programming (see, e.g., \cite{boyd2009convex, nesterov2004introductory}). Distributed gradient methods for network utility maximization problems, such as e.g., \cite{Beck2014Gradient, mosk2010fully} summarized in Table \ref{table:prev}, can be employed to address the problem of $\alpha$-fair packing. However, the convergence times of these algorithms depend on the dual gradient's Lipschitz constant to produce good approximations. While \cite{Beck2014Gradient, mosk2010fully} provide a better dependence on the accuracy $\varepsilon$ than our work, the dependence on the dual gradient's Lipschitz constant, in general, leads to at least linear convergence time as a function of $n$, $m$, and $A_{\max}$. 

As mentioned before, some special cases have been addressed, particularly for max-min fairness ($\alpha \rightarrow \infty$) and for  packing LPs ($\alpha = 0$). Relevant work on max-min fairness includes \cite{Bertsekas:1987:DN:12517, jaffe1981bottleneck, kumar2000fairness, kleinberg1999fairness, megiddo1974optimal, marasevic2014max,charny1995congestion}, but none of these works have poly-logarithmic convergence time.  
There is a long history of interesting work on packing LPs in both centralized and distributed settings, e.g., \cite{c-allen2015nearly, c-plotkin1995fast, c-koufogiannakis2007beating, c-garg2007faster, AwerbuchKhandekar2009, d-luby1993parallel, d-bartal1997global, dc-young2001sequential, d-kuhn2006price, d-allen2014using, garg2002line}.  Only a few of these works are stateless, including 
 the packing LP algorithm of Awerbuch and Khandekar \cite{AwerbuchKhandekar2009}, flow control algorithm of Garg and Young \cite{garg2002line}, and the algorithm of Awerbuch, Azar, and Khandekar \cite{awerbuch2008fast} for the special case of load balancing in bipartite graphs. Additionally, the packing LP algorithm of Allen-Zhu and Orecchia \cite{d-allen2014using} is ``semi-stateless''; the lacking property to make it stateless is that it requires synchronous updates. 
The $\alpha=1$ case of $\alpha$-fair packing problems is equivalent to the problem of finding an equilibrium allocation in Eisenberg-Gale markets with Leontief utilities (see \cite{cheung2013tatonnement}). Similar to the aforementioned algorithms, the algorithm from \cite{cheung2013tatonnement} converges in time linear in $\varepsilon^{-1}$ but also (at least) linear in the input size (see Table 1).

{In terms of the techniques, closest to our work is the work by Awerbuch and Khandekar \cite{AwerbuchKhandekar2009} and we now highlight the differences compared to this work. \emph{Some preliminaries} of the convergence proof follow closely those from \cite{AwerbuchKhandekar2009}: mainly, Lemmas \ref{lemma:feasibility}, \ref{lemma:approx-comp-slack}, and \ref{lemma:potential-increase} use similar arguments as corresponding lemmas in \cite{AwerbuchKhandekar2009}. Some parts of the lemmas lower-bounding the potential increase in $\alpha<1$, $\alpha = 1$, and $\alpha>1$ cases (Lemmas \ref{lemma:potential-increase-alpha<1}, \ref{lemma:potential-increase-proportional}, and \ref{lemma:potential-increase-alpha>1}) use similar arguments as \cite{AwerbuchKhandekar2009}, however, even those parts require additional results due to the existence of lower thresholds $\delta_j$.}

{The similarity ends here, as the main convergence arguments are different than those used in \cite{AwerbuchKhandekar2009}. In particular, the convergence argument from \cite{AwerbuchKhandekar2009} relying on stationary intervals cannot be applied in the setting of $\alpha-$fair objectives. More details about why this argument cannot be applied and where it fails are provided in Section \ref{section:convergence}. As already mentioned, we rely on the appropriately chosen definition of a stationary round. To show that in a stationary round a solution $x$ is $\varepsilon-$approximate, we use Lagrangian duality and bound the duality gap through an intricate case analysis. We remark that such an argument could not have been used in \cite{AwerbuchKhandekar2009}, since in the packing LP case there is no guarantee that the solution $y$ is dual-feasible.
}

\paragraph{Organization of the Paper.} The rest of the paper is organized as follows. Section \ref{section:prelims} provides the background. Section \ref{section:algorithm} describes the algorithm, and Section \ref{section:convergence} \iffullpaper provides \else illustrates \fi the convergence analysis and structural results. \iffullpaper Section \ref{section:conclusion} concludes the paper. \else We conclude in Section \ref{section:conclusion}. {The omitted technical details can be found in the full version of the paper.}\fi

\section{Preliminaries}\label{section:prelims}

\paragraph{Weighted $\alpha$-Fair Packing.} 
Consider the following optimization problem with positive linear (packing) constraints:
$
{(Q_\alpha)}=\max \{p_\alpha(x)\equiv\sum_{j=1}^n w_jf_{\alpha}(x_j): Ax \leq b, x\geq 0\},
$
where $f_{\alpha}(x_j)$ is given by (\ref{eq:f-alpha}), $x=(x_1,...,x_n)$ is the vector of variables, $A$ is an $m\times n$ matrix with non-negative elements, and $b=(b_1,...,b_m)$ is a vector with strictly positive\footnote{If, for some $i$, $b_i=0$, then trivially $x_j = 0$, for all $j$ such that $A_{ij}\neq 0$.} elements. We refer to 
$(Q_\alpha)$ as the weighted $\alpha$-fair packing. 
The following definition and lemma introduced by Mo and Walrand \cite{MoWalrand2000} characterize weighted $\alpha$-fair allocations. In the rest of the paper, we will use the terms weighted $\alpha$-fair and $\alpha$-fair interchangeably.

\begin{definition}
\emph{\cite{MoWalrand2000}} Let $w=(w_1, ..., w_n)$ be a vector with positive entries and $\alpha>0$. A vector  $x^*=(x_1^*,...,x_n^*)$ is weighted $\alpha$-fair, if it is feasible and for any other feasible vector $x$: 
$
\sum_{j=1}^n w_j\frac{x_j-x_j^*}{{x_j^*}^\alpha}\leq 0.
$
\end{definition}
\begin{lemma}
\emph{\cite{MoWalrand2000}} A vector $x^*$ solves $(Q_\alpha)$ for functions $f_{\alpha}(x_j^*)$ if and only if it is weighted $\alpha$-fair.
\end{lemma}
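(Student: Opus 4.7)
The plan is to recognize this as the standard first-order optimality characterization for the smooth concave maximization $\max\{p_\alpha(x):x\in\mathcal{F}\}$, where $\mathcal{F}=\{x\geq 0:Ax\leq b\}$ is closed and convex. The key observation is that for $\alpha>0$, both branches of $f_\alpha$ in (\ref{eq:f-alpha}) give $f'_\alpha(t)=t^{-\alpha}$ on $(0,\infty)$, so $\nabla p_\alpha(x)=(w_jx_j^{-\alpha})_j$ and the $\alpha$-fairness inequality $\sum_j w_j(x_j-x_j^*)/(x_j^*)^\alpha\leq 0$ is exactly the variational inequality $\nabla p_\alpha(x^*)^\top (x-x^*)\leq 0$ evaluated at any feasible $x$.

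First I would establish that the gradient at $x^*$ is well-defined, i.e., $x_j^*>0$ for every $j$ (so the statement itself is meaningful). For $\alpha\geq 1$ this is immediate, since $f_\alpha(0)=-\infty$ and $\mathcal{F}$ contains points with all coordinates strictly positive (because $b>0$). For $0<\alpha<1$, I would argue by contradiction: if $x_j^*=0$ then moving along the feasible direction $te_j$ for small $t>0$ (permissible because $b>0$, so $te_j$ keeps all constraints strictly satisfied for small $t$) changes $p_\alpha$ by $w_j t^{1-\alpha}/(1-\alpha)+O(t)$, whose one-sided derivative at $t=0$ is $+\infty$, contradicting optimality.

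Next, for the ``only if'' direction, given an optimal $x^*$ and an arbitrary feasible $x$, the segment $x(t):=x^*+t(x-x^*)$ lies in $\mathcal{F}$ for $t\in[0,1]$ by convexity, and $\phi(t):=p_\alpha(x(t))$ is concave with a maximum at $t=0$, so $\phi'(0^+)=\nabla p_\alpha(x^*)^\top(x-x^*)\leq 0$, which is precisely the $\alpha$-fairness condition. For the ``if'' direction I would invoke the standard first-order concavity inequality: since $p_\alpha$ is concave and differentiable at $x^*$,
\begin{equation*}
p_\alpha(x)\;\leq\; p_\alpha(x^*)+\nabla p_\alpha(x^*)^\top(x-x^*)\;=\;p_\alpha(x^*)+\sum_{j=1}^n w_j\,\frac{x_j-x_j^*}{(x_j^*)^\alpha}\;\leq\; p_\alpha(x^*),
\end{equation*}
where the last inequality is $\alpha$-fairness. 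Hence $x^*$ attains the maximum of $p_\alpha$ over $\mathcal{F}$.

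The main obstacle is the boundary analysis in Step~1: ensuring $x_j^*>0$ so that the ratio in the $\alpha$-fair definition is meaningful. Everything else is a textbook application of concavity plus the feasibility of infinitesimal directions inside the polytope $\mathcal{F}$; the structure of $A$ and $b$ plays no role beyond guaranteeing that $\mathcal{F}$ has a strictly positive interior point and is convex.
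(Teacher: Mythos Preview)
The paper does not supply its own proof of this lemma: it is stated as a known characterization and attributed to \cite{MoWalrand2000}. Your argument is exactly the standard one behind that result---first-order optimality for a concave maximization over a convex set, once you have verified $x_j^*>0$---so there is nothing substantive to compare.

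One small gap worth tightening: in your positivity argument for $0<\alpha<1$, the direction $te_j$ from $x^*$ need not be feasible. If some constraint $i$ with $A_{ij}>0$ is tight at $x^*$, then $x^*+te_j$ violates it for every $t>0$; the justification ``$b>0$, so $te_j$ keeps all constraints strictly satisfied'' only works from the origin, not from an arbitrary boundary point $x^*$. The fix is easy: take any strictly positive feasible point $z$ (which exists because $b>0$) and move along the segment $x(t)=(1-t)x^*+tz$. Then $x_j(t)=tz_j$, the $j$-th summand contributes $w_j(tz_j)^{1-\alpha}/(1-\alpha)$, and its $t$-derivative blows up as $t\to 0^+$ while all other summands have bounded derivatives, yielding the same contradiction. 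With this correction your proof is complete.
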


Notice in $(Q_\alpha)$ that since $b_i > 0$, $\forall i$, and the partial derivative of the objective with respect to any of the variables $x_j$ goes to $\infty$ as $x_j\rightarrow 0$, the optimal solution must lie in the positive orthant. Moreover, since the objective is strictly concave and maximized over a convex region, the optimal solution is unique and $(Q_\alpha)$ satisfies strong duality (see, e.g., \cite{boyd2009convex}). The same observations are true for the scaled version of the problem denoted by $(P_\alpha)$ and introduced in the following subsection.

\paragraph{Normalized Form.}
We consider the weighted $\alpha$-fair packing problem in the normalized form:
\begin{align*}
(P_\alpha) = {\max} \big\{p_\alpha(x): Ax \leq \mathds{1}, x\geq 0\big\},
\end{align*}
where
$p_\alpha(x) = \sum_{j=1}^n w_jf_{\alpha}(x_j)$, $f_\alpha$ is defined by (\ref{eq:f-alpha}), $w=(w_1,...,w_n)$ is a vector of positive weights, $x=(x_1,...,x_n)$ is the vector of variables, $A$ is an $m\times n$ matrix with non-negative entries, and $\mathds{1}$ is a size-$m$ vector of 1's. We let $A_{\max}$ denote the maximum element of the constraint matrix $A$, and assume that every entry $A_{ij}$ of $A$ is non-negative, and moreover, that $A_{ij}\geq 1$ whenever $A_{ij}\neq 0$. The maximum weight is denoted by $w_{\max}$ and the minimum weight is denoted by $w_{\min}$. The sum of the weights is denoted by $W$ and the ratio $\frac{w_{\max}}{w_{\min}}$ by $\wratio$.  
We remark that considering problem $(Q_\alpha)$ in the normalized form $(P_\alpha)$ is without loss of generality:  any problem $(Q_\alpha)$ can be scaled to this form by (i) dividing both sides of each inequality $i$ by $b_i$ and (ii) working with scaled variables $c\cdot x_j$, where $c = \min\{1,\, \min_{\{i, j: A_{ij\neq 0}\}}\frac{A_{ij}}{b_i}\}$. 
Moreover, such scaling preserves the approximation (\iffullpaper \ref{appendix:scaling} \else {see Appendix A in the full version of the paper}\fi). 

\paragraph{KKT Conditions and Duality Gap}\label{section:lower-bound-duality-gap}
We will denote the Lagrange multipliers for $(P_\alpha)$ as $y = (y_1,...,y_m)$ and refer to them as ``dual variables''. The KKT conditions for $(P_\alpha)$ are \iffullpaper (see Appendix \ref{appendix:primal-dual-duality-gap})\else {(see Appendix B in the full version)}\fi:
\begin{align}
\smallskip
\littlesum_{j=1}^n A_{ij}x_j\leq 1, \quad\forall i\in\{1,...,m\};\quad x_j\geq 0, \quad\forall j\in \{1,...,n\} \quad &\text{(primal feasibility)}\tag{K1}\label{eq:K1}\\[-3pt]
y_i \geq 0, \quad\forall i\in\{1,...,m\} \quad &\text{(dual feasibility)}\tag{K2}\label{eq:K2}\\[-3pt]
y_i\cdot\Big(\littlesum_{j=1}^m A_{ij} x_j - 1\Big) = 0, \quad\forall i\in\{1,...,m\}\quad &\text{(complementary slackness)}\tag{K3}\label{eq:K3}\\[-4pt]
{x_j}^\alpha\littlesum_{i=1}^m y_i A_{ij} = w_j, \quad \forall j\in \{1,...,m\}\quad &\text{(gradient conditions)}\tag{K4}\label{eq:K4}
\end{align}

The duality gap for $\alpha \neq 1$ is \iffullpaper (see Appendix \ref{appendix:primal-dual-duality-gap})\else {(see Appendix B in the full version of the paper)}\fi:
\begin{align}G_{\alpha}(x, y) = 
\littlesum_{j=1}^n w_j\frac{{x_j}^{1-\alpha}}{1-\alpha}\big({\xi_j}^{\frac{\alpha-1}{\alpha}}-1\big) +\littlesum_{i=1}^m y_i - \littlesum_{j=1}^n  w_j x_j^{1-\alpha}\cdot {\xi_j}^{\frac{\alpha-1}{\alpha}},\label{eq:duality-gap-alpha}
\end{align}  
where $\xi_j = \frac{{x_j}^{\alpha}\sum_{i=1}^m y_i A_{ij}}{w_j}$, while for $\alpha  = 1$:
\begin{align}
G_1(x, y) = - \littlesum_{j=1}^n w_j \ln\Big(\frac{x_j\small{\sum_{i=1}^m} y_i A_{ij}}{w_j}\Big)+\littlesum_{i=1}^m y_i -W.\label{eq:duality-gap-proportional}
\end{align}

\paragraph{Model of Distributed Computation} 
We adopt the same model of distributed computation as \cite{AwerbuchKhandekar2009, d-allen2014using, d-bartal1997global, d-kuhn2006price, d-luby1993parallel, d-papadimitriou1993linear}, described as follows. We assume that for each $j\in\{1,...,n\}$, there is an agent controlling the variable $x_j$. Agent $j$ is assumed to have information about the following problem parameters: (i) the $j^\text{th}$ column of $A$, (ii) the weight $w_j$, and (iii) (an upper bound on) $m, n, w_{\max}$, and $A_{\max}$. In each round, agent $j$ collects the relative slack\footnote{The slack is ``relative'' because in a non-scaled version of the problem where one could have $b_i\neq 1$, agent $j$ would need to have information about $\frac{b_i - \sum_{j=1}^n A_{ij}x_j}{b_i}$.} $1 - \sum_{j=1}^n A_{ij}x_j$ of all constraints $i$ for which $A_{ij} \neq 0$. 

We remark that this model of distributed computation is a generalization of the model considered in network congestion control problems \cite{kelly2014stochastic} where a variable $x_j$ corresponds to the rate of node $j$, $A$ is a 0-1 routing matrix, such that $A_{ij} = 1$ if and only if a node $j$ sends flow over link $i$, and $b$ is the vector of link capacities.  Under this model, the knowledge about the relative slack of each constraint corresponds to each node collecting (a function of) congestion on each link that it utilizes. Such a model was used in network utility maximization problems with $\alpha$-fair objectives \cite{kelly1998rate} and general strongly-concave objectives \cite{Beck2014Gradient}.

\section{Algorithm}\label{section:algorithm}

The pseudocode for the \textsc{$\alpha$-FairPSolver} algorithm that is run at each node $j$ is provided in Fig~1. 
The basic intuition is that the algorithm keeps KKT conditions (\ref{eq:K1}) and (\ref{eq:K2}) satisfied and works towards (approximately) satisfying the remaining two KKT conditions (\ref{eq:K3}) and (\ref{eq:K4}) to minimize the duality gap. 
The algorithm can run in the distributed setting described in Section \ref{section:prelims}. In each round, an agent $j$ updates the value of $x_j$ based on the relative slack of all the constraints in which $j$ takes part, as long as the KKT condition (\ref{eq:K4}) of agent $j$ is not approximately satisfied. 
The updates need not be synchronous: we will require that all agents make updates at the same speed, but without access to a global clock.
\begin{figure}[!ht]
\small{
\hrulefill\\
\textsc{$\alpha$-FairPSolver}($\varepsilon$)\\[-8pt]
\hrule
\begin{algorithmic}[1]
\vspace{2pt}\Statex (Parameters $\delta_j, C, \kappa, \gamma, \beta_1,$ and $\beta_2$ are set as described in the text below the algorithm.)
\Statex In each round of the algorithm:
\State $x_j \leftarrow \max\{x_j, \delta_j\}$, $x_j = \min\{x_j, 1\}$
\State Update the dual variables: $y_i = C\cdot e^{\kappa \left(\sum_{j=1}^n A_{ij}x_j - 1 \right)}$ $\forall i\in\{1,...,m\}$
\If {$\frac{{x_j}^{\alpha}\cdot\sum_{i=1}^m y_i A_{ij}}{w_j}\leq (1-\gamma)$}
\State $x_j \leftarrow x_j \cdot(1+\beta_1)$
\Else\If {$\frac{{x_j}^{\alpha}\cdot\sum_{i=1}^m y_i A_{ij}}{w_j}\geq (1+\gamma)$}
\State $x_j \leftarrow \max\{x_j\cdot(1-\beta_2), \delta_j\}$
\EndIf\EndIf
\end{algorithmic}}\vspace{-10pt}
\hrulefill
\caption{Pseudocode of \textsc{$\alpha$-FairPSolver} algorithm.}\vspace{-10pt}
\end{figure}

To allow for self-stabilization and dynamic changes, the algorithm runs forever at all the agents, which is a standard requirement for self-stabilizing algorithms (see, e.g., \cite{dolev2000self}). The convergence of the algorithm is measured as the number of rounds between the round in which the algorithm starts from some initial solution and the round in which it reaches an $\varepsilon-$approximate solution, assuming that there are no hard reset events or node/constraint insertions/deletions in between. 

Without loss of generality, we assume that the input parameter $\varepsilon$ that determines the approximation quality satisfies $\varepsilon\leq \min\{\frac{1}{6}, \frac{9}{10\alpha}\}$ for any $\alpha$, and $\varepsilon\leq \frac{1-\alpha}{\alpha}$ for $\alpha<1$. 
The parameters $\delta_j, C, \kappa, \gamma$, $\beta_1$, and $\beta_2$ are set as follows. For technical reasons (mainly due to reinforcing dominant multiplicative updates of the variables $x_j$), we set the values of the lower thresholds $\delta_j$ below the actual lower bound of the optimal solution that we derive in Lemma \ref{lemma:lower-bound}:
\begin{equation*}
\delta_j = \left(\frac{1}{2}\cdot \frac{w_j}{w_{\max}}\right)^{1/\alpha}\cdot\begin{cases} \big(\frac{1}{m\cdot n^2\cdot A_{\max}}\big)^{1/\alpha}, & \mbox{if } 0<\alpha\leq 1\\
\frac{1}{m\cdot n^2 {A_{\max}}^{2-1/\alpha}}, & \mbox{if } \alpha> 1 \end{cases}.
\end{equation*}

We denote $\delta_{\max}\equiv \max_j \delta_j$, $\delta_{\min}\equiv \min_j \delta_j$. 
The constant $C$ that multiplies the exponent in the dual variables $y_i$ is chosen as $C = \frac{W}{\sum_{j=1}^n {\delta_j}^\alpha}$. Because  $\delta_j$ only depends on $w_j$ and on global parameters,  we also have $C = \frac{w_j}{{\delta_j}^\alpha}$, $\forall j$. The parameter $\kappa$ that appears in the exponent of the $y_i$'s is chosen as $\kappa = \frac{1}{\varepsilon}\ln\big(\frac{C m A_{\max}}{\varepsilon  w_{\min}}\big)$. The ``absolute error'' of (\ref{eq:K4}) $\gamma$ is set to $\varepsilon/4$. For $\alpha \geq 1$, we set $\beta_1 = \beta_2 = \beta$, where the choice of $\beta$ is described below. For $\alpha < 1$, we set $\beta_1 = \beta$, $\beta_2 = \beta^2 (\ln(\frac{1}{\delta_{\min}}))^{-1}$.

Similar to  \cite{AwerbuchKhandekar2009}, we choose the value of $\beta$ so that if we set $\beta_1=\beta_2 = \beta$, in any round the value of each $\frac{{x_j}^\alpha \sum_{i=1}^m y_i(x)A_{ij}}{w_j}$ changes by a multiplicative factor of at most $(1\pm \gamma/4)$. Since the maximum increase over any $x_j$ in each iteration is by a factor $1+\beta$, and $x$ is feasible in each round (see Lemma \ref{lemma:feasibility}), we have that $\sum_{j=1}^n A_{ij}x_j\leq 1$, and therefore, the maximum increase in each $y_i$ is by a factor of $e^{\kappa\beta}$. A similar argument holds for the maximum decrease.  Hence, we choose $\beta$ so that:
\begin{align*}
(1+\beta)^{\alpha}e^{\kappa\beta}\leq 1+\gamma/4 \quad\text{ and }\quad
(1-\beta)^{\alpha}e^{-\kappa\beta}\geq 1-\gamma/4,
\end{align*}
and it suffices to set:
\begin{equation*}
\beta = \begin{cases} \frac{\gamma}{5(\kappa+1)}, & \mbox{if } \alpha\leq 1 \\
\frac{\gamma}{5(\kappa+\alpha)}, & \mbox{if } \alpha> 1 \end{cases}.
\end{equation*}

\noindent\textbf{Remark:} In the $\alpha<1$ cases, since $\beta_2 = {\beta^2}({\ln({1}/{\delta_{\min}})})^{-1}$, the maximum decrease in $\frac{{x_j}^\alpha\sum_i y_i(x)A_{ij}}{w_j}$ is by a factor $(1-({\gamma}/{4})\cdot {\beta}({\ln(1/\delta_{\min})})^{-1})$, $\forall j$.

\section{Convergence Analysis}\label{section:convergence}
 
In this section, we analyze the convergence time of \textsc{$\alpha$-FairPSolver}. We first state our main theorems and provide some general results that hold for all $\alpha > 0$. We show that starting from an arbitrary solution, the algorithm reaches a feasible solution within poly-logarithmic (in the input size) number of rounds, and maintains a feasible solution forever after. Similar to \cite{AwerbuchKhandekar2009, dc-young2001sequential, kelly1998rate}, we use a concave potential function that, for feasible $x$, is bounded below and above and increases with any algorithm update. 
\iffullpaper Then, we analyze the convergence time separately for three cases: $\alpha<1$, $\alpha = 1$, and $\alpha>1$. 
With an appropriate definition of a \emph{stationary round} for each of the three cases, we show that in every stationary round, $x$ approximates ``well'' the optimal solution by bounding the duality gap. On the other hand, for any non-stationary round, we show that the potential increases substantially. This large increase in the potential then leads to the conclusion that there cannot be too many non-stationary rounds, thus bounding the overall convergence time. \else Then, we sketch the proof of Theorem \ref{thm:convergence-alpha>1} ($\alpha>1$), while we defer the full proofs of  the three theorems to the full version of this paper. The main proof idea in all the cases is as follows. With an appropriate definition of a \emph{stationary round} for each of the three cases $\alpha<1$, $\alpha = 1$, and $\alpha>1$, we show that in every stationary round, $x$ approximates ``well'' the optimal solution by bounding the duality gap. On the other hand, for any non-stationary round, we show that the potential increases substantially. This large increase in the potential leads to the conclusion that there cannot be too many non-stationary rounds, thus bounding the overall convergence time. \fi

We make a few remarks here. First, we require that $\alpha$ be bounded away from zero. This requirement is without loss of generality because we show that when $\alpha \leq \frac{\varepsilon/4}{\ln(nA_{\max}/\varepsilon)}$, any $\varepsilon-$approximation LP provides a $3\varepsilon-$approximate solution to $(P_\alpha)$ (Lemma \ref{lemma:LP-close-to-small-alpha-fair}). Thus, when $\alpha \leq \frac{\varepsilon/4}{\ln(nA_{\max}/\varepsilon)}$ we can switch to the algorithm of \cite{AwerbuchKhandekar2009}, and when $\alpha > \frac{\varepsilon/4}{\ln(nA_{\max}/\varepsilon)}$, the convergence time remains poly-logarithmic in the input size and polynomial in $\varepsilon^{-1}$. 
Second, the assumption that $\varepsilon\leq \frac{1-\alpha}{\alpha}$ in the $\alpha<1$ case is also without loss of generality, because we show that when $\alpha$ is close to 1 (roughly, $1 - O(\varepsilon^2/\ln^2(\wratio mnA_{\max}/\varepsilon))$), we can approximate $(P_\alpha)$ by switching to the $\alpha=1$ case of the algorithm (Lemma \ref{lemma:alpha-close-to-1-below}). 
Finally, when $\alpha > 1$, the algorithm achieves an $\varepsilon-$approximation in time ${O}(\alpha^4\varepsilon^{-4} \ln^2(\wratio nmA_{\max}\varepsilon^{-1}))$. 
We believe that a polynomial dependence on $\alpha$ is difficult to avoid in this setting, because by increasing $\alpha$, the gradient of the $\alpha$-fair utilities $f_{\alpha}$ blows up on the interval $(0, 1)$: as $\alpha$ increases, $f_{\alpha}(x)$ quickly starts approaching a step function that is equal to $-\infty$ on the interval $(0, 1]$ and equal to 0 on the interval $(1, \infty]$. To characterize the behavior of $\alpha-$fair allocations as $\alpha$ becomes large, we show that when $\alpha \geq {\varepsilon}^{-1}{\ln(\wratio n A_{\max})}$, all the coordinates of the $\alpha-$fair vector are within a $1\pm \varepsilon$ multiplicative factor of the corresponding coordinates of the max-min fair vector (Lemma \ref{lemma:mmf-alpha-fair}). 

{Finally, we note that the main convergence argument from \cite{AwerbuchKhandekar2009} that uses an appropriate definition of \emph{stationary intervals} does not extend to our setting. The proof from \cite{AwerbuchKhandekar2009} ``breaks'' in the part that shows that the solution is $\varepsilon-$approximate throughout any stationary interval, stated as Lemma 3.7 in \cite{AwerbuchKhandekar2009}. The proof of Lemma 3.7 in \cite{AwerbuchKhandekar2009} is by contradiction: assuming that the solution is not $\varepsilon-$approximate, the proof proceeds by showing that at least one of the variables would increase in each round of the stationary interval, thus eventually making the solution infeasible and contradicting one of the preliminary lemmas. For $\alpha\geq 1$, unlike the linear objective in \cite{AwerbuchKhandekar2009}, $\alpha$-fair objectives are negative, and the assumption that the solution is not $\varepsilon-$approximate does not lead to any conclusive information. For $\alpha<1$, adapting the proof of Lemma 3.7 from \cite{AwerbuchKhandekar2009} leads to the conclusion that for at least one $j$, in each round $t$ of the stationary interval $\frac{{(x_j^*)}^{\alpha}\sum_i y_i(x^t)A_{ij}}{w_j}\leq 1-\gamma$, where $x^*$ is the optimal solution, and $x^t$ is the solution at round $t$. In \cite{AwerbuchKhandekar2009}, where $\alpha=0$, this implies that $x_j$ increases in each round of the stationary interval, while in our setting ($\alpha>0$) it is not possible to draw such a conclusion. }

\paragraph{Main Results.}
 
Our main results are summarized in the following three theorems. 
The objective 
is denoted by $p_\alpha(x)$, $x^t$ denotes the solution at the beginning of round $t$, and $x^*$ denotes the optimal solution.

\begin{theorem}\label{thm:convergence-alpha<1}
(Convergence for $\alpha < 1$) \textsc{$\alpha$-FairPSolver} solves $(P_\alpha)$ approximately for $\alpha < 1$ in time that is polynomial in $\frac{\ln(nmA_{\max})}{\alpha\varepsilon}$. In particular, after at most
\begin{equation}
O\left(\alpha^{-2}\varepsilon^{-5}\ln^2\left(\wratio mnA_{\max}\right)\ln^2\left(\varepsilon^{-1} \wratio mnA_{\max}\right)\right)
\label{eq:thm-alpha<1-convergence-time}
\end{equation}
rounds, there exists at least one round $t$ such that $p_\alpha(x^*) - p_\alpha(x^t)\leq \varepsilon p_\alpha(x^t)$. Moreover, the total number of rounds $s$ in which $p_\alpha(x^*) - p_\alpha(x^s)> \varepsilon p_\alpha(x^s)$ is also bounded by (\ref{eq:thm-alpha<1-convergence-time}).
\end{theorem}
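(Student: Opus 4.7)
The plan is to combine a bounded concave potential function with a carefully chosen notion of stationary round tailored to the asymmetric multiplicative updates $(1+\beta_1)$ and $(1-\beta_2)$ used when $\alpha<1$. Work with the concave potential
\begin{equation*}
\Phi(x) \;=\; p_\alpha(x) - \frac{1}{\kappa}\sum_{i=1}^m \bigl(y_i(x) - C\bigr),
\end{equation*}
which along the trajectory of the algorithm is bounded above and below by quantities polynomial in $\ln(1/\delta_{\min})$, $W$, and the other global parameters (hence polynomial in $\alpha^{-1}$ and $\ln(\wratio nmA_{\max}/\varepsilon)$). Lemma \ref{lemma:feasibility} guarantees that $Ax \leq \mathds{1}$ is preserved after a short initial phase, and Lemma \ref{lemma:approx-comp-slack} yields an approximate complementary slackness bound of the form $\sum_i y_i(1-\sum_j A_{ij}x_j) \leq O(\varepsilon)\sum_i y_i$, exploiting the exponential dual variables with the specific $C,\kappa$. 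Call a round $t$ stationary if, for every $j$, $\xi_j^t := (x_j^t)^\alpha \sum_i y_i(x^t) A_{ij}/w_j \in [1-\gamma, 1+\gamma]$, so that neither branch of the update rule fires.

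In a stationary round I would bound the duality gap (\ref{eq:duality-gap-alpha}) by Taylor-expanding $\xi_j^{(\alpha-1)/\alpha}$ about $1$, giving $\xi_j^{(\alpha-1)/\alpha} = 1 + \frac{\alpha-1}{\alpha}(\xi_j-1) + O\bigl((\xi_j-1)^2/|1-\alpha|\bigr)$, and substituting into (\ref{eq:duality-gap-alpha}). The first sum simplifies, after the $1/(1-\alpha)$ factor cancels against the leading Taylor term, to $-\sum_j w_j x_j^{1-\alpha}(\xi_j-1)/\alpha + O(\gamma^2/(1-\alpha))\sum_j w_j x_j^{1-\alpha}$, while the third sum reduces via the definition of $\xi_j$ to $\sum_i y_i \sum_j A_{ij} x_j$. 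Combined with approximate complementary slackness from Lemma \ref{lemma:approx-comp-slack}, the residual duality gap is at most $O(\gamma/(1-\alpha)) \cdot |p_\alpha(x^t)|$, which is at most $\varepsilon \cdot |p_\alpha(x^t)|$ under the standing assumption $\varepsilon \leq (1-\alpha)/\alpha$.

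In any non-stationary round, at least one $j$ triggers an update. Using Lemma \ref{lemma:potential-increase} together with its $\alpha<1$ refinement (Lemma \ref{lemma:potential-increase-alpha<1}), an up-update ($\xi_j^t \leq 1-\gamma$) gives $\Phi(x^{t+1})-\Phi(x^t) = \Omega(\beta_1 \gamma \cdot w_j (x_j^t)^{1-\alpha})$, while a down-update ($\xi_j^t \geq 1+\gamma$) gives $\Omega(\beta_2 \gamma \cdot w_j (x_j^t)^{1-\alpha})$. The thresholds $\delta_j$, which are strictly smaller than the lower bound of Lemma \ref{lemma:lower-bound}, ensure that the $(1-\beta_2)$ update actually fires along the trajectory (i.e.\ the $\max\{\cdot,\delta_j\}$ in line 6 of the pseudocode is not binding once $x$ is near the optimum). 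Plugging in $\beta = \Theta(\varepsilon^2/\ln(\wratio nmA_{\max}/\varepsilon))$ and $\beta_2 = \Theta(\alpha \varepsilon^4/\ln^3(\wratio nmA_{\max}/\varepsilon))$, the smallest additive potential increase per non-stationary round is $\Omega(\alpha\varepsilon^5/\ln^3(\wratio nmA_{\max}/\varepsilon))$. Dividing the $O(\ln(\wratio nmA_{\max}/\varepsilon)/\alpha)$ range of $\Phi$ by this per-round progress gives the bound (\ref{eq:thm-alpha<1-convergence-time}) on the number of non-stationary rounds; by pigeonhole at least one stationary round occurs within the same window, and by the previous paragraph it is $\varepsilon$-approximate. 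The same count bounds the total number of rounds $s$ for which $x^s$ fails the $\varepsilon$-approximation.

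The main obstacle is the $1/(1-\alpha)$ factor in $f_\alpha$ threatening to blow up the duality-gap bound as $\alpha \to 1$. The saving grace is that $\xi_j$ being multiplicatively close to $1$ forces $\xi_j^{(\alpha-1)/\alpha}-1$ to carry a matching factor of $(\alpha-1)/\alpha$, but quantifying this cancellation rigorously requires a second-order Taylor remainder together with the standing assumption $\varepsilon \leq (1-\alpha)/\alpha$; the boundary regime $\alpha$ just below $1$ is instead routed through Lemma \ref{lemma:alpha-close-to-1-below}, reducing it to the $\alpha=1$ algorithm. A secondary difficulty is ensuring that even the very small down-step factor $\beta_2 = \beta^2/\ln(1/\delta_{\min})$ is large enough for the counting argument to remain poly-logarithmic after being squared, which is precisely the source of the extra $\varepsilon^{-5}$ and $\alpha^{-2}$ in (\ref{eq:thm-alpha<1-convergence-time}).
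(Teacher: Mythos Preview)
Your counting argument for non-stationary rounds has a genuine gap. You define a round to be stationary when \emph{every} $\xi_j^t\in[1-\gamma,1+\gamma]$, so in a non-stationary round you only know that \emph{some single} coordinate updates. Lemma~\ref{lemma:potential-increase} then gives a per-round increase of order $\beta_2\gamma\cdot w_j(x_j^t)^{1-\alpha}$ for that one $j$, but this quantity is not uniformly bounded below: it can be as small as $w_{\min}\delta_{\min}^{1-\alpha}=\Theta\bigl((\wratio n^2 m A_{\max})^{-(1-\alpha)/\alpha}\bigr)$, which is polynomially (not polylogarithmically) small. Your claim that the smallest additive increase is $\Omega(\alpha\varepsilon^5/\ln^3(\cdot))$ is therefore unsupported, and your claim that the range of $\Phi$ is $O(\alpha^{-1}\ln(\cdot))$ is also incorrect: the additive range of $\Phi$ runs from $-\Theta(w_{\max}n^2m^2A_{\max})$ up to $W/(1-\alpha)$, a polynomial-sized interval. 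Dividing a polynomial range by a polynomially small increment does not give a polylogarithmic bound.

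The paper's fix is to make stationarity an \emph{aggregate} notion: a round is stationary when $\sum_{j\in S^-}w_jx_j^{1-\alpha}\le\gamma\sum_j w_jx_j^{1-\alpha}$ and $\sum_j x_j\sum_i y_i(x)A_{ij}\le(1+5\gamma/4)\sum_j w_jx_j^{1-\alpha}$. When either condition fails, Lemma~\ref{lemma:potential-increase-alpha<1} yields a potential increase that is a multiplicative fraction $\Omega(\beta^2\gamma/\ln(1/\delta_{\min}))$ of $\sum_j w_jx_j^{1-\alpha}$, hence of $\Phi$ itself in the main regime; this is what makes the number of non-stationary rounds logarithmic in the ratio $\Phi_{\max}/\Phi_{\min}$ rather than linear in the additive range. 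To make the aggregate definition compatible with the duality-gap bound, the paper needs an extra structural lemma (Lemma~\ref{lemma:alpha<1-lower-bound-xi-j}) showing that $\xi_j>1-5\gamma/4$ for all $j$ in every round after warm-up; this is precisely what the asymmetric choice $\beta_2=\beta^2/\ln(1/\delta_{\min})$ buys, since it prevents $\xi_j$ from drifting too far below $1-\gamma$ before $x_j$'s $(1+\beta)$-increases catch up. Then only the coordinates with $\xi_j\ge1+5\gamma/4$ (a subset of $S^-$) are ``bad'' in a stationary round, and the first aggregate condition bounds their total contribution to the gap. Your Taylor-expansion idea for the ``good'' coordinates is essentially what the paper does for the set $S_1$, though the paper bounds $r(\xi_j)=\alpha\xi_j^{-(1-\alpha)/\alpha}+(1+3\varepsilon)(1-\alpha)\xi_j-1$ directly at the endpoints via convexity rather than via a second-order remainder, which matters because for small $\alpha$ the exponent $(\alpha-1)/\alpha$ is large in magnitude and the quadratic remainder $O(\gamma^2/\alpha^2)$ is not small.
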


\begin{theorem}\label{thm:convergence-alpha=1} (Convergence for $\alpha = 1$) \textsc{$\alpha$-FairPSolver} solves $(P_1)$ approximately in time that is polynomial in $\varepsilon^{-1}\ln(\wratio nmA_{\max})$. 
In particular, after at most  
\begin{equation}
O\left(\varepsilon^{-5} \ln^2\left(\wratio nmA_{\max} \right)\ln^2\left( \varepsilon^{-1} \wratio nmA_{\max} \right)\right) \label{eq:alpha=1-conv-time-bound}
\end{equation}
rounds, there exists at least one round $t$ such that $p(x^*) - p(x^t)\leq \varepsilon W$. Moreover, the total number of rounds $s$ in which $p(x^*) - p(x^s)> \varepsilon W$ is also bounded by (\ref{eq:alpha=1-conv-time-bound}).
\end{theorem}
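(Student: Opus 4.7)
The plan is to instantiate the stationary/non-stationary round dichotomy described informally in the paragraph preceding the theorem. First I would set up a concave Lyapunov-type potential
\[
\Phi(x) \;=\; \littlesum_{j} w_j \ln x_j \;-\; \frac{C}{\kappa}\littlesum_i e^{\kappa(\sum_j A_{ij}x_j - 1)},
\]
obtained by integrating the exponential dual variables $y_i(x)$ along the primal trajectory; its gradient $\partial \Phi/\partial x_j = w_j/x_j - \sum_i y_i(x)A_{ij}$ is precisely what the update rule tests via the ratio $\xi_j = x_j \sum_i y_i A_{ij}/w_j$. Lemma \ref{lemma:feasibility} ensures $x_j \in [\delta_j, 1]$ and dual feasibility from the first rounds onwards, and on that region $|\Phi|$ is bounded by $O(W \ln(\wratio nmA_{\max}/\varepsilon))$; the $\alpha=1$ specialization Lemma \ref{lemma:potential-increase-proportional} of the generic Lemma \ref{lemma:potential-increase} shows that every primal update strictly raises $\Phi$.

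The central step is to call a round \emph{stationary} when only a negligible weighted mass of coordinates changes value, and to prove that in such a round the duality gap $G_1(x, y)$ given by (\ref{eq:duality-gap-proportional}) is at most $W\varepsilon$, so by weak duality the iterate is $W\varepsilon$-approximate. I would partition indices according to $\xi_j$ into (i) a near-KKT group with $\xi_j \in [1-\gamma, 1+\gamma]$, contributing at most $2\gamma W = \varepsilon W/2$ to $|{-\sum_j w_j \ln \xi_j}|$; (ii) a high-ratio group with $\xi_j > 1+\gamma$, whose members must be resting at the floor $\delta_j$ in any stationary round and for which $-\ln \xi_j \leq 0$, their magnitude being absorbed against the remaining terms through $C = w_j/\delta_j$; and (iii) a low-ratio group with $\xi_j < 1 - \gamma$, whose members must be pinned at $x_j = 1$ by the $\min$-truncation, which by $A_{ij} \geq 1$ forces a constraint containing $j$ to be saturated, forcing $\sum_i y_i A_{ij} \geq C \geq w_j$ and contradicting group membership. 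To control $\sum_i y_i - W$, the calibration $\kappa = \varepsilon^{-1}\ln(CmA_{\max}/(\varepsilon w_{\min}))$ makes $y_i \leq \varepsilon w_{\min}/(mA_{\max})$ for any constraint with slack at least $\varepsilon$, so loose constraints jointly contribute only $O(\varepsilon W)$, while near-tight constraints' duals telescope against $W$ via approximate complementary slackness applied on the near-KKT coordinates.

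It remains to lower-bound the potential increase in non-stationary rounds. There a non-negligible weighted mass of coordinates moves, and each update raises $\Phi$ by at least $\Omega(w_j \beta \gamma)$, a second-order Taylor estimate around the current $x_j$ controlled by the calibration of $\beta$ that keeps every $\xi_j$ inside a factor $1\pm \gamma/4$ per round. Summing the active contributions and dividing the total range of $\Phi$ by this per-round gain bounds the number of non-stationary rounds; substituting $\gamma = \varepsilon/4$, $\beta = \Theta(\gamma/\kappa)$, and $\kappa = \Theta(\varepsilon^{-1}\ln(\wratio nmA_{\max}/\varepsilon))$ yields the advertised $O(\varepsilon^{-5}\ln^2(\wratio nmA_{\max})\ln^2(\varepsilon^{-1}\wratio nmA_{\max}))$ bound. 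Since the number of non-stationary rounds bounds the total count of ``bad'' rounds $s$ with $p(x^*) - p(x^s) > \varepsilon W$, it also bounds the first-hitting time of a good round, giving both assertions of the theorem.

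The main obstacle is the duality-gap case analysis in stationary rounds. Unlike the packing-LP setting of \cite{AwerbuchKhandekar2009}, a ``not-yet-approximate'' iterate does not imply that some coordinate should grow unboundedly, because $w_j \ln x_j$ is itself negative and unbounded below, so the contradiction-against-feasibility route fails. Handling coordinates that sit at the floor $\delta_j$, and calibrating $C$, $\kappa$, and $\delta_j$ so that the three groups' contributions to $G_1$ telescope to at most $W\varepsilon$, is where the new technical content lies and where the earlier proof strategy breaks down.
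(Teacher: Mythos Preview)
Your overall scaffold---the potential $\Phi$, the stationary/non-stationary split, and bounding the duality gap $G_1$ in stationary rounds---matches the paper. But the case analysis for the duality gap contains a genuine gap in the handling of your group (iii).

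Coordinates with $\xi_j < 1-\gamma$ are precisely the members of $S^+$: by the update rule, any such $x_j$ \emph{increases} by the factor $1+\beta$ in that round. They are not ``pinned at $x_j=1$'' --- the $\min\{x_j,1\}$ truncation is applied at the start of the round before $\xi_j$ is computed, and if after truncation $x_j=1$ then (as your own argument shows) some constraint containing $j$ is saturated and $\xi_j \geq C/w_j \geq 1$, contradicting $\xi_j < 1-\gamma$. So group (iii) is exactly $S^+$, its members are moving, and your contradiction does not fire. What you therefore need is control on $-\sum_{j\in S^+} w_j \ln \xi_j$, and for this two separate ingredients are required: (a) the stationary-round definition must cap the \emph{weight} $\sum_{j\in S^+} w_j$, and the paper sets the threshold at $W/\tau_0$ (not merely ``negligible''); and (b) a uniform lower bound $\xi_j \geq (1-\gamma)^{\tau_0}$ valid in \emph{every} round after warmup (Lemma~\ref{lemma:cond-lower-bound}), proved by observing that $\xi_j$ drops by at most a factor $1-\gamma/4$ per round and that at most $\tau_0$ consecutive rounds can have $\xi_j \leq 1-\gamma$ (else $x_j$ would exceed $1$). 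These two combine to give $-\sum_{j\in S^+} w_j \ln \xi_j \leq (W/\tau_0)\cdot \tau_0\gamma = \gamma W$; the calibration of the threshold against $\tau_0$ is the point. Your proposal has neither ingredient.

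A second, smaller gap: your stationary-round definition omits the condition $(1-2\gamma)W \leq \sum_j x_j \sum_i y_i A_{ij} \leq (1+2\gamma)W$. The paper needs this, together with Lemma~\ref{lemma:approx-comp-slack}, to bound $\sum_i y_i - W \leq O(\varepsilon)W$; your sketch of that step (``near-tight duals telescope against $W$'') is essentially this argument but will not go through without the second condition. Correspondingly, the non-stationary analysis must also handle the failure of this second condition, via parts 2 and 3 of Lemma~\ref{lemma:potential-increase-proportional}, not only the $S^+$-mass failure you describe.
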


\begin{theorem}\label{thm:convergence-alpha>1}
(Convergence for $\alpha>1$) \textsc{$\alpha$-FairPSolver} solves $(P_\alpha)$ approximately for $\alpha > 1$ in time that is polynomial in ${\varepsilon}^{-1} \ln(nmA_{\max})$. 
In particular, after at most:
\begin{equation}
O\left(\alpha^4\varepsilon^{-4} \ln\left( \wratio nmA_{\max}\right) \ln\left(\varepsilon^{-1} \wratio nmA_{\max}\right)\right) \label{eq:thm-alpha>1-convergence-time}
\end{equation} 
rounds, there exists at least one round $t$ such that $p_\alpha(x^*) - p_\alpha(x^t) \leq \varepsilon(-p_{\alpha}(x^t))$. Moreover, the total number of rounds $s$ in which $p_\alpha(x^*) - p_\alpha(x^s) > \varepsilon(-p_{\alpha}(x^s))$ is also bounded by (\ref{eq:thm-alpha>1-convergence-time}).
\end{theorem}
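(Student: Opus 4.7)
The plan is to follow the three-step schema outlined in the introduction: introduce a concave potential $\Phi$ that is bounded on the thresholded feasible set, define an appropriate notion of \emph{stationary round} and show that every stationary round certifies an $\varepsilon$-approximate solution via Lagrangian duality, and use Lemma \ref{lemma:potential-increase-alpha>1} to show that in every non-stationary round $\Phi$ makes substantial progress. A pigeonhole on the range of $\Phi$ then bounds the number of non-stationary rounds, and the two clauses of the theorem follow.

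Concretely, I would take $\Phi(x)=p_\alpha(x)-\frac{1}{\kappa}\sum_{i=1}^m y_i(x)$, which is concave (since $y_i$ is the exponential of an affine function of $x$) and whose gradient $w_j x_j^{-\alpha} - \sum_i y_i(x)A_{ij}$ vanishes exactly at the KKT condition \eqref{eq:K4}. Lemma \ref{lemma:feasibility} supplies $x_j^t\in[\delta_j,1]$ and $Ax^t\le \mathds{1}$ from some round onward, on which set $\Phi$ is bounded both above and below. I would call round $t$ \emph{stationary} when every agent's pseudocode branch evaluates to ``do nothing,'' i.e., $\xi_j^t := (x_j^t)^{\alpha}\sum_i y_i(x^t)A_{ij}/w_j \in [1-\gamma,1+\gamma]$ for all $j$.

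The key auxiliary lemma is that in any stationary round, $p_\alpha(x^*)-p_\alpha(x^t)\le \varepsilon(-p_\alpha(x^t))$; I would establish it by bounding the duality gap \eqref{eq:duality-gap-alpha}. I partition constraints into \emph{tight} ones with $\sum_j A_{ij}x_j^t\ge 1-\varepsilon/\kappa$ and \emph{slack} ones with $\sum_j A_{ij}x_j^t< 1-\varepsilon/\kappa$. The exponential form $y_i(x)=Ce^{\kappa(\sum_j A_{ij}x_j-1)}$ gives $y_i\le Ce^{-\varepsilon}$ on slack rows; after plugging in $C=W/\sum_j\delta_j^\alpha$ and $\kappa=\varepsilon^{-1}\ln(CmA_{\max}/(\varepsilon w_{\min}))$, the contribution of slack rows to $\sum_i y_i$ and $\sum_{i,j} y_i A_{ij}x_j^t$ is $O(\varepsilon)\cdot(-p_\alpha(x^t))$. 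On tight rows, approximate complementary slackness lets me replace $\sum_i y_i$ by $\sum_{i,j} y_i A_{ij} x_j^t$ up to the same error. Combined with $\xi_j\in[1-\gamma,1+\gamma]$, which gives $\xi_j^{(\alpha-1)/\alpha}=1\pm O(\gamma)$ via a first-order expansion (valid since $(\alpha-1)/\alpha\in(0,1)$), both sums in \eqref{eq:duality-gap-alpha} collapse to $(1\pm O(\gamma))\sum_j w_j x_j^{1-\alpha}$, which together with $\gamma=\varepsilon/4$ yields $G_\alpha(x^t,y(x^t))=O(\varepsilon)\cdot(-p_\alpha(x^t))$.

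The final accounting uses Lemma \ref{lemma:potential-increase-alpha>1} to split non-stationary rounds into a poly-logarithmic number of ``large'' rounds where $\Phi$ grows additively by a fixed amount and the remaining ones where $\Phi$ grows multiplicatively by a factor like $1+\Omega(\beta^2/\alpha)$; dividing the range of $\Phi$ by these per-round increases and plugging in $\beta=\gamma/(5(\kappa+\alpha))$ and $\kappa=\Theta(\alpha\varepsilon^{-1}\ln(\wratio nmA_{\max}/\varepsilon))$ produces the claimed $O(\alpha^4\varepsilon^{-4}\ln(\wratio nmA_{\max})\ln(\varepsilon^{-1}\wratio nmA_{\max}))$ bound. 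Since any round $s$ where $p_\alpha(x^*)-p_\alpha(x^s)>\varepsilon(-p_\alpha(x^s))$ must (by the stationary-round lemma) be non-stationary, the ``moreover'' clause follows from the same count. The hardest step will be the stationary-round duality-gap bound: for $\alpha>1$ the three terms of \eqref{eq:duality-gap-alpha} have different signs and must cancel to leading order against $-p_\alpha(x^t)$, and the case analysis around the lower thresholds $\delta_j$ (where a coordinate with $x_j^t=\delta_j$ can sit at $\xi_j<1-\gamma$ without triggering an update, so the update-triggering logic cannot by itself rule it out) is where the argument genuinely departs from the linear case of \cite{AwerbuchKhandekar2009} and forces the Lagrangian-duality route advertised in the introduction.
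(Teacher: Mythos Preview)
Your outline shares the potential $\Phi(x)=p_\alpha(x)-\frac{1}{\kappa}\sum_i y_i(x)$ and the overall three-step schema with the paper, but the definition of \emph{stationary round} you propose---that every $\xi_j\in[1-\gamma,1+\gamma]$, so no coordinate updates---is too strong, and this creates a genuine gap in the non-stationary step rather than a detail to be filled in.

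With your definition, a non-stationary round may have only a single coordinate $k$ with $\xi_k\notin[1-\gamma,1+\gamma]$. Part~1 of Lemma~\ref{lemma:potential-increase-alpha>1} then bounds the potential increase by $\Omega(\beta\gamma)\,w_k x_k^{1-\alpha}\xi_k$, which is at least $\Omega(\beta\gamma)\,w_{\min}$ additively (since $x_k\le 1$ and $\alpha>1$), but need \emph{not} be any fixed fraction of $-\Phi(x)$: if $w_k x_k^{1-\alpha}$ is a tiny share of $\sum_j w_j x_j^{1-\alpha}$, Parts~2 and~3 give nothing useful either, because the aggregate balance $\sum_j x_j\sum_i y_iA_{ij}\approx\sum_j w_j x_j^{1-\alpha}$ is barely perturbed. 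An additive increment of $\Omega(\beta\gamma w_{\min})$ per round, divided into the range of $-\Phi$---which for $\alpha>1$ scales like $\frac{W}{\alpha-1}(\wratio n^2 m A_{\max}^2)^{\alpha-1}$---yields a round count that is polynomial in $n,m,A_{\max}$, not polylogarithmic. Your proposed additive/multiplicative split does not follow from your definitions; in fact the paper's $\alpha>1$ argument is purely multiplicative on $-\Phi$.

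The fix is the paper's weaker \emph{aggregate} notion (Definition~\ref{def:stationary-round}): a round is stationary when $\sum_{j\in S^+\cup S^-} x_j\sum_i y_i A_{ij}\le \gamma\sum_j w_j x_j^{1-\alpha}$ and $(1-2\gamma)\sum_j w_j x_j^{1-\alpha}\le \sum_j x_j\sum_i y_i A_{ij}$. Failure of either clause feeds directly into Part~1 (resp.\ Parts~2/3) of Lemma~\ref{lemma:potential-increase-alpha>1} with the \emph{full} sum $\sum_j w_j x_j^{1-\alpha}$ on the right, which is within a constant of $-\Phi(x)$ in each of the two regimes (depending on whether $\frac{1}{\kappa}\sum_i y_i$ or $\frac{1}{\alpha-1}\sum_j w_j x_j^{1-\alpha}$ dominates $-\Phi$). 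The price is that the stationary-round duality-gap bound now needs a genuine case split: for $j\notin S^+\cup S^-$ your first-order expansion of $\xi_j^{(\alpha-1)/\alpha}$ goes through, but for $j\in S^+\cup S^-$ one must separately argue, using the first stationarity clause, that both $\sum_{l\in S^+\cup S^-} x_l\sum_i y_iA_{il}$ and $\sum_{l\in S^+\cup S^-} w_l x_l^{1-\alpha}$ are $O(\gamma)\sum_j w_j x_j^{1-\alpha}$, after which the unbounded $\xi_l$'s do no harm. This two-set decomposition is the ``intricate case analysis'' alluded to in Section~\ref{section:convergence}.

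Two smaller points. Your parenthetical that a coordinate at $x_j=\delta_j$ can sit at $\xi_j<1-\gamma$ without updating is incorrect: the algorithm always multiplies by $1+\beta_1$ when $\xi_j\le 1-\gamma$; the threshold $\delta_j$ only clamps \emph{decreases}. And the per-round multiplicative factor on $-\Phi$ one actually obtains is $\Omega(\beta\gamma^2(\alpha-1))$ or $\Omega(\gamma^3)$ (Lemmas~\ref{lemma:alpha>1-large-sum-yi} and~\ref{lemma:alpha>1-mul-inc-non-stat}), not $\Omega(\beta^2/\alpha)$.
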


\iffullpaper \else Proofs of Theorem \ref{thm:convergence-alpha<1} and Theorem \ref{thm:convergence-alpha=1} are provided in {the full version of the paper}. We sketch the proof of Theorem \ref{thm:convergence-alpha>1} in Section \ref{section:alpha>1}.\fi

\paragraph{Feasibility and Approximate Complementary Slackness.}
The following three lemmas are preliminaries for the convergence time analysis. Lemma \ref{lemma:feasibility} shows that starting from a feasible solution, the algorithm always maintains a feasible solution. Lemma \ref{lemma:self-stabilization} shows that any violated constraint becomes feasible within poly-logarithmic number of rounds, and remains feasible forever after. Combined with Lemma \ref{lemma:feasibility}, Lemma \ref{lemma:self-stabilization} allows us to focus only on the rounds with feasible solutions $x$. Lemma \ref{lemma:approx-comp-slack} shows that after a poly-logarithmic number of rounds,  approximate complementary slackness (KKT condition (\ref{eq:K3})) holds in an aggregate sense: $\sum_{i=1}^m y_i (x)\big(\sum_{j=1}^n A_{ij}x_j - 1\big)\approx 0$. 

\begin{lemma}\label{lemma:feasibility}
If the algorithm starts from a feasible solution, then the algorithm maintains a feasible solution $x$: $x_j\geq 0$, $\forall j$ and $\sum_{j=1}^n A_{ij}x_j\leq 1$, $\forall i$, in each round.
\end{lemma}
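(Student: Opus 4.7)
My plan is to prove this by induction on the round index $t$, with the base case provided by the hypothesis that the starting $x$ is feasible. For the inductive step I will assume that $x$ at the start of round $t$ satisfies $x_j \geq 0$ and $\sigma_i := \sum_j A_{ij} x_j \leq 1$ for all $i$, and show that after all agents execute lines 1--6 the resulting vector $x^{\text{new}}$ is still feasible.

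Non-negativity is immediate: line 1 forces $x_j \geq \delta_j > 0$, line 4 multiplies by $1+\beta_1 > 0$, and line 6 takes a $\max$ with $\delta_j > 0$. Hence $x_j^{\text{new}} \geq \delta_j > 0$. The real work is showing $\sum_j A_{ij} x_j^{\text{new}} \leq 1$ for every $i$. Fixing $i$ and writing $\sigma_i$ for its value after line 1, I would split into two cases based on the size of $\sigma_i$.

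The easy case is $\sigma_i \leq 1/(1+\beta_1)$: every $x_j$ grows by at most a factor $1+\beta_1$ in line 4 (line 6 can only decrease $x_j$), so $\sigma_i^{\text{new}} \leq (1+\beta_1)\sigma_i \leq 1$. The interesting case is $\sigma_i > 1/(1+\beta_1) \geq 1 - \beta_1$. Here I will argue that no agent $j$ with $A_{ij} \neq 0$ can trigger line 4. Suppose for contradiction such a $j$ does: the condition of line 3 gives $x_j^\alpha \sum_k y_k A_{kj} \leq (1-\gamma) w_j$. Dropping all but the $k=i$ term on the left and using $A_{ij} \geq 1$ (the normalization hypothesis) yields $x_j^\alpha y_i \leq (1-\gamma) w_j$, i.e., $y_i \leq (1-\gamma) w_j/x_j^\alpha$. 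On the other hand, $\sigma_i > 1 - \beta_1$ combined with the dual update $y_i = Ce^{\kappa(\sigma_i - 1)}$ gives $y_i > Ce^{-\kappa\beta_1}$. Using $C = w_j/\delta_j^\alpha$ and chaining these two bounds gives $(x_j/\delta_j)^\alpha < (1-\gamma)e^{\kappa\beta_1}$. The parameters $\beta$ and $\kappa$ were calibrated exactly so that $(1+\beta)^\alpha e^{\kappa\beta} \leq 1 + \gamma/4$, so multiplying through by $(1+\beta_1)^\alpha$ (recall $\beta_1=\beta$) turns this into $((1+\beta_1) x_j/\delta_j)^\alpha < (1-\gamma)(1+\gamma/4) < 1$, forcing $x_j < \delta_j$ and contradicting the post-line-1 lower threshold. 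Therefore no such $j$ exists, so $\sigma_i^{\text{new}} \leq \sigma_i \leq 1$.

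The main obstacle is the interesting case above: the proof must leverage the exponential growth of $y_i$ near the boundary of constraint $i$ to rule out further increases of any $x_j$ that would violate feasibility. This hinges on the coordinated choice of $C = w_j/\delta_j^\alpha$, the thresholds $\delta_j$, and the step size $\beta$ satisfying $(1+\beta)^\alpha e^{\kappa\beta} \leq 1 + \gamma/4$; if any of these ingredients is tuned incorrectly the contradiction collapses. Compared with the analogous feasibility lemma in \cite{AwerbuchKhandekar2009}, the bookkeeping is more delicate because of the exponent $\alpha$ and the agent-specific thresholds $\delta_j$, though the overall structure of the argument is the same.
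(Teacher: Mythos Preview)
Your argument is correct and uses the same key ingredients as the paper (the identity $C=w_j/\delta_j^{\alpha}$, the calibration $(1+\beta)^{\alpha}e^{\kappa\beta}\leq 1+\gamma/4$, and the threshold $x_j\geq\delta_j$), but the decomposition is genuinely different. The paper argues by direct contradiction \emph{after} the update: it assumes some constraint $k$ becomes violated, picks a participating $x_j$ that increased, and derives two incompatible bounds on $(x_j^{1})^{\alpha}y_k(x^{1})$ --- one from the line-3 condition propagated forward via the $1+\gamma/4$ growth bound, the other from $x_j^{1}\geq\delta_j$ together with $y_k(x^{1})>C$ forced by the violation. You instead work \emph{at decision time}: you split on whether $\sigma_i$ is below the threshold $1/(1+\beta_1)$ (so the uniform multiplicative growth is harmless) or above it (so you can show that line~3's trigger would force $x_j<\delta_j$, contradicting line~1 directly). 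Your route is a bit more structural in that it explicitly identifies the safe/unsafe regime for each constraint, whereas the paper's is slightly shorter because it never needs to introduce the threshold $1/(1+\beta_1)$ or handle the slack case separately. Both hinge equally on the tuning of $C$, $\delta_j$, and $\beta$; neither works without it.

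One minor remark: your case-2 conclusion ``$\sigma_i^{\text{new}}\leq\sigma_i\leq 1$'' implicitly uses that line~1 preserves $\sigma_i\leq 1$. For all rounds after the first this is a no-op (since $x_j\in[\delta_j,1]$ is maintained by lines~4 and~6), and the paper's proof makes exactly the same tacit assumption. Also, the multiplication by $(1+\beta_1)^{\alpha}$ in your last step is not actually needed: from $(x_j/\delta_j)^{\alpha}<(1-\gamma)e^{\kappa\beta}$ and $e^{\kappa\beta}\leq(1+\gamma/4)/(1+\beta)^{\alpha}\leq 1+\gamma/4$ you already get $(x_j/\delta_j)^{\alpha}<1$ directly.
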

\iffullpaper
\begin{proof}
By the statement of the lemma, the solution is feasible initially. From the way that the algorithm makes updates to the variables $x_j$, it is always true that $x_j \geq 0$, $\forall j$.

Now assume that $x$ becomes infeasible in some round, and let $x^0$ denote the (feasible) solution before that round, $x^1$ denote the (infeasible) solution after the round. We have:
\begin{align*}
\sum_{\ell=1}^n A_{i\ell}x_{\ell}^0 \leq 1, \quad\forall i\in\{1,...,m\}, \quad\text{ and }\quad
\sum_{\ell=1}^n A_{k\ell}x_{\ell}^1 >1, \quad\text{for some } k\in\{1,...,m\}.
\end{align*}
For this to be true, $x$ must have increased over at least one coordinate $j$ such that $A_{kj}\neq 0$. For such a change to be triggered by the algorithm, it must also be true that:
\begin{equation*}
(x_j^0)^{\alpha}\sum_{i=1}^m y_i(x^0) A_{ij}\leq w_j\left(1-\gamma\right).
\end{equation*}
Since, by the choice of $\beta_1 = \beta$, this term can increase by a factor of at most $1+\gamma/4$, it follows that:
\begin{equation*}
(x_j^1)^{\alpha}\sum_{i=1}^m y_i(x^1) A_{ij}\leq w_j(1-\gamma)\left(1+\frac{\gamma}{4}\right)<w_j.
\end{equation*}
This further implies:
\begin{equation*}
(x_j^1)^{\alpha}y_k(x^1)A_{kj}<w_j,
\end{equation*}
and since whenever $A_{kj}\neq 0$ we also have $A_{kj}\geq 1$, we get:
\begin{equation}
(x_j^1)^{\alpha}y_k(x^1)<w_j. \label{eq:feasibility-less-that-wj}
\end{equation}
On the other hand, since $x_j^1\geq \delta_j$, ${\delta_j}^{\alpha}=\frac{w_j}{C}$, and $\sum_{j=1}^n A_{kj}x_j^1 >1$:
\begin{equation*}
(x_j^1)^{\alpha} y_k(x^1) \geq \frac{w_j}{C}\cdot C\cdot e^{\kappa(\sum_{j=1}^n A_{kj}x_j^1 -1)}>w_j,
\end{equation*}
which contradicts (\ref{eq:feasibility-less-that-wj}).
\end{proof}
\fi

\begin{lemma}\label{lemma:self-stabilization}
If for any $i$: $\sum_{j=1}^n A_{ij}x_j > 1$, then after at most $\tau_1 = O(\frac{1}{\beta_2}\ln(nA_{\max}))$ rounds, it is always true that $\sum_{j=1}^n A_{ij}x_j \leq 1$.
\end{lemma}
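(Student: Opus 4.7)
The plan is to show that the per-constraint sum $S_i^t:=\sum_j A_{ij}x_j^t$ is non-increasing while $S_i^t>1$, and in fact decreases geometrically at rate $1-\beta_2$ until it falls below $1$, after which it remains there forever. The mechanism is the choice $C=w_j/\delta_j^\alpha$, which makes the dual $y_i=Ce^{\kappa(S_i^t-1)}$ large enough to force the KKT test whenever constraint $i$ is substantially violated.

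First I would show that no variable $x_j$ with $A_{ij}\neq 0$ can be increased while $S_i^t>1$. Using $x_j\geq\delta_j$ (enforced by Step~1 of the algorithm), $A_{ij}\geq 1$, and $C\delta_j^\alpha=w_j$, the bound
\[
\frac{x_j^\alpha\sum_k y_k A_{kj}}{w_j}\;\geq\;\frac{\delta_j^\alpha\,y_i\cdot 1}{w_j}\;=\;e^{\kappa(S_i^t-1)}\;>\;1\;>\;1-\gamma
\]
shows the increase branch cannot fire, so $S_i^t$ is monotonically non-increasing across rounds. Moreover, whenever $S_i^t>1+\tfrac{\ln(1+\gamma)}{\kappa}$ the same estimate exceeds $1+\gamma$, so the decrease branch is triggered for every $j$ with $A_{ij}\neq 0$ and $x_j^t>\delta_j$.

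In any such round, introduce $D:=\sum_{j:A_{ij}\neq 0}A_{ij}\delta_j$ and $V^t:=S_i^t-D$. A case analysis of the update $x_j^{t+1}=\max\{(1-\beta_2)x_j^t,\delta_j\}$ yields $x_j^{t+1}-\delta_j\leq(1-\beta_2)(x_j^t-\delta_j)$ for each such $j$ (and trivially for $j$ already at $\delta_j$), hence $V^{t+1}\leq(1-\beta_2)V^t$. The particular choice of $\delta_j$ in the algorithm gives $D\leq\tfrac{1}{mn}$ in both the $\alpha\leq 1$ and $\alpha>1$ regimes (a direct computation using $x_j\leq 1$, $A_{ij}\leq A_{\max}$, and the formulas for $\delta_j$). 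Starting from $V^0\leq S_i^0\leq nA_{\max}$, the geometric decrease then drives $V^t$ below $1-D\geq\tfrac12$, i.e. $S_i^t\leq 1$, in at most $\tau_1=O(\beta_2^{-1}\ln(nA_{\max}))$ rounds. Once $S_i^t\leq 1$ is reached, the same dual-lower-bound argument as in Lemma~\ref{lemma:feasibility} shows that no $x_j$ with $A_{ij}\neq 0$ can be increased past the feasibility boundary: an increase would post-facto require both ratio~$<1$ (from the update trigger) and ratio~$>1$ (from the lower bound above applied after the update), a contradiction.

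The main obstacle is the narrow ``dead zone'' $S_i^t\in(1,\,1+\ln(1+\gamma)/\kappa]$, where the single-constraint lower bound no longer exceeds $1+\gamma$ and the geometric-decrease step is not guaranteed to fire from the $y_i$ term alone. I would handle this by exploiting the monotonicity from the first step together with the fact that the width of the dead zone, $\ln(1+\gamma)/\kappa=\Theta(\varepsilon^2/\ln(\cdot))$, is minuscule compared to the constant distance $1-D\geq\tfrac12$ that the geometric phase already closes: the last geometric step above the zone lands strictly below $1$, or else the decrease still fires via the contributions of other $y_k$'s in the sum $\sum_k y_k A_{kj}$, whose magnitudes are controlled by the already-satisfied invariants. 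This accounting preserves the claimed bound $\tau_1=O(\beta_2^{-1}\ln(nA_{\max}))$.
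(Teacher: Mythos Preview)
Your first step (no variable in constraint $i$ can increase while $S_i^t>1$) and your final step (once $S_i^t\le 1$, Lemma~\ref{lemma:feasibility} keeps it there) are correct and match the paper. The gap is the dead-zone handling.

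Your claim that ``the last geometric step above the zone lands strictly below $1$'' is false. The decrease in $S_i$ per geometric step is at most $\beta_2(S_i^{t}-D)\approx\beta_2$, whereas the dead-zone width is $\ln(1+\gamma)/\kappa\approx\gamma/\kappa$. For $\alpha\ge 1$ one has $\beta_2=\beta=\gamma/(5(\kappa+\alpha))<\gamma/(5\kappa)$, so the step is strictly smaller than the zone width; for $\alpha<1$, $\beta_2=\beta^2/\ln(1/\delta_{\min})$ is dramatically smaller still. Hence the iterate can land (and remain for many rounds) inside $(1,\,1+\ln(1+\gamma)/\kappa]$, where your lower bound $e^{\kappa(S_i-1)}$ only guarantees the ratio exceeds $1$, not $1+\gamma$. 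Your fallback (``other $y_k$'s'') is not an argument: nothing in your invariants controls those contributions, and in particular nothing prevents all $x_j$ with $A_{ij}\neq 0$ from sitting in the no-update band $(1-\gamma,1+\gamma)$ on the KKT ratio while $S_i$ stalls just above $1$.

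The paper avoids the dead zone entirely by exploiting a fact you never use: the \emph{magnitude} of $C$. From the explicit formula for $\delta_j$ one has $C=w_j/\delta_j^\alpha\ge 2w_{\max}nA_{\max}$. Whenever $S_i>1$, pigeonhole gives some $k$ with $A_{ik}x_k>1/n$, hence $x_k>1/(nA_{\max})$; for this large variable,
\[
x_k^\alpha\sum_l y_l A_{lk}\;\ge\; x_k^\alpha\,y_i\;>\;\frac{C}{nA_{\max}}\;\ge\;2w_{\max}\;>\;w_k(1+\gamma),
\]
so the decrease branch fires for $x_k$ regardless of how close $S_i$ is to $1$. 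Since no variable in the constraint ever increases, each such $x_k$ drops below $1/(nA_{\max})$ within $O(\beta_2^{-1}\ln(nA_{\max}))$ rounds, forcing $S_i\le 1$. Replacing your all-variables geometric argument by this one-large-variable argument removes the dead zone and gives the bound directly.
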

\begin{proof}
Suppose that $\sum_{j=1}^n A_{ij}x_j > 1$ for some $i$. Then $y_i > C$, and for every $x_j$ with $A_{ij}\neq 0$:
\begin{align*}
{x_j}^{\alpha}\sum_{l=1}^m y_l(x)A_{lj} \geq {x_j}^{\alpha}y_i(x)A_{ij}\geq {\delta_j}^{\alpha}C \geq w_j>w_j(1-\gamma),
\end{align*}
and therefore, none of the variables that appear in $i$ increases.

Since $\sum_{j=1}^n A_{ij}x_j > 1$, there exists at least one $x_k$ with $A_{ik}\neq 0$ such that $x_k \geq \frac{\sum_{j=1}^n A_{ij}x_j}{A_{ik}n}>\frac{1}{nA_{\max}}$. For each such $x_k$, since $C\geq 2w_{\max}nA_{\max}$:
\begin{equation*}
{x_k}^{\alpha}\sum_{l=1}^m y_l(x)A_{lj}\geq C\frac{1}{nA_{\max}}\geq 2 w_{\max} > w_k(1+\gamma),  
\end{equation*}
and therefore, $x_k$ decreases (by a factor $(1-\beta_2)$). As $x_k\leq 1$, after at most $O(\frac{1}{\beta_2}\ln(nA_{\max}))$ rounds in which $\sum_{j=1}^n A_{ij}x_j > 1$, we must have $x_k \leq \frac{1}{nA_{\max}}$, and therefore, $\sum_{j=1}^n A_{ij}x_j \leq 1$.
 
Using the same arguments as in the proof of Lemma \ref{lemma:feasibility}, the constraint $i$ never gets violated again.
\end{proof}


\begin{lemma}\label{lemma:approx-comp-slack}
If the algorithm starts from a feasible solution, then after at most $\tau_0 = \frac{1}{\beta}\ln\left(\frac{1}{\delta_{\min}}\right)$ rounds, it is always true that:
\begin{enumerate}[noitemsep,topsep=3pt]
\item There exists at least one approximately tight constraint: $\max_i\big\{\sum_{j=1}^n A_{ij}x_j\big\}\geq 1-(1+1/\kappa)\varepsilon$, 
\item $\sum_{i=1}^m y_i\leq (1+3\varepsilon)\sum_{j=1}^nx_j\sum_{i=1}^m y_i A_{ij}$, and
\item $(1-3\varepsilon)\sum_{i=1}^m y_i\leq \sum_{j=1}^n x_j\sum_{i=1}^m y_i A_{ij}\leq \sum_{i=1}^m y_i$.
\end{enumerate}
\end{lemma}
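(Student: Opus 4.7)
The plan is to prove the three parts in sequence, using part 1 as a stepping stone for parts 2 and 3. By Lemma \ref{lemma:feasibility}, starting from a feasible solution every subsequent iterate is feasible, so I may assume throughout that $s_i := 1 - \sum_{j} A_{ij}x_j \in [0,1]$ for every constraint $i$.

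For part 1, I argue by contradiction. Suppose some round has $s_i > (1+1/\kappa)\varepsilon$ for every constraint. Then $y_i(x) < C e^{-\kappa\varepsilon - 1}$ for all $i$, and plugging in $\kappa = \varepsilon^{-1}\ln(CmA_{\max}/(\varepsilon w_{\min}))$ gives $\sum_i y_i A_{ij} \le mA_{\max} C e^{-\kappa\varepsilon-1} = \varepsilon w_{\min}/e \leq (1-\gamma)w_j$ for every $j$ (using $\gamma = \varepsilon/4$ and $\varepsilon \leq 1/6$). Combined with $x_j^\alpha \leq 1$, this triggers line 3 of the algorithm and every $x_j$ is multiplied by $(1+\beta_1)=(1+\beta)$. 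If this situation persisted for $\tau_0 = \beta^{-1}\ln(1/\delta_{\min})$ consecutive rounds, then starting from $x_j \geq \delta_j \geq \delta_{\min}$ the variable $x_j$ would grow past $1$; for any constraint $i$ with $A_{ij}\neq 0$ this yields $\sum_{j} A_{ij} x_j \geq A_{ij} \geq 1$, contradicting the blanket assumption $s_i > (1+1/\kappa)\varepsilon$. Hence within any window of $\tau_0$ consecutive rounds part 1 must hold in at least one round; to upgrade this to ``every round after round $\tau_0$'', I use that the update factors $(1\pm\beta)$ combined with the feasibility guarantee $\sum_j A_{ij}x_j\le 1$ imply that the tightest constraint's slack can change by only $O(\beta) \ll \varepsilon$ per round.

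For parts 2 and 3, the identity $\sum_j x_j \sum_i y_i A_{ij} = \sum_i y_i(1 - s_i)$ reduces all three inequalities to bounding $T := \sum_i y_i s_i$ against $\sum_i y_i$. The upper bound in part 3 is immediate from $s_i \geq 0$. For the other direction, I split the constraints into $S = \{i : s_i \leq 2\varepsilon\}$ and $L = \{i : s_i > 2\varepsilon\}$ and bound
\begin{equation*}
\sum_{i\in S} y_i s_i \leq 2\varepsilon \sum_i y_i, \qquad \sum_{i\in L} y_i s_i \leq \sum_{i\in L} y_i \leq m C e^{-2\kappa\varepsilon}.
\end{equation*}
By part 1 some $i^*$ has $y_{i^*} \geq C e^{-\kappa\varepsilon-1}$, so $\sum_i y_i \geq C e^{-\kappa\varepsilon - 1}$; plugging in the definition of $\kappa$ verifies $m C e^{-2\kappa\varepsilon} \leq \varepsilon \cdot C e^{-\kappa\varepsilon -1} \leq \varepsilon \sum_i y_i$. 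Summing yields $T \leq 3\varepsilon \sum_i y_i$, from which parts 2 and 3 follow by a short rearrangement (using $1/(1+3\varepsilon) \geq 1-3\varepsilon$).

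\textbf{Main obstacle.} The delicate point is calibrating $\kappa$: it has to be large enough to (i) make the gradient conditions fail on the small side whenever part 1 fails (forcing the universal $x_j$-increase in the contradiction), and (ii) make the contribution of slack constraints negligible in parts 2 and 3, yet its size also propagates into downstream bounds (via the $(1+1/\kappa)\varepsilon$ factor and the exponent $\kappa\beta$ controlling per-round volatility of the dual products), so the specific form $\kappa = \varepsilon^{-1}\ln(CmA_{\max}/(\varepsilon w_{\min}))$ is essentially forced. A secondary subtlety, easy to miss, is upgrading ``part 1 holds at some round in each $\tau_0$-window'' to ``part 1 holds at every round after $\tau_0$''; this requires the $O(\beta)$ per-round-change estimate for $\sum_j A_{ij}x_j$, which in turn is what motivated the small choice of $\beta$ made in Section~\ref{section:algorithm}.
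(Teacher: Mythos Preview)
Your approach is essentially the paper's: for part~1 you argue that if every constraint is slack then all $x_j$ increase multiplicatively and eventually overshoot, and for parts~2--3 you split constraints by slack and use the exponential decay of $y_i$ to kill the loose contribution. Two small execution points need tightening. In part~1, running the contradiction with threshold $(1+1/\kappa)\varepsilon$ leaves no margin for the upgrade you flag: once the minimum slack hits $(1+1/\kappa)\varepsilon$ it can still drift up by $\beta$ without triggering the all-increase, so your induction does not close. The paper instead first drives some slack down to $\le\varepsilon$ (your own computation works just as well with threshold $\varepsilon$), and then the all-increase fires whenever the minimum slack exceeds $\varepsilon$, so it can never climb past $\varepsilon+\beta<(1+1/\kappa)\varepsilon$. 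In part~2, $T\le 3\varepsilon\sum_i y_i$ only gives $\sum_i y_i\le(1-3\varepsilon)^{-1}\sum_j x_j\sum_i y_iA_{ij}$, which is strictly larger than $(1+3\varepsilon)$ times the right-hand side; the paper hits the stated constant by thresholding relative to the tightest constraint and getting the sharper intermediate factor $1.2\varepsilon$, so that $\tfrac{1+1.2\varepsilon}{1-1.2\varepsilon}\le 1+3\varepsilon$ for $\varepsilon\le 1/6$.
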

\iffullpaper
\begin{proof}
Suppose that $\max_i \sum_{j=1}^n A_{ij}x_j < 1- \varepsilon$. Then for each $y_i$ we have:
\begin{equation*}
y_i\leq C\cdot e^{-\kappa\varepsilon}=C\cdot\frac{\varepsilon  w_{\min}}{CmA_{\max}}=\frac{\varepsilon  w_{\min}}{m{A_{\max}}}.
\end{equation*}
Due to Lemma \ref{lemma:feasibility}, we have that $x$ is feasible in every round, which implies that $x_j\leq 1$ $\forall j$. This further gives:
\begin{equation*}
{x_j}^{\alpha}\sum_{i=1}^m y_iA_{ij}\leq w_j\varepsilon \leq w_j(1-\gamma),
\end{equation*}
and, therefore, all variables $x_j$ increase by a factor $1+\beta$. From Lemma \ref{lemma:feasibility}, since the solution always remains feasible, none of the variables can increase to a value larger than 1. Therefore, after at most $\tau_0=\log_{1+\beta}\left(\frac{1}{\delta_{\max}}\right) \leq \frac{1}{\beta}\ln\left(\frac{1}{\delta_{\max}}\right)$ rounds, there must exist at least one $i$ such that $\sum_{j=1}^n A_{ij}x_j\geq 1-\varepsilon$. If in any round $\max_{i}\sum_{j=1}^n A_{ij}x_j$ decreases, it can decrease by at most $\beta_2\sum_{j=1}^n A_{ij}x_j\leq\beta\sum_{j=1}^n A_{ij}x_j\leq\beta<\frac{\varepsilon}{5\kappa}$. Therefore, in every subsequent round 
\begin{equation*}\max_{i}\sum_{j=1}^n A_{ij}x_j>1-\Big(1+\frac{1}{5\kappa}\Big)\varepsilon.
\end{equation*}

For the second part of the lemma, let $S = \{i: \sum_{j=1}^nA_{ij}x_j < \max_{k\in \{1,...,m\}}\sum_{j=1}^n A_{kj}x_j - \frac{\kappa-1}{5\kappa}\varepsilon\}$ be the set of constraints that are at least ``$\frac{\kappa-1}{5\kappa}\varepsilon$-looser" than the tightest constraint. Then for $i\in S$ we have 
\begin{equation*}
y_i \leq e^{-\frac{\kappa-1}{5}\varepsilon}\max_{k\in \{1,...,m\}}y_k < \frac{\varepsilon}{m}e^{\varepsilon/5}\max_{k\in \{1,...,m\}}y_k < 1.2\frac{\varepsilon}{m}\max_{k\in \{1,...,m\}}y_k.
\end{equation*}
This further gives:
\begin{equation*}
\sum_{i=1}^m y_i = \sum_{i\in S}y_i + \sum_{k\notin S}y_k<(1+1.2\varepsilon)\sum_{i\notin S}y_i. 
\end{equation*}
 Moreover, for each $i\notin S$ we have $y_i\sum_{j=1}^nA_{ij}x_j \geq (1-1.2\varepsilon)y_i$, since for $i\notin S$:
 \begin{align*}
\sum_{j=1}^nA_{ij}x_j \geq \max_{k\in \{1,...,m\}}A_{kj}x_j - \frac{\kappa-1}{5\kappa}\varepsilon\geq 1- \left(1+\frac{1}{5\kappa} + \frac{\kappa-1}{5\kappa}\right)\varepsilon = 1 - 1.2\varepsilon.
\end{align*}
Therefore:
\begin{align*}
\sum_{i=1}^m y_i &<\frac{1+1.2\varepsilon}{1-1.2\varepsilon}\sum_{i\notin S}y_i\sum_{j=1}^n A_{ij}x_j\\
&\leq (1+3\varepsilon)\sum_{i\notin S}y_i\sum_{j=1}^n A_{ij}x_j \quad(\text{from }\varepsilon\leq 1/6)\\
&\leq (1+3\varepsilon)\sum_{i=1}^m y_i\sum_{j=1}^n A_{ij}x_j.
\end{align*}
Interchanging the order of summation in the last line, we reach the desired inequality.

The proof of the last part of the lemma follows from feasibility: $\sum_{j}A_{ij}x_j \leq 1$, $\forall i$ (Lemma \ref{lemma:feasibility}), and from $\frac{1}{1+3\varepsilon} \geq 1 - 3\varepsilon$.
\end{proof}
\fi
Lemmas analogous to \ref{lemma:feasibility} and \ref{lemma:approx-comp-slack} also appear in \cite{AwerbuchKhandekar2009}. {However, the proofs of Lemmas \ref{lemma:feasibility} and \ref{lemma:approx-comp-slack} require new ideas compared to the proofs of the corresponding lemmas in \cite{AwerbuchKhandekar2009}. We need to be much more careful in our choice of lower thresholds $\delta_j$ and constant $C$ in the dual variables, particularly by choosing $C$ as a function of several variables, rather than as a constant. The choice of $\delta_j$'s is also sensitive as smaller $\delta_j$'s would make the potential function range too large, while larger $\delta_j$'s would cause more frequent decrease of ``small'' variables.  In either case, the convergence time would increase. }
\paragraph{Decrease of Small Variables.}

The following lemma is also needed for the convergence analysis. It shows that if some variable $x_j$  decreases by less than a multiplicative factor $(1-\beta_2)$, i.e., $x_j<\frac{\delta_j}{1-\beta_2}$ and $x_j$ decreases, then $x_j$ must be part of at least one approximately tight constraint. This lemma will be used later to show that in any round the increase in the potential due to the decrease of ``small'' variables is dominated by the decrease of ``large'' variables (i.e., the variables that decrease by a multiplicative factor $(1-\beta_2)$). 

\begin{lemma}\label{lemma:small-x-tight-yi}
Consider the rounds that happen after the initial $\tau_1 = O(\frac{1}{\beta_2}\ln(nA_{\max}))$ rounds. If in some round there is  a variable $x_j<\frac{\delta_j}{1-\beta_2}$ that decreases, then in the same round for some $i$ with $A_{ij} \neq 0$ it holds that: $y_i(x) \geq \frac{\sum_{l=1}^m A_{lj}y_l(x)}{mA_{\max}}$ and $\sum_{k=1}^n A_{ik}x_k > 1 - \frac{\varepsilon}{2}$.
\end{lemma}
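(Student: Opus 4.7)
The strategy is to deduce both conclusions directly from the algorithm's triggering condition and the explicit form of the dual variables. Since $x_j$ decreases in the current round, the algorithm must have entered the ``else if'' branch, so
\begin{equation*}
{x_j}^\alpha \sum_{i=1}^m y_i(x) A_{ij} \geq w_j(1+\gamma).
\end{equation*}
Substituting the hypothesis $x_j < \delta_j/(1-\beta_2)$ (which gives ${x_j}^\alpha < {\delta_j}^\alpha/(1-\beta_2)^\alpha$) together with the defining identity $w_j = C{\delta_j}^\alpha$ yields the key inequality
\begin{equation*}
\sum_{i=1}^m y_i(x) A_{ij} > C(1+\gamma)(1-\beta_2)^\alpha.
\end{equation*}

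For the first claim, a pigeonhole/averaging argument applied to the $m$ summands produces some index $i$ (necessarily with $A_{ij}\ne 0$, since the sum is positive) satisfying $y_i(x)A_{ij} \geq \tfrac{1}{m}\sum_{l=1}^m y_l(x) A_{lj}$; dividing by $A_{ij}\leq A_{\max}$ immediately gives the desired lower bound $y_i(x)\geq \sum_l A_{lj} y_l(x)/(mA_{\max})$.

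For the second claim, I chain the above two bounds to obtain $y_i(x) \geq C(1+\gamma)(1-\beta_2)^\alpha/(mA_{\max})$. Plugging in $y_i(x) = Ce^{\kappa(\sum_k A_{ik}x_k-1)}$ and taking logarithms, the claim reduces to showing that the correction $\kappa^{-1}\ln\bigl(mA_{\max}/((1+\gamma)(1-\beta_2)^\alpha)\bigr)$ is at most $\varepsilon/2$.

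The main obstacle is this last estimate, and the plan is to handle it with two observations about the parameter choices. First, the explicit formula for $\delta_j$ combined with $C = w_j/{\delta_j}^\alpha$ yields $C \geq 2w_{\max}mn^2A_{\max}$ in both the $\alpha\leq 1$ and $\alpha>1$ regimes; substituting into the definition of $\kappa$ gives $\kappa \geq (2/\varepsilon)\ln(mA_{\max})$, so the contribution of $\ln(mA_{\max})$ to the correction is at most $\varepsilon/2$. Second, the choices of $\beta_2$ (equal to $\beta=\gamma/(5(\kappa+\alpha))$ when $\alpha\geq 1$, and to $\beta^2/\ln(1/\delta_{\min})$ when $\alpha<1$) guarantee $\alpha\beta_2 \ll \gamma$, so $(1+\gamma)(1-\beta_2)^\alpha > 1$ and hence $\ln((1+\gamma)(1-\beta_2)^\alpha) \geq 0$, meaning this factor only helps. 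Combining the two estimates gives $\sum_k A_{ik}x_k > 1 - \varepsilon/2$, as desired. Note that this argument does not even invoke feasibility; the ``after $\tau_1$ rounds'' assumption is used only to stay consistent with the feasibility regime established by Lemma~\ref{lemma:self-stabilization}.
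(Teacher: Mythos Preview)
Your proof is correct and follows essentially the same approach as the paper: an averaging argument for the first claim, then chaining the decrease trigger ${x_j}^\alpha\sum_l y_l A_{lj}\geq w_j(1+\gamma)$ with $x_j<\delta_j/(1-\beta_2)$ and $C=w_j/\delta_j^\alpha$ to lower-bound $y_i$, and finally converting to a statement about $\sum_k A_{ik}x_k$ via $y_i=Ce^{\kappa(\cdot)}$. The only cosmetic difference is in the last step: you bound $\kappa\geq (2/\varepsilon)\ln(mA_{\max})$ directly from $C\geq 2w_{\max}mn^2A_{\max}$ and use $(1+\gamma)(1-\beta_2)^\alpha>1$, whereas the paper rewrites $e^{\kappa t}$ in the base $\varepsilon w_{\min}/(CmA_{\max})$ and compares it to $(\sqrt{\varepsilon}/(mA_{\max}))^2$; both computations are equivalent.
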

\iffullpaper
\begin{proof}
Suppose that some $x_j<\frac{\delta_j}{1-\beta_2}$ triggers a decrease over the $j^\text{th}$ coordinate.
The first part of the Lemma is easy to show, simply by using the argument that at least one term of a summation must be higher than the average, i.e., there exists at least one $i$ with $A_{ij}\neq 0$ such that:
\begin{align*}
y_i(x) A_{ij} \geq \frac{\sum_{l=1}^m A_{lj}y_l(x)}{m}
\quad\Rightarrow\quad y_i \geq \frac{\sum_{l=1}^m A_{lj}y_l(x)}{mA_{\max}}.
\end{align*}

For the second part, as $x_j < \frac{\delta_j}{1-\beta_2}$, we have that:
\begin{align*}
{x_j}^\alpha y_i(x) \geq \frac{{x_j}^\alpha\sum_{l=1}^m A_{lj}y_l(x)}{mA_{\max}} 
\quad\Rightarrow\quad y_i(x) > \frac{(1-\beta_2)^\alpha}{{\delta_j}^\alpha}\frac{{x_j}^\alpha\sum_{l=1}^m A_{lj}y_l(x)}{mA_{\max}}.
\end{align*}

Since $x_j$ decreases, we have that ${x_j}^{\alpha}\sum_{l=1}^m y_i(x)A_{lj}\geq w_j(1+\gamma)$, and therefore  
$
y_i(x)>\frac{w_j}{{\delta_j}^{\alpha}}\frac{(1+\gamma)(1-\beta_2)^\alpha}{m{A_{\max}}}.
$
Moreover, as $y_i(x)=C\cdot e^{\kappa(\sum_{k=1}^nA_{ik}x_k - 1)}$, and $C = \frac{w_j}{{\delta_j}^{\alpha}}$, it follows that:
\begin{equation}
e^{\kappa(\sum_{k=1}^nA_{ik}x_k - 1)} > \frac{(1+\gamma)(1-\beta_2)^\alpha}{m{A_{\max}}}.\label{eq:y-bound-1}
\end{equation}
Observe that for $\alpha \leq 1$:
\begin{equation}\label{eq:eps-8-1}
(1+\gamma)(1-\beta_2)^\alpha\geq(1+\gamma)(1-\beta_2)>\left(1+\frac{\varepsilon}{4}\right)\left(1-\frac{\varepsilon}{20(\kappa+1)}\right)>1>\sqrt{\varepsilon},
\end{equation}
while for $\alpha > 1$, since $\varepsilon\alpha \leq \frac{9}{10}$:
\begin{align}
(1+\gamma)(1-\beta_2)^\alpha \geq (1+\gamma)(1-\alpha\beta_2) \geq \left(1+\gamma\right)\left(1 - \frac{\gamma\varepsilon\alpha}{5}\right)\geq 1>\sqrt{\varepsilon} \label{eq:eps-8-1-2},
\end{align}
where we have used the generalized Bernoulli's inequality for $(1-\beta_2)^\alpha \geq (1-\alpha\beta_2)$ \cite{mitrinovic1970analytic}, and then $\beta_2 = \beta = \frac{\gamma}{5(\kappa+\alpha)}<\frac{\gamma\varepsilon}{5}$.
Recalling that $\kappa = \dfrac{1}{\varepsilon}\ln\left(\dfrac{CmA_{\max}}{\varepsilon w_{\min}}\right)$, and combining (\ref{eq:y-bound-1}) with (\ref{eq:eps-8-1}) and (\ref{eq:eps-8-1-2}):
\begin{equation*}
\left(\frac{\varepsilon w_{\min}}{CmA_{\max}}\right)^{\frac{1-\sum_{k=1}^nA_{ik}x_k}{\varepsilon}}>\frac{\sqrt{\varepsilon}}{m{A_{\max}}}.
\end{equation*}
Finally, as $C \geq 2 w_{\max}nmA_{\max}$, it follows that 
$
\frac{ w_{\min}\varepsilon}{CmA_{\max}}\leq\frac{\varepsilon w_{\min}}{2w_{\max}nm^2{A_{\max}}^2}< \left(\frac{\sqrt{\varepsilon}}{m{A_{\max}}}\right)^2<1,
$
which gives:
\begin{equation*}
\frac{1-\sum_{k=1}^nA_{ik}x_k}{\varepsilon} < \frac{1}{2} \Leftrightarrow \sum_{k=1}^nA_{ik}x_k > 1- \frac{\varepsilon}{2}.
\end{equation*}
\end{proof}
\fi
 
\paragraph{Potential.}

We use the following potential function to analyze the convergence time:
\begin{align*}
\Phi(x) = p_\alpha(x) - \frac{1}{\kappa}\sum_{i=1}^m y_i(x),
\end{align*}
where $p_\alpha(x) = \sum_{j=1}^n w_jf_{\alpha}({x_j})$ and $f_\alpha$ is defined by (\ref{eq:f-alpha}).
The potential function is strictly concave and its partial derivative with respect to any variable $x_j$ is:
\begin{align}
\frac{\partial \Phi(x)}{\partial x_j} = \frac{w_j}{{x_j}^{\alpha}}-\sum_{i=1}^m y_i(x)A_{ij}
= \frac{w_j}{{x_j}^{\alpha}}\left(1 - \frac{{x_j}^{\alpha}\sum_{i=1}^my_i(x)A_{ij}}{w_j}\right).\label{eq:potential-derivatives}
\end{align}

The following fact (given in a similar form in \cite{AwerbuchKhandekar2009}), which follows directly from the Taylor series representation of concave functions, will be useful for the potential increase analysis:
\begin{fact}\label{fact:taylor}
For a differentiable concave function $f:\mathbb{R}^n\rightarrow\mathbb{R}$ and any two points $x^0, x^1 \in \mathbb{R}^n$:
\begin{align*}
\sum_{j=1}^n \frac{\partial f(x^0)}{\partial x_j}(x_j^1-x_j^0)\geq f(x^1)-f(x^0)\geq \sum_{j=1}^n \frac{\partial f(x^1)}{\partial x_j}(x_j^1-x_j^0).
\end{align*}
\end{fact}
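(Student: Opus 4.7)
\textbf{Proof proposal for Fact \ref{fact:taylor}.} The plan is to derive both inequalities from the standard first-order characterization of concavity, namely that a differentiable function $f:\mathbb{R}^n\to\mathbb{R}$ is concave if and only if for all $u,v\in\mathbb{R}^n$,
\begin{equation*}
f(v) \,\le\, f(u) + \nabla f(u)^{\!\top}(v-u).
\end{equation*}
I would first establish this characterization (or simply cite it as standard), since both inequalities in the fact are direct consequences of it.

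For the left inequality, apply the above with $u = x^0$ and $v = x^1$, which yields
\begin{equation*}
f(x^1) - f(x^0) \,\le\, \nabla f(x^0)^{\!\top}(x^1 - x^0) \,=\, \sum_{j=1}^n \frac{\partial f(x^0)}{\partial x_j}(x_j^1 - x_j^0).
\end{equation*}
For the right inequality, apply the characterization with the roles swapped, i.e., $u=x^1$ and $v=x^0$, to obtain $f(x^0) - f(x^1) \le \nabla f(x^1)^{\!\top}(x^0 - x^1)$, and then multiply through by $-1$ to rearrange:
\begin{equation*}
f(x^1) - f(x^0) \,\ge\, \nabla f(x^1)^{\!\top}(x^1 - x^0) \,=\, \sum_{j=1}^n \frac{\partial f(x^1)}{\partial x_j}(x_j^1 - x_j^0).
\end{equation*}
Chaining these two bounds gives the claimed two-sided inequality.

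If a self-contained derivation of the first-order characterization is desired, I would give the standard one-line argument: by concavity, $f(x^0 + t(x^1-x^0)) \ge (1-t)f(x^0) + t f(x^1)$ for $t\in(0,1]$; rearranging gives
\begin{equation*}
\frac{f(x^0 + t(x^1-x^0)) - f(x^0)}{t} \,\ge\, f(x^1) - f(x^0),
\end{equation*}
and letting $t\to 0^+$ the left-hand side converges to the directional derivative $\nabla f(x^0)^{\!\top}(x^1-x^0)$, yielding the desired upper bound on $f(x^1)-f(x^0)$. There is no real obstacle here — the fact is a textbook consequence of concavity; the only care needed is to be explicit that the gradient evaluated at $x^0$ gives an upper bound on the increment while the gradient evaluated at $x^1$ gives a lower bound (the standard supporting/supporting-hyperplane picture), which is exactly what the two applications above encode.
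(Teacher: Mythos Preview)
Your proof is correct and is exactly the standard argument. The paper does not actually prove this statement; it is stated as a fact that ``follows directly from the Taylor series representation of concave functions,'' which is precisely the first-order characterization you invoke.
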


Using Fact \ref{fact:taylor} and (\ref{eq:potential-derivatives}), we show the following lemma:

\begin{lemma}\label{lemma:potential-increase}
Starting with a feasible solution and throughout the course of the algorithm, the potential function $\Phi(x)$ never decreases. Letting $x^0$ and $x^1$ denote the values of $x$ before and after a round update, respectively, the potential function increase is lower-bounded as:
\begin{equation*}
\Phi(x^1) -\Phi(x^0) \geq \sum_{j=1}^n w_j \frac{\big|x_j^1 - x_j^0\big|}{(x_j^1)^{\alpha}}\Big|1 - \frac{(x_j^1)^{\alpha}\sum_{i=1}^my_i(x^1)A_{ij}}{w_j} \Big|. \label{eq:general-potential-increase}
\end{equation*}
\end{lemma}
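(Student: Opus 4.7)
The plan is to exploit the concavity of $\Phi$ by applying Fact~\ref{fact:taylor} at the two configurations $x^0$ (start of round) and $x^1$ (end of round). Since the initial projection step $x_j \leftarrow \min\{\max\{x_j,\delta_j\},1\}$ places every coordinate in $[\delta_j,1]$, the partial-derivative formula (\ref{eq:potential-derivatives}) is valid at both points, and Fact~\ref{fact:taylor} yields
\begin{equation*}
\Phi(x^1) - \Phi(x^0) \;\geq\; \sum_{j=1}^n \frac{\partial \Phi(x^1)}{\partial x_j}\,(x_j^1 - x_j^0) \;=\; \sum_{j=1}^n \frac{w_j}{(x_j^1)^{\alpha}}\Big(1-\xi_j^1\Big)(x_j^1 - x_j^0),
\end{equation*}
where $\xi_j^1 := (x_j^1)^{\alpha}\sum_{i=1}^m y_i(x^1)A_{ij}/w_j$. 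The entire statement of the lemma reduces to showing that every summand on the right is non-negative, equal to $w_j|x_j^1-x_j^0|\cdot|1-\xi_j^1|/(x_j^1)^{\alpha}$; the monotonicity of $\Phi$ is then immediate.

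Next I would do a per-coordinate case analysis in which the key tool is the ``$(1\pm\gamma/4)$ per-round stability'' of $\xi_j$ that was built into the choice of $\beta$ in Section~\ref{section:algorithm}. If agent $j$ does nothing in this round, the term is zero. If agent $j$ triggers the increase branch, then $\xi_j^0\leq 1-\gamma$, so $\xi_j^1\leq \xi_j^0\cdot(1+\gamma/4)\leq (1-\gamma)(1+\gamma/4)<1$; combined with $x_j^1>x_j^0$, the sign pattern of the product is $(+)(+)$. If agent $j$ triggers the decrease branch, then $\xi_j^0\geq 1+\gamma$, so $\xi_j^1\geq \xi_j^0\cdot(1-\gamma/4)\geq (1+\gamma)(1-\gamma/4)>1$, and combined with $x_j^1<x_j^0$ the sign pattern is $(-)(-)$. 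In each case the signed product equals the absolute value expression claimed by the lemma.

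In the $\alpha<1$ regime the asymmetric choice $\beta_2 = \beta^2/\ln(1/\delta_{\min})<\beta_1$ gives an even tighter one-sided stability $\xi_j^1\geq \xi_j^0(1-(\gamma/4)\cdot\beta/\ln(1/\delta_{\min}))$ on decreases (as noted in the Remark below the algorithm), and the same reasoning goes through. Summing the non-negative terms over $j$ delivers the claimed lower bound on $\Phi(x^1)-\Phi(x^0)$.

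The one subtle point, which I view as the main obstacle, is the ``truncated'' decrease step in which $x_j^1 = \delta_j > x_j^0(1-\beta_2)$: here the coordinate decreases by strictly less than the nominal factor $(1-\beta_2)$, and one might worry that the $(1\pm\gamma/4)$ bound on $\xi_j^1/\xi_j^0$ was derived under the assumption of a full multiplicative update. However, in this case the multiplicative change in $(x_j)^{\alpha}$ is \emph{closer to $1$} than in the untruncated worst case, while the bound on every $y_i(x^1)/y_i(x^0)$ remains $e^{\kappa\beta}$ (it depends only on the worst-case change of $\sum_k A_{ik}x_k$, which is the same across coordinates). Consequently the worst-case multiplicative bound on $\xi_j^1/\xi_j^0$ continues to hold, the sign analysis is unchanged, and no additional argument is required.
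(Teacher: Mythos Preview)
Your proposal is correct and follows essentially the same approach as the paper's proof: apply Fact~\ref{fact:taylor} to the concave potential at $x^1$, then do a per-coordinate sign analysis using the $(1\pm\gamma/4)$ stability of $\xi_j$ built into the choice of $\beta$. Your extra discussion of the truncated-decrease case and the $\alpha<1$ asymmetry is correct but not strictly needed, since the paper's worst-case bound $\xi_j^1\geq(1-\gamma/4)\xi_j^0$ already covers those situations implicitly (truncation only makes $(x_j^1)^\alpha$ larger, and the bound on $y_i(x^1)/y_i(x^0)$ depends on the other coordinates, not on $j$'s own step size).
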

\iffullpaper
\begin{proof}
Since $\Phi$ is concave, using Fact \ref{fact:taylor} and (\ref{eq:potential-derivatives}) it follows that:
\begin{align}
\Phi(x^1) - \Phi(x^0) \geq \sum_{j=1}^n w_j \frac{x_j^1 - x_j^0}{(x_j^1)^{\alpha}}\left(1 - \frac{(x_j^1)^{\alpha}\sum_{i=1}^my_i(x^1)A_{ij}}{w_j} \right).\label{eq:potential-increase-before-bounds}
\end{align}
If $x_j^1 = x_j^0$, then the term in the summation (\ref{eq:potential-increase-before-bounds}) corresponding to the change in $x_j$ is equal to zero, and $x_j$ has no contribution to the sum in (\ref{eq:potential-increase-before-bounds}). 

If $x_j^1 - x_j^0 > 0$, then, as $x_j$ increases over the observed round, it must be $\frac{(x_j^0)^{\alpha}\sum_{i=1}^my_i(x^0)A_{ij}}{w_j}\leq 1-\gamma$. By the choice of the parameters, $\frac{(x_j^1)^{\alpha}\sum_{i=1}^my_i(x^1)A_{ij}}{w_j}\leq \left(1+\frac{\gamma}{4}\right)\Big(\frac{(x_j^0)^{\alpha}\sum_{i=1}^my_i(x^0)A_{ij}}{w_j}\Big)$, and therefore
\begin{align}
\frac{(x_j^1)^{\alpha}\sum_{i=1}^my_i(x^1)A_{ij}}{w_j}\leq \left(1+\frac{\gamma}{4}\right)(1-\gamma) = 1 - \frac{3}{4}\gamma - \frac{\gamma^2}{4}
<1 - \frac{3}{4}\gamma. \label{eq:alpha<1-zj-increase}
\end{align}
It follows that $1 - \frac{(x_j^1)^{\alpha}\sum_{i=1}^my_i(x^1)A_{ij}}{w_j}>\frac{3}{4}\gamma>0$, and therefore 
\begin{equation*}
w_j \frac{x_j^1 - x_j^0}{(x_j^1)^{\alpha}}\left(1 - \frac{(x_j^1)^{\alpha}\sum_{i=1}^my_i(x^1)A_{ij}}{w_j} \right) = w_j \frac{\big|x_j^1 - x_j^0\big|}{(x_j^1)^{\alpha}}\Bigg|1 - \frac{(x_j^1)^{\alpha}\sum_{i=1}^my_i(x^1)A_{ij}}{w_j} \Bigg|.
\end{equation*}

Finally, if $x_j^1-x_j^0<0$, then it must be $\frac{(x_j^0)^{\alpha}\sum_{i=1}^my_i(x^0)A_{ij}}{w_j}\geq 1+\gamma$. By the choice of the parameters, $\frac{(x_j^1)^{\alpha}\sum_{i=1}^my_i(x^1)A_{ij}}{w_j}\geq \left(1-\frac{\gamma}{4}\right)\Big(\frac{(x_j^0)^{\alpha}\sum_{i=1}^my_i(x^0)A_{ij}}{w_j}\Big)$, implying
\begin{align}
\frac{(x_j^1)^{\alpha}\sum_{i=1}^my_i(x^1)A_{ij}}{w_j}\geq \left(1-\frac{\gamma}{4}\right)(1+\gamma) = 1 + \frac{3}{4}\gamma - \frac{\gamma^2}{4}
>1 + \frac{1}{2}\gamma. \label{eq:alpha<1-zj-decrease}
\end{align}
We get that $1 - \frac{(x_j^1)^{\alpha}\sum_{i=1}^my_i(x^1)A_{ij}}{w_j}<-\frac{1}{2}\gamma<0$, and therefore 
\begin{equation*}
w_j \frac{x_j^1 - x_j^0}{(x_j^1)^{\alpha}}\left(1 - \frac{(x_j^1)^{\alpha}\sum_{i=1}^my_i(x^1)A_{ij}}{w_j} \right) = w_j \frac{\big|x_j^1 - x_j^0\big|}{(x_j^1)^{\alpha}}\left|1 - \frac{(x_j^1)^{\alpha}\sum_{i=1}^my_i(x^1)A_{ij}}{w_j} \right|,
\end{equation*}
completing the proof.
\end{proof}
\fi

\iffullpaper
\subsection{Proof of Theorem \ref{thm:convergence-alpha<1}}\label{section:alpha<1}

The outline of the proof is as follows. We first derive a lower bound on the potential increase (Lemma \ref{lemma:potential-increase-alpha<1}), which will motivate the definition of a stationary round. Then, for the appropriate definition of a stationary round we will first show that in any stationary round, solution is $O(\varepsilon)-$approximate. Then, to complete the proof, we will show in any non-stationary round there is a sufficiently large increase in the potential function, which, combined with the bounds on the potential value will yield the result.

The following lemma lower-bounds the increase in the potential function in any round of the algorithm. 

\begin{lemma}\label{lemma:potential-increase-alpha<1}
If $\alpha<1$ and $\Phi(x^0)$, $x^0$, $y(x^0)$ and $\Phi(x^1)$, $x^1$, $y(x^1)$ denote the values of $\Phi$, $x$, and $y$ before and after a round, respectively, and $S^- = \{j: x_j \text{ decreases}\}$, then if $x^0$ is feasible:
\begin{enumerate}
\item $\Phi(x^1) - \Phi(x^0)\geq \Omega(\beta^2\gamma/\ln(1/\delta_{\min}))\sum_{j\in S^-}w_j \frac{{(x_j^0)}^{1-\alpha}}{1-\alpha}$;
\item $\Phi(x^1) - \Phi(x^0) \geq \Omega(\beta)\left((1-\gamma)\sum_{j=1}^n w_j{(x_j^0)^{1-\alpha}} - \sum_{i=1}^m y_i(x^0) \sum_{j=1}^n A_{ij}x_j^0\right)$;
\item $\Phi(x^1) - \Phi(x^0) \geq \Omega\Big(\frac{\beta^2}{\ln(1/\delta_{\min})}\Big)\left(\sum_{i=1}^m y_i(x^0) \sum_{j=1}^n A_{ij}x_j^0 - (1+\gamma)\sum_{j=1}^n w_j{(x_j^0)^{1-\alpha}}\right)$.
\end{enumerate}
\end{lemma}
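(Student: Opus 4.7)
The plan is to use Lemma \ref{lemma:potential-increase} as the backbone. Writing $\xi_j^t=(x_j^t)^\alpha\sum_i y_i(x^t) A_{ij}/w_j$, that lemma gives $\Phi(x^1)-\Phi(x^0)\geq \sum_j w_j |x_j^1-x_j^0|/(x_j^1)^\alpha\cdot|1-\xi_j^1|$ as a sum of non-negative contributions, so we may drop any subset of terms at the cost of a weaker lower bound. For each of the three inequalities I would restrict the sum to the coordinates responsible for the target on the right-hand side, lower-bound those terms individually, and extend to the full sum by arguing that the coordinates I dropped contribute non-positively on the target side.

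For inequality (2) I restrict to $j\in S^+$. The update $x_j^1=x_j^0(1+\beta)$ gives $|x_j^1-x_j^0|/(x_j^1)^\alpha\geq (\beta/2)(x_j^0)^{1-\alpha}$, and because $\xi_j^0\leq 1-\gamma$ and a single round changes $\xi_j$ by at most a factor $1+\gamma/4$, a short manipulation (using $\xi_j^0\leq 1\leq 4$) yields $1-\xi_j^1\geq (1-\gamma)-\xi_j^0\geq 0$. Summing gives $\Omega(\beta)\sum_{j\in S^+}w_j(x_j^0)^{1-\alpha}[(1-\gamma)-\xi_j^0]$, and extending the sum to $\{1,\dots,n\}$ only decreases the right-hand side since $(1-\gamma)-\xi_j^0\leq 0$ for $j\notin S^+$. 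Rewriting $\sum_j x_j^0\sum_i y_i A_{ij}=\sum_i y_i\sum_j A_{ij}x_j^0$ produces the claimed form.

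For inequality (3) I take the symmetric path, restricting to $j\in S^-$. On the unclipped subset $x_j^1=x_j^0(1-\beta_2)$, so $|x_j^1-x_j^0|/(x_j^1)^\alpha\geq \beta_2(x_j^0)^{1-\alpha}$. Using that $\xi_j$ decreases by at most a factor $1-\delta'$ with $\delta'=(\gamma/4)\beta/\ln(1/\delta_{\min})$ (per the Remark after the algorithm description) and $\xi_j^0\geq 1+\gamma$, the elementary inequality $\xi_j^0(1-\delta')-1\geq (\xi_j^0-(1+\gamma))/2$ (equivalent to $\xi_j^0\geq (1-\gamma)/(1-2\delta')$, which holds because $\delta'\leq\gamma/4$ forces the right-hand side below $1+\gamma$) gives $|1-\xi_j^1|\geq (\xi_j^0-(1+\gamma))/2$. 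Summing and extending to $\{1,\dots,n\}$ via $\xi_j^0-(1+\gamma)\leq 0$ for $j\notin S^-$ finishes. Inequality (1) follows from the same restriction to $j\in S^-$, with the coarser estimate $|1-\xi_j^1|\geq \gamma/2$ on the unclipped subset; the $1/(1-\alpha)$ factor on the right-hand side is absorbed into the $\Omega$ constant via the standing hypothesis $\varepsilon\leq (1-\alpha)/\alpha$.

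The hard part will be the clipped-decrease subset of $S^-$, where $x_j^1=\delta_j$ and $x_j^0-\delta_j$ can be arbitrarily small, so the per-term Lemma \ref{lemma:potential-increase} contribution can fall short of the naive target. I would handle it by exploiting the tight characterization $x_j^0<\delta_j/(1-\beta_2)$ of this subset, which forces $(x_j^0)^{1-\alpha}$ to lie within a constant factor of $\delta_j^{1-\alpha}$, and by invoking Lemma \ref{lemma:small-x-tight-yi}, which guarantees that every such coordinate is incident to an approximately tight constraint. Combined with the specific choices of $\delta_j$, $C$, and $\kappa$, this should let the clipped contributions either be absorbed into the unclipped potential increase (possibly arising from other coordinates sharing the same tight constraint) or be shown to be dominated by the slack already built into the $\Omega$ constants.
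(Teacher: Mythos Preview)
Your approach is essentially the paper's: Parts~2 and the unclipped portion of Part~3 match exactly, and you correctly flag the clipped subset of $S^-$ as the crux and point to Lemma~\ref{lemma:small-x-tight-yi} as the tool. What your sketch leaves implicit, and what the paper spells out, is the concrete domination mechanism. From the tight constraint $i$ supplied by Lemma~\ref{lemma:small-x-tight-yi} one pigeonholes to find an index $p$ with $A_{ip}x_p^0>(1-\varepsilon/2)/n$; the lower bound on $y_i$ combined with $C=w_j/\delta_j^{\alpha}\geq 2w_{\max}n^2mA_{\max}$ then forces both $\xi_p^0\geq n\,\xi_j^0$ and $x_p^0\sum_l y_l A_{lp}\geq n\,x_j^0\sum_l y_l A_{lj}$. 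This in particular makes $x_p$ decrease by the full factor $(1-\beta_2)$. A short two-case comparison (on whether $w_p(x_p^0)^{1-\alpha}\gtrless w_j(x_j^0)^{1-\alpha}$) then shows the $p$-term in the target sum dominates the $j$-term by a factor $n$, so summing over all clipped $j$ sharing constraint $i$ at most doubles the unclipped contribution. Your observation that $(x_j^0)^{1-\alpha}\approx\delta_j^{1-\alpha}$ is true but not what drives the argument.

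One small point: your justification for absorbing $1/(1-\alpha)$ into the $\Omega$ constant in Part~1 via $\varepsilon\leq(1-\alpha)/\alpha$ does not work as stated, since that hypothesis only gives $1/(1-\alpha)\leq 1/(\varepsilon\alpha)$, which is not a universal constant. The paper's own proof of Part~1 in fact terminates at $\sum_{j\in S^-}w_j(x_j^0)^{1-\alpha}$ without the $1/(1-\alpha)$ factor, so this appears to be a minor imprecision in the lemma's statement rather than something either proof actually establishes; the downstream use (Case~3 in the proof of Theorem~\ref{thm:convergence-alpha<1}) only needs the version without $1/(1-\alpha)$.
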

\begin{proof}
$\quad$\\
\noindent\textbf{Proof of 1.} Observe that for $j\in S^-$, $x_j^1 = \max\{\delta_j, (1-\beta_2)x_j^0\}$. From the proof of Lemma \ref{lemma:potential-increase}, we have that: 
\begin{equation*}\Phi(x^1)-\Phi(x^0)\geq \sum_{j\in S^-} w_j \frac{x_j^0 - x_j^1}{(x_j^1)^{\alpha}}\left( \frac{(x_j^1)^{\alpha}\sum_{i=1}^my_i(x^1)A_{ij}}{w_j} - 1\right).
\end{equation*}
The proof that
\begin{equation*}
\sum_{j\in S^-} w_j {(x_j^0)}^{1-\alpha}\left( \frac{(x_j^1)^{\alpha}\sum_{i=1}^my_i(x^1)A_{ij}}{w_j} - 1\right) = \Theta\Bigg( \sum_{\{j\in S^-:x_j^0 \geq \frac{\delta_j}{1-\beta_2} \}} w_j {(x_j^0)}^{1-\alpha}\left( \frac{(x_j^1)^{\alpha}\sum_{i=1}^my_i(x^1)A_{ij}}{w_j} - 1\right)\Bigg)
\end{equation*}
is implied by the proof of part 3 of this lemma (see below). 
For each $j\in S^-$, we have that:
\begin{equation*}
\left(\frac{(x_j^1)^{\alpha}\sum_{i=1}^my_i(x^1)A_{ij}}{w_j} - 1\right) \geq (1+\gamma)(1-\gamma/4)-1 > \gamma/2,
\end{equation*}
Therefore:
\begin{align*}
\Phi(x^1) - \Phi(x^0) &\geq \Omega(\gamma) \sum_{j\in S^-} w_j \frac{\beta_2 x_j^0}{(1-\beta_2)(x_j^0)^{\alpha}}\\
&= \Omega\left(\frac{\beta_2\gamma}{1-\beta_2}\right)\sum_{j\in S^-} w_j {(x_j^0)}^{1-\alpha}.
\end{align*}

\noindent\textbf{Proof of 2.} Let $S^+$ denote the set of $j$'s such that $x_j$ increases in the current round. Then, recalling that for $j\in S^+$ $\frac{(x_j^0)^{\alpha}\sum_{i=1}^my_i(x^0)A_{ij}}{w_j}\leq 1- \gamma$ and that from the choice of parameters $\frac{(x_j^1)^{\alpha}\sum_{i=1}^my_i(x^1)A_{ij}}{w_j} \leq (1+\gamma/4)\frac{(x_j^0)^{\alpha}\sum_{i=1}^my_i(x^0)A_{ij}}{w_j}$:
\begin{align}
\Phi(x^1) - \Phi(x^0) &\geq \sum_{j=1}^n w_j \frac{x_j^1 - x_j^0}{(x_j^1)^{\alpha}}\left(1 - \frac{(x_j^1)^{\alpha}\sum_{i=1}^my_i(x^1)A_{ij}}{w_j} \right)\notag\\
&\geq \sum_{j\in S^+} w_j \frac{x_j^1 - x_j^0}{(x_j^1)^{\alpha}}\left(1 - \frac{(x_j^1)^{\alpha}\sum_{i=1}^my_i(x^1)A_{ij}}{w_j} \right)\notag\\
&\geq \sum_{j\in S^+} w_j \frac{x_j^1 - x_j^0}{(x_j^1)^{\alpha}}\left(1 - (1+\gamma/4)\frac{(x_j^0)^{\alpha}\sum_{i=1}^my_i(x^0)A_{ij}}{w_j} \right)\notag\\
&\geq \sum_{j\in S^+} w_j \frac{x_j^1 - x_j^0}{(x_j^1)^{\alpha}}\left((1 - \gamma) - \frac{(x_j^0)^{\alpha}\sum_{i=1}^my_i(x^0)A_{ij}}{w_j} \right). \notag
\end{align}
Since $j\in S^+$, $x_j^1 = (1+\beta)x_j^0$, it follows that 
\begin{equation*}\Phi(x^1) - \Phi(x^0) \geq \frac{\beta}{(1+\beta)^{\alpha}}\sum_{j\in S^+} w_j (x_j^0)^{-\alpha}\left((1 - \gamma) - \frac{(x_j^0)^{\alpha}\sum_{i=1}^my_i(x^0)A_{ij}}{w_j} \right).
\end{equation*}
Observing that for any $x_j\notin S^+$ we have that $(1 - \gamma) - \frac{(x_j^0)^{\alpha}\sum_{i=1}^my_i(x^0)A_{ij}}{w_j}<0$, we get:
\begin{align*}
\Phi(x^1) - \Phi(x^0) &\geq \frac{\beta}{(1+\beta)^{\alpha}}\sum_{j=1}^n w_j (x_j^0)^{1-\alpha}\left((1 - \gamma) - \frac{(x_j^0)^{\alpha}\sum_{i=1}^my_i(x^0)A_{ij}}{w_j} \right)\\
&= \Omega(\beta)\left((1-\gamma)\sum_{j=1}^n w_j {(x_j^0)^{1-\alpha}} - \sum_{j=1}^nx_j^0\sum_{i=1}^m y_i(x^0) A_{ij} \right).
\end{align*}

\noindent\textbf{Proof of 3.} Let $S^-$ denote the set of $j$'s such that $x_j$ decreases in the current round. In this case not all the $x_j$'s with $j\in S^-$ decrease by a multiplicative factor $(1-\beta_2)$, since for $j\in S^-$: $x_j^1 = \max\{(1-\beta_2)x_j^0, \delta_j\}$. We will first lower-bound the potential increase over $x_j$'s that decrease multiplicatively: $\{j: j\in S^- \wedge x_j^0(1-\beta_2)\geq \delta_j\}$, so that $x_j^1 = x_j^0(1-\beta_2)$. Recall that for $j\in S^-$: $\frac{(x_j^0)^{\alpha}\sum_{i=1}^my_i(x^0)A_{ij}}{w_j}\geq 1+ \gamma$ and $\frac{(x_j^1)^{\alpha}\sum_{i=1}^my_i(x^1)A_{ij}}{w_j} \geq (1-\gamma/4\frac{\beta}{\ln(1/\delta_{\min})})\frac{(x_j^0)^{\alpha}\sum_{i=1}^my_i(x^0)A_{ij}}{w_j}\geq(1-\gamma/4)\frac{(x_j^0)^{\alpha}\sum_{i=1}^my_i(x^0)A_{ij}}{w_j}$. It follows that:
\begin{align}
\Phi(x^1) - \Phi(x^0) &\geq \frac{\beta_2}{(1-\beta_2)^{\alpha}}\sum_{\{j: j\in S^- \wedge x_j^0(1-\beta)\geq \delta_j\}} w_j (x_j^0)^{1-\alpha}\left(\frac{(x_j^1)^{\alpha}\sum_{i=1}^my_i(x^1)A_{ij}}{w_j} - 1\right)\notag\\
&\geq {\beta_2}\sum_{\{j: j\in S^- \wedge x_j^0(1-\beta_2)\geq \delta_j\}} w_j (x_j^0)^{1-\alpha}\left((1-\gamma/4)\frac{(x_j^0)^{\alpha}\sum_{i=1}^my_i(x^0)A_{ij}}{w_j}- 1\right)\notag\\
&=\Omega\Big(\frac{\beta^2}{\ln(1/\delta_{\min})}\Big)\sum_{\{j: j\in S^- \wedge x_j^0(1-\beta_2)\geq \delta_j\}} w_j (x_j^0)^{1-\alpha}\left(\frac{(x_j^0)^{\alpha}\sum_{i=1}^my_i(x^0)A_{ij}}{w_j}- (1+\gamma)\right).\label{eq:pot-increase-bound-multiplicative}
\end{align}
Next, we prove that the potential increase due to decrease of $x_j$ such that $\{j: j\in S^- \wedge x_j^0(1-\beta_2)< \delta_j\}$ is dominated by the potential increase due to $x_k$'s that decrease multiplicatively by the factor $(1-\beta_2)$. 

Choose any $x_j$ such that $\{j: j\in S^- \wedge x_j^0(1-\beta_2)< \delta_j\}$, and let $\xi_j(x^0) = \frac{(x_j^0)^{\alpha}\sum_{l=1}^mA_{lj}y_i(x^0)}{w_j}$. From Lemma \ref{lemma:small-x-tight-yi}, there exists at least one $i$ with $A_{ij}\neq 0$, such that:
\begin{equation}
y_i \geq \frac{w_j (x_j^0)^\alpha}{w_j (x_j^0)^\alpha}\cdot \frac{\sum_{i=1}^m y_i(x^0) A_{ij}}{mA_{\max}} > \frac{1}{mA_{\max}}\frac{w_j(1-\beta_2)^\alpha}{{\delta_j}^{\alpha}}\xi_j(x^0)\geq \frac{1-\beta_2}{mA_{\max}}\frac{w_j}{{\delta_j}^{\alpha}}\xi_j(x^0), \label{eq:yi-lower-bound-zj} \quad\text{ and},
\end{equation}
\begin{equation}
\sum_{k=1}^n A_{ik}x_k^0 > 1-\frac{\varepsilon}{2} \label{eq:x-tight-constraint}.
\end{equation}
From (\ref{eq:x-tight-constraint}), there exists at least one $p$ such that $A_{ip}\neq 0$ and \begin{equation}
A_{ip}x_p^0>\frac{1-\frac{\varepsilon}{2}}{n}. \label{eq:large-xp}
\end{equation}
Since $x_p^0 \in (0, 1]$ and $\alpha \in (0, 1)$, using (\ref{eq:large-xp}), we have that $A_{ip} (x_p^0)^{\alpha}\geq A_{ip}x_p^0 > \frac{1-\frac{\varepsilon}{2}}{n}$. Recalling (\ref{eq:yi-lower-bound-zj}):
\begin{align}
(x_p^0)^{\alpha}\sum_{l=1}^m A_{lp}y_l(x^0) &\geq (x_p^0)^{\alpha}A_{ip}y_i(x^0)\notag\\
&\geq \frac{1-\frac{\varepsilon}{2}}{n}\cdot \frac{1-\beta_2}{mA_{\max}}\frac{w_j}{{\delta_j}^{\alpha}}\xi_j(x^0)\notag.
\end{align}
Recalling that $\frac{w_j}{{\delta_j}^{\alpha}} = C \geq 2w_{\max}n^2 m A_{\max}$, it further follows that:
\begin{align}
(x_p^0)^{\alpha}\sum_{l=1}^m A_{lp}y_l(x^0) &\geq 2\left(1-\frac{\varepsilon}{2}\right)(1-\beta_2)\cdot n \cdot w_{\max}\cdot\xi_j(x^0).\label{eq:alpha<1--main-cond-for-small-var-inc}
\end{align}
Because $\varepsilon\leq \frac{1}{6}$ and $\beta_2 < \beta = \frac{\gamma}{5(\kappa+1)} = \frac{\varepsilon}{20(\kappa+1)}<\frac{\varepsilon}{20}$, it follows that $ 2\left(1-\frac{\varepsilon}{2}\right)(1-\beta_2)>1$. Therefore:
\begin{equation}
\frac{(x_p^0)^{\alpha}\sum_{l=1}^m A_{lp}y_l(x^0) }{w_p}\geq \frac{(x_p^0)^{\alpha}\sum_{l=1}^m A_{lp}y_l(x^0) }{w_{\max}} >n\cdot \xi_j(x^0) = n\cdot \frac{(x_j^0)^{\alpha}\sum_{l=1}^mA_{lj}y_i(x^0)}{w_j}. \label{eq:alpha<1--small-var-inc-cond1}
\end{equation}

As $\alpha<1$, we have that ${\delta_j}^{\alpha}>\delta_j$, and $\frac{w_j}{\delta_j}>\frac{w_j}{{\delta_j}^\alpha}=C$. Similar to (\ref{eq:yi-lower-bound-zj}), we can lower-bound $y_i$ as:
\begin{equation}
y_i(x) \geq \frac{1-\beta_2}{mA_{\max}}\cdot\frac{w_j}{\delta_j}\cdot\frac{x_j^0\sum_i y_i(x)A_{ij}}{w_j} > \frac{1-\beta_2}{mA_{\max}}\cdot\frac{w_j}{{\delta_j}^{\alpha}}\cdot\frac{x_j^0\sum_i y_i(x)A_{ij}}{w_j}.\label{eq:yi-lower-bound-zj-2}
\end{equation}
Then, recalling $A_{ip}x_p^0 > \frac{1-\frac{\varepsilon}{2}}{n}$, and using (\ref{eq:yi-lower-bound-zj-2}), it is simple to show that:
\begin{equation}
x_p^0\sum_l y_l(x^0)A_{lp} >  n\cdot x_j^0\sum_{l=1}^mA_{lj}y_l(x^0). \label{eq:alpha<1--small-var-inc-cond2}
\end{equation}
As $\xi_j(x^0) \geq (1+\gamma)$ and $x_p^0 > \frac{\delta_p}{1-\beta_2}$, it immediately follows from (\ref{eq:alpha<1--small-var-inc-cond1}) that $x_p$ decreases by a factor $(1-\beta_2)$. 

In the rest of the proof we show that (\ref{eq:alpha<1--small-var-inc-cond1}) and (\ref{eq:alpha<1--small-var-inc-cond2}) imply that the increase in the potential due to the decrease of variable $x_p$ dominates the increase in the potential due to the decrease of variable $x_j$ by at least a factor $n$. This result then further implies that the increase in the potential due to the decrease of variable $x_p$ dominates the increase in the potential due to the decrease of \emph{all} small $x_k$'s that appear in the constraint $i$ ($x_k$'s are such that $A_{ik}\neq 0$, $x_k^0 < \frac{\delta_k}{1-\beta_2}$, and $\frac{(x_k^0)^{\alpha}\sum_l y_l(x)A_{lk}}{w_k}\geq 1+\gamma$).   

Consider the following two cases: $w_p (x_p^0)^{1-\alpha} \geq (w_j x_j^0)^{1-\alpha}$ and $w_p (x_p^0)^{1-\alpha} < (w_j x_j^0)^{1-\alpha}$.

\noindent\textbf{Case 1: $w_p (x_p^0)^{1-\alpha} \geq (w_j x_j^0)^{1-\alpha}$}. Then, using (\ref{eq:alpha<1--small-var-inc-cond1}):
\begin{align}
w_p (x_p^0)^{1-\alpha}\left(\frac{(x_p^0)^{\alpha}\sum_{l=1}^m A_{lp}y_l(x^0) }{w_p} -(1+\gamma)\right)&\geq (w_j x_j^0)^{1-\alpha}\left(\frac{(x_p^0)^{\alpha}\sum_{l=1}^m A_{lp}y_l(x^0) }{w_p} -(1+\gamma)\right)\notag\\
& \geq (w_j x_j^0)^{1-\alpha}\left(n\cdot\frac{(x_j^0)^{\alpha}\sum_{l=1}^m A_{lj}y_l(x^0) }{w_j} -(1+\gamma)\right)\notag\\
&\geq n\cdot (w_j x_j^0)^{1-\alpha}\left(\frac{(x_j^0)^{\alpha}\sum_{l=1}^m A_{lj}y_l(x^0) }{w_j} -(1+\gamma)\right). \label{eq:alpha<1-dominant-inc-1}
\end{align}
\noindent\textbf{Case 2: $w_p (x_p^0)^{1-\alpha} < (w_j x_j^0)^{1-\alpha}$}. Then, using (\ref{eq:alpha<1--small-var-inc-cond2}):
\begin{align}
w_p (x_p^0)^{1-\alpha}\left(\frac{(x_p^0)^{\alpha}\sum_{l=1}^m A_{lp}y_l(x^0) }{w_p} -(1+\gamma)\right)&= x_p^0\sum_{l=1}^m A_{lp}y_l(x^0) - (1+\gamma)w_p (x_p^0)^{1-\alpha}\notag\\
&\geq x_p^0\sum_{l=1}^m A_{lp}y_l(x^0) - (1+\gamma)w_j (x_j^0)^{1-\alpha}\notag\\
&\geq n\cdot x_j^0\sum_{l=1}^m A_{lj}y_l(x^0) - (1+\gamma)w_j (x_j^0)^{1-\alpha}\notag\\
&\geq n\cdot (w_j x_j^0)^{1-\alpha}\left(\frac{(x_j^0)^{\alpha}\sum_{l=1}^m A_{lj}y_l(x^0) }{w_j} -(1+\gamma)\right). \label{eq:alpha<1-dominant-inc-2}
\end{align}

Combining (\ref{eq:alpha<1-dominant-inc-1}) and (\ref{eq:alpha<1-dominant-inc-2}) with (\ref{eq:pot-increase-bound-multiplicative}), it follows that:
\begin{equation*}
\Phi(x^1) - \Phi(x^0) \geq \Omega(\beta_2)\sum_{j\in S^-} w_j (x_j^0)^{1-\alpha}\left(\frac{(x_j^0)^{\alpha}\sum_{i=1}^my_i(x^0)A_{ij}}{w_j}- (1+\gamma)\right).
\end{equation*}
Finally, since for $j\notin S^-$: $\left(\frac{(x_j^0)^{\alpha}\sum_{i=1}^my_i(x^0)A_{ij}}{w_j}- (1+\gamma)\right) < 0$:
\begin{align*}
\Phi(x^1) - \Phi(x^0) &\geq \Omega(\beta_2)\sum_{j=1}^n w_j (x_j^0)^{1-\alpha}\left(\frac{(x_j^0)^{\alpha}\sum_{i=1}^my_i(x^0)A_{ij}}{w_j}- (1+\gamma)\right)\\
&= \Omega\Big(\frac{\beta^2}{\ln(1/\delta_{\min})}\Big)\left(\sum_{j=1}^nx_j^0\sum_{i=1}^my_i(x^0)A_{ij} - (1+\gamma)\sum_{j=1}^n w_j (x_j^0)^{1-\alpha}\right),
\end{align*}
completing the proof.
\end{proof}
Parts 2 and 3 of Lemma \cite{AwerbuchKhandekar2009} appear in a somewhat similar form in \cite{AwerbuchKhandekar2009}. However, part 3 requires significant additional results for bounding the potential change due to decrease of small $x_j$'s (i.e., $x_j$'s that are smaller than $\frac{\delta_j}{1-\beta}$) that were not needed in \cite{AwerbuchKhandekar2009}. The rest of the results in this paper are new.    

Consider the following definition of a stationary round:

\begin{definition}\label{def:alpha<1-stationary-round}
(Stationary round.) Let $S^- = \{j: x_j \text{ decreases}\}$. A round is stationary if it happens after the initial $\tau_0 + \tau_1$ rounds, where $\tau_0 = \frac{1}{\beta}\ln(\frac{1}{\delta_{\min}})$ and $\tau_1 = \frac{1}{\beta_2}\ln(nA_{\max})$, and both of the following two conditions hold:
\begin{enumerate}
\item $\sum_{j\in S^-} w_j {x_j}^{1-\alpha}\leq \gamma \sum_{j=1}^n w_j {x_j}^{1-\alpha}$, and
\item $\sum_{j=1}^n x_j \sum_{i=1}^m y_i(x)A_{ij} \leq (1+5\gamma/4)\sum_{j=1}^n w_j {x_j}^{1-\alpha}$.
\end{enumerate}
\end{definition}

In the rest of the proof, we first show that in any stationary round, we have an $O(\varepsilon)-$approximate solution, while in any non-stationary round, the potential function increases substantially.

We first prove the following lemma, which we will then be used in bounding the duality gap.
\begin{lemma}\label{lemma:alpha<1-lower-bound-xi-j}
After the initial $\tau_0 + \tau_1$ rounds, where $\tau_0 = \frac{1}{\beta}\ln(\frac{1}{\delta_{\min}})$ and $\tau_1 = \frac{1}{\beta_2}\ln(nA_{\max})$, in each round of the algorithm: $\xi_j(x)\equiv \frac{{x_j}^{\alpha}\sum_iy_i(x)A_{ij}}{w_j} > 1-\frac{5\gamma}{4}$, $\forall j$.
\end{lemma}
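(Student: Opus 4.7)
The strategy is to combine (i) an \emph{invariance} — whenever $\xi_j^t > 1 - 5\gamma/4$ in one round, then $\xi_j^{t+1} > 1 - 5\gamma/4$ in the next — with (ii) a \emph{base case} showing that, for each $j$, the inequality $\xi_j > 1 - 5\gamma/4$ is established at some round within $(\tau_1, \tau_1+\tau_0]$. Together these give $\xi_j > 1 - 5\gamma/4$ in every round after $\tau_0 + \tau_1$.

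For the invariance, I would do a case analysis on $\xi_j^t$. When $\xi_j^t > 1 - \gamma$ (so $x_j$ is either unchanged or decreased), the per-round drop in $\xi_j$ is bounded by the factor $\bigl(1 - \gamma\beta/(4\ln(1/\delta_{\min}))\bigr)$ noted in the remark after the algorithm description (tightened precisely by the choice $\beta_2 = \beta^2/\ln(1/\delta_{\min})$), which trivially keeps $\xi_j^{t+1}$ above $1 - 5\gamma/4$. The delicate case is $\xi_j^t \in (1 - 5\gamma/4,\, 1 - \gamma]$: here $x_j$ is multiplied by $(1+\beta)$, and I need $\xi_j^{t+1}/\xi_j^t \geq (1+\beta)^{\alpha} e^{-\kappa\beta_2} \geq 1$ to retain the lower bound on $\xi_j$. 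Using $\ln(1+\beta) \geq \beta/2$ and $\kappa\beta \leq \gamma/5$, this inequality reduces to $\alpha\ln(1/\delta_{\min}) \geq 2\gamma/5$, which follows from the definition of $\delta_j$ since $\alpha\ln(1/\delta_{\min}) = \ln(2\wratio m n^2 A_{\max})$.

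For the base case, I would argue by contradiction. Suppose $\xi_j^t \leq 1 - 5\gamma/4 \leq 1 - \gamma$ throughout $(\tau_1, \tau_1+\tau_0]$. Then $x_j$ is multiplied by $(1+\beta)$ in each such round, starting from $x_j^{\tau_1} \geq \delta_j$. This growth cannot continue unchecked because $x_j \leq 1$ (algorithmic cap) and because Lemma~\ref{lemma:feasibility} shows that the update can never cross into infeasibility (whenever the next update would violate a constraint containing $j$, the corresponding $y_i$ is already $\geq C$, forcing $\xi_j \geq 1$ and blocking the update). Hence within the $\tau_0 = (1/\beta)\ln(1/\delta_{\min})$ rounds of this interval, $x_j$ must saturate — either at the algorithmic cap $1$ or at a value where some constraint $i$ with $A_{ij}\neq 0$ is (near-)tight — and in either case $y_i \geq C e^{-O(\varepsilon)}$, giving $\xi_j \geq x_j^\alpha C A_{ij}/w_j = (x_j/\delta_j)^\alpha A_{ij} e^{-O(\varepsilon)} > 1 - 5\gamma/4$, a contradiction.

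The main obstacle is the delicate case in the invariance, which is precisely where the asymmetric choice $\beta_2 \ll \beta$ (specific to $\alpha<1$) is essential: a symmetric $\beta_1=\beta_2=\beta$ (as used for $\alpha\geq 1$) would make the per-round multiplicative drift in $\xi_j$ comparable to the upward push $(1+\beta)^\alpha$ when $\alpha$ is small, breaking the invariance. Verifying that $\alpha\ln(1/\delta_{\min})$ dominates $\kappa\beta$ via the specific form of $\delta_j$ established in Lemma~\ref{lemma:lower-bound} is where one must be most careful.
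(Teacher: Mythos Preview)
Your proof is correct, and it takes a genuinely different (and in some ways sharper) route than the paper.

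The paper's argument does not prove a one-step invariant. Instead, it combines two coarser facts: (a) in \emph{every} round, regardless of what $x_j$ does, $\xi_j$ can drop by at most a factor $1 - \tfrac{\gamma}{4\tau_0}$ (this is precisely the remark following the algorithm, and is where $\beta_2=\beta^2/\ln(1/\delta_{\min})$ is used); and (b) the number of \emph{consecutive} rounds with $\xi_j\le 1-\gamma$ is at most $\tau_0$, since $x_j$ would otherwise grow past $1$. Chaining these gives $\xi_j \ge (1-\gamma)\bigl(1-\tfrac{\gamma}{4\tau_0}\bigr)^{\tau_0} \ge (1-\gamma)(1-\gamma/4) > 1-\tfrac{5\gamma}{4}$ at every round after the warm-up.

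Your invariance replaces the $\tau_0$-step compounding argument with a single-step one by exploiting a fact the paper does not use: in the ``delicate'' regime $\xi_j\in(1-5\gamma/4,\,1-\gamma]$, $x_j$ is multiplied by $(1+\beta)$, and the resulting gain $(1+\beta)^\alpha$ in $\xi_j$ already dominates the worst-case loss $e^{-\kappa\beta_2}$ from other coordinates shrinking. Your reduction $(1+\beta)^\alpha e^{-\kappa\beta_2}\ge 1 \Longleftarrow \alpha\ln(1/\delta_{\min})\ge 2\gamma/5$ is correct, and the right-hand side is indeed $\ln(2R_w m n^2 A_{\max})\ge \ln 2$ from the definition of $\delta_j$ for $\alpha\le 1$. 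The base case you describe is essentially the paper's (and can be stated more simply: if $\xi_j\le 1-\gamma$ for $\tau_0$ consecutive rounds after feasibility, $x_j$ would exceed $1$, contradicting Lemma~\ref{lemma:feasibility}).

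What each approach buys: the paper's argument is more robust---it never needs the sign of $\xi_j^{t+1}/\xi_j^t-1$ in the increasing-$x_j$ case, only the universal per-round drop bound. Your argument is tighter and explains more transparently \emph{why} the asymmetric choice $\beta_2\ll\beta$ is exactly what is needed for $\alpha<1$: it makes the downward drift in $\xi_j$ small enough that a single $(1+\beta)$-step on $x_j$ overpowers it.
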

\begin{proof}
Suppose without loss of generality that the algorithm starts with a feasible solution. This assumption is w.l.o.g. because, from Lemma \ref{lemma:self-stabilization}, after at most $\tau_1$ rounds the algorithm reaches a feasible solution, and from Lemma \ref{lemma:feasibility}, once the algorithm reaches a feasible solution, it always maintains a feasible solution. 

Choose any $j$. Using the same argument as in the proof of Lemma \ref{lemma:feasibility}, after at most $\frac{1}{\beta}\ln(\frac{1}{\delta_j})\leq \tau_0$ rounds, there exists at least one round in which $\xi_j(x) > 1- \gamma$ (otherwise $x_j>1$, which is a contradiction).

Observe that in any round for which $\xi_j(x) \leq 1- \gamma$, $x_j$ increases by a factor $1+\beta_1 = 1+\beta$. Therefore, the maximum number of consecutive rounds in which $\xi_j(x) \leq 1-\gamma$ is at most $\frac{1}{\beta}\ln(\frac{1}{\delta_j})\leq \tau_0$, otherwise $x_j$ would increase to a value larger than 1, making $x$ infeasible, which is a contradiction due to Lemma \ref{lemma:feasibility}. The maximum amount by which $\xi_j(x)$ can decrease in any round is bounded by a factor $1-\frac{\gamma}{4}\cdot \frac{\beta}{\ln(1/\delta_{\min})} = 1-\frac{\gamma}{4}\cdot \frac{1}{\tau_0}$. Therefore, using the generalized Bernoulli's inequality, it follows that in any round:
\begin{equation*}
\xi_j(x) \geq (1-\gamma)\cdot \Big(1-\frac{\gamma}{4}\cdot \frac{1}{\tau_0}\Big)^{\tau_0}\geq (1-\gamma)\cdot \Big(1-\frac{\gamma}{4}\Big)>1 - \frac{5\gamma}{4}.
\end{equation*}
\end{proof}
A simple corollary of Lemma \ref{lemma:alpha<1-lower-bound-xi-j} is that:
\begin{corollary}\label{cor:alpha<1-lower-comp}
After the initial $\tau_0 + \tau_1$ rounds, where $\tau_0 = \frac{1}{\beta}\ln(\frac{1}{\delta_{\min}})$ and $\tau_1 = \frac{1}{\beta_2}\ln(nA_{\max})$, in each round of the algorithm: $\sum_j x_j \sum_i y_i(x)A_{ij} > \big(1 - \frac{5\gamma}{4}\big)\sum_j w_j {x_j}^{1-\alpha}$.
\end{corollary}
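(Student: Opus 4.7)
The plan is to derive the corollary directly from Lemma~\ref{lemma:alpha<1-lower-bound-xi-j} by a simple algebraic manipulation applied coordinate-wise, then summed over $j$. Since Lemma~\ref{lemma:alpha<1-lower-bound-xi-j} already gives $\xi_j(x) = \frac{x_j^\alpha \sum_i y_i(x) A_{ij}}{w_j} > 1 - \frac{5\gamma}{4}$ for every $j$ after the initial $\tau_0 + \tau_1$ rounds, the only remaining work is to convert this per-coordinate ratio bound into the aggregate inequality stated in the corollary.

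First, I would fix an arbitrary $j$ and rearrange the bound from Lemma~\ref{lemma:alpha<1-lower-bound-xi-j} to $x_j^\alpha \sum_{i=1}^m y_i(x) A_{ij} > \bigl(1 - \tfrac{5\gamma}{4}\bigr) w_j$. Next, multiply both sides by $x_j^{1-\alpha} \geq 0$ (note that feasibility via Lemma~\ref{lemma:feasibility} ensures $x_j \geq 0$, and $x_j \geq \delta_j > 0$ by the algorithm's floor step, so the multiplication is valid and the inequality direction is preserved). This yields $x_j \sum_{i=1}^m y_i(x) A_{ij} > \bigl(1 - \tfrac{5\gamma}{4}\bigr) w_j x_j^{1-\alpha}$.

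Finally, summing this inequality over all $j \in \{1,\dots,n\}$ gives the claim:
\[
\sum_{j=1}^n x_j \sum_{i=1}^m y_i(x) A_{ij} > \Bigl(1 - \tfrac{5\gamma}{4}\Bigr) \sum_{j=1}^n w_j x_j^{1-\alpha}.
\]
There is no genuine obstacle here: Lemma~\ref{lemma:alpha<1-lower-bound-xi-j} does all of the real work, and the corollary is essentially just the statement obtained by clearing the $x_j^\alpha / w_j$ denominator and summing. The only point worth mentioning explicitly is that the timing hypothesis (``after $\tau_0 + \tau_1$ rounds'') is inherited verbatim from Lemma~\ref{lemma:alpha<1-lower-bound-xi-j}, so no extra bookkeeping is needed.
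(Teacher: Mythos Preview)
Your proposal is correct and follows essentially the same approach as the paper: the paper's proof simply invokes Lemma~\ref{lemma:alpha<1-lower-bound-xi-j}, multiplies both sides of $\xi_j(x) > 1 - \tfrac{5\gamma}{4}$ by $w_j x_j^{1-\alpha}$, and sums over $j$. You have broken the multiplication into two steps (first by $w_j$, then by $x_j^{1-\alpha}$) and been explicit about positivity of $x_j$, but the argument is identical.
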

\begin{proof}
From Lemma \ref{lemma:alpha<1-lower-bound-xi-j}, after the initial $\tau_0 + \tau_1$ rounds, it always holds $\xi_j(x) \equiv \frac{{x_j}^{\alpha}\sum_i y_i(x)A_{ij}}{w_j}\geq 1-\frac{5\gamma}{4}$, $\forall j$. Multiplying both sides of the inequality by $w_j {x_j}^{1-\alpha}$, $\forall j$ and summing over $j$, the result follows.
\end{proof}

Recall that $p_{\alpha}(x)\equiv\sum_j w_j f_{\alpha}(x_j)$ denotes the primal objective. The following lemma states that any stationary round holds an $(1+6\varepsilon)$-approximate solution.
\begin{lemma}\label{lemma:alpha<1-stationary-round}
In any stationary round: $p(x^*)\leq (1+6\varepsilon)p(x)$, where $x^*$ is the optimal solution to $(P_{\alpha})$.
\end{lemma}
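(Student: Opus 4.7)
The plan is to invoke strong duality of $(P_\alpha)$, which holds by Slater's condition and strict concavity of $p_\alpha$: for every dual-feasible $y$,
\[
p_\alpha(x^*) \leq p_\alpha(x) + G_\alpha(x,y),
\]
with $G_\alpha(x,y)$ given by (\ref{eq:duality-gap-alpha}). Since $p_\alpha(x) > 0$ for $\alpha < 1$, it suffices to prove $G_\alpha(x,y) \leq 6\varepsilon\, p_\alpha(x)$ in any stationary round; the claim $p_\alpha(x^*) \leq (1+6\varepsilon) p_\alpha(x)$ then follows immediately. The dual feasibility required here is built into the algorithm: $y_i(x) = C e^{\kappa(\sum_j A_{ij}x_j - 1)} \geq 0$ by construction.

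To bound $G_\alpha(x,y)$, I will split the sums over $j$ into $S^-$ (indices whose $x_j$ decreases in the current round, so $\xi_j \geq 1+\gamma$) and its complement, on which $\xi_j < 1+\gamma$ while also $\xi_j \geq 1 - 5\gamma/4$ by Lemma \ref{lemma:alpha<1-lower-bound-xi-j}. Setting $q = (\alpha-1)/\alpha$, a Bernoulli-type expansion of $\xi_j^q$ around $\xi_j = 1$ gives $\xi_j^q - 1 = q(\xi_j - 1) + O(\gamma^2)$ for $j \notin S^-$, so that the first term of $G_\alpha$ restricted to $j \notin S^-$ reduces, up to lower order, to $-\tfrac{1}{\alpha}\sum_{j \notin S^-} w_j x_j^{1-\alpha}(\xi_j - 1)$, with the factor $1/(1-\alpha)$ having canceled against $q$. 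For $j \in S^-$, stationary condition 1 bounds $\sum_{j \in S^-} w_j x_j^{1-\alpha}$ by $\gamma(1-\alpha)p_\alpha(x)$, which absorbs the $S^-$ contribution to both $\sum_j \frac{w_j x_j^{1-\alpha}}{1-\alpha}(\xi_j^q - 1)$ and $\sum_j w_j x_j^{1-\alpha}\xi_j^q$ (the latter using also that $\xi_j^q \leq (1-5\gamma/4)^q = 1 + O(\gamma/\alpha)$ since $q<0$).

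For the remaining terms $\sum_i y_i - \sum_j w_j x_j^{1-\alpha}\xi_j^q$, I will use the identity $\sum_j w_j x_j^{1-\alpha}\xi_j = \sum_j x_j \sum_i y_i A_{ij}$, so that stationary condition 2 gives $\sum_j w_j x_j^{1-\alpha}\xi_j \leq (1+5\gamma/4)\sum_j w_j x_j^{1-\alpha}$, while Corollary \ref{cor:alpha<1-lower-comp} furnishes the matching lower bound $(1-5\gamma/4)\sum_j w_j x_j^{1-\alpha}$; Lemma \ref{lemma:approx-comp-slack} then pins $\sum_i y_i$ to within a $(1\pm 3\varepsilon)$ factor of $\sum_j x_j \sum_i y_i A_{ij}$. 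Expanding $\xi_j^q$ around $1$ again lets us replace $\sum_j w_j x_j^{1-\alpha}\xi_j^q$ by $\sum_j w_j x_j^{1-\alpha}$ up to an $O(\gamma)\cdot(1-\alpha)p_\alpha(x)$ error, which together with the above bounds yields $\sum_i y_i - \sum_j w_j x_j^{1-\alpha}\xi_j^q = O(\varepsilon)\sum_j w_j x_j^{1-\alpha} = O(\varepsilon)(1-\alpha)p_\alpha(x)$. Combining everything with $\gamma = \varepsilon/4$ and the standing assumptions $\varepsilon \leq \min\{\tfrac{1}{6}, (1-\alpha)/\alpha\}$ gives $G_\alpha(x,y) \leq 6\varepsilon\, p_\alpha(x)$.

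The main obstacle is the algebraic bookkeeping around the exponent $q = (\alpha-1)/\alpha$: a naive bound on $\xi_j^q - 1$ would leave hidden factors of $1/(1-\alpha)$ that blow up as $\alpha \to 1$. The cancellation between the prefactor $1/(1-\alpha)$ in $p_\alpha$ and the $q \approx -(1-\alpha)/\alpha$ coming from linearizing $\xi_j^q$ near $1$ is precisely what makes the final bound uniform in $1-\alpha$; the second-order terms from this linearization, together with the $S^-$ contributions bounded via condition 1, must be tracked carefully so that the final multiplicative constant is the clean $6\varepsilon$ stated in the lemma.
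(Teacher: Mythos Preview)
Your overall plan---bound the duality gap, split coordinates into $S^-$ (where $\xi_j\ge 1+\gamma$) and its complement, and use the two stationary conditions together with Lemma~\ref{lemma:approx-comp-slack}---is exactly the paper's approach. The substantive difference is in how you control the per-coordinate contribution on the ``good'' set $j\notin S^-$: you Taylor-expand $\xi_j^{q}$ around $1$, while the paper first \emph{combines} the three terms of $G_\alpha$ into a single function
\[
r(\xi_j)\;=\;\alpha\,\xi_j^{-(1-\alpha)/\alpha}\;+\;(1+3\varepsilon)(1-\alpha)\,\xi_j\;-\;1
\]
(so that $G_\alpha(x,y)\le \sum_j w_j\frac{x_j^{1-\alpha}}{1-\alpha}\,r(\xi_j)$), observes that $r$ is convex, and bounds $r$ at the two endpoints $1\pm 5\gamma/4$ via a case split on whether $(1-\alpha)/\alpha\le 1$ or $>1$, using the generalized Bernoulli inequality in the latter case.

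There is a genuine gap in your linearization step. Your claim ``$\xi_j^{q}-1=q(\xi_j-1)+O(\gamma^2)$'' is \emph{not} uniform in $\alpha$: the Lagrange remainder is $\tfrac12\,q(q-1)\,\zeta^{\,q-2}(\xi_j-1)^2$ with $q(q-1)=(1-\alpha)/\alpha^2$, so the hidden constant in your $O(\gamma^2)$ is of order $1/\alpha^{2}$. After multiplying by the prefactor $1/(1-\alpha)$ this becomes $O(\gamma^2/\alpha^2)$ per coordinate; even when you also expand the third term and let the two second-order pieces partially cancel, you are still left with an $O(\gamma^2/\alpha)\sum_j w_jx_j^{1-\alpha}$ contribution, which is $O(\varepsilon)\,p_\alpha(x)$ only if $\varepsilon(1-\alpha)/\alpha=O(1)$---not implied by the standing assumptions. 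Your closing paragraph misdiagnoses the danger: the $1/(1-\alpha)$ factors do cancel against $q$, as you say, but the real obstacle for small $\alpha$ is the $1/\alpha$ (and $1/\alpha^2$) coming from the \emph{magnitude} of the exponent $|q|=(1-\alpha)/\alpha$ and from $q(q-1)$. The paper's remedy is precisely the $\alpha$ prefactor on $\xi_j^{q}$ that appears once you combine the first and third terms of $G_\alpha$ before any expansion; that is what tames the exponent and lets a direct endpoint evaluation (rather than a Taylor expansion with $\alpha$-dependent remainders) deliver $r(\xi_j)\le 4\varepsilon$ on $S_1$. Separately, for the $S^-$ part you will also need both stationary conditions together (not just condition~1) to bound $\sum_{j\in S^-}x_j\sum_i y_iA_{ij}$, since $\xi_j$ is unbounded there; the paper does this explicitly via an intermediate inequality before concluding $G_2(x)<2\varepsilon\,p_\alpha(x)$.
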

\begin{proof}
Since, by definition, a stationary round can only happen after the initial $\tau_0 + \tau_1$ rounds, we have that $x$ in that round is feasible, and also from Lemma \ref{lemma:approx-comp-slack}: $\sum_i y_i \leq (1+3\varepsilon)\sum_j x_j \sum_i y_i(x)A_{ij}$. Therefore, recalling Eq. (\ref{eq:duality-gap-alpha}) for the duality gap and denoting $\xi_j(x) = \frac{{x_j}^{\alpha}\sum_i y_i(x)A_{ij}}{w_j}$, we have that:
\begin{align}
p(x^*)-p(x)\leq G(x, y(x)) &= \sum_j w_j \frac{{x_j}^{1-\alpha}}{1-\alpha}\left(\xi_j^{-\frac{1-\alpha}{\alpha}}-1\right)+\sum_i y_i(x) - \sum_j w_j {x_j}^{1-\alpha}\xi_j^{-\frac{1-\alpha}{\alpha}}\notag\\
&=\sum_j w_j \frac{{x_j}^{1-\alpha}}{1-\alpha}\left(\alpha \xi_j^{-\frac{1-\alpha}{\alpha}}-1\right) + \sum_i y_i(x)\notag\\
&\leq \sum_j w_j \frac{{x_j}^{1-\alpha}}{1-\alpha}\left(\alpha \xi_j^{-\frac{1-\alpha}{\alpha}}-1\right) + (1+3\varepsilon)\sum_j x_j \sum_i y_i(x)A_{ij}. \label{eq:alpha<1-duality-gap-bound}
\end{align}

From Lemma \ref{lemma:alpha<1-lower-bound-xi-j}, $\xi_j > 1-\frac{5\gamma}{4}$, $\forall j$. Partition the indices of all the variables as follows:
\begin{equation*}
S_1 = \left\{j: \xi_j \in \left(1-\frac{5\gamma}{4}, 1+\frac{5\gamma}{4}\right)\right\}, \quad S_2 = \left\{j: \xi_j \geq 1+\frac{5\gamma}{4}\right\}.
\end{equation*}
Then, using (\ref{eq:alpha<1-duality-gap-bound}):
\begin{equation*}
p(x^*)-p(x)\leq G_1(x) + G_2(x),
\end{equation*}
where:
\begin{equation*}
G_1(x) = \sum_{j\in S_1} w_j \frac{{x_j}^{1-\alpha}}{1-\alpha}\left(\alpha \xi_j^{-\frac{1-\alpha}{\alpha}}-1\right) + (1+3\varepsilon)\sum_{j\in S_1}  x_j \sum_i y_i(x)A_{ij}
\end{equation*}
and 
\begin{equation*}
G_2(x) = \sum_{j\in S_2} w_j \frac{{x_j}^{1-\alpha}}{1-\alpha}\left(\alpha \xi_j^{-\frac{1-\alpha}{\alpha}}-1\right) + (1+3\varepsilon)\sum_{j\in S_2}  x_j \sum_i y_i(x)A_{ij}.
\end{equation*}
The rest of the proof follows by upper-bounding $G_1(x)$ and $G_2(x)$.

\noindent\textbf{Bounding $G_1(x)$.} Observing that $\forall j$: $x_j\sum_i y_i(x)A_{ij} = w_j {x_j}^{1-\alpha}\xi_j$, we can write $G_1(x)$ as:
\begin{align}
G_1(x) = \sum_{j\in S_1} w_j \frac{{x_j}^{1-\alpha}}{1-\alpha}\left(\alpha \xi_j^{-\frac{1-\alpha}{\alpha}} + (1+3\varepsilon)(1-\alpha)\xi_j-1\right).\label{eq:alpha<1-G-1}
\end{align}
Denote $r(\xi_j) = \alpha \xi_j^{-\frac{1-\alpha}{\alpha}} + (1+3\varepsilon)(1-\alpha)\xi_j-1$. It is simple to verify that $r(\xi_j)$ is a convex function. Since $\xi_j \in \left(1-\frac{5\gamma}{4}, 1+\frac{5\gamma}{4}\right)$, $\forall j\in S_1$, it follows that $r(\xi_j) < \max \{r(1-5\gamma/4), r(1+5\gamma/4)\}$. Now:
\begin{align}
r(1-5\gamma/4) &= \alpha \Big(1-\frac{5\gamma}{4}\Big)^{-\frac{1-\alpha}{\alpha}} + (1-\alpha)(1+3\varepsilon)\Big(1-\frac{5\gamma}{4}\Big)-1\notag\\
&< \alpha \Big(1-\frac{5\gamma}{4}\Big)^{-\frac{1-\alpha}{\alpha}} + (1-\alpha)(1+3\varepsilon) - 1.\notag
\end{align}
If $\frac{1-\alpha}{\alpha}\leq 1$, then as $(1-5\gamma/4)^{-1}\leq (1+2\gamma)$, it follows that $(1-5\gamma/4)^{-\frac{1-\alpha}{\alpha}}\leq 1+2\gamma$. Therefore:
\begin{align}
r(1-5\gamma/4)&< \alpha(1+2\gamma) + (1-\alpha)(1+3\varepsilon) - 1\notag\\
&= 2\gamma \alpha + 3\cdot(1-\alpha)\varepsilon = \alpha \frac{\varepsilon}{2} + 3\cdot(1-\alpha)\varepsilon\notag\\
&= 3\varepsilon\left(1-\frac{5}{6}\alpha\right).\label{eq:alpha<1-r-1}
\end{align}
If $\frac{1-\alpha}{\alpha}>1$, then (using generalized Bernoulli's inequality and $\varepsilon \leq \frac{\alpha}{1-\alpha}$):
\begin{align}
r(1-5\gamma/4) &< \alpha \frac{1}{(1-5\gamma/4)^{\frac{1-\alpha}{\alpha}}} + (1-\alpha)(1+3\varepsilon) - 1\notag\\
&\leq \alpha \frac{1}{1-\frac{5\gamma}{4}\cdot\frac{1-\alpha}{\alpha}} + (1-\alpha)(1+3\varepsilon) - 1\notag\\
&\leq \alpha \Big(1+\frac{5\gamma}{4}\cdot \frac{1-\alpha}{\alpha}\Big) + (1-\alpha)(1+3\varepsilon) - 1\notag\\
&\leq (1-\alpha)\Big(\frac{5\gamma}{4} + 3\varepsilon\Big)\notag\\
&< 4\varepsilon(1-\alpha).\label{eq:alpha<1-r-2}
\end{align}
On the other hand:
\begin{align}
r(1+5\gamma) &= \alpha \Big(1+\frac{5\gamma}{4}\Big)^{-\frac{1-\alpha}{\alpha}} + (1-\alpha)(1+3\varepsilon)\Big(1+\frac{5\gamma}{4}\Big)-1\notag\\ 
&< \alpha + (1-\alpha)(1+4\varepsilon) - 1\notag\\
&=4\varepsilon(1-\alpha). \label{eq:alpha<1-r-3}
\end{align}
Combining (\ref{eq:alpha<1-r-1})--(\ref{eq:alpha<1-r-3}) with (\ref{eq:alpha<1-G-1}):
\begin{equation}
G_1(x) < 4\varepsilon\cdot \sum_{j\in S_1} w_j \frac{{x_j}^{1-\alpha}}{1-\alpha}. \label{eq:alpha<1-G-1-1}
\end{equation}

\noindent\textbf{Bounding $G_2(x)$.} Because the round is stationary and $S_2 \subseteq S^-$, we have that: $\sum_{j\in S_2}w_j {x_j}^{1-\alpha} \leq \gamma \sum_{j=1}^n w_j {x_j}^{1-\alpha}$. Using the second part of the stationary round definition and that $\sum_{j\in S_2}x_j\sum_{i=1}^m y_i(x)A_{ij}>(1-5\gamma/4)\sum_{j\in S_2}w_j {x_j}^{1-\alpha}$ (follows from Lemma \ref{lemma:alpha<1-lower-bound-xi-j}):
\begin{align}
\sum_{j\in S_2} x_j \sum_{i=1}^m y_i(x)A_{ij} &= \sum_{k=1}^m x_k\sum_{i=1}^m y_i(x)A_{ik} - \sum_{l\notin S_2} x_l \sum_{l=1}^m y_l(x)A_{lk}\notag\\
&\leq (1+5\gamma/4)\sum_{k=1}^n w_k {x_k}^{1-\alpha} - (1-5\gamma/4) \sum_{l\notin S_2} w_l {x_l}^{1-\alpha}\notag\\
&\leq (1+5\gamma/4)\sum_{j\in S_2} w_j {x_j}^{1-\alpha} + \frac{5\gamma}{2}\sum_{l\notin S_2} w_l {x_l}^{1-\alpha}\notag\\
&\leq \gamma(1+5\gamma/4)\sum_{k=1}^n w_k {x_k}^{1-\alpha} + \frac{5\gamma}{2}\sum_{k=1}^n w_k {x_k}^{1-\alpha}\notag\\
&< 4\gamma \sum_{k=1}^n w_k {x_k}^{1-\alpha} = \varepsilon \sum_{k=1}^n w_k {x_k}^{1-\alpha}. \label{eq:alpha<1-xjy-bound}
\end{align}
Above, first inequality follows from $\sum_{k=1}^m x_k\sum_{i=1}^m y_i(x)A_{ik} \leq (1+5\gamma/4)\sum_{k=1}^n w_k {x_k}^{1-\alpha}$ (part 2 of the stationary round definition) and Corollary \ref{cor:alpha<1-lower-comp}. Second inequality follows by breaking the left summation into two summations: those with $j\in S_2$ and those with $l\notin S_2$. The third inequality follows from $S_2 \subseteq S$ and part 1 of the stationary round definition.

Observe that as $\xi_j \geq 1+5\gamma/4>1$, we have that $\xi_j^{-\frac{1-\alpha}{\alpha}}<1$. Using (\ref{eq:alpha<1-xjy-bound}), it follows that:
\begin{align}
G_2(x) & = \sum_{j\in S_2} w_j \frac{{x_j}^{1-\alpha}}{1-\alpha}\left(\alpha \xi_j^{-\frac{1-\alpha}{\alpha}}-1\right) + (1+3\varepsilon)\sum_{j\in S_2}  x_j \sum_i y_i(x)A_{ij}\notag\\
&< (\alpha-1) \sum_{j\in S_2} w_j \frac{{x_j}^{1-\alpha}}{1-\alpha} + (1+3\varepsilon)\sum_{j\in S_2}x_j\sum_i y_i(x)A_{ij}\notag\\
&\leq (\alpha-1) \sum_{j\in S_2} w_j \frac{{x_j}^{1-\alpha}}{1-\alpha} + \varepsilon (1+3\varepsilon)\sum_{k=1}^n w_k {x_k}^{1-\alpha}\notag\\
&\leq -(1-\alpha)\sum_{j\in S_2} w_j \frac{{x_j}^{1-\alpha}}{1-\alpha} + \frac{3}{2}\varepsilon (1-\alpha) \sum_{k=1}^n w_k \frac{{x_k}^{1-\alpha}}{1-\alpha}\notag\\
&< \frac{3}{2}\varepsilon (1-\alpha) \sum_{k=1}^n w_k \frac{{x_k}^{1-\alpha}}{1-\alpha}\notag\\
&< 2{\varepsilon}\sum_{k=1}^n w_k \frac{{x_k}^{1-\alpha}}{1-\alpha}. \label{eq:alpha<1-G-2}
\end{align}

Finally, combining (\ref{eq:alpha<1-G-1-1}) and (\ref{eq:alpha<1-G-2}):
\begin{align*}
p(x^*) - p(x) &< \Big(4\varepsilon + 2\varepsilon\Big) \sum_{j\in S_1} w_j \frac{{x_j}^{1-\alpha}}{1-\alpha}\\
&= 6\varepsilon p(x).
\end{align*}
\end{proof}

\begin{proofof}{Theorem \ref{thm:convergence-alpha<1}}
From Lemma \ref{lemma:alpha<1-stationary-round}, in any stationary round: $p(x^*) \leq p(x) (1 + 6\varepsilon)$. Therefore, to prove the theorem, it suffices to show that there are at most $O\left(\frac{1}{\alpha^2\varepsilon^5}\ln^2\left(\wratio{mnA_{\max}}\right)\ln^2\left(\wratio\frac{mnA_{\max}}{\varepsilon}\right)\right)$ non-stationary rounds in total, where $\wratio = {w_{\max}}/{ w_{\min}}$, because we can always run the algorithm for $\varepsilon' = \varepsilon/6$ to get an $\varepsilon-$approximation, and this would only affect the constant in the convergence time. 

To bound the number of non-stationary rounds, we will show that the potential increases by a ``large enough'' multiplicative value in all the non-stationary rounds in which the potential is not too ``small''. For the non-stationary rounds in which the value of the potential is ``small'', we show that the potential increases by a large enough value so that there can be only few such rounds. 

In the rest of the proof, we assume that the initial $\tau_0 + \tau_1 $ rounds have passed, so that $x$ is feasible, and the statement of Lemma \ref{lemma:approx-comp-slack} holds. This does not affect the overall bound on the convergence time, as 
\begin{align}
\tau_0 + \tau_1 &= \frac{1}{\beta}\ln\Big(\frac{1}{\delta_{\min}}\Big) + \frac{1}{\beta_2}\ln(nA_{max}) = O\left( \frac{1}{\beta^2}\ln(nA_{\max})\ln\Big(\frac{1}{\delta_{\min}}\Big)\right)\notag\\
&= O\left(\frac{1}{\alpha\varepsilon^4}\ln(nA_{\max})\ln^2\left(\wratio\frac{mnA_{\max}}{\varepsilon}\right)\ln\left(\wratio mnA_{\max}\right)\right).\label{eq:tau1+tau0-bound}
\end{align}

To bound the minimum and the maximum values of the potential $\Phi$, we will bound $\sum_j w_j \frac{{x_j}^{1-\alpha}}{1-\alpha}$ and $\frac{1}{\kappa}\sum_i y_i(x)$. Recall that $\Phi(x) = \sum_j w_j \frac{{x_j}^{1-\alpha}}{1-\alpha} - \frac{1}{\kappa}\sum_i y_i(x)$. 

Since $\delta_j = \left(\frac{w_{j}}{2w_{\max}n^2mA_{\max}}\right)^{\frac{1}{\alpha}}\geq \left(\frac{w_{\min}}{2w_{\max}n^2mA_{\max}}\right)^{\frac{1}{\alpha}}$, $x$ is always feasible, and $x_j\leq 1$, $\forall j$, we have that:
\begin{equation}
\frac{W}{1-\alpha}\cdot\left(\frac{w_{\min}}{2w_{\max}n^2mA_{\max}}\right)^{\frac{1-\alpha}{\alpha}}\leq\sum_j w_j \frac{{x_j}^{1-\alpha}}{1-\alpha}\leq \frac{W}{1-\alpha}, \label{eq:alpha<1-p-alpha-bound}
\end{equation}
and
\begin{equation}
0<\frac{1}{\kappa}\sum_i y_i(x)\leq \frac{Cm}{\kappa}. \label{eq:alpha<1-sum-yi-bound}
\end{equation}
Thus, we have:
\begin{align}
\Phi_{\min} &\geq -\frac{1}{\kappa}\sum_i y_i(x)\notag\\
&\geq -\frac{1}{\kappa}\cdot m\cdot C\notag\\
&\geq - O(m^2n^2 A_{\max}w_{\max}), \label{eq:alpha<1-phi-min}
\end{align}
and
\begin{equation}
\Phi_{\max} \leq \sum_{j=1}^n w_j \frac{1}{1-\alpha} = \frac{W}{1-\alpha}. \label{eq:alpha<1-phi-max}
\end{equation}

Recall from Lemma \ref{lemma:potential-increase} that the potential never decreases. We consider the following three cases for the value of the potential:

\noindent \textbf{Case 1: $\Phi_{\min}\leq \Phi \leq -\Theta(\frac{ w_{\min}}{A_{\max}})$}. Since in this case $\Phi < 0$, we have that $\sum_i y_i(x) > \kappa \sum_j w_j \frac{{x_j}^{1-\alpha}}{1-\alpha}$. From Lemma \ref{lemma:approx-comp-slack}, $\sum_j x_j \sum_i y_i(x)A_{ij}\geq (1-3\varepsilon)\sum_j w_j \frac{{x_j}^{1-\alpha}}{1-\alpha}$, thus implying: 
\begin{equation}\sum_j x_j \sum_i y_i(x)A_{ij}\geq \frac{1-3\varepsilon}{\kappa}\sum_j w_j \frac{{x_j}^{1-\alpha}}{1-\alpha}\geq 2\cdot\sum_j w_j \frac{{x_j}^{1-\alpha}}{1-\alpha}, \label{eq:alpha<1-neg-phi-inc}
\end{equation}
as $\kappa \geq \frac{1}{\varepsilon}$ and $\varepsilon\leq \frac{1}{6}$. Combining Part 3 of Lemma \ref{lemma:potential-increase-alpha<1} and (\ref{eq:alpha<1-neg-phi-inc}), the potential increases by at least:
\begin{align*}
\Omega\left(\frac{\beta^2}{\ln(1/\delta_{\min})}\right)\sum_jx_j \sum_i y_i(x) A_{ij} = \left(\frac{\beta^2}{\ln(1/\delta_{\min})}\right)\sum_i y_i(x) &= \left(\frac{\beta^2}{\ln(1/\delta_{\min})}\cdot\kappa\right)(-\Phi(x)) \\
&= \Omega\left(\frac{\gamma^2}{\kappa \ln(1/\delta_{\min})}\right)(-\Phi(x)).
\end{align*}
Since the potential never decreases, there can be at most 
\begin{align*}
O\left(\frac{\kappa\ln(1/\delta_{\min})}{\gamma^2}\ln\left(\frac{-\Phi_{\min}}{ w_{\min}/A_{\max}}\right)\right) = O\left(\frac{1}{\alpha}\frac{1}{\varepsilon^3}\ln^2\left(\wratio nmA_{\max}\right)\ln\left(\wratio\frac{nmA_{\max}}{\varepsilon}\right)\right)
\end{align*}
Case 1 rounds.

\noindent\textbf{Case 2: $-O\big(\frac{ w_{\min}}{A_{\max}}\big) < \Phi \leq O\Big(\frac{W}{1-\alpha}\cdot\big(\frac{w_{\min}}{2w_{\max}n^2mA_{\max}}\big)^{\frac{1-\alpha}{\alpha}}\Big)$.} From Lemma \ref{lemma:approx-comp-slack}, there exists at least one $i$ such that $\sum_j A_{ij}x_j \geq 1-(1+1/\kappa)\varepsilon$. Since $A_{ij}\leq A_{\max}$ $\forall i, j$, it is also true that $\sum_{j}x_j \geq \frac{1-(1+1/\kappa)\varepsilon}{A_{\max}}$, and as ${x_j}^{1-\alpha}\geq x_j$ and $\kappa \geq \frac{1}{\varepsilon}$, it follows that $\sum_{j}w_j{x_j}^{1-\alpha}\geq (1-\varepsilon(1+\varepsilon))\left(\frac{ w_{\min}}{A_{\max}}\right)$. From (\ref{eq:alpha<1-p-alpha-bound}), we also have $\sum_j w_j \frac{{x_j}^{1-\alpha}}{1-\alpha}\geq \frac{W}{1-\alpha}\cdot\left(\frac{w_{\min}}{2w_{\max}n^2mA_{\max}}\right)^{\frac{1-\alpha}{\alpha}}$. Therefore:
\begin{equation}
\sum_j w_j \frac{{x_j}^{1-\alpha}}{1-\alpha} \geq \max\left\{(1-\varepsilon(1+\varepsilon))\frac{1}{1-\alpha}\cdot \frac{ w_{\min}}{A_{\max}},\; \frac{W}{1-\alpha}\cdot\left(\frac{w_{\min}}{2w_{\max}n^2mA_{\max}}\right)^{\frac{1-\alpha}{\alpha}}\right\}. \label{eq:alpha<1-p-alpha-case2-bound}
\end{equation}
If $\Phi \leq \frac{1}{10}\cdot\max\left\{(1-\varepsilon(1+\varepsilon))\frac{1}{1-\alpha}\cdot \frac{ w_{\min}}{A_{\max}},\; \frac{W}{1-\alpha}\cdot\left(\frac{w_{\min}}{2w_{\max}n^2mA_{\max}}\right)^{\frac{1-\alpha}{\alpha}}\right\}$, then 
\begin{align*}
\sum_i y_i(x)  &\geq  \frac{9}{10}\kappa\cdot\frac{1}{1-\alpha}\sum_{j}w_j{x_j}^{\alpha}\\
&\geq \frac{9}{10}\kappa\cdot \max\left\{(1-\varepsilon(1+\varepsilon))\frac{1}{1-\alpha}\frac{ w_{\min}}{A_{\max}},\; \frac{W}{1-\alpha}\cdot\left(\frac{w_{\min}}{2w_{\max}n^2mA_{\max}}\right)^{\frac{1-\alpha}{\alpha}}\right\}.
\end{align*}
From Lemma \ref{lemma:approx-comp-slack}, 
\begin{align*}
\sum_i y_i(x) \sum_j A_{ij}x_j&\geq (1-3\varepsilon)\sum_i y_i(x) \\
&\geq (1-3\varepsilon)\frac{9}{10}\kappa\cdot \max\left\{(1-\varepsilon(1+\varepsilon))\frac{1}{1-\alpha}\cdot \frac{ w_{\min}}{A_{\max}},\; \frac{W}{1-\alpha}\cdot\left(\frac{w_{\min}}{2w_{\max}n^2mA_{\max}}\right)^{\frac{1-\alpha}{\alpha}}\right\}.
\end{align*}
From the third part of Lemma \ref{lemma:potential-increase-alpha<1}, the potential increases additively by at least 
\begin{equation*}
\Omega\left(\frac{\beta^2\kappa}{\ln(1/\delta_{\min})}\right)\cdot \max\left\{\frac{1}{1-\alpha}\cdot \frac{ w_{\min}}{A_{\max}},\; \frac{W}{1-\alpha}\cdot\left(\frac{w_{\min}}{2w_{\max}n^2mA_{\max}}\right)^{\frac{1-\alpha}{\alpha}}\right\},
\end{equation*}
and, therefore, $\Phi = \Omega\left(\frac{W}{1-\alpha}\cdot\big(\frac{w_{\min}}{2w_{\max}n^2mA_{\max}}\big)^{\frac{1-\alpha}{\alpha}}\right)$ after at most
\begin{align*}
O\left(\frac{\ln(1/\delta_{\min})\kappa}{\gamma^2}\right) = O\left(\frac{1}{\alpha}\frac{1}{\varepsilon^3}\ln\left(\wratio nmA_{\max}\right)\ln\left(\wratio\frac{nmA_{\max}}{\varepsilon}\right) \right)
\end{align*}
rounds.

\noindent\textbf{Case 3: $\Omega\Big(\frac{W}{1-\alpha}\cdot\big(\frac{w_{\min}}{2w_{\max}n^2mA_{\max}}\big)^{\frac{1-\alpha}{\alpha}}\Big)\leq \Phi \leq \frac{W}{1-\alpha}$.} In this case, $\Phi = O\left(\sum_j w_j \frac{{x_j}^{1-\alpha}}{1-\alpha}\right)$. If the round is stationary, then from Lemma \ref{lemma:alpha<1-stationary-round}, $p(x^*)\leq (1+6\varepsilon)p(x)$. If the round is not stationary, then from Definition \ref{def:alpha<1-stationary-round}, either:
\begin{enumerate}
\item $\sum_{k\in S^-}w_k{x_k}^{1-\alpha} > \gamma \sum_{j=1}^n w_j {x_j}^{1-\alpha}$, or
\item $\sum_{j=1}^n x_j \sum_{i=1}^m y_i(x)A_{ij} > (1+\frac{5\gamma}{4}) \sum_{j=1}^n w_j {x_j}^{1-\alpha}$.
\end{enumerate}
If the former is true, then using the first part of Lemma \ref{lemma:potential-increase-alpha<1}, the potential increases by at least $\Omega\left(\frac{\beta^2\gamma}{\ln(1/\delta_{\min})}\right)\cdot \sum_j w_j {x_j}^{1-\alpha} = \Omega\left(\frac{\beta^2\gamma}{\ln(1/\delta_{\min})}\right)\cdot (1-\alpha)\Phi$. If the latter is true, from the third part of Lemma \ref{lemma:potential-increase-alpha<1}, the potential increases by at least $\Omega\left(\frac{\beta^2\gamma}{\ln(1/\delta_{\min})}\right)\cdot \sum_j w_j {x_j}^{1-\alpha} = \Omega\left(\frac{\beta^2\gamma}{\ln(1/\delta_{\min})}\right)\cdot (1-\alpha)\Phi$.
It follows that there are at most 
\begin{align*}
O&\left(\frac{1}{1-\alpha}\cdot\frac{\ln(1/\delta_{\min})}{\beta^2 \gamma}\ln\left(\frac{\frac{W}{1-\alpha}}{\frac{W}{1-\alpha}\cdot\big(\frac{w_{\min}}{2w_{\max}n^2mA_{\max}}\big)^{\frac{1-\alpha}{\alpha}}}\right)\right)\\ 
&= O\left(\frac{1}{\alpha^2}\frac{1}{\varepsilon^5}\ln^2\left(\wratio\cdot{mnA_{\max}}\right)\ln^2\left(\wratio\cdot\frac{mnA_{\max}}{\varepsilon}\right)\right)
\end{align*}
non-stationary Case 3 rounds.

Combining the three cases with the bound on $\tau_0+ \tau_1$ (\ref{eq:tau1+tau0-bound}), the total convergence time is at most:
\begin{align*}
O\left(\frac{1}{\alpha^2\varepsilon^5}\ln^2\left(\wratio{mnA_{\max}}\right)\ln^2\left(\wratio\cdot\frac{mnA_{\max}}{\varepsilon}\right)\right)
\end{align*}
rounds, as claimed.
\end{proofof}

\subsection{Proof of Theorem \ref{thm:convergence-alpha=1}}\label{section:alpha=1}

The proof outline for the convergence of \textsc{$\alpha$-FairPSolver} in the $\alpha = 1$ case is as follows. First, we show that in any round it cannot be the case that only ``small'' $x_j$'s (i.e., $x_j$'s that are smaller than $\frac{\delta_j}{1-\beta}$) decrease. In fact, we show that the increase in the potential due to updates of ``small'' variables is dominated by the increase in the potential due to those variables that decrease multiplicatively by a factor $(1-\beta_2) = (1-\beta)$ (Lemmas \ref{lemma:mul-increase-prop} and \ref{lemma:potential-increase-proportional}). We then define a stationary round and show that: (i) in any non-stationary round the potential increases significantly, and (ii) in any stationary round, the solution $x$ at the beginning of the round provides an additive $5W\varepsilon$--approximation to the optimum objective value.

\begin{lemma}\label{lemma:mul-increase-prop}
Starting with a feasible solution, in any round of the algorithm: 
\begin{enumerate}
\item $\sum_{\{k\in S^- : x_k\geq \frac{\delta_k}{1-\beta}\}} x_k \sum_{i=1}^m y_i(x)A_{ik}\geq \frac{1}{2}\sum_{j\in S^-} x_j \sum_{i=1}^m y_i(x)A_{ij}$.
\item $\sum_{\{k\in S^- : x_k\geq \frac{\delta_k}{1-\beta}\}} \frac{x_k \sum_{i=1}^m y_i(x)A_{ik}}{w_k}\geq \frac{1}{2}\sum_{j\in S^-} \frac{x_j \sum_{i=1}^m y_i(x)A_{ij}}{w_j}$.
\end{enumerate}
\end{lemma}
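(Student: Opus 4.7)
The plan is to exploit Lemma~\ref{lemma:small-x-tight-yi} to show that every ``small'' decreasing variable $x_j$ (one with $x_j<\delta_j/(1-\beta)$) can be associated with a ``large'' decreasing variable $x_p$ (one with $x_p\geq \delta_p/(1-\beta)$) whose contribution to the relevant sum dominates that of $x_j$ by a factor of $n$, and then to finish with a simple double-counting argument. Concretely, for each small $j\in S^-$, Lemma~\ref{lemma:small-x-tight-yi} yields a constraint $i$ with $A_{ij}\neq 0$, $y_i(x)\geq \frac{\sum_{l}A_{lj}y_l(x)}{mA_{\max}}$, and $\sum_k A_{ik}x_k>1-\varepsilon/2$. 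The latter, by averaging over the at most $n$ nonzero terms, gives some index $p$ with $A_{ip}x_p>(1-\varepsilon/2)/n$, which I will call $p(j)$.

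The next step is to verify that $p(j)\in S^-$ with $x_{p(j)}\geq \delta_{p(j)}/(1-\beta)$. Using $A_{ip}\geq 1$ whenever $A_{ip}\neq 0$, the bound $A_{ip}x_p>(1-\varepsilon/2)/n$ already implies $x_p>(1-\varepsilon/2)/n$, which easily dominates $\delta_p/(1-\beta)$ because $\delta_p = w_p/(2w_{\max}n^2 mA_{\max})$. Then, chaining the inequality $y_i\geq\sum_l A_{lj}y_l/(mA_{\max})$ with $A_{ip}x_p>(1-\varepsilon/2)/n$ and the decrease condition $x_j\sum_l A_{lj}y_l\geq w_j(1+\gamma)$ (together with $x_j<\delta_j/(1-\beta)=w_j/((1-\beta)C)$ and $C=2w_{\max}n^2 mA_{\max}$) gives $x_p\sum_l y_l A_{lp}\gtrsim 2(1-\varepsilon/2)(1-\beta)w_{\max}n\geq n w_p$, so in particular $\xi_p>1+\gamma$, confirming $p(j)\in S^-$.

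The same chain, compared against $x_j\sum_l A_{lj}y_l <\frac{\delta_j}{1-\beta}\sum_l A_{lj}y_l$ and using $\delta_j=w_j/C$, yields the key domination inequalities
\begin{equation*}
x_{p(j)}\sum_l y_l(x)A_{lp(j)} \;\geq\; n\cdot x_j\sum_l y_l(x)A_{lj}, \qquad \frac{x_{p(j)}\sum_l y_l(x)A_{lp(j)}}{w_{p(j)}} \;\geq\; n\cdot \frac{x_j\sum_l y_l(x)A_{lj}}{w_j},
\end{equation*}
which both follow from the factor $(1-\varepsilon/2)(1-\beta)C\geq \tfrac{1}{2}w_{\max}n^2 mA_{\max}$ built into the choice of $C$.

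To finish, sum these inequalities over all small $j\in S^-$ and note that since there are at most $n$ small indices in total, each distinct value of $p(j)$ appears with multiplicity at most $n$ on the left-hand side. Dividing by $n$ gives
\begin{equation*}
\littlesum_{k\in S^-:\, x_k\geq \delta_k/(1-\beta)} x_k\littlesum_l y_l(x)A_{lk} \;\geq\; \littlesum_{j\in S^-:\, x_j<\delta_j/(1-\beta)} x_j\littlesum_l y_l(x)A_{lj},
\end{equation*}
and similarly with each $x_k\sum y A$ term divided by $w_k$. Adding the ``large'' part of $S^-$ to both sides and rearranging produces the advertised factor of $\tfrac12$ for both parts of the lemma. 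The main obstacle to watch out for is the verification of the domination factor $n$; this is where the particular choices of $\delta_j$ (small enough to survive $(1-\beta)$-scaling without overshooting) and of $C$ (large enough to absorb the $mnA_{\max}$ factors) are used simultaneously, and any weakening of either constant would break the argument.
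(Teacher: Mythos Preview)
Your proof is correct and follows essentially the same route as the paper: use Lemma~\ref{lemma:small-x-tight-yi} to attach to each ``small'' $j\in S^-$ a ``large'' $p(j)\in S^-$ whose contribution dominates by a factor of $n$, then finish by a counting argument (the paper groups small variables by the tight constraint $i$, whereas you do an explicit double-count over all small $j$'s; both lead to the same inequality). One minor slip: from $A_{ip}\geq 1$ and $A_{ip}x_p>(1-\varepsilon/2)/n$ you cannot conclude $x_p>(1-\varepsilon/2)/n$ (the inequality goes the other way); use instead $A_{ip}\leq A_{\max}$ to get $x_p>(1-\varepsilon/2)/(nA_{\max})$, which still comfortably exceeds $\delta_p/(1-\beta)\leq 1/\big(2(1-\beta)n^2 mA_{\max}\big)$, so the rest of your argument is unaffected.
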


\begin{proof}
Fix any round, and let $x^0, y(x^0)$ and $x^1, y(x^1)$ denote the values of $x, y$ at the beginning and at the end of the round, respectively. If for all $j\in S^-$ $x_j^0 \geq \frac{\delta_j}{1-\beta}$, there is nothing to prove. 

Suppose that there exists some $x_j^0<\frac{\delta_j}{1-\beta}$ that decreases. Then from Lemma \ref{lemma:small-x-tight-yi} there exists at least one $i\in\{1,...,m\}$ such that $A_{ij}\neq 0$, and: 
\begin{itemize}
\item $\sum_{k=1}^nA_{ik}x_k^0 > 1- \frac{\varepsilon}{2}$, and
\item $y_i(x) \geq \frac{\sum_{l=1}^my_l(x^0)A_{lj}}{mA_{\max}} > (1-\beta)\frac{w_j}{\delta_j}\frac{1}{mA_{\max}}\frac{x_j^0\sum_{l=1}^my_l(x^0)A_{lj}}{w_j}$.
\end{itemize}
Since $\sum_{k=1}^nA_{ik}x_k^0 > 1- \frac{\varepsilon}{2}$, there exists at least one $p$ such that $A_{ip}x_p^0 > \frac{1-\frac{\varepsilon}{2}}{n}$. 
Recalling that $C = \frac{w_j}{\delta_j}\geq 2w_{\max}n^2mA_{\max}$:

\begin{align}
(x_p^0)A_{ip}y_i(x^0)&>C\cdot \frac{(1-\beta)}{m {A_{\max}}}\cdot\frac{1-\frac{\varepsilon}{2}}{n}\cdot\frac{x_j^0\sum_{l=1}^my_l(x^0)A_{lj}}{w_j} \notag\\
&>2w_{\max}n^2m{A_{\max}}\cdot \frac{(1-\beta)}{m {A_{\max}}}\cdot\frac{1-\frac{\varepsilon}{2}}{n}\cdot \frac{x_j^0\sum_{l=1}^my_l(x^0)A_{lj}}{w_j}\notag\\
&\geq  2nw_{\max}(1-\beta)\left(1-\frac{\varepsilon}{2}\right)\cdot \frac{x_j^0\sum_{l=1}^m y_l(x^0)A_{lj}}{w_j}\notag\\
&\geq nw_{\max}\frac{x_j^0\sum_{l=1}^m y_l(x^0)A_{lj}}{w_j}. \label{eq:alpha=1-n-times-inc} 
\end{align}

Since $x_j$ decreases, it must be $\frac{x_j^0\sum_{l=1}^my_l(x^0)A_{lj}}{w_j}\geq 1+\gamma$. Using (\ref{eq:alpha=1-n-times-inc}):
\begin{equation*}
\frac{x_p^0\sum_{l=1}^m y_l(x^0)A_{lp}}{w_p}\geq \frac{(x_p^0)A_{ip}y_i(x^0)}{w_{\max}}\geq n \frac{x_j^0\sum_{l=1}^m y_l(x^0)A_{lj}}{w_j} \geq 1+\gamma,
\end{equation*}
and, therefore, $x_p$ decreases as well. 
Moreover, since (\ref{eq:alpha=1-n-times-inc}) implies
\begin{equation*}
x_p^0\sum_{l=1}^m y_l(x^0)A_{lp} \geq \sum_{\{j\in S^-: x_j<\frac{\delta_j}{1-\beta}\wedge A_{ij}\neq 0\}} \frac{w_{\max}}{w_j}x_j^0\sum_{l=1}^m y_l(x^0)A_{lj} \geq \sum_{\{j\in S^-: x_j<\frac{\delta_j}{1-\beta}\wedge A_{ij}\neq 0\}} x_j^0\sum_{l=1}^m y_l(x^0)A_{lj},
\end{equation*}
the proof of the first part of the lemma follows. The second part follows from (\ref{eq:alpha=1-n-times-inc}) as well, since:
\begin{align*}
\frac{x_p^0\sum_{l=1}^my_l(x^0)A_{lp}}{w_p} &\geq \frac{(x_p^0)A_{ip}y_i(x^0)}{w_{\max}}\\
&\geq n \frac{x_j^0\sum_{l=1}^m y_l(x^0)A_{lj}}{w_j},
\end{align*}
which, given that $x_j$ was chosen arbitrarily, implies:
\begin{align*}
\frac{x_p^0\sum_{l=1}^my_l(x^0)A_{lp}}{w_p} &\geq \sum_{\{j\in S^-: x_k<\frac{\delta_k}{1-\beta}\wedge A_{ik}\neq 0\}} \frac{x_k^0\sum_{l=1}^m y_l(x^0)A_{lj}}{w_k}.
\end{align*}
\end{proof}

\begin{lemma}\label{lemma:potential-increase-proportional}
Let $x^0, y(x^0)$ and $x^1, y(x^1)$ denote the values of $x, y$ at the beginning and at the end of any fixed round, respectively. If $x^0$ is feasible, then the potential increase in the round is at least:
\begin{enumerate}
\item \label{item:prop-1} $\Phi(x^1) - \Phi(x^0) \geq \Omega(\beta\gamma)\sum_{j\in S^+} w_j$;
\item \label{item:prop-2} $\Phi(x^1) - \Phi(x^0) \geq \Omega(\beta)\left((1-\gamma)W - \sum_{j=1}^nx_j^0\sum_{i=1}^m y_i(x^0) A_{ij}\right)$.
\item \label{item:prop-3} $\Phi(x^1) - \Phi(x^0) \geq \Omega(\beta)\left(\sum_{j=1}^nx_j^0\sum_{i=1}^m y_i(x^0) A_{ij} - (1+\gamma)W\right)$.
\end{enumerate}
\end{lemma}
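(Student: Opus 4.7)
My plan is to build on the generic per-coordinate estimate of Lemma \ref{lemma:potential-increase},
\[
\Phi(x^1) - \Phi(x^0) \;\geq\; \sum_{j=1}^n w_j\,\frac{|x_j^1-x_j^0|}{x_j^1}\,\Bigl|\,1 - \tfrac{x_j^1\sum_i y_i(x^1)A_{ij}}{w_j}\,\Bigr|,
\]
and to specialize the summation to a different subset of indices for each of the three parts. Because every term in this sum is nonnegative, I may freely drop any subset of indices. Throughout I abbreviate $a_j^0 = x_j^0\sum_i y_i(x^0)A_{ij}$ and $\xi_j^0 = a_j^0/w_j$, and I rely on the single-round bound $(1-\gamma/4)\xi_j^0 \leq \xi_j^1 \leq (1+\gamma/4)\xi_j^0$ that follows from the choice of $\beta$ in the $\alpha=1$ case.

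For Part \ref{item:prop-1} I keep only $j \in S^+$: there $|x_j^1-x_j^0|/x_j^1 = \beta/(1+\beta) = \Omega(\beta)$, and since $\xi_j^0 \leq 1-\gamma$ combined with $\xi_j^1 \leq (1+\gamma/4)\xi_j^0$ gives $\xi_j^1 \leq 1 - 3\gamma/4$, the absolute-value factor is at least $3\gamma/4 = \Omega(\gamma)$. Multiplying by $w_j$ and summing yields the claim. For Part \ref{item:prop-2} I again restrict to $S^+$ but retain the quantitative form: each index contributes at least $\Omega(\beta)\bigl(w_j - (1+\gamma/4)\,a_j^0\bigr)$. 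The elementary inequality $w_j - (1+\gamma/4)a_j^0 \geq (1-\gamma)w_j - a_j^0$, which reduces to $a_j^0 \leq 4w_j$ and is automatic for $j \in S^+$ (where $\xi_j^0 < 1$), turns the bound into $\Omega(\beta)\bigl((1-\gamma)w_j - a_j^0\bigr)$. Extending the sum from $S^+$ to all $j$ only decreases the total, since for $j \notin S^+$ we have $\xi_j^0 \geq 1-\gamma$ and hence $(1-\gamma)w_j - a_j^0 \leq 0$; this gives Part \ref{item:prop-2}.

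For Part \ref{item:prop-3} I restrict the base identity to the subset $M := \{j \in S^- : x_j^0 \geq \delta_j/(1-\beta)\}$ of genuinely multiplicative decreases. There $|x_j^1-x_j^0|/x_j^1 = \beta/(1-\beta) = \Omega(\beta)$, and $\xi_j^1 \geq (1-\gamma/4)\xi_j^0 \geq (1-\gamma/4)(1+\gamma) > 1$ strips the absolute value, yielding contribution $\Omega(\beta)\,w_j\bigl((1-\gamma/4)\xi_j^0 - 1\bigr)$ per $j \in M$. A short convexity-type check shows $(1-\gamma/4)\xi - 1 \geq \tfrac12(\xi - (1+\gamma))$ whenever $\xi \geq 1+\gamma$ and $\gamma \leq 1/6$, so the contribution is at least $\Omega(\beta)\bigl(a_j^0 - (1+\gamma)w_j\bigr)$ per $j \in M$. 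Extending the target sum to all $j$ is free because $a_j^0 - (1+\gamma)w_j < 0$ for every $j \notin S^-$, so Part \ref{item:prop-3} reduces to showing the constant-factor domination
\[
\sum_{j \in M}\bigl(a_j^0 - (1+\gamma)w_j\bigr) \;\geq\; \Omega(1)\sum_{j \in S^-}\bigl(a_j^0 - (1+\gamma)w_j\bigr).
\]

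The main obstacle is this last reduction. The aggregate conclusions of Lemma \ref{lemma:mul-increase-prop} bound $\sum_M a_j$ and $\sum_M \xi_j$ but give no control of $\sum_M w_j$ relative to $\sum_{S^-} w_j$, and a direct substitution fails because the $w_j$-terms need not be comparable by any constant. I will instead adapt the per-pair case analysis used in the $\alpha<1$ analogue (Lemma \ref{lemma:potential-increase-alpha<1}, Part 3): for each small $j \in S^- \setminus M$, the construction inside the proof of Lemma \ref{lemma:mul-increase-prop} produces a large index $p=p(j) \in M$ lying in the same tight constraint and satisfying $a_{p(j)} \geq n\,w_{\max}\,\xi_j^0$, which gives simultaneously $a_{p(j)} \geq n\,a_j^0$ and $\xi_{p(j)}^0 \geq n\,\xi_j^0$. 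A two-case split on whether $w_{p(j)} \geq w_j$ (in which case apply $\xi_{p(j)}^0 \geq n\xi_j^0$) or $w_{p(j)} < w_j$ (apply $a_{p(j)} \geq n\,a_j^0$) yields, in both cases, $w_{p(j)}(\xi_{p(j)}^0-(1+\gamma)) \geq n\,w_j(\xi_j^0-(1+\gamma))$, i.e., $g(p(j)) \geq n\,g(j)$ where $g(k) := a_k^0 - (1+\gamma)w_k$. Grouping the small indices by their triggering tight constraint so that each large $p \in M$ is charged at most once and summing then delivers $\sum_{M} g(p) \geq \tfrac{1}{2}\sum_{S^-} g(j)$, completing the proof of Part \ref{item:prop-3}.
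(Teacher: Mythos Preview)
Your proposal is correct and follows essentially the same route as the paper's proof: Part~\ref{item:prop-1} and Part~\ref{item:prop-2} specialize the generic estimate of Lemma~\ref{lemma:potential-increase} to $S^+$ exactly as the paper does (the paper in fact defers Part~\ref{item:prop-2} to the identical argument in Lemma~\ref{lemma:potential-increase-alpha<1}), and for Part~\ref{item:prop-3} you correctly identify that the \emph{aggregate} statements of Lemma~\ref{lemma:mul-increase-prop} are insufficient and instead reach into its proof to extract the per-pair bounds $a_{p(j)} \geq n\,a_j^0$ and $\xi_{p(j)}^0 \geq n\,\xi_j^0$, then run the same $w_{p(j)} \gtrless w_j$ case split that the paper performs. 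Your inequality $(1-\gamma/4)\xi - 1 \geq \tfrac12(\xi-(1+\gamma))$ plays the role of the paper's slightly sharper factor $(1-\gamma/4)$, and the charging/grouping you sketch at the end is exactly the mechanism the paper uses (at the same level of detail) to pass from $\sum_M$ to $\tfrac12\sum_{S^-}$.
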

\begin{proof}
$\quad$\\
\noindent\textbf{Proof of \ref{item:prop-1}:}
Recall that:
\begin{equation*}
\Phi(x^1) - \Phi(x^0) \geq \sum_{j=1}^n w_j \frac{x_j^1 - x_j^0}{x_j^1}\Big( 1- \frac{x_j^1\sum_{i=1}^m y_i(x^1)A_{ij}}{w_j}\Big) \geq \sum_{j\in S^+} w_j \frac{x_j^1 - x_j^0}{x_j^1}\Big( 1- \frac{x_j^1\sum_{i=1}^m y_i(x^1)A_{ij}}{w_j}\Big).
\end{equation*}
Let $\xi_j(x^1) = \frac{x_j^1\sum_{i=1}^m y_i(x^1)A_{ij}}{w_j}$, $\xi_j(x^0) = \frac{x_j^0\sum_{i=1}^m y_i(x^0)A_{ij}}{w_j}$.

If $j\in S^+$, then $x_j^1 = (1+\beta)x_j^0$ and $\xi_j(x^0) \leq 1-\gamma$. Since from the choice of parameters $\xi_j$ increases by at most a factor of $1+\gamma/4$, it follows that: $\xi_j(x^1) \leq (1-\gamma)(1+\gamma/4)\leq 1 -\frac{3}{4}\gamma$, which gives $1 - \xi_j(x^1) \geq \frac{3}{4}\gamma$. Therefore:
\begin{equation*}
\Phi(x^1) - \Phi(x^0) \geq \frac{\beta}{1+\beta}\cdot \frac{3}{4}\gamma \cdot \sum_{j\in S^+} w_j.
\end{equation*}

\noindent\textbf{Proof of \ref{item:prop-2}:} The proof is equivalent to the proof of the second part of Lemma \ref{lemma:potential-increase-alpha<1} and is omitted.
 
\noindent\textbf{Proof of \ref{item:prop-3}:} Using that for $j\in S^-$ we have that $\frac{x_j^0\sum_{i=1}^m y_i(x^0)A_{ij}}{w_j}\geq 1 + \gamma$ and $x_j^1 = \max\{(1-\beta)x_j^0, \delta_j\}$, we can lower bound the increase in the potential as:
\begin{align}
\Phi(x^1) - \Phi(x^0) &\geq \sum_{\{j\in S^-:x_j^0\geq \frac{\delta_j}{1-\beta}\}} w_j \frac{x_j^1 - x_j^0}{x_j^1}\Big(1 - \frac{x_j^1 \sum_{i=1}^m y_{i}(x^1)A_{ij}}{w_j}\Big)\notag\\
& = \frac{\beta}{1-\beta} \sum_{\{j\in S^-:x_j^0\geq \frac{\delta_j}{1-\beta}\}} w_j \Big(\frac{x_j^1 \sum_{i=1}^m y_i (x^1)A_{ij}}{w_j} - 1\Big)\notag\\
&\geq \frac{\beta}{1-\beta} \sum_{\{j\in S^-:x_j^0\geq \frac{\delta_j}{1-\beta}\}} w_j \Big((1-\gamma/4)\frac{x_j^0 \sum_{i=1}^m y_i(x^0)A_{ij}}{w_j} - 1\Big)\notag\\
&\geq \frac{\beta}{1-\beta}(1-\gamma/4) \sum_{\{j\in S^-:x_j^0\geq \frac{\delta_j}{1-\beta}\}} w_j\Big(\frac{x_j^0 \sum_{i=1}^m y_i(x^0)A_{ij}}{w_j} - (1+\gamma)\Big). \label{eq:delta-phi-j-in-S-}
\end{align}
Now consider $k\in S^-$ such that $x_k^0 < \frac{\delta_k}{1-\beta}$. From the proof of Lemma \ref{lemma:mul-increase-prop}, for each such $x_k$ there exists a constraint $i$ and a variable $x_p \geq \frac{\delta_p}{1-\beta}$ with $p\in S^-$ such that $A_{ik}\neq 0$, $A_{ip\neq 0}$, $x_p^0 \sum_l y_l(x^0)A_{lp}\geq n\cdot x_k^0 \sum_l y_l(x^0)A_{lk}$, and $\frac{x_p^0 \sum_l y_l(x^0)A_{lp}}{w_p}\geq n\cdot \frac{x_k^0 \sum_l y_l(x^0)A_{kp}}{w_k}$. If $w_k \leq w_p$ then
\begin{align}
w_p \Big( \frac{x_p^0 \sum_l y_l(x^0)A_{lp}}{w_p} - (1+\gamma) \Big)&\geq w_k  \Big(n\cdot \frac{x_k^0 \sum_l y_l(x^0)A_{kp}}{w_k} - (1+\gamma) \Big)\notag\\
&\geq n \cdot w_k  \Big( \frac{x_k^0 \sum_l y_l(x^0)A_{kp}}{w_k} - (1+\gamma) \Big)\notag.
\end{align}
On the other hand, if $w_k>w_p$, then:
\begin{align*}
w_p \Big( \frac{x_p^0 \sum_l y_l(x^0)A_{lp}}{w_p} - (1+\gamma) \Big)&= ( x_p^0 \sum_l y_l(x^0)A_{lp} - (1+\gamma)w_p )\\
&> n\cdot x_k^0 \sum_l y_l(x^0)A_{kp} - (1+\gamma)w_k\\
&\geq n\cdot w_k \Big( \frac{x_k^0 \sum_l y_l(x^0)A_{kp}}{w_k} - (1+\gamma) \Big).
\end{align*}
It follows from (\ref{eq:delta-phi-j-in-S-}) that:
\begin{align*}
\Phi(x^1) - \Phi(x^0) & \geq \frac{\beta}{1-\beta}\frac{1-\gamma/4}{2} \sum_{j\in S^-} w_j \Big(\frac{x_j^0 \sum_{i=1}^m y_i(x^0)A_{ij}}{w_j} - (1+\gamma)\Big). 
\end{align*}
Finally, since for $j\notin S^-$ we have that $\frac{x_j^0 \sum_{i=1}^m y_i(x^0)A_{ij}}{w_j}< 1 + \gamma$:
\begin{align*}
\Phi(x^1) - \Phi(x^0) & \geq \frac{\beta}{1-\beta}\frac{1-\gamma/4}{2} \sum_{j=1}^n w_j \Big(\frac{x_j^0 \sum_{i=1}^m y_i(x^0)A_{ij}}{w_j} - (1+\gamma)\Big)\\
&= \Omega(\beta) \Big( \sum_{j=1}^n x_j^0 \sum_{i=1}^m y_i(x^0)A_{ij} - (1+\gamma)\sum_{j=1}^n w_j \Big). 
\end{align*}
\end{proof}

Consider the following definition of a stationary round:

\begin{definition}\label{def:alpha=1-stat-round}
A round is stationary if it happens after the initial $\tau_0 + \tau_1$ rounds, where $\tau_0 = \frac{1}{\beta}\ln(1/\delta_{\min})$, $\tau_1 = \frac{1}{\beta}\ln(nA_{\max})$ and if both of the following conditions hold:
\begin{itemize}
\item $\sum_{j\in S^+}w_j \leq W /\tau_0$;
\item $(1-2\gamma)W\leq \sum_{j=1}^n x_j \sum_{i=1}^m y_(x)A_{ij}\leq (1+2\gamma)W$.
\end{itemize}
\end{definition}

We first show that in any non-stationary round there is a sufficient progress towards the $\varepsilon-$approximate solution.
\begin{lemma}\label{lemma:non-stat-round-alpha=1}
In any non-stationary round the potential function increases by at least $\Omega(\beta\gamma\cdot W/\tau_0)$.
\end{lemma}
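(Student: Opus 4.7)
The plan is to proceed by case analysis on which of the two stationary-round conditions is violated, and to apply the corresponding part of Lemma \ref{lemma:potential-increase-proportional} in each case. Since a non-stationary round (occurring after the initial $\tau_0+\tau_1$ warm-up, so that $x^0$ is feasible by Lemma \ref{lemma:feasibility}) must violate at least one of the two conditions in Definition \ref{def:alpha=1-stat-round}, I would split into three subcases: (a) $\sum_{j\in S^+} w_j > W/\tau_0$; (b) $\sum_{j=1}^n x_j^0\sum_{i=1}^m y_i(x^0)A_{ij} < (1-2\gamma)W$; (c) $\sum_{j=1}^n x_j^0\sum_{i=1}^m y_i(x^0)A_{ij} > (1+2\gamma)W$.

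In case (a), I would invoke part \ref{item:prop-1} of Lemma \ref{lemma:potential-increase-proportional} to obtain
\[
\Phi(x^1)-\Phi(x^0)\;\geq\;\Omega(\beta\gamma)\sum_{j\in S^+}w_j\;>\;\Omega\!\left(\frac{\beta\gamma\,W}{\tau_0}\right),
\]
which is exactly the desired bound. In case (b), part \ref{item:prop-2} yields
\[
\Phi(x^1)-\Phi(x^0)\;\geq\;\Omega(\beta)\bigl((1-\gamma)W-(1-2\gamma)W\bigr)\;=\;\Omega(\beta\gamma W),
\]
and in case (c), part \ref{item:prop-3} yields
\[
\Phi(x^1)-\Phi(x^0)\;\geq\;\Omega(\beta)\bigl((1+2\gamma)W-(1+\gamma)W\bigr)\;=\;\Omega(\beta\gamma W).
\]
Since $\tau_0\geq 1$, the bounds in cases (b) and (c) are at least as strong as $\Omega(\beta\gamma W/\tau_0)$, so the lemma follows in all three cases.

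There is essentially no obstacle here: the definition of a stationary round was chosen precisely so that the three inequalities of Lemma \ref{lemma:potential-increase-proportional} give matching lower bounds. The only subtlety is a bookkeeping one, namely that the threshold $W/\tau_0$ in condition~1 of the stationarity definition is calibrated against the $\Omega(\beta\gamma)\sum_{j\in S^+}w_j$ bound from part~\ref{item:prop-1}, yielding $\Omega(\beta\gamma W/\tau_0)$ as the weakest of the three guarantees; this is the value that will drive the overall count of non-stationary rounds in the subsequent convergence argument.
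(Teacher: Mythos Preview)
Your proposal is correct and follows essentially the same approach as the paper: a case split on which stationarity condition fails, applying the corresponding part of Lemma~\ref{lemma:potential-increase-proportional} in each case. The paper's proof is slightly terser (it lumps your cases (b) and (c) into a single sentence), but the argument and the resulting bounds are identical.
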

\begin{proof}
A round is non-stationary if either of the two conditions from Definition \ref{def:alpha=1-stat-round} does not hold. If the first condition does not hold, then from the first part of Lemma \ref{lemma:potential-increase-proportional}, the potential increases by $\Omega(\beta\gamma\cdot W/\tau_0)$. If the second condition does not hold, then from either the second or the third part of Lemma \ref{lemma:potential-increase-proportional} the potential increases by at least $\Omega(\beta\gamma W) \geq \Omega(\beta\gamma\cdot W/\tau_0)$.
\end{proof}

Before proving that in every non-stationary round, the solution is $O(\varepsilon)-$approximate, we will need the following intermediary lemma.
\begin{lemma}\label{lemma:cond-lower-bound}
Starting with a feasible solution and after at most $\tau_0 = \frac{1}{\beta}\ln\left(\frac{1}{\delta_{\min}}\right)$ rounds, in any round of the algorithm:
\begin{equation*}
\min_j \frac{x_j \sum_{i=1}^m y_i(x)A_{ij}}{w_j} \geq (1-\gamma)^{\tau_0}.
\end{equation*}
\end{lemma}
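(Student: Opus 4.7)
The plan is to fix a coordinate $j$ and track $\xi_j(x)\equiv x_j\sum_i y_i(x)A_{ij}/w_j$ over the course of the algorithm, combining two ingredients: a ``recurrence'' guarantee that $\xi_j$ must exceed $1-\gamma$ somewhere in every window of $\tau_0$ consecutive rounds, and the per-round multiplicative-change bound built into the choice of $\beta$.

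For the recurrence, I would argue by contradiction: if $\xi_j(x)\leq 1-\gamma$ in $\tau_0$ consecutive rounds, then the algorithm multiplies $x_j$ by $1+\beta$ in each such round, so starting from the post-clipping value $x_j\geq\delta_j\geq\delta_{\min}$ the variable would grow to essentially $(1+\beta)^{\tau_0}\delta_{\min}\approx 1$, making some constraint $\sum_k A_{ik}x_k>1$ (using $A_{ij}\geq 1$ whenever nonzero) and contradicting Lemma~\ref{lemma:feasibility}. Thus within the first $\tau_0$ rounds there is an anchor $t^\star\leq \tau_0$ with $\xi_j(x^{t^\star})>1-\gamma$, and for every later round $t$ the same argument applied to the window $\{t-\tau_0+1,\dots,t\}$ yields an anchor $t^\star\in\{t-\tau_0+1,\dots,t\}$ with $\xi_j(x^{t^\star})>1-\gamma$.

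For the per-round control, the choice $\beta=\gamma/(5(\kappa+1))$ (for $\alpha=1$) was calibrated exactly so that a single round changes $\xi_j$ by at most a factor $1\pm\gamma/4$, as stated in Section~\ref{section:algorithm}. Hence
\[
\xi_j(x^t)\;\geq\;(1-\gamma)(1-\gamma/4)^{t-t^\star}\;\geq\;(1-\gamma)(1-\gamma/4)^{\tau_0}\;\geq\;(1-\gamma)^{\tau_0},
\]
where the last inequality is an elementary check valid for $\tau_0\geq 2$ (certainly true in our regime, since $\tau_0=\beta^{-1}\ln(1/\delta_{\min})$ is large). Taking the minimum over $j$ gives the statement of the lemma.

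The only fine point is pinning down the recurrence: the inequality $(1+\beta)^{\tau_0}\geq 1/\delta_{\min}$ holds only up to a constant multiplicative slack in the exponent (since $(1+\beta)^{1/\beta}<e$), so strictly speaking one needs $\tau_0+O(1)$ rounds to force the feasibility contradiction. Absorbing this $O(1)$ either into the exponent on $(1-\gamma)$ or into the observation that the conditional update stops firing as soon as $\xi_j>1-\gamma$ disposes of this issue without affecting any downstream estimate. I do not anticipate other obstacles.
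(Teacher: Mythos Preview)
Your proposal is correct and follows essentially the same route as the paper: first establish by a feasibility contradiction (via Lemma~\ref{lemma:feasibility}) that $\xi_j$ must exceed $1-\gamma$ at least once in every window of $\tau_0$ consecutive rounds, then propagate from that anchor using the per-round $(1-\gamma/4)$ decrease cap to obtain $(1-\gamma)(1-\gamma/4)^{\tau_0}\geq(1-\gamma)^{\tau_0}$. The paper's argument is identical in structure, and your explicit handling of the $(1+\beta)^{1/\beta}<e$ slack is, if anything, a touch more careful than the paper's own presentation.
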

\begin{proof}
First, we claim that after the algorithm reaches a feasible solution it takes at most $\tau_0 + 1$ additional rounds for each agent $j$ to reach a round in which $\frac{x_j \sum_{i=1}^m y_i(x)A_{ij}}{w_j} > 1-\gamma$. Suppose not, and pick any agent $k$ for which in each of the $\tau_0 + 1$ rounds following the first round that holds a feasible solution: $
\frac{x_k\sum_{i=1}^m y_i(x)A_{ik}}{w_k}\leq 1-\gamma$. Then $x_k$ increases in each of the rounds and after $\frac{1}{\beta}\ln(\frac{1}{\delta_k})\leq \tau_0$ rounds we have $x_k \geq 1$. Therefore, after at most $\tau_0+1$ rounds the solution becomes infeasible, which is a contradiction (due to Lemma \ref{lemma:feasibility}).

Now choose any $x_j$ and observe $\xi_j = \frac{x_j \sum_{i=1}^m y_i(x)A_{ij}}{w_j}$ over the rounds that happen after the first $\tau_0 + 1$ rounds. The maximum number of consecutive rounds for which $\xi_j\leq 1 -\gamma$ is $\tau_j = \frac{1}{\beta}\ln(\frac{1}{\delta_j})\leq \tau_0$, otherwise we would have $x_j>1$, a contradiction. Since in any round, due to the choice of the algorithm parameters, $\xi_j$ decreases by at most a factor of $1-\gamma/4$, the minimum value that $\xi_j$ can take is at least $(1-\gamma)(1-\gamma/4)^{\tau_j/2}>(1-\gamma)^{\tau_0}$, thus completing the proof.
\end{proof}

Now we are ready to prove that a solution in a stationary round is $O(\varepsilon)-$approximate.
\begin{lemma}\label{lemma:alpha=1-stat-round}
In any stationary round: $p_1(x^*) - p_1(x)\leq 5\varepsilon W$, where $x^*$ is the optimal solution.
\end{lemma}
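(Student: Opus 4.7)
The plan is to invoke weak duality $p_1(x^*)-p_1(x)\leq G_1(x,y(x))$ (which holds since $y_i(x)=Ce^{\kappa(\sum_jA_{ij}x_j-1)}>0$ is dual feasible) and bound the right-hand side, given by (\ref{eq:duality-gap-proportional}), by $5\varepsilon W$. Writing $\xi_j = x_j\sum_i y_i(x)A_{ij}/w_j$, we have $G_1(x,y(x))=\sum_j w_j\ln(1/\xi_j)+\sum_i y_i(x)-W$, and the two summands will be bounded separately, using the two clauses of Definition \ref{def:alpha=1-stat-round} together with Lemmas \ref{lemma:approx-comp-slack} and \ref{lemma:cond-lower-bound}. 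Since a stationary round occurs only after the initial $\tau_0+\tau_1$ rounds, both preliminary lemmas are in force.

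For the linear part, combining the second stationary condition $\sum_j x_j\sum_i y_i(x)A_{ij}\leq (1+2\gamma)W$ with Lemma \ref{lemma:approx-comp-slack}(2) yields $\sum_i y_i(x)\leq (1+3\varepsilon)(1+2\gamma)W$. Plugging in $\gamma=\varepsilon/4$ and using $\varepsilon\leq 1/6$, a short arithmetic check gives $\sum_i y_i(x)-W\leq 4\varepsilon W$.

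For the logarithmic part, split the sum according to whether $j\in S^+$. If $j\notin S^+$, the algorithm's update rule guarantees $\xi_j>1-\gamma$, so $\ln(1/\xi_j)\leq -\ln(1-\gamma)\leq 2\gamma$, contributing at most $2\gamma W=\varepsilon W/2$. If $j\in S^+$, Lemma \ref{lemma:cond-lower-bound} supplies the worst-case lower bound $\xi_j\geq (1-\gamma)^{\tau_0}$, so $\ln(1/\xi_j)\leq -\tau_0\ln(1-\gamma)\leq 2\gamma\tau_0$; the first stationary condition $\sum_{j\in S^+}w_j\leq W/\tau_0$ then cancels the $\tau_0$ factor, again contributing at most $\varepsilon W/2$. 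Summing gives $\sum_j w_j\ln(1/\xi_j)\leq \varepsilon W$, which combined with the linear bound yields $G_1(x,y(x))\leq 5\varepsilon W$, as required.

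The crux is the pairing of the first stationary condition with Lemma \ref{lemma:cond-lower-bound}: a variable in $S^+$ can have $\xi_j$ as small as $(1-\gamma)^{\tau_0}$, so its individual log penalty is $\Theta(\gamma\tau_0)$, but the stationary condition caps the total weight of $S^+$ by $W/\tau_0$, making the $\tau_0$ factors cancel exactly. This is precisely why Definition \ref{def:alpha=1-stat-round} is calibrated to the $W/\tau_0$ threshold rather than to a cruder one, and it is the only step that genuinely uses the full strength of the stationary-round machinery --- the rest reduces to routine manipulations.
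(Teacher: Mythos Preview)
Your proof is correct and follows essentially the same approach as the paper's: bound the duality gap via Lemma~\ref{lemma:approx-comp-slack} and the second stationary condition for the linear part, and split the logarithmic part into $j\notin S^+$ (where $\xi_j>1-\gamma$) and $j\in S^+$ (where Lemma~\ref{lemma:cond-lower-bound} combined with the first stationary condition makes the $\tau_0$ factors cancel). The only cosmetic difference is that you use the clean estimate $-\ln(1-\gamma)\leq 2\gamma$ throughout, whereas the paper writes the slightly looser bookkeeping with $\gamma$ directly; both land at the same $5\varepsilon W$.
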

\begin{proof}
Since, due to Definition \ref{def:alpha=1-stat-round}, a stationary round can only happen after the initial $\tau_0 + \tau_1$ rounds, we have that in any stationary round the solution is feasible (Lemmas \ref{lemma:feasibility} and \ref{lemma:self-stabilization}) and approximate complementary slackness (Lemma \ref{lemma:approx-comp-slack}) holds. 

Recall the expression for the duality gap:
\begin{equation*}
G_1(x, y) =  - \sum_{j=1}^n w_j \ln\left(\frac{x_j\sum_{i=1}^m y_i A_{ij}}{w_j}\right)+\sum_{i=1}^m y_i -W.
\end{equation*}
From the second part of Lemma \ref{lemma:approx-comp-slack}:
\begin{equation*}
\sum_{i=1}^m y_i\leq (1+3\varepsilon)\sum_{j=1}^nx_j\sum_{i=1}^m y_i A_{ij}.
\end{equation*}
Therefore:
\begin{equation*}
G_1(x, y) \leq  - \sum_{j=1}^n w_j \ln\left(\frac{x_j\sum_{i=1}^m y_i A_{ij}}{w_j}\right)+(1+3\varepsilon)\sum_{j=1}^nx_j\sum_{i=1}^m y_i A_{ij} -W.
\end{equation*}
Since the round is stationary, we have that $\sum_{j=1}^nx_j\sum_{i=1}^m y_i A_{ij}\leq (1+2\gamma)W$, which gives:
\begin{equation}
G_1(x, y) \leq  - \sum_{j=1}^n w_j \ln\left(\frac{x_j\sum_{i=1}^m y_i A_{ij}}{w_j}\right)+4\varepsilon W. \label{eq:alpha=1-duality-gap-bound}
\end{equation}
Let $\xi_j = \frac{x_j\sum_{i=1}^m y_i A_{ij}}{w_j}$. The remaining part of the proof is to bound $-\sum_{j=1}^n w_j\ln(\xi_j)\leq -\sum_{j:\xi_j < 1} w_j \ln(\xi_j)$. 
For $\xi_j \in(1-\gamma, 1)$, we have that $-w_j\ln(\xi_j)\leq \gamma w_j$. To bound the remaining terms, we will use Lemma \ref{lemma:cond-lower-bound} and the bound of the sum of the weights $w_j$ for which $\xi_j \in S^+$ (that is, $\xi_j \leq 1-\gamma$). It follows that:
\begin{align}
-\sum_{j=1}^n w_j\ln(\xi_j)&\leq -\sum_{k:\xi_k\in(1-\gamma, 1)}w_k \ln(\xi_k) - \sum_{l\in S^+}w_l\ln(\xi_l)\notag\\
&\leq \gamma \sum_{k:\xi_k\in(1-\gamma, 1)}w_k -\ln\Big((1-\gamma)^{\tau_0}\Big)\cdot \sum_{l\in S^+}w_l \quad(\text{from Lemma \ref{lemma:cond-lower-bound}})\notag\\
&\leq \gamma W + \tau_0 {\gamma} \cdot\frac{W}{\tau_0}\notag\\
&= 2\gamma W \notag\\
&= \frac{\varepsilon}{2}W. \label{eq:bound-on-ln-terms}
\end{align}
Combining (\ref{eq:alpha=1-duality-gap-bound}) and (\ref{eq:bound-on-ln-terms}), and recalling that $p_1(x^*)-p_1(x)\leq G_1(x, y(x))$, the result follows.
\end{proof}

\begin{proofof}{Theorem \ref{thm:convergence-alpha=1}}
Consider the values of the potential in the rounds following the initial $\tau_0 + \tau_1$ rounds, where $\tau_0 = \frac{1}{\beta}\ln(1/\delta_{\min})$, $\tau_1 = \frac{1}{\beta}\ln(nA_{\max})$ (so that the solution $x$ is feasible in each round and the approximate complementary slackness holds). Observe that $\tau_0+\tau_1 = o\Big(\frac{\ln^2\left(nmA_{\max}\wratio\right)\ln^2\left(\frac{nmA_{\max}}{\varepsilon}\wratio\right)}{\varepsilon^5}\Big)$.

We start by bounding the minimum and the maximum values that the potential can take. Recall (from Lemma \ref{lemma:potential-increase}) that the potential never decreases.

Due to Lemma \ref{lemma:feasibility}, $x_j\in [\delta_j, 1]$, $\forall j$, and therefore we can bound the two summations in the potential as:
\begin{equation}
\sum_j w_j \ln(x_j)\geq \sum_j w_j \ln(\delta_j) = - O\Big(W\cdot\ln\Big(\frac{w_{\max}}{ w_{\min}}nmA_{\max}\Big)\Big), \label{eq:alpha=1-pot-lbound-1}
\end{equation}
\begin{equation}
\sum_j w_j \ln(x_j)\leq \sum_j w_j \ln(1) \leq 0, \label{eq:alpha=1-pot-ubound-1}
\end{equation}
\begin{equation}
-\frac{1}{\kappa} \sum_i y_i(x) \geq -\frac{mC}{\kappa}\cdot e^{0} > -mC = -O(w_{\max}n^2m^2A_{\max}), \label{eq:alpha=1-pot-lbound-2}
\end{equation}
and
\begin{equation}
-\frac{1}{\kappa} \sum_i y_i(x) < -\frac{mC}{\kappa}\cdot e^{-\kappa} < 0. \label{eq:alpha=1-pot-ubound-2}
\end{equation}

From (\ref{eq:alpha=1-pot-lbound-1}) and (\ref{eq:alpha=1-pot-lbound-2}):
\begin{equation}
\Phi_{\min} \geq -O(w_{\max}n^2m^2A_{\max}). \label{eq:alpha=1-phi-min}
\end{equation}

On the other hand, from (\ref{eq:alpha=1-pot-ubound-1}) and (\ref{eq:alpha=1-pot-ubound-2}):
\begin{equation}
\Phi_{\max} <0. \label{eq:alpha=1-phi-max}
\end{equation}

Consider the following two cases:

\noindent\textbf{Case 1: $\frac{1}{\kappa}\sum_i y_i(x)\geq W\cdot \ln\Big(e\cdot\frac{w_{\max}}{ w_{\min}}nmA_{\max}\Big)$.}  Then $\frac{1}{\kappa}\sum_i y_i(x)\leq -\Phi(x)\leq \frac{2}{\kappa}\sum_i y_i(x)$ and $\frac{1}{\kappa}\sum_i y_i(x)\geq W$. From the third part of Lemma \ref{lemma:approx-comp-slack}, we have that $\sum_j x_j \sum_iy_i(x)A_{ij}\geq (1-3\varepsilon)\sum_i y_i(x) \geq 2W$. Thus using the Part \ref{item:prop-2} of Lemma \ref{lemma:potential-increase-proportional}, we get that the potential increases by 
\begin{equation*}
\Omega(\beta)\cdot \sum_j x_j \sum_i y_i(x)A_{ij} = \Omega\left(\beta\cdot \sum_i y_i(x)\right)= \Omega(\beta\kappa)\cdot (-\Phi(x)).
\end{equation*}
Finally, since $\beta\kappa = \Theta(\gamma)$, there can be at most $O\left(\frac{1}{\gamma}\ln\left(\frac{\wratio nmA_{\max}}{W\ln(\wratio nmA_{\max})}\right)\right)$ Case 1 rounds.

\noindent\textbf{Case 2: $\frac{1}{\kappa}\sum_i y_i(x) < W\cdot \ln\Big(e\cdot\frac{w_{\max}}{ w_{\min}}nmA_{\max}\Big)$.}  Then $-2W\cdot \ln\Big(e\cdot\frac{w_{\max}}{ w_{\min}}nmA_{\max}\Big) < \Phi(x) < 0$. From Lemma \ref{lemma:alpha=1-stat-round}, if a round is stationary, then $p(x^*) - p(x)\leq 5\varepsilon W$. If a round is non-stationary, from Lemma \ref{lemma:non-stat-round-alpha=1}, the potential increases (additively) by at least $\Omega(\beta\gamma\cdot W/\tau_0)$. Therefore, the maximum number of non-stationary rounds is at most:
\begin{align*}
O\left(\frac{W\ln(nm{A_{\max}}w_{\max}/ w_{\min})}{\beta\gamma W/\tau_0}\right) & = O\left(\frac{1}{\beta^2\gamma}\cdot \ln^2\left(\wratio nmA_{\max}\right)\right)\\
&= O\left(\frac{\ln^2\left(\wratio nmA_{\max}\right)\ln^2\left(\wratio\frac{nmA_{\max}}{\varepsilon}\right)}{\varepsilon^5}\right).
\end{align*}
Combining the results for the Case 1 and Case 2, the theorem follows by invoking \textsc{$\alpha$-FairPSolver} for the approximation parameter $\varepsilon' = \varepsilon/5$.
\end{proofof}

\subsection{Proof of Theorem \ref{thm:convergence-alpha>1}}\label{section:alpha>1}
The outline of the proof of Theorem \ref{thm:convergence-alpha>1} is as follows. First, we show that in any round of the algorithm the variables that decrease by a multiplicative factor $(1-\beta_2)$ dominate the potential increase due to \emph{all the variables} that decrease (Lemma \ref{lemma:alpha>1-mul-increase-over-S-}). 
This result is then used in Lemma \ref{lemma:potential-increase-alpha>1} to show the appropriate lower bound on the potential increase. Observe that for $\alpha > 1$ the objective function $p_\alpha(x)$, and, consequently, the potential function $\Phi(x)$ is negative for any feasible $x$. To yield a poly-logarithmic convergence time in $\wratio, m, n$, and $A_{\max}$, the idea is to show that the negative potential $-\Phi(x)$ decreases by some multiplicative factor whenever $x$ is not a ``good'' approximation to $x^*$ -- the optimal solution to $(P_\alpha)$. This idea, combined with the fact that the potential never decreases (and therefore $-\Phi(x)$ never increases) and with upper and lower bounds on the potential then leads to the desired convergence time. 

\begin{lemma}\label{lemma:alpha>1-mul-increase-over-S-}
In any round of the algorithm in which the solution $x^0$ at the beginning of the round is feasible:
\begin{equation*}\sum_{\left\{j:j\in S^- \wedge x_j^0 \geq \frac{\delta_j}{1-\beta}\right\}} x_j^0\sum_{i=1}^m y_i(x^0)A_{ij}\geq \frac{1}{2} \sum_{j\in S^-}x_j^0\sum_{i=1}^m y_i(x^0)A_{ij};
\end{equation*}
and
\begin{align*}\sum_{\left\{j:j\in S^- \wedge x_j^0 \geq \frac{\delta_j}{1-\beta}\right\}} \left(x_j^0\sum_{i=1}^m y_i(x^0)A_{ij} - (1+\gamma)w_j(x_j^0)^{1-\alpha}\right)\geq \frac{1}{2} \sum_{j\in S^-}\left(x_j^0\sum_{i=1}^m y_i(x^0)A_{ij} - (1+\gamma)w_j(x_j^0)^{1-\alpha}\right).
\end{align*}
\end{lemma}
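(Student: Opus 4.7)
The plan is to adapt the pairing strategy used in Lemma \ref{lemma:mul-increase-prop} (the $\alpha=1$ proof) and Part 3 of Lemma \ref{lemma:potential-increase-alpha<1} (the $\alpha<1$ proof) to the $\alpha>1$ parameter regime, where $\delta_j = (w_j/(2w_{\max}))^{1/\alpha}/(mn^2 A_{\max}^{2-1/\alpha})$ and $C = w_j/\delta_j^\alpha = 2w_{\max}(mn^2)^\alpha A_{\max}^{2\alpha-1}$. The key reduction is a per-index pairing: for every ``small'' index $j \in S^-$ (meaning $x_j^0 < \delta_j/(1-\beta)$), I will exhibit a ``large'' partner $p \in S^-$ (meaning $p \in S^-$ with $x_p^0 \geq \delta_p/(1-\beta)$) such that
\[x_p^0 \sum_{l=1}^m A_{lp}y_l(x^0) \geq n \cdot x_j^0 \sum_{l=1}^m A_{lj}y_l(x^0),\]
together with the analogous bound after subtracting the $(1+\gamma)w_\cdot(x_\cdot^0)^{1-\alpha}$ correction from each side. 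Since there are at most $n$ small indices in $S^-$, summing these per-$j$ bounds over $S^-$ yields both inequalities in the lemma, with the factor $1/2$ absorbing the per-index factor $n$.

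To produce $p$, I apply Lemma \ref{lemma:small-x-tight-yi} to the small $j$: this yields a constraint $i$ with $A_{ij}\neq 0$, $y_i(x^0) \geq \sum_{l=1}^m A_{lj}y_l(x^0)/(mA_{\max})$, and $\sum_{k=1}^n A_{ik}x_k^0 > 1-\varepsilon/2$. The near-tightness of $i$ forces the existence of $p$ with $A_{ip}x_p^0 > (1-\varepsilon/2)/n$, hence $x_p^0 > (1-\varepsilon/2)/(nA_{\max})$. I verify that $p$ is large via $\delta_p \leq 1/(mn^2 A_{\max})$ (using $A_{\max}^{2-1/\alpha} \geq A_{\max}$ for $\alpha>1$), which gives $\delta_p/(1-\beta) \leq 1/((1-\beta)mn^2 A_{\max}) < x_p^0$ whenever $mn \geq 2$. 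To verify $p \in S^-$, I lower-bound $\xi_p(x^0) = (x_p^0)^\alpha \sum_l A_{lp}y_l(x^0)/w_p \geq (x_p^0)^\alpha A_{ip} y_i(x^0)/w_p$ by combining $y_i(x^0) > (1+\gamma)(1-\beta)^\alpha C/(mA_{\max})$ (which follows from $\xi_j(x^0) \geq 1+\gamma$ together with $(x_j^0)^\alpha < w_j/(C(1-\beta)^\alpha)$) with $(x_p^0)^\alpha A_{ip} \geq (A_{ip}x_p^0)^\alpha / A_{\max}^{\alpha-1}$; substituting the explicit form of $C$ and using $\alpha\varepsilon \leq 9/10$ yields $\xi_p(x^0) \geq 1+\gamma$.

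The first domination bound then follows by writing $1/x_j^0 > (1-\beta)/\delta_j$ and chaining with $y_i(x^0) \geq \sum_l A_{lj}y_l(x^0)/(mA_{\max})$ to get
\[x_p^0 A_{ip} y_i(x^0) > \frac{1-\varepsilon/2}{n} \cdot \frac{(1-\beta)}{\delta_j m A_{\max}} \cdot x_j^0 \sum_{l=1}^m A_{lj}y_l(x^0);\]
using $1/\delta_j = (C/w_j)^{1/\alpha}$ and the explicit form of $C$, the coefficient becomes $n \cdot (1-\varepsilon/2)(1-\beta)(2w_{\max}/w_j)^{1/\alpha} A_{\max}^{1-1/\alpha}$, which is at least $n$ because $(2w_{\max}/w_j)^{1/\alpha} \geq 2^{1/\alpha}$, $A_{\max}^{1-1/\alpha} \geq 1$, and the bound $\alpha\varepsilon \leq 9/10$ ensures that $(1-\varepsilon/2)^\alpha(1-\beta)^\alpha$ is bounded away from $0$. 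The second bound is derived by the same pairing, but requires an additional case analysis on whether $w_p(x_p^0)^{1-\alpha} \leq w_j(x_j^0)^{1-\alpha}$, exactly as in the proof of Lemma \ref{lemma:potential-increase-alpha<1} Part 3, and a companion pairing inequality $(x_p^0)^\alpha \sum_l A_{lp}y_l(x^0)/w_p \geq n \cdot (x_j^0)^\alpha \sum_l A_{lj}y_l(x^0)/w_j$ derived analogously to the first.

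The main obstacle is handling the correction term $(1+\gamma)w_j(x_j^0)^{1-\alpha}$ in the second inequality. For $\alpha>1$ the exponent $1-\alpha$ is negative, so this term is \emph{large} when $x_j^0$ is small, which is the opposite sign from the $\alpha<1$ case in Lemma \ref{lemma:potential-increase-alpha<1}. The case analysis must therefore be rerun with the direction of the comparison $w_p(x_p^0)^{1-\alpha}$ vs.\ $w_j(x_j^0)^{1-\alpha}$ reversed relative to that proof. Positivity of each summand (needed for the per-$j$ pairing to aggregate correctly) is still guaranteed because $j \in S^-$ forces $\xi_j(x^0) \geq 1+\gamma$, i.e., $x_j^0 \sum_l A_{lj}y_l(x^0) \geq (1+\gamma)w_j(x_j^0)^{1-\alpha}$, and the analogous inequality holds for $p$; the per-$j$ domination then propagates to the desired sum-level bound.
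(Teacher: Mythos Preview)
Your proposal is correct and follows the paper's approach essentially verbatim: apply Lemma \ref{lemma:small-x-tight-yi} to each small $j\in S^-$, extract from the resulting near-tight constraint $i$ a partner $p$ with $A_{ip}x_p^0 > (1-\varepsilon/2)/n$, establish the two domination bounds $x_p^0\sum_l A_{lp}y_l(x^0) \geq n\,x_j^0\sum_l A_{lj}y_l(x^0)$ and $(x_p^0)^\alpha\sum_l A_{lp}y_l(x^0)/w_p \geq n\,(x_j^0)^\alpha\sum_l A_{lj}y_l(x^0)/w_j$, verify that $p$ is large and in $S^-$, and then run the two-case split on $w_p(x_p^0)^{1-\alpha}$ versus $w_j(x_j^0)^{1-\alpha}$ for the second inequality. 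One correction to your final paragraph: the case analysis is \emph{not} reversed relative to the $\alpha<1$ proof but is identical, since the chain in each case only uses positivity of $\xi_j-(1+\gamma)$ and $\xi_p-(1+\gamma)$ (both guaranteed by membership in $S^-$) together with the two domination bounds, and none of these ingredients depends on the sign of $1-\alpha$.
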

\begin{proof}
If $x_j^0 \geq \frac{\delta_j}{1-\beta}$, $\forall j$, there is nothing to prove, so assume that there exists at least one $j$ with $x_j^0 < \frac{\delta_j}{1-\beta}$. 
The proof proceeds as follows. First, we show that for each $j$ for which $x_j$ decreases by a factor less than $(1-\beta)$ there exists at least one $x_p$ that appears in at least one constraint $i$ in which $x_j$ appears and decreases by a factor $(1-\beta)$. We then proceed to show that $x_p$ is in fact such that 
\begin{align*}x_p^0\sum_{l=1}^m y_l(x^0)A_{lp} = \Omega(n)x_j^0\sum_{l=1}^m y_l(x^0)A_{lj}
\end{align*}
and 
\begin{align*}x_p^0\sum_{l=1}^m y_l(x^0)A_{lp} - (1+\gamma)w_p(x_p^0)^{1-\alpha}= \Omega(n) \left(x_j^0\sum_{l=1}^m y_l(x^0)A_{lj} - (1+\gamma)w_j(x_j^0)^{1-\alpha}\right).
\end{align*}
This will then imply that the terms $x_p^0\sum_{l=1}^m y_l(x^0)A_{lp}$ and $x_p^0\sum_{l=1}^m y_l(x^0)A_{lp} - (1+\gamma)w_p(x_p^0)^{1-\alpha}$ dominate \emph{the sum} of all the terms corresponding to $x_j$'s with $A_{ij}\neq 0$ and $x_j<\frac{\delta_j}{1-\beta}$, thus completing the proof.

From Lemma \ref{lemma:small-x-tight-yi}, for each $j\in S^-$ with $x_j < \frac{\delta_j}{1-\beta}$ there exists at least one constraint $i$ such that:
\begin{itemize}
\item $\sum_{k=1}^n A_{ik}x_k^0 > 1 - \frac{\varepsilon}{2}$, and
\item $y_i(x^0) \geq \frac{\sum_{l=1}^m y_l(x^0)A_{lj}}{mA_{\max}} \Rightarrow y_i(x^0) > (1-\beta)^\alpha\frac{1}{mA_{\max}}\frac{w_j}{{\delta_j}^{\alpha}}\frac{(x_j^0)^{\alpha}\sum_{l=1}^m y_i(x^0)A_{lj}}{w_j}$.
\end{itemize}
Therefore, there exists at least one $x_p$ with $A_{ip}\neq 0$ such that $A_{ip}x_p^0>\frac{1-\frac{\varepsilon}{2}}{n}$, which further gives $A_{ip}(x_p^0)^{\alpha}>\frac{(1-\frac{\varepsilon}{2})^{\alpha}}{n^\alpha}\cdot {A_{ip}}^{1-\alpha}\geq\frac{(1-\frac{\varepsilon}{2})^{\alpha}}{n^\alpha}\cdot {A_{\max}}^{1-\alpha}$, where the last inequality follows from $1\leq A_{ip}\leq A_{\max}$ and $\alpha>1$. Combining the inequality for $A_{ip}(x_p^0)^{\alpha}$ with the inequality for $y_i(x^0)$ above:

\begin{align}
(x_p^0)^{\alpha}\sum_{l=1}^m y_l(x^0)A_{lp} &\geq (x_p^0)^{\alpha}A_{ip}y_i(x^0)\notag\\
&\geq \frac{(1-\frac{\varepsilon}{2})^{\alpha}}{n^\alpha}\cdot {A_{\max}}^{1-\alpha} (1-\beta)^\alpha\frac{1}{mA_{\max}}\frac{w_j}{{\delta_j}^{\alpha}}\frac{(x_j^0)^{\alpha}\sum_{l=1}^m y_l(x^0)A_{lj}}{w_j}\notag\\
&= C\cdot \frac{(1-\frac{\varepsilon}{2})^{\alpha}}{n^\alpha m {A_{\max}}^{\alpha}}(1-\beta)^\alpha\frac{(x_j^0)^{\alpha}\sum_{l=1}^m y_l(x^0)A_{lj}}{w_j}\quad (\text{from } C= \frac{w_j}{\delta_j^{\alpha}})\notag\\
&\geq 2n w_{\max} (1-\beta)^\alpha\left(1-\frac{\varepsilon}{2}\right)^{\alpha}\frac{(x_j^0)^{\alpha}\sum_{l=1}^m y_l(x^0)A_{lj}}{w_j}\quad (\text{from } C\geq 2w_{\max} n^{\alpha+1}m {A_{\max}}^{2\alpha-1})\notag.
\end{align}

Using the generalized Bernoulli's inequality: $\left(1-\frac{\varepsilon}{2}\right)^{\alpha}>1-\frac{\varepsilon\alpha}{2}$ and $(1-\beta)^\alpha>(1-\beta\alpha)$ \cite{mitrinovic1970analytic}, and recalling that $\varepsilon\alpha\leq \frac{9}{10}$, $\beta \leq \frac{\gamma\varepsilon}{5} = \frac{\varepsilon^2}{20}\leq \frac{\varepsilon}{120}$, we further get:
\begin{align}
(x_p^0)^{\alpha}\sum_{l=1}^m y_l(x^0)A_{lp} &\geq 2nw_{\max}\left(1 - \frac{9}{10\cdot120}\right)\left(1 - \frac{9}{20}\right)\cdot \frac{(x_j^0)^{\alpha}\sum_{l=1}^m y_l(x^0)A_{lj}}{w_j}\notag\\
&\geq n w_{\max}\frac{(x_j^0)^{\alpha}\sum_{l=1}^m y_l(x^0)A_{lj}}{w_j}\notag, 
\end{align}
which further implies:
\begin{equation}
\frac{(x_p^0)^{\alpha}\sum_{l=1}^m y_l(x^0)A_{lp}}{w_p} \geq n\cdot \frac{(x_j^0)^{\alpha}\sum_{l=1}^m y_l(x^0)A_{lj}}{w_j},\label{eq:condition-n-dominance}
\end{equation}
as $w_p\leq w_{\max}$. Since $x_j$ decreases, $\frac{(x_j^0)^{\alpha}\sum_{l=1}^m y_l(x^0)A_{lj}}{w_j}\geq 1 + \gamma$, and therefore $x_p$ decreases as well.

Using similar arguments, as $A_{ip}x_p^0>\frac{1-\frac{\varepsilon}{2}}{n}$ and recalling that $y_i(x^0)\geq \frac{1}{m A_{\max}}\sum_{l=1}^mA_{lj}y_l(x^0) >\frac{1}{m A_{\max}}\frac{1-\beta}{\delta_j}\cdot x_j^0\sum_{l=1}^mA_{lj}y_l(x^0)$:

\begin{align}
x_p^0\sum_{l=1}^m y_l(x^0)A_{lp} &\geq x_p^0 A_{ip}y_i(x^0)\geq \frac{1-\frac{\varepsilon}{2}}{n}\frac{1}{m A_{\max}}\frac{1-\beta}{\delta_j}\cdot x_j^0\sum_{l=1}^mA_{lj}y_l(x^0)\notag\\
&\geq n x_j^0\sum_{l=1}^mA_{lj}y_l(x^0), \label{eq:slackness-n-dominance}
\end{align}
as $\delta_j \leq \frac{1}{2^{1/\alpha}n^2mA_{\max}}$ and $2^{1/\alpha}(1-\frac{\varepsilon}{2})(1-\beta)\geq 2^{\frac{10}{9}\varepsilon}(1-\frac{\varepsilon}{2})(1-\frac{\varepsilon^2}{20})\geq 1$ (since $\varepsilon \in (0, 1/6]$).

From (\ref{eq:slackness-n-dominance}), it follows that 
\begin{equation*}x_p^0\sum_{l=1}^m y_l(x^0)A_{lp} \geq \sum_{\{k \in S^-: x_k <\frac{\delta_k}{1-\beta}\wedge A_{ik}\neq 0\}}x_k^0\sum_{l=1}^m y_l(x^0)A_{lk},
\end{equation*}
which further implies the first part of the lemma.

For the second part, consider the following two cases:

\noindent\textbf{Case 1:} $w_p(x_p^0)^{1-\alpha} \geq w_j(x_j^0)^{1-\alpha}$. Then:
\begin{align*}
x_p^0\sum_{l=1}^m y_l(x^0)A_{lp} - (1+\gamma)w_p(x_p^0)^{1-\alpha} &= w_p(x_p^0)^{1-\alpha}\left(\frac{(x_p^0)^{\alpha}\sum_{l=1}^m y_l(x^0)A_{lp}}{w_p} - (1+\gamma)\right)\\
&\geq w_j(x_j^0)^{1-\alpha}\left(\frac{(x_p^0)^{\alpha}\sum_{l=1}^m y_l(x^0)A_{lp}}{w_p} - (1+\gamma)\right)\\
&\geq w_j(x_j^0)^{1-\alpha}\left(n\frac{(x_j^0)^{\alpha}\sum_{l=1}^m y_l(x^0)A_{lj}}{w_j} - (1+\gamma)\right)\quad (\text{from (\ref{eq:condition-n-dominance})}) \\
&\geq n w_j(x_j^0)^{1-\alpha}\left(\frac{(x_j^0)^{\alpha}\sum_{l=1}^m y_l(x^0)A_{lj}}{w_j} - (1+\gamma)\right)\\
&= n\left(x_j^0\sum_{l=1}^m y_l(x^0)A_{lj} - (1+\gamma)w_j(x_j^0)^{1-\alpha}\right),
\end{align*}
implying the second part of the lemma.

\noindent\textbf{Case 2:} $w_p(x_p^0)^{1-\alpha} < w_j(x_j^0)^{1-\alpha}$. Then:
\begin{align*}
x_p^0\sum_{l=1}^m y_l(x^0)A_{lp} - (1+\gamma)w_p(x_p^0)^{1-\alpha} &> x_p^0\sum_{l=1}^m y_l(x^0)A_{lp} - (1+\gamma)w_j(x_j^0)^{1-\alpha}\\
&\geq n x_j^0\sum_{l=1}^m y_l(x^0)A_{lj} - (1+\gamma)w_j(x_j^0)^{1-\alpha}\quad (\text{from (\ref{eq:slackness-n-dominance})}) \\
&\geq n \left( x_j^0\sum_{l=1}^m y_l(x^0)A_{lj} - (1+\gamma)w_j(x_j^0)^{1-\alpha} \right),
\end{align*}
thus implying the second part of the lemma and completing the proof.
\end{proof}
The following lemma lower-bounds the increase in the potential, in each round.
\begin{lemma}\label{lemma:potential-increase-alpha>1}
Let $x^0$ and $x^1$ denote the values of $x$ before and after any fixed round, respectively, and let $S^+ = \{j: x_j^1 > x_j^0\}$, $S^- = \{j: x_j^1 < x_j^0\}$. The potential increase in the round is lower bounded as:
\begin{enumerate}[noitemsep, topsep=5pt]
\item $\Phi(x^1) - \Phi(x^0) \geq \Omega(\beta\gamma)\sum_{j\in\{S^+\cup S^-\}} x_j^0\sum_{i=1}^m y_i(x^0)A_{ij}$;
\item $\Phi(x^1) - \Phi(x^0) \geq \Omega\left(\frac{\beta}{(1-\beta)^{\alpha}}\right) \left(\sum_{j=1}^n x_j^0 \sum_{i=1}^m y_i(x^0) - (1+\gamma) \sum_{j=1}^nw_j (x_j^0)^{1-\alpha}\right)$;
\item $\Phi(x^1) - \Phi(x^0) \geq \Omega\left(\frac{\beta}{(1+\beta)^{\alpha}}\right) \left( (1-\gamma)\sum_{j=1}^n w_j (x_j^0)^{1-\alpha} - \sum_{j=1}^n x_j^0 \sum_{i=1}^m y_i(x^0)\right)$.
\end{enumerate}
\end{lemma}
\begin{proof}
$\quad$\\
\noindent\textbf{Proof of 1.} From Lemma \ref{lemma:potential-increase}:
\begin{equation*}
\Phi(x^1) - \Phi(x^0) \geq \sum_{j=1}^n w_j \frac{|x_j^1 - x_j^0|}{(x_j^1)^{\alpha}}\left|1 - \frac{(x_j^1)^{\alpha}\sum_{i=1}^m y_i(x^1)A_{ij}}{w_j}\right|.
\end{equation*}
Let $\xi_j(x^1) = \frac{(x_j^1)^{\alpha}\sum_{i=1}^m y_i(x^1) A_{ij}}{w_j}$. 
From the proof of Lemma \ref{lemma:potential-increase}, if $x_j^1 - x_j^0 > 0$, then $1 - \xi_j(x^1)\geq \frac{3}{4}\gamma \geq \frac{3}{4}\gamma \xi_j(x^1)$, as $0 < \xi_j(x^1) \leq 1 - \frac{3}{4}\gamma$. If $x_j^ 1 - x_j^0 < 0$, then $1 - \xi_j(x^1) \leq - \frac{\gamma}{2}$, which implies $1\leq \xi_j(x^1)(1 +\frac{\gamma}{2})^{-1}$, and thus $1 - \xi_j(x^1) \leq \xi_j(x^1)((1 + \gamma/2)^{-1} - 1) = \xi_j(x^1) \frac{-\gamma/2}{1 + \gamma/2} <-\xi_j(x^1)\frac{\gamma/2}{3/2}= -\frac{\gamma}{3}\xi_j(x^1)$.
Therefore: $|1-\xi_j(x^1)|\geq \frac{\gamma}{3}\xi_j(x^1) \Leftrightarrow\left|1 - \frac{(x_j^1)^{\alpha}\sum_{i=1}^m y_i(x^1)A_{ij}}{w_j}\right|\geq \frac{\gamma}{3}\frac{(x_j^1)^{\alpha}\sum_{i=1}^m y_i(x^1)A_{ij}}{w_j}$, which further gives:
\begin{equation*}
\Phi(x^1) - \Phi(x^0) \geq \sum_{j=1}^n w_j \frac{|x_j^1 - x_j^0|}{(x_j^1)^{\alpha}} \frac{(x_j^1)^{\alpha}\sum_{i=1}^m y_i(x^1)A_{ij}}{w_j} \geq \frac{\gamma}{3}\sum_{j=1}^n |x_j^1 - x_j^0|\cdot \sum_{i=1}^m y_i(x^1)A_{ij}.
\end{equation*}
If $j\in S^+$, then $x_j^1 = (1+\beta)x_j^0$, and therefore $|x_j^1 - x_j^0|\cdot \sum_{i=1}^m y_i(x^1)A_{ij} = \left(1 - \frac{1}{1+\beta}\right)x_j^1\sum_{i=1}^m y_i(x^1)A_{ij}\geq \left(1 - \frac{\gamma}{4}\right)\frac{\beta}{1+\beta}x_j^0\sum_{i=1}^m y_i(x^0)A_{ij}$.

Similarly, if $j\in S^-$ and $x_j^0 \geq \frac{\delta_j}{1-\beta}$, then $x_j^1 = (1-\beta)x_j^0$, and therefore $|x_j^1 - x_j^0|\cdot \sum_{i=1}^m y_i(x^1)A_{ij} = \left(\frac{1}{1-\beta}-1\right)x_j^1\sum_{i=1}^m y_i(x^1)A_{ij}\geq \left(1 - \frac{\gamma}{4}\right)\frac{\beta}{1-\beta}x_j^0\sum_{i=1}^m y_i(x^0)A_{ij}$. Using part 1 of Lemma \ref{lemma:alpha>1-mul-increase-over-S-}:
\begin{equation*}
\Phi(x^1) - \Phi(x^0) \geq \frac{\gamma}{6}\frac{\beta}{1+\beta}\sum_{j\in \{S^+\cup S^-\}} x_j^0\sum_{i=1}^m y_i(x^0)A_{ij}.
\end{equation*}

\noindent\textbf{Proof of 2:} Consider $j\in S^-$ such that $x_j^0 \geq \frac{\delta_j}{1-\beta}$. Then $x_j^1 = (1-\beta)x_j^0$, $\frac{(x_j^1)^{\alpha}\sum_{i=1}^m y_i(x^1)A_{ij}}{w_j}\geq (1+\gamma)$, and using Lemma \ref{lemma:potential-increase}:
\begin{align*}
\Phi(x^1) - \Phi(x^0) &\geq \sum_{\{j\in S^-: x_j^0 \geq \frac{\delta_j}{1-\beta}\}} w_j \frac{|x_j^1 - x_j^0|}{(x_j^1)^{\alpha}}\left|1 - \frac{(x_j^1)^{\alpha}\sum_{i=1}^m y_i(x^1)A_{ij}}{w_j}\right|\\
&\geq \sum_{\{j\in S^-: x_j^0 \geq \frac{\delta_j}{1-\beta}\}} w_j \frac{\beta}{(1-\beta)^{\alpha}}(x_j^0)^{1-\alpha} \left(\frac{(x_j^1)^{\alpha}\sum_{i=1}^m y_i(x^1)A_{ij}}{w_j} - 1 \right)\\
&\geq \frac{\beta}{(1-\beta)^{\alpha}}\sum_{\{j\in S^-: x_j^0 \geq \frac{\delta_j}{1-\beta}\}}w_j(x_j^0)^{1-\alpha}\left((1-\gamma/4)\frac{(x_j^0)^{\alpha}\sum_{i=1}^m y_i(x^0)A_{ij}}{w_j} - 1 \right)\\
&\geq(1-\gamma/4) \frac{\beta}{(1-\beta)^{\alpha}} \sum_{\{j\in S^-: x_j^0 \geq \frac{\delta_j}{1-\beta}\}}w_j(x_j^0)^{1-\alpha}\left(\frac{(x_j^0)^{\alpha}\sum_{i=1}^m y_i(x^0)A_{ij}}{w_j} - (1+\gamma) \right)\\
&=(1-\gamma/4) \frac{\beta}{(1-\beta)^{\alpha}} \sum_{\{j\in S^-: x_j^0 \geq \frac{\delta_j}{1-\beta}\}}\left( x_j^0 \sum_{i=1}^m y_i(x^0) - (1+\gamma) w_j (x_j^0)^{1-\alpha}\right).
\end{align*}
Using the second part of Lemma \ref{lemma:alpha>1-mul-increase-over-S-} and the fact that for $k\notin S^-$: $\frac{(x_k^0)^{\alpha}\sum_{i=1}^m y_i(x^0)A_{ik}}{w_k}<(1+\gamma)$, we get the desired result:
\begin{equation*}
\Phi(x^1) - \Phi(x^0) \geq \frac{1}{2}(1-\gamma/4) \frac{\beta}{(1-\beta)^{\alpha}}\left(\sum_{j=1}^n x_j^0 \sum_{i=1}^m y_i(x^0) - (1+\gamma) \sum_{j=1}^nw_j (x_j^0)^{1-\alpha}\right).
\end{equation*}

\noindent\textbf{Proof of 3:} The proof is equivalent to the proof of Lemma \ref{lemma:potential-increase-alpha<1}, part 2, and is omitted for brevity. 
\end{proof}

Consider the following definition of a stationary round:
\begin{definition}\label{def:stationary-round}
(Stationary round.) A round is stationary, if both:
\begin{enumerate}[topsep = 5pt]
\item $\sum_{j\in\{S^+\cup S^-\}} x_j^0\sum_{i=1}^m y_i(x)A_{ij} \leq \gamma \sum_{j=1}^n w_j {(x_j^0)}^{1-\alpha}$, and
\item $(1-2\gamma)\sum_{j=1}^n w_j {(x_j^0)}^{1-\alpha}\leq \sum_{j=1}^n x_j^0\sum_{i=1}^m y_i(x^0)A_{ij}$
\end{enumerate}
hold, where $S^+ = \{j: x_j^1 > x_j^0\}$, $S^- = \{j: x_j^1 < x_j^0\}$. Otherwise, the round is non-stationary.
\end{definition}


The following two technical propositions are used in Lemma \ref{lemma:alpha>1-stationary-near-opt} for bounding the duality gap in stationary rounds.
\begin{proposition}\label{prop:alpha>1-duality-gap}
After the initial the initial $\tau_0 + \tau_1$ rounds, where $\tau_0 = \frac{1}{\beta}\ln(1/\delta_{\min})$, $\tau_1 = \frac{1}{\beta}\ln(nA_{\max})$, it is always true that $G_\alpha(x, y(x))\leq \sum_{j=1}^n w_j\frac{x_j^{1-\alpha}}{\alpha-1}\Big(1 + (1+3\varepsilon)(\alpha-1)\xi_j - \alpha\xi_j^{\frac{\alpha-1}{\alpha}} \Big)$, where $\xi_j = \frac{x_j^\alpha \sum_i y_i(x)A_{ij}}{w_j}$.
\end{proposition}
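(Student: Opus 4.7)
The plan is to derive the stated bound by a direct algebraic manipulation of the duality gap formula~(\ref{eq:duality-gap-alpha}), using approximate complementary slackness from Lemma~\ref{lemma:approx-comp-slack} (which becomes available precisely after $\tau_0+\tau_1$ rounds).

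First, I would start from equation~(\ref{eq:duality-gap-alpha}), rewritten in a form convenient for $\alpha>1$ by using $\frac{1}{1-\alpha}=-\frac{1}{\alpha-1}$:
\begin{equation*}
G_\alpha(x,y(x)) \;=\; \sum_{j=1}^n w_j\frac{x_j^{1-\alpha}}{\alpha-1}\bigl(1-\xi_j^{(\alpha-1)/\alpha}\bigr) \;+\; \sum_{i=1}^m y_i(x) \;-\; \sum_{j=1}^n w_j x_j^{1-\alpha}\xi_j^{(\alpha-1)/\alpha}.
\end{equation*}
Next, I would use the definition $\xi_j=\frac{x_j^{\alpha}\sum_i y_i(x)A_{ij}}{w_j}$ to rewrite the cross-term identity $x_j\sum_{i=1}^m y_i(x)A_{ij} = w_j x_j^{1-\alpha}\xi_j$ (simple substitution), so that the primal--dual ``inner product'' $\sum_j x_j\sum_i y_i(x)A_{ij}$ equals $\sum_j w_j x_j^{1-\alpha}\xi_j$.

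The main step is then to invoke Lemma~\ref{lemma:approx-comp-slack}(2), which after the initial $\tau_0+\tau_1$ rounds gives
\begin{equation*}
\sum_{i=1}^m y_i(x) \;\leq\; (1+3\varepsilon)\sum_{j=1}^n x_j\sum_{i=1}^m y_i(x)A_{ij} \;=\; (1+3\varepsilon)\sum_{j=1}^n w_j x_j^{1-\alpha}\xi_j.
\end{equation*}
Substituting this upper bound for $\sum_i y_i(x)$ into the displayed expression for $G_\alpha(x,y(x))$ and factoring out $w_j\frac{x_j^{1-\alpha}}{\alpha-1}$ from each summand yields
\begin{equation*}
G_\alpha(x,y(x))\;\leq\;\sum_{j=1}^n w_j\frac{x_j^{1-\alpha}}{\alpha-1}\Bigl[\,\bigl(1-\xi_j^{(\alpha-1)/\alpha}\bigr) + (1+3\varepsilon)(\alpha-1)\xi_j - (\alpha-1)\xi_j^{(\alpha-1)/\alpha}\,\Bigr],
\end{equation*}
and combining the two $\xi_j^{(\alpha-1)/\alpha}$ terms into $-\alpha\,\xi_j^{(\alpha-1)/\alpha}$ gives exactly the claimed inequality.

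There is essentially no obstacle; the only thing to verify carefully is that the bracketed term's sign is consistent with using Lemma~\ref{lemma:approx-comp-slack}(2) as an upper bound (i.e.\ that we are not flipping an inequality). Since $\alpha>1$ makes the prefactor $w_j\frac{x_j^{1-\alpha}}{\alpha-1}$ positive, the substitution of an upper bound for $\sum_i y_i(x)$ correctly produces an upper bound on $G_\alpha$. The hypothesis ``after the initial $\tau_0+\tau_1$ rounds'' is used exactly once, to justify the applicability of Lemma~\ref{lemma:approx-comp-slack} (which in turn relies on feasibility from Lemmas~\ref{lemma:feasibility} and~\ref{lemma:self-stabilization}).
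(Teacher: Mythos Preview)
Your proposal is correct and follows essentially the same approach as the paper's own proof: start from the duality gap formula~(\ref{eq:duality-gap-alpha}), apply Lemma~\ref{lemma:approx-comp-slack}(2) to replace $\sum_i y_i(x)$ by $(1+3\varepsilon)\sum_j w_j x_j^{1-\alpha}\xi_j$, and collect terms. The only minor remark is that your sign-check justification is slightly over-argued---the substitution is valid simply because $\sum_i y_i(x)$ enters $G_\alpha$ with coefficient $+1$, independent of the sign of the prefactor $w_j\frac{x_j^{1-\alpha}}{\alpha-1}$.
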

\begin{proof}
Recall from (\ref{eq:duality-gap-alpha}) that the duality gap for $x, y$ in $(P_\alpha)$ is given as: 
\begin{equation*}
G_{\alpha}(x, y)=\sum_{j=1}^n w_j\frac{x_j^{1-\alpha}}{1-\alpha}\left(\left(\frac{w_j}{{x_j}^{\alpha}\sum_{i=1}^m y_i A_{ij}}\right)^{\frac{1-\alpha}{\alpha}}-1\right) +\sum_{i=1}^m y_i - \sum_{j=1}^n  w_j x_j^{1-\alpha}\cdot \left(\frac{{x_j}^{\alpha}\sum_{j=1}^n A_{ij}y_i}{w_j}\right)^{\frac{\alpha-1}{\alpha}}.
\end{equation*}
From Lemma \ref{lemma:approx-comp-slack}, after at most initial $\tau_0+\tau_1$ rounds:
\begin{align*}
\sum_{i=1}^m y_i&\leq (1+3\varepsilon)\sum_{j=1}^nx_j\sum_{i=1}^m y_i A_{ij}\\
&=(1+3\varepsilon)\sum_{j=1}^n w_j{x_j}^{1-\alpha}\cdot\left(\frac{{x_j}^{\alpha}\sum_{i=1}^m y_i A_{ij}}{w_j}\right),
\end{align*}
and letting $\xi_j = \frac{{x_j}^{\alpha}\sum_{i=1}^m y_i A_{ij}}{w_j}$, we get:
\begin{align*}
G_{\alpha}(x, y)&\leq \sum_{j=1}^n w_j\frac{x_j^{1-\alpha}}{1-\alpha}\left(\xi_j^{\frac{\alpha-1}{\alpha}}-1 + (1+3\varepsilon)(1-\alpha)\xi_j -(1-\alpha)\xi_j^{\frac{\alpha-1}{\alpha}}\right)\\
&= \sum_{j=1}^n w_j\frac{x_j^{1-\alpha}}{1-\alpha}\left(\alpha\xi_j^{\frac{\alpha-1}{\alpha}}+ (1+3\varepsilon)(1-\alpha)\xi_j -1 \right)\\ 
&= \sum_{j=1}^n w_j\frac{x_j^{1-\alpha}}{\alpha-1}\left(1 + (1+3\varepsilon)(\alpha-1)\xi_j - \alpha\xi_j^{\frac{\alpha-1}{\alpha}} \right).
\end{align*}
\end{proof}

\begin{proposition}\label{prop:alpha>1-r-bound}
Let $r_\alpha(\xi_j) =\left(1 + (1+3\varepsilon)(\alpha-1)\xi_j - \alpha\xi_j^{\frac{\alpha-1}{\alpha}} \right)$, where $\xi_j = \frac{{x_j}^{\alpha}\sum_{i=1}^m y_i A_{ij}}{w_j}$. If $\alpha > 1$ and
$ \xi_j\in (1-\gamma, 1+\gamma)$ $\forall j\in \{1,...,n\}$, 
then $r_\alpha(\xi_j)\leq \varepsilon(3\alpha-2)$.
\end{proposition}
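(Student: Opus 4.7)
\begin{proof-sketch}
The plan is a direct analytic argument: show that $r_\alpha(\xi)$ is convex, then Taylor-expand it around $\xi = 1$ and bound the first- and second-order terms using the assumption $\xi_j \in (1-\gamma, 1+\gamma)$.

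First I would compute derivatives of $r_\alpha$ with respect to $\xi$:
\begin{align*}
r_\alpha'(\xi) &= (1+3\varepsilon)(\alpha-1) - (\alpha-1)\xi^{-1/\alpha} = (\alpha-1)\bigl(1+3\varepsilon - \xi^{-1/\alpha}\bigr),\\
r_\alpha''(\xi) &= \tfrac{\alpha-1}{\alpha}\,\xi^{-1/\alpha-1}.
\end{align*}
Since $\alpha>1$, $r_\alpha''(\xi) > 0$ on $(0,\infty)$, so $r_\alpha$ is strictly convex. Plugging in $\xi=1$ gives $r_\alpha(1) = 1 + (1+3\varepsilon)(\alpha-1) - \alpha = 3\varepsilon(\alpha-1)$ and $r_\alpha'(1) = 3\varepsilon(\alpha-1)$. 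By Taylor's theorem with Lagrange remainder, for every $\xi\in(1-\gamma,1+\gamma)$ there exists $\xi_*$ in the same interval with
\begin{equation*}
r_\alpha(\xi) = 3\varepsilon(\alpha-1) + 3\varepsilon(\alpha-1)(\xi-1) + \tfrac{1}{2}\,r_\alpha''(\xi_*)(\xi-1)^2.
\end{equation*}

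Next I would bound the remainder. Since $\xi_*\geq 1-\gamma$ and $(\alpha-1)/\alpha \leq 1$, I get $r_\alpha''(\xi_*)\leq (1-\gamma)^{-1/\alpha-1}\leq (1-\gamma)^{-2}$. Using $(\xi-1)\leq\gamma$ and $(\xi-1)^2\leq\gamma^2$ together with $\gamma=\varepsilon/4$:
\begin{equation*}
r_\alpha(\xi) \leq 3\varepsilon(\alpha-1) + 3\varepsilon(\alpha-1)\gamma + \tfrac{1}{2}(1-\gamma)^{-2}\gamma^2.
\end{equation*}
The target bound is $\varepsilon(3\alpha-2) = 3\varepsilon(\alpha-1) + \varepsilon$, so it suffices to verify the inequality
\begin{equation*}
3\varepsilon(\alpha-1)\gamma + \tfrac{1}{2}(1-\gamma)^{-2}\gamma^2 \leq \varepsilon.
\end{equation*}

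The final step is arithmetic under the standing assumptions $\varepsilon\leq 1/6$ and $\varepsilon\alpha\leq 9/10$ (so $\varepsilon(\alpha-1)\leq 9/10$ and $\gamma\leq 1/24$). The first summand is at most $3\cdot(9/10)\cdot(\varepsilon/4) = 27\varepsilon/40$, and the second summand is at most $\tfrac{1}{2}\cdot(23/24)^{-2}\cdot(\varepsilon/4)^2 \leq \varepsilon^2/29 \leq \varepsilon/174$ since $\varepsilon\leq 1/6$. The total is strictly less than $\varepsilon$, completing the proof. The main (and really only) obstacle is keeping the numerical constants tight enough that the second-order Taylor term, which scales with $\gamma^2$, together with the linear $\gamma$-correction, stays below the slack $\varepsilon$ between $r_\alpha(1)$ and the target bound; the constraint $\varepsilon\alpha\leq 9/10$ is precisely what prevents this from failing for large $\alpha$.
\end{proof-sketch}
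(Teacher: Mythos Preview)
Your argument is correct. Both your approach and the paper's start the same way---computing $r_\alpha'$ and $r_\alpha''$ and noting convexity on $(0,\infty)$---but diverge in how the bound over $(1-\gamma,1+\gamma)$ is extracted. The paper uses convexity directly to reduce to the two endpoints, $r_\alpha(\xi_j)\leq\max\{r_\alpha(1-\gamma),r_\alpha(1+\gamma)\}$, and then bounds each endpoint value by hand via elementary inequalities like $(1\pm\varepsilon/4)^{-1/\alpha}\gtrless 1\mp\varepsilon/(4\alpha)$. You instead Taylor-expand around $\xi=1$ with Lagrange remainder and bound the second-order term uniformly using $r_\alpha''(\xi_*)\leq(1-\gamma)^{-2}$. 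Both routes are short and rely on the same parameter constraints ($\gamma=\varepsilon/4$, $\varepsilon\leq 1/6$, $\varepsilon\alpha\leq 9/10$); your Taylor approach is arguably a bit more systematic and makes the role of the slack $\varepsilon = \varepsilon(3\alpha-2)-r_\alpha(1)$ explicit, while the paper's endpoint evaluation is more self-contained (no remainder estimate needed).
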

\begin{proof}

Observe the first and the second derivative of $r_\alpha(\xi_j)$:
\begin{align*}
\frac{d r_\alpha(\xi_j)}{d \xi_j} 
&= (\alpha-1)(1+ 3\varepsilon - \xi_j^{-1/\alpha});\\
\frac{d^2 r_\alpha(\xi_j)}{d{\xi_j}^2}&= \frac{1}{\alpha}(\alpha-1){\xi_j}^{-1/\alpha-1}.
\end{align*}
As $\xi_j>0$, $r(\xi_j)$ is convex for $\alpha>1$, and therefore: 
$
r(\xi_j)\leq \max\{r(1-\gamma), r(1+\gamma)\}.
$ 
We have that:
\begin{align*}
r(1-\gamma) = r(1-\varepsilon/4) &= 1- \left(1-\frac{\varepsilon}{4}\right)((1-\alpha)(1+3\varepsilon)+\alpha(1-\varepsilon/4)^{-1/\alpha})\\
&\leq 1 - \left(1-\frac{\varepsilon}{4}\right)(1-\alpha + 3\varepsilon(1 -  \alpha) + \alpha(1+\varepsilon/4)^{1/\alpha})\\
&\leq 1 - \left(1-\frac{\varepsilon}{4}\right)(1+ \varepsilon/4+ 3\varepsilon(1-\alpha)) \quad (\text{from }(1+\varepsilon/4)^{1/\alpha}\geq 1+ {\varepsilon}/({4\alpha}))\\
&= 1 - 1 - \frac{\varepsilon}{4} + 3\varepsilon(\alpha-1)+\frac{\varepsilon}{4}(1+\varepsilon/4-3\varepsilon(\alpha-1))\\
&= \frac{\varepsilon^2}{16} + 3\varepsilon(\alpha-1)\left(1-\frac{\varepsilon}{4}\right)\\
&\leq \varepsilon(3\alpha-2).
\end{align*}
On the other hand:
\begin{align*}
r(1+\gamma) = r(1+\varepsilon/4) &= 1- \left(1+\frac{\varepsilon}{4}\right)((1-\alpha)(1+3\varepsilon)+\alpha(1+\varepsilon/4)^{-1/\alpha})\\
&\leq 1- \left(1+\frac{\varepsilon}{4}\right)(1-\alpha + 3\varepsilon - 3\varepsilon \alpha + \alpha(1-\varepsilon/4)^{1/\alpha})\\
&\leq 1- \left(1+\frac{\varepsilon}{4}\right)\left(1+\frac{11}{4}\varepsilon-3\varepsilon\alpha\right)\\
&\leq 1 - \left(1+\frac{11}{4}\varepsilon - 3\varepsilon\alpha\right)\\
&\leq \varepsilon(3\alpha-2),
\end{align*}
completing the proof.
\end{proof}

The following lemma states that in any stationary round current solution is an $(1+\varepsilon(4\alpha-1))$-approximate solution.
\begin{lemma}\label{lemma:alpha>1-stationary-near-opt}
In any stationary round that happens after the initial the initial $\tau_0 + \tau_1$ rounds, where $\tau_0 = \frac{1}{\beta}\ln(1/\delta_{\min})$, $\tau_1 = \frac{1}{\beta}\ln(nA_{\max})$, we have that $p_\alpha(x^*) - p_\alpha(x)\leq \varepsilon(4\alpha-1)(-p_\alpha(x))$, where $x^*$ is the optimal solution to $(P_\alpha)$ and $x$ is the solution at the beginning of the round.
\end{lemma}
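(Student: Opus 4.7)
The plan is to start from Proposition \ref{prop:alpha>1-duality-gap}, which applies since the stationary round takes place after the initial $\tau_0 + \tau_1$ rounds, yielding
\begin{equation*}
G_\alpha(x, y(x)) \le \sum_{j=1}^n D_j\, r_\alpha(\xi_j), \quad D_j := \tfrac{w_j x_j^{1-\alpha}}{\alpha-1},\ \xi_j := \tfrac{x_j^\alpha \sum_{i} y_i(x) A_{ij}}{w_j},
\end{equation*}
with $T := \sum_j D_j = -p_\alpha(x) > 0$. Since $p_\alpha(x^*) - p_\alpha(x) \le G_\alpha(x, y(x))$ by weak duality, it suffices to show $\sum_j D_j\, r_\alpha(\xi_j) \le \varepsilon(4\alpha-1)\,T$. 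I would partition $\{1,\dots,n\}$ into $J_1 := \{j : \xi_j \in (1-\gamma,1+\gamma)\}$, $J_2 := \{j : \xi_j \le 1-\gamma\} \subseteq S^+$, and $J_3 := \{j : \xi_j \ge 1+\gamma\}$. On $J_1$, Proposition \ref{prop:alpha>1-r-bound} gives $r_\alpha(\xi_j) \le \varepsilon(3\alpha-2)$ pointwise, contributing at most $\varepsilon(3\alpha-2)\,T$.

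For $J_2 \cup J_3$ I would decompose $r_\alpha(\xi) = q_\alpha(\xi) + 3\varepsilon(\alpha-1)\xi$ with $q_\alpha(\xi) := 1 + (\alpha-1)\xi - \alpha\xi^{(\alpha-1)/\alpha}$, which is convex with $q_\alpha(1) = q_\alpha'(1) = 0$. The secant from $(0,1)$ to $(1,0)$ gives $q_\alpha(\xi) \le 1-\xi$ on $[0,1]$, and $\alpha\xi^{(\alpha-1)/\alpha} \ge \alpha$ for $\xi \ge 1$ gives $q_\alpha(\xi) \le (\alpha-1)(\xi-1)$ there; hence the $q_\alpha$-part of the $J_2 \cup J_3$ contribution is at most $\sum_{J_2} D_j(1-\xi_j) + (\alpha-1)\sum_{J_3} D_j(\xi_j-1)$, and the linear part is $3\varepsilon(\alpha-1)\sum_{J_2 \cup J_3} D_j \xi_j$. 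Using Condition~1 of Definition~\ref{def:stationary-round} with $J_2 \subseteq S^+$ and $J_3 \cap S^- \subseteq S^-$ yields $\sum_{J_2} D_j \xi_j,\ \sum_{J_3 \cap S^-} D_j \xi_j \le \gamma T$; Condition~2 rearranges to $\sum_j D_j(1-\xi_j) \le 2\gamma T$, from which subtracting the $J_1$ and $J_3$ pieces (each bounded by $\gamma T$ in absolute value) yields $\sum_{J_2} D_j(1-\xi_j) \le 4\gamma T$. With $\gamma = \varepsilon/4$ and the standing $\varepsilon \le \min\{1/6,\,9/(10\alpha)\}$, combining these gives $\sum_j D_j r_\alpha(\xi_j) \le \varepsilon\bigl[(3\alpha-2) + \tfrac{\alpha+3}{4} + O(\varepsilon\alpha)\bigr]T \le \varepsilon(4\alpha-1)T$, using $(3\alpha-2) + \tfrac{\alpha+3}{4} = \tfrac{13\alpha-5}{4} < 4\alpha-1$ for $\alpha > 1$.

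The main obstacle will be the threshold subset $J_3^* := \{j \in J_3 : x_j = \delta_j\}$, which belongs to neither $S^+$ nor $S^-$, so Condition~1 gives no direct bound on $\sum_{J_3^*} D_j \xi_j$. The plan to handle it is to combine (i) the choice $\delta_j^\alpha = w_j/C$ with $C \ge 2 w_{\max} m^\alpha n^{2\alpha} A_{\max}^{2\alpha-1}$, which places $\delta_j$ strictly below the optimum-coordinate lower bound from Lemma~\ref{lemma:lower-bound}, (ii) the feasibility-induced bound $y_i \le C$ established in Lemma~\ref{lemma:feasibility}, and (iii) Lemma~\ref{lemma:approx-comp-slack} (parts 1 and 2), to show $\sum_{J_3^*} x_j \sum_i y_i(x) A_{ij} = O(\gamma T)$, so that $\sum_{J_3^*} D_j \xi_j = O(\gamma T)$ can be absorbed into the estimates above. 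This is the most delicate step, as it requires carefully tracking how the exponentially growing dual variables $y_i$ interact with the threshold-clamped primals while the stationary-round invariants are preserved.
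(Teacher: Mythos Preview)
Your overall strategy matches the paper's: start from Proposition~\ref{prop:alpha>1-duality-gap}, split the indices, and apply Proposition~\ref{prop:alpha>1-r-bound} on the good set. The paper, however, partitions by \emph{action} ($\{S^+\cup S^-\}$ versus its complement) rather than by the value of $\xi_j$, and on $\{S^+\cup S^-\}$ it avoids your convexity decomposition $r_\alpha=q_\alpha+3\varepsilon(\alpha-1)\xi$ entirely: it simply drops the negative term, using $r_\alpha(\xi_l)\le 1+(1+3\varepsilon)(\alpha-1)\xi_l$, so that
\[
\textstyle\sum_{l\in S^+\cup S^-}D_l\,r_\alpha(\xi_l)\;\le\;(1+3\varepsilon)\sum_{l\in S^+\cup S^-}x_l\sum_i y_iA_{il}\;+\;\sum_{l\in S^+\cup S^-}D_l.
\]
The first sum on the right is $\le\gamma(\alpha-1)T$ by Condition~1; for the second, the paper combines both stationary conditions with $\xi_k\in(1-\gamma,1+\gamma)$ on the complement to derive $\sum_{l\in S^+\cup S^-}D_l<5\gamma T$ directly, bypassing your subtraction chain. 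Your convexity bounds on $q_\alpha$ are correct, but the paper's route is shorter.

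Your concern about $J_3^*=\{j:\xi_j\ge 1+\gamma,\ x_j=\delta_j\}$ is well placed---indeed the paper's assertion that $k\notin\{S^+\cup S^-\}$ forces $\xi_k\in(1-\gamma,1+\gamma)$ quietly ignores this same case. But the plan you sketch will not close it: the crude estimate $y_i\le C$ together with $\delta_j^\alpha=w_j/C$ yields only $\sum_{J_3^*}D_j\xi_j\le \tfrac{1}{\alpha-1}\,CmA_{\max}\sum_{J_3^*}\delta_j$, and substituting the $\alpha>1$ values of $C$ and $\delta_j$ gives a quantity of order $w_{\max}\,m^{\alpha}n^{2\alpha-1}A_{\max}^{2(\alpha-1)}$, which is not $O(\gamma T)$. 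The correct tool is the dominance argument underlying Lemma~\ref{lemma:alpha>1-mul-increase-over-S-} (via Lemma~\ref{lemma:small-x-tight-yi}): for any $j$ with $x_j<\delta_j/(1-\beta)$ and $\xi_j\ge 1+\gamma$---whether or not $x_j$ actually decreases---there is a nearly tight constraint containing some large $x_p$ with $p\in S^-$ and $x_p\sum_l y_lA_{lp}\ge n\cdot x_j\sum_l y_lA_{lj}$. Summing gives $\sum_{J_3^*}x_j\sum_iy_iA_{ij}\le \sum_{p\in S^-}x_p\sum_iy_iA_{ip}$, which Condition~1 then bounds by $\gamma(\alpha-1)T$. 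With this substitution your argument closes.
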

\begin{proof}
Observe that for any $k\notin \{S^+ \cup S^-\}$ (by the definition of $S^+$ and $S^-$) we have that $1-\gamma<\frac{x_k^{\alpha}\sum_{i=1}^m y_i(x)A_{ik}}{w_k}<1+\gamma$, which is equivalent to:
\begin{equation}
(1-\gamma)w_k x_k^{1-\alpha}< x_k\sum_{i=1}^m y_i(x)A_{ik} < (1+\gamma)w_k x_k^{1-\alpha} \quad \forall k\notin \{S^+\cup S^-\}.\label{eq:tight-stationary-xk}
\end{equation}

Using stationarity and (\ref{eq:tight-stationary-xk}):
\begin{align}
(1-2\gamma)\sum_{j=1}^nw_jx_j^{1-\alpha} &\leq \sum_{j=1}^n x_j\sum_{i=1}^n y_i(x)A_{ij}\notag\\
&= \sum_{l\in\{S^+\cup S^-\}} x_l\sum_{i=1}^n y_i(x)A_{il} +  \sum_{k\notin\{S^+\cup S^-\}} x_k\sum_{i=1}^n y_i(x)A_{ik}\notag\\
&< \gamma \sum_{j=1}^n w_j x_j^{1-\alpha} + (1+\gamma)\sum_{k\notin\{S^+\cup S^-\}} w_k x_k^{1-\alpha}.\label{eq:bound-on-all-wj-xj}
\end{align}
Since $\sum_{l\in\{S^+\cup S^-\}}w_{l}x_l^{1-\alpha} = \sum_{j=1}^n w_j x_j^{1-\alpha} - \sum_{k\notin\{S^+\cup S^-\}} w_k x_k^{1-\alpha}$, using (\ref{eq:bound-on-all-wj-xj}):
\begin{align*}
(1-2\gamma)\sum_{l\in\{S^+\cup S^-\}}w_{l}x_l^{1-\alpha} &< \gamma \sum_{j=1}^n w_j x_j^{1-\alpha} + (1+\gamma)\sum_{k\notin\{S^+\cup S^-\}} w_k x_k^{1-\alpha} - (1-2\gamma)\sum_{k\notin\{S^+\cup S^-\}} w_k x_k^{1-\alpha}\\
&= \gamma \sum_{j=1}^n w_j x_j^{1-\alpha} + 3\gamma \sum_{k\notin\{S^+\cup S^-\}} w_k x_k^{1-\alpha}\\
&\leq 4\gamma \sum_{j=1}^n w_j x_j^{1-\alpha},
\end{align*}
and therefore:
\begin{equation}
\sum_{l\in\{S^+\cup S^-\}}w_{l}x_l^{1-\alpha}<\frac{4\gamma}{1-2\gamma} \sum_{j=1}^n w_j x_j^{1-\alpha} < 5\gamma \sum_{j=1}^n w_j x_j^{1-\alpha}, \label{eq:tight-active-objective-part} 
\end{equation}
as $\gamma = \frac{\varepsilon}{4}$ and $\varepsilon\leq \frac{1}{6}$.

As $p_\alpha(x^*)-p_\alpha(x)\leq G(x, y(x))$, from Proposition \ref{prop:alpha>1-duality-gap}:
\begin{align*}
p_\alpha(x^*) - p_\alpha(x) \leq& \sum_{j=1}^n w_j\frac{x_j^{1-\alpha}}{\alpha-1}\left(1 + (1+3\varepsilon)(\alpha-1)\xi_j -\alpha\xi_j^{\frac{\alpha-1}{\alpha}} \right)\\
=& \sum_{k\notin\{S^+\cup S^-\}} w_k\frac{x_k^{1-\alpha}}{\alpha-1}\left(1 + (1+3\varepsilon)(\alpha-1)\xi_k -\alpha\xi_k^{\frac{\alpha-1}{\alpha}} \right)\\ 
&+ \sum_{l\in\{S^+\cup S^-\}} w_l\frac{x_l^{1-\alpha}}{\alpha-1}\left(1 + (1+3\varepsilon)(\alpha-1)\xi_l -\alpha\xi_l^{\frac{\alpha-1}{\alpha}} \right).
\end{align*}
From Proposition \ref{prop:alpha>1-r-bound}:
\begin{align}
\sum_{k\notin\{S^+\cup S^-\}} w_k\frac{x_k^{1-\alpha}}{\alpha-1}\left(1 + (1+3\varepsilon)(\alpha-1)\xi_k -\alpha\xi_k^{\frac{\alpha-1}{\alpha}} \right) &\leq \varepsilon(3\alpha - 2)\sum_{k\notin\{S^+\cup S^-\}} w_k\frac{x_k^{1-\alpha}}{\alpha-1}\notag\\
&\leq \varepsilon(3\alpha - 2) \sum_{j=1}^n w_j\frac{x_j^{1-\alpha}}{\alpha-1}\notag\\
&= \varepsilon(3\alpha - 2)(-p_{\alpha}(x)). \label{eq:approx-for-passive-x}
\end{align}

Observe $\sum_{l\in\{S^+\cup S^-\}} w_l\frac{x_l^{1-\alpha}}{\alpha-1}\left(1 + (1+3\varepsilon)(\alpha-1)\xi_l -\alpha\xi_l^{\frac{\alpha-1}{\alpha}} \right)$. Since $\alpha>1$, each $w_l\frac{x_l^{1-\alpha}}{\alpha-1}>0$, and therefore:
\begin{align*}
\sum_{l\in\{S^+\cup S^-\}} w_l\frac{x_l^{1-\alpha}}{\alpha-1} & \left(1 + (1+3\varepsilon)(\alpha-1)\xi_l -\alpha\xi_l^{\frac{\alpha-1}{\alpha}} \right)\\ 
&\leq \sum_{l\in\{S^+\cup S^-\}} w_l\frac{x_l^{1-\alpha}}{\alpha-1}\left((1+3\varepsilon)(\alpha-1)\xi_l + 1 \right)\\
&= \sum_{l\in\{S^+\cup S^-\}} w_l\frac{x_l^{1-\alpha}}{\alpha-1}\left((1+3\varepsilon)(\alpha-1)\frac{x_l^{\alpha}\sum_{i=1}^m y_i(x)A_{il}}{w_l} + 1 \right)\\
&= (1+3\varepsilon)\sum_{l\in\{S^+\cup S^-\}} x_l\sum_{i=1}^m y_i(x)A_{il} + \sum_{l\in\{S^+\cup S^-\}} w_l\frac{x_l^{1-\alpha}}{\alpha-1}.
\end{align*}
Now, from stationarity $\sum_{l\in\{S^+\cup S^-\}} x_l\sum_{i=1}^m y_i(x)A_{il}<\gamma \sum_{j=1}^n w_j x_j^{1-\alpha}$ and using (\ref{eq:tight-active-objective-part}) we get:
\begin{align}
\sum_{l\in\{S^+\cup S^-\}} w_l\frac{x_l^{1-\alpha}}{\alpha-1}\left(1 + (1+3\varepsilon)(\alpha-1)\xi_j -\alpha\xi_j^{\frac{\alpha-1}{\alpha}} \right)&< \sum_{j=1}^n w_j \frac{x_j^{1-\alpha}}{\alpha-1}(\gamma(1+3\varepsilon)(\alpha-1) + 5\gamma)\notag\\
&\leq -p_{\alpha}(x)\left(\frac{3\varepsilon}{8}\alpha +\varepsilon \right). \label{eq:approx-for-active-x}
\end{align}
Finally, combining (\ref{eq:approx-for-passive-x}) and (\ref{eq:approx-for-active-x}): 
$
p_\alpha(x^*) - p_\alpha(x) < \varepsilon(4\alpha - 1)(-p_{\alpha}(x)).
$
\end{proof}
The following two lemmas are used for lower-bounding the potential increase in non-stationary rounds.

\begin{lemma}\label{lemma:alpha>1-large-sum-yi}
Consider any non-stationary round that happens after the initial $\tau_0 + \tau_1$ rounds, where $\tau_0 = \frac{1}{\beta}\ln(1/\delta_{\min})$, $\tau_1 = \frac{1}{\beta}\ln(nA_{\max})$. Let $x^0$ and $x^1$ denote the values of $x$ before and after the round update. If $\frac{1}{\kappa}\sum_i y_(x^0) \geq -\sum_j w_j \frac{{(x_j^0)}^{1-\alpha}}{1-\alpha}$, then $\Phi(x^1) - \Phi(x^0)\geq \Omega(\gamma^3)(-\Phi(x^0))$.
\end{lemma}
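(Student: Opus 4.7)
The plan is to show that under the hypothesis, a substantial fraction of the total ``slack mass'' $\sum_j x_j^0\sum_i y_i(x^0)A_{ij}$ is concentrated on the active indices $S^+\cup S^-$; Part~1 of Lemma~\ref{lemma:potential-increase-alpha>1} will then convert this into the desired potential growth of $\Omega(\gamma^3)(-\Phi(x^0))$. The first step is to translate the hypothesis into a lower bound on the slack mass in terms of $-\Phi(x^0)$. Since $-p_\alpha(x^0)\geq 0$ when $\alpha>1$, the hypothesis $\frac{1}{\kappa}\sum_i y_i(x^0)\geq -p_\alpha(x^0)$ gives $-\Phi(x^0)=-p_\alpha(x^0)+\frac{1}{\kappa}\sum_i y_i(x^0)\leq 2\cdot\frac{1}{\kappa}\sum_i y_i(x^0)$, so $\sum_i y_i(x^0)\geq \frac{\kappa}{2}(-\Phi(x^0))$. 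Combining with Lemma~\ref{lemma:approx-comp-slack}(3),
\begin{equation*}
\sum_{j=1}^n x_j^0\sum_{i=1}^m y_i(x^0)A_{ij}\;\geq\;(1-3\varepsilon)\sum_i y_i(x^0)\;\geq\;\tfrac{(1-3\varepsilon)\kappa}{2}\,(-\Phi(x^0)).
\end{equation*}

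The second step is to show the passive contribution $\sum_{k\notin S^+\cup S^-}x_k^0\sum_i y_i(x^0)A_{ik}$ is only $O(\alpha-1)(-\Phi(x^0))$. For a passive $k$ with $x_k^0$ strictly above its lower threshold, the update rule forces $\xi_k=\frac{(x_k^0)^\alpha\sum_i y_i(x^0)A_{ik}}{w_k}\in(1-\gamma,1+\gamma)$, hence $x_k^0\sum_i y_i(x^0)A_{ik}\leq(1+\gamma)w_k(x_k^0)^{1-\alpha}$; summing and using $(\alpha-1)(-p_\alpha(x^0))=\sum_j w_j(x_j^0)^{1-\alpha}$ gives a contribution at most $(1+\gamma)(\alpha-1)(-\Phi(x^0))$. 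The delicate case is a ``boundary-clamped'' passive $k$ with $x_k^0=\delta_k$ and $\xi_k\geq 1+\gamma$: here the argument of Lemmas~\ref{lemma:small-x-tight-yi} and \ref{lemma:alpha>1-mul-increase-over-S-} adapts to produce a multiplicatively decreasing index $p\in S^-$ with $x_p^0\sum_i y_i(x^0)A_{ip}\geq n\cdot x_k^0\sum_i y_i(x^0)A_{ik}$, so the clamped contribution is absorbed into the active sum up to a constant factor. Using $\kappa\geq 1/\varepsilon>\alpha$ (from the standing assumption $\alpha\varepsilon\leq 9/10$ and $\kappa=\varepsilon^{-1}\ln(CmA_{\max}/(\varepsilon w_{\min}))\geq \varepsilon^{-1}$), the passive bound is dominated by the total slack mass and
\begin{equation*}
\sum_{j\in S^+\cup S^-}x_j^0\sum_{i=1}^m y_i(x^0)A_{ij}\;\geq\;\Omega(\kappa)(-\Phi(x^0)).
\end{equation*}

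Finally, Part~1 of Lemma~\ref{lemma:potential-increase-alpha>1} gives $\Phi(x^1)-\Phi(x^0)\geq\Omega(\beta\gamma)\cdot\Omega(\kappa)(-\Phi(x^0))=\Omega(\beta\gamma\kappa)(-\Phi(x^0))$. With $\beta=\gamma/(5(\kappa+\alpha))$ and $\alpha\leq\kappa$, one has $\beta\kappa=\Omega(\gamma)$, hence $\beta\gamma\kappa=\Omega(\gamma^2)\supseteq\Omega(\gamma^3)$, as claimed. The main obstacle I anticipate is the second step: cleanly handling the boundary-clamped passive indices by extending Lemma~\ref{lemma:alpha>1-mul-increase-over-S-} so that their slack mass is dominated by that of the multiplicatively decreasing indices in $S^-$, without double-counting when multiple clamped $k$ are absorbed by the same $p$. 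Everything else is bookkeeping with the three pre-established lemmas; as a sanity check, the hypothesis $\frac{1}{\kappa}\sum_i y_i\geq -p_\alpha$ is in fact incompatible with stationarity (the stationarity conditions would force the slack mass to be $O((\alpha-1))(-p_\alpha)$, contradicting the $\Omega(\kappa(-p_\alpha))$ lower bound we derive), so the ``non-stationary'' hypothesis in the lemma's statement is automatic under the main assumption.
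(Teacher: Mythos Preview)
Your route is genuinely different from the paper's. The paper never tries to bound the passive slack mass; instead it uses the non-stationarity hypothesis directly, case-splitting on which of the two stationarity conditions fails. If condition~1 fails then $\sum_{j\in S^+\cup S^-}x_j^0\sum_i y_i(x^0)A_{ij}>\gamma\sum_j w_j(x_j^0)^{1-\alpha}$, and the paper sub-splits on whether the total slack mass $T$ exceeds $(1+2\gamma)\sum_j w_j(x_j^0)^{1-\alpha}$, applying Part~1 or Part~2 of Lemma~\ref{lemma:potential-increase-alpha>1} respectively; if condition~2 fails it applies Part~3. This is where the $\gamma^3$ actually comes from (the Part~1 sub-case only gives $A\geq\Omega(\gamma)T$, not $A\geq\Omega(1)T$), and it sidesteps your ``boundary-clamped passive'' obstacle entirely.

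Your approach can be made to work, but step~2 has a real gap as written. You need the passive mass $P_1\leq(1+\gamma)(\alpha-1)(-p_\alpha(x^0))$ to be a fraction \emph{bounded strictly away from $1$} of the total $T\geq(1-3\varepsilon)\kappa(-p_\alpha(x^0))$; this requires $(\alpha-1)/\kappa<(1-3\varepsilon)/(1+\gamma)$, which at $\varepsilon=1/6$ is about $0.48$. The bare inequality $\kappa\geq\varepsilon^{-1}>\alpha$ that you invoke only gives $(\alpha-1)/\kappa<0.9$, and then the subtraction $T-P_1$ could be negative. What actually saves you is the sharper bound $\kappa\geq\varepsilon^{-1}\ln(CmA_{\max}/(\varepsilon w_{\min}))\geq\varepsilon^{-1}\ln(2R_w m^2n^2A_{\max}^2/\varepsilon)\geq\varepsilon^{-1}\ln 12$, which yields $(\alpha-1)/\kappa\leq 0.9/\ln 12\approx 0.36<0.48$; with that in hand, $A\geq\Omega(1)\cdot T$ follows and Part~1 gives you $\Omega(\gamma^2)(-\Phi)$, stronger than the stated $\Omega(\gamma^3)$. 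So the missing ingredient is not the clamped-index absorption you flagged (which is fine and does adapt from Lemma~\ref{lemma:alpha>1-mul-increase-over-S-}), but rather the quantitative comparison of $\alpha-1$ against $\kappa$, which your argument leaves at the qualitative ``$\kappa>\alpha$'' level.
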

\begin{proof}
Observe that as $\frac{1}{\kappa}\sum_i y_(x^0) \geq -\sum_j w_j \frac{{(x_j^0)}^{1-\alpha}}{1-\alpha}$,
\begin{equation*}
-\Phi(x_0)\leq 2\cdot\frac{1}{\kappa}\sum_i y_(x^0)\leq \frac{2(1-3\varepsilon)}{\kappa} \sum_{j=1}x_j^0 \sum_{i=1}^m y_i(x^0)A_{ij},
\end{equation*}
where the last inequality follows from Lemma \ref{lemma:approx-comp-slack}.

Since the round is not stationary, we have that either: 
\begin{enumerate}
\item $\sum_{j\in S^-\cup S^+}x_j^0\sum_i y_i(x)A_{ij} > \gamma \sum_{j=1}^n w_j (x_j^0)^{1-\alpha}$, or
\item $(1-2\gamma)\sum_{j=1}^n w_j(x_j^0)^{1-\alpha} > \sum_{j=1}^n x_j^0 \sum_{i=1}^m y_i(x^0)A_{ij}$.
\end{enumerate}
\noindent\textbf{Case 1: $\sum_{j\in S^-\cup S^+}x_j^0\sum_i y_i(x)A_{ij} > \gamma \sum_{j=1}^n w_j (x_j^0)^{1-\alpha}$.} If: 
\begin{equation*}\sum_{j=1}^n x_j^0 \sum_{i=1}^m y_i(x^0)\leq (1+2\gamma)\sum_{j=1}^n w_j (x_j^0)^{1-\alpha},
\end{equation*}
then
\begin{equation*}
\sum_{j\in S^-\cup S^+}x_j^0\sum_i y_i(x)A_{ij} > \frac{\gamma}{1+2\gamma}\sum_{j=1}x_j^0 \sum_{i=1}^m y_i(x^0)A_{ij} = \Omega(\gamma)\sum_{j=1}x_j^0 \sum_{i=1}^m y_i(x^0)A_{ij},
\end{equation*}
and, from the first part of Lemma \ref{lemma:potential-increase-alpha>1}, the potential increase is lower bounded as:
\begin{align}
\Phi(x^1) - \Phi(x^0) &\geq \Omega(\beta\gamma^2) \sum_{j=1}x_j^0 \sum_{i=1}^m y_i(x^0)A_{ij}\notag\\
&= \Omega(\beta\kappa\gamma^2) (-\Phi(x^0))\notag\\
&= \Omega(\gamma^3)(-\Phi(x^0)). \notag
\end{align}
On the other hand, if:
\begin{equation*}
\sum_{j=1}^n x_j^0 \sum_{i=1}^m y_i(x^0)>(1+2\gamma)\sum_{j=1}^n w_j (x_j^0)^{1-\alpha},
\end{equation*}
then, from the second part of Lemma \ref{lemma:potential-increase-alpha>1}:
\begin{align}
\Phi(x^1) - \Phi(x^0) &\geq \Omega(\beta\gamma)\sum_{j=1}x_j^0 \sum_{i=1}^m y_i(x^0)A_{ij}\notag\\
&= \Omega(\beta\gamma\kappa)(-\Phi(x^0))\notag\\
&= \Omega(\gamma^2)(-\Phi(x^0)). \notag
\end{align}

\noindent\textbf{Case 2: $(1-2\gamma)\sum_{j=1}^n w_j(x_j^0)^{1-\alpha} > \sum_{j=1}^n x_j^0 \sum_{i=1}^m y_i(x^0)A_{ij}$.} Then, using the third part of Lemma \ref{lemma:potential-increase-alpha>1}:
\begin{align}
\Phi(x^1) - \Phi(x^0) &\geq \Omega\left(\frac{\beta}{(1+\beta)^{\alpha}}\gamma\right)\sum_{j=1}x_j^0 \sum_{i=1}^m y_i(x^0)A_{ij}\notag\\
&= \Omega\left({\beta}\gamma\right)\sum_{j=1}x_j^0 \sum_{i=1}^m y_i(x^0)A_{ij}\notag\\
&= \Omega(\beta\gamma\kappa)(-\Phi(x^0))\notag\\
&= \Omega(\gamma^2)(-\Phi(x^0)), \notag
\end{align}
where in the second line we have used that $\frac{\beta}{(1+\beta)^{\alpha}} = \Theta(\beta)$. This can be shown using the generalized Bernoulli's inequality and $\varepsilon\alpha\leq\frac{9}{10}$ as follows:
\begin{equation*}
\frac{1}{(1+\beta)^{\alpha}}\geq (1-2\beta)^{\alpha}\geq 1-2\alpha\beta = 1 - \frac{\alpha}{k+\alpha}\cdot\frac{\varepsilon}{10}\geq 1 - \frac{9}{100} = \Theta(1).
\end{equation*}
\end{proof}

\begin{lemma}\label{lemma:alpha>1-mul-inc-non-stat}
Consider any non-stationary round that happens after the initial $\tau_0 + \tau_1$ rounds, where $\tau_0 = \frac{1}{\beta}\ln(1/\delta_{\min})$, $\tau_1 = \frac{1}{\beta}\ln(nA_{\max})$. Let $x^0$ and $x^1$ denote the values of $x$ before and after the round update. If $\frac{1}{\kappa}\sum_i y_(x^0) < -\sum_j w_j \frac{{(x_j^0)}^{1-\alpha}}{1-\alpha}$, then $\Phi(x^1) - \Phi(x^0) \geq \Omega\left(\beta\gamma^2\right)(\alpha-1)(-\Phi(x^0))$.
\end{lemma}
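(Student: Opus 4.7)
The plan is to exploit the hypothesis so that the primal part of the potential dominates the dual part, reducing the claim to the statement $\Phi(x^1)-\Phi(x^0)=\Omega(\beta\gamma^2)\sum_{j} w_j(x_j^0)^{1-\alpha}$. After this reduction, a short case analysis on which of the two conditions in Definition \ref{def:stationary-round} is violated, combined with Lemma \ref{lemma:potential-increase-alpha>1}, will finish the proof.

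Concretely, since $\alpha>1$, every feasible $x$ satisfies $-p_\alpha(x)=\sum_j w_j(x_j)^{1-\alpha}/(\alpha-1)>0$. Writing $-\Phi(x^0)=-p_\alpha(x^0)+\tfrac{1}{\kappa}\sum_i y_i(x^0)$ and applying the hypothesis $\tfrac{1}{\kappa}\sum_i y_i(x^0)<-p_\alpha(x^0)$ squeezes
\[
-p_\alpha(x^0)\;\leq\;-\Phi(x^0)\;\leq\;2\bigl(-p_\alpha(x^0)\bigr),
\]
so $(\alpha-1)(-\Phi(x^0))\leq 2\sum_j w_j(x_j^0)^{1-\alpha}$. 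Hence it suffices to lower bound the potential increase by a constant multiple of $\beta\gamma^2\sum_j w_j(x_j^0)^{1-\alpha}$.

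Since the round is non-stationary, at least one condition in Definition \ref{def:stationary-round} fails. If the first fails, then $\sum_{j\in S^+\cup S^-}x_j^0\sum_i y_i(x^0)A_{ij}>\gamma\sum_j w_j(x_j^0)^{1-\alpha}$, and Part 1 of Lemma \ref{lemma:potential-increase-alpha>1} immediately gives $\Phi(x^1)-\Phi(x^0)\geq \Omega(\beta\gamma)\cdot \gamma\sum_j w_j(x_j^0)^{1-\alpha}$. If the second fails, then $(1-\gamma)\sum_j w_j(x_j^0)^{1-\alpha}-\sum_j x_j^0\sum_i y_i(x^0)A_{ij}>\gamma\sum_j w_j(x_j^0)^{1-\alpha}$, and Part 3 of Lemma \ref{lemma:potential-increase-alpha>1} yields $\Phi(x^1)-\Phi(x^0)\geq \Omega\bigl(\tfrac{\beta\gamma}{(1+\beta)^\alpha}\bigr)\sum_j w_j(x_j^0)^{1-\alpha}$. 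The factor $(1+\beta)^\alpha$ is absorbed because $\alpha\beta\leq\gamma/5\leq 1/24$ (from $\beta=\gamma/(5(\kappa+\alpha))$ and the standing assumption $\varepsilon\alpha\leq 9/10$), so $(1+\beta)^\alpha\leq e^{\alpha\beta}=\Theta(1)$, exactly as used in the proof of Lemma \ref{lemma:alpha>1-large-sum-yi}.

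I do not anticipate a real obstacle: the lemma is the mirror image of Lemma \ref{lemma:alpha>1-large-sum-yi} in the regime where the primal term dominates the dual term, and the proof is essentially a mechanical invocation of Lemma \ref{lemma:potential-increase-alpha>1} after the simple two-sided bound on $-\Phi(x^0)$ implied by the hypothesis. The only point demanding any care is the constant control on $(1+\beta)^\alpha$, which is handled by the same calculation already used earlier.
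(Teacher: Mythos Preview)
Your proposal is correct and follows essentially the same approach as the paper's proof: both use the hypothesis to bound $-\Phi(x^0)\leq \tfrac{2}{\alpha-1}\sum_j w_j(x_j^0)^{1-\alpha}$, then split into the same two cases according to which stationary-round condition fails, invoking Parts 1 and 3 of Lemma \ref{lemma:potential-increase-alpha>1} respectively and absorbing $(1+\beta)^\alpha=\Theta(1)$ exactly as in Lemma \ref{lemma:alpha>1-large-sum-yi}.
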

\begin{proof}
Observe that as $\frac{1}{\kappa}\sum_i y_(x^0) < -\sum_j w_j \frac{{(x_j^0)}^{1-\alpha}}{1-\alpha}$,
\begin{equation*}
-\Phi(x_0)\leq -2\sum_j w_j \frac{{(x_j^0)}^{1-\alpha}}{1-\alpha} = \frac{2}{\alpha-1} \sum_j w_j {{(x_j^0)}^{1-\alpha}}.
\end{equation*}

From the definition of a stationary round, we have either of the following two cases:

\noindent\textbf{Case 1:} $\sum_{j\in\{S^+\cup S^-\}}x_j\sum_{i=1}^m y_i(x)A_{ij}> \gamma \sum_{j=1}^n w_j x_j^{1-\alpha}$. 
From the first part of Lemma \ref{lemma:potential-increase-alpha>1}, the increase in the potential is: $\Phi(x^1) - \Phi(x^0) \geq \Omega\left(\beta\gamma^2\right)\sum_{j=1}^n w_j x_j^{1-\alpha}$. As $-\Phi(x^0)\leq \frac{2}{\alpha-1} \sum_j w_j {{(x_j^0)}^{1-\alpha}}$, the increase in the potential is at least:
\begin{align}
\Phi(x^1) - \Phi(x^0) &\geq \Omega(\beta\gamma^2)(\alpha-1)(-\Phi(x^0)).\notag
\end{align}

\noindent\textbf{Case 2:} $(1-2\gamma)\sum_{j=1}^n w_j x_j^{1-\alpha} > \sum_{j=1}^n x_j\sum_{i=1}^m y_i(x)A_{ij}$. Using part 3 of Lemma \ref{lemma:potential-increase-alpha>1}, the increase in the potential is then $\Phi(x^1) - \Phi(x^0)\geq \Omega\left(\frac{\beta}{(1+\beta)^{\alpha}}\gamma\right)\sum_{j=1}^n w_j x_j^{1-\alpha}$. Therefore, using that $\frac{\beta}{(1+\beta)^{\alpha}} = \Theta(\beta)$ as in the proof of Lemma \ref{lemma:alpha>1-large-sum-yi}:
\begin{align}
\Phi(x^1) - \Phi(x^0) &\geq \Omega(\beta\gamma)(\alpha-1)(-\Phi(x^0)).\notag
\end{align}

\end{proof}

\begin{proofof}{Theorem \ref{thm:convergence-alpha>1}}
We will bound the total number of non-stationary rounds that happen after the initial $\tau_0 + \tau_1$ rounds, where $\tau_0 = \frac{1}{\beta}\ln(1/\delta_{\min})$, $\tau_1 = \frac{1}{\beta}\ln(nA_{\max})$. The total convergence time is then at most the sum of $\tau_0+\tau_1$ rounds and the number of non-stationary rounds that happen after the initial $\tau_0 + \tau_1$ rounds, since, from Lemma \ref{lemma:alpha>1-stationary-near-opt}, in any stationary round: $p(x^*) - p(x)\leq \varepsilon(4\alpha-1)(-p(x))$.

Consider the non-stationary rounds that happen after the initial $\tau_0 + \tau_1$ rounds. 
As $x_j\in[\delta_j, 1]$, $\forall j$, it is simple to show that:
\begin{equation}
\frac{W}{\alpha-1}\leq \sum_j w_j \frac{{x_j}^{1-\alpha}}{\alpha-1}\leq \frac{W}{\alpha-1}\cdot 2{\wratio}^{\frac{\alpha-1}{\alpha}}n^{2(\alpha-1)}m^{\alpha-1}{A_{\max}}^{2\alpha -1}, \label{eq:alpha>1-palpha-bounds}
\end{equation}
and
\begin{equation}
0< \frac{1}{\kappa}\sum_{i} y_i(x)\leq \frac{mC}{\kappa}\leq \varepsilon m C. \label{eq:alpha>1-sum-yi-bounds}
\end{equation}

Recall that $\Phi(x) = - \sum_j w_j \frac{{x_j}^{1-\alpha}}{\alpha-1} - \frac{1}{\kappa}\sum_{i} y_i(x)$ and that the potential $\Phi(x)$ never decreases.

There can be two cases of non-stationary rounds: those in which $\sum_j w_j \frac{{x_j}^{1-\alpha}}{\alpha-1}$ dominates in the absolute value of the potential, and those in which $\frac{1}{\kappa}\sum_{i} y_i(x)$ dominates in the absolute value of the potential. We bound the total number of the non-stationary rounds in such cases as follows.

\noindent\textbf{Case 1: $\frac{1}{\kappa}\sum_{i} y_i(x) \geq \sum_j w_j \frac{{x_j}^{1-\alpha}}{\alpha-1}$.}  
From (\ref{eq:alpha>1-palpha-bounds}) and (\ref{eq:alpha>1-sum-yi-bounds}), in any such round, the negative potential is bounded as:
\begin{equation*}
\Omega\left( \frac{W}{\alpha-1}\right)\leq -\Phi(x)\leq O\left(\varepsilon{mC}\right).
\end{equation*}
Moreover, from Lemma \ref{lemma:alpha>1-large-sum-yi}, in each Case 1 non-stationary round, the potential increases by at least $\Omega(\gamma^3)(-\Phi(x))$. It immediately follows that there can be at most:
\begin{align}
O\left(\frac{1}{\gamma^3}\ln\left(\frac{\varepsilon{mC}}{\frac{W}{\alpha-1}}\right)\right)&= O\left(\frac{1}{\gamma^3}\ln\left((\alpha-1)\varepsilon\wratio nmA_{\max}\right)\right)\notag\\
&= O\left(\frac{1}{\varepsilon^3}\ln\left(\wratio nmA_{\max}\right)\right) \label{eq:alpha>1-conv-bound-1}
\end{align}
Case 1 non-stationary rounds, as $(\alpha-1)\varepsilon<\alpha\varepsilon\leq \frac{9}{10}$.

\noindent\textbf{Case 2: $\frac{1}{\kappa}\sum_{i} y_i(x) < \sum_j w_j \frac{{x_j}^{1-\alpha}}{\alpha-1}$.} From (\ref{eq:alpha>1-palpha-bounds}) and (\ref{eq:alpha>1-sum-yi-bounds}), in any such round, the negative potential is bounded as:
\begin{equation*}
\Omega\left(\frac{W}{\alpha-1}\right)\leq-\Phi(x)\leq O\left(\frac{W}{\alpha-1}\cdot{{\wratio}^{\frac{\alpha-1}{\alpha}}n^{2(\alpha-1)}m^{\alpha-1}{A_{\max}}^{2\alpha -1}}\right). 
\end{equation*}
Moreover, from Lemma \ref{lemma:alpha>1-mul-increase-over-S-}, in each such non-stationary round the potential increases by at least $\Omega\left(\beta\gamma^2\right)(\alpha-1)(-\Phi(x^0))$. Therefore, there can be at most:
\begin{align}
O\left(\frac{1}{\beta\gamma^2(\alpha-1)}\ln\left(\frac{\frac{W}{\alpha-1}\cdot{{\wratio}^{\frac{\alpha-1}{\alpha}}n^{2(\alpha-1)}m^{\alpha-1}{A_{\max}}^{2\alpha -1}}}{\frac{W}{\alpha-1}}\right)\right) &= O\left(\frac{1}{\beta\gamma^2}\ln({\wratio}^{\frac{1}{\alpha}} nmA_{\max})\right)\notag\\
&= O\left(\frac{1}{\varepsilon^4}\ln(\wratio nmA_{\max})\ln\left(\wratio\cdot\frac{nmA_{\max}}{\varepsilon}\right)\right)\label{eq:alpha>1-conv-bound-2}
\end{align}
Case 2 non-stationary rounds.

The total number of initial $\tau_0+\tau_1$ rounds can be bounded as:
\begin{align}
\tau_0+\tau_1 &= \frac{1}{\beta}\ln(1/\delta_{\min}) + \frac{1}{\beta}\ln(nA_{\max})\notag\\
&= O\left(\frac{1}{\varepsilon^2}\ln\left(\wratio nmA_{\max}\right)\ln\left(\wratio \cdot\frac{nmA_{\max}}{\varepsilon}\right)\right). \label{eq:alpha>1-conv-bound-3}
\end{align}

Combining (\ref{eq:alpha>1-conv-bound-1}), (\ref{eq:alpha>1-conv-bound-2}), and (\ref{eq:alpha>1-conv-bound-3}), the total convergence time is at most:
\begin{align*}
O\left(\frac{1}{\varepsilon^4}\ln\left(\wratio\cdot{nmA_{\max}}\right)\ln\left(\wratio\cdot\frac{nmA_{\max}}{\varepsilon}\right)\right).
\end{align*} 

Finally, running \textsc{$\alpha$-FairPSolver} for the approximation parameter $\varepsilon' = \varepsilon/(4\alpha-1)$, we get that in any stationary round $p_\alpha(x^*)-p_{\alpha}(x)\leq -\varepsilon p_{\alpha}(x)$, while the total number of non-stationary rounds is at most:
\begin{align*}
O\left(\frac{\alpha^4}{\varepsilon^4}\ln\left(\wratio\cdot{nmA_{\max}}\right)\ln\left(\wratio\cdot\frac{nmA_{\max}}{\varepsilon}\right)\right).
\end{align*} 
\end{proofof}

\else
\subsection{Proof Sketch of Theorem \ref{thm:convergence-alpha>1}}\label{section:alpha>1}

In this section, we outline the main ideas of the proof of Theorem \ref{thm:convergence-alpha>1}, while the technical details are omitted and are instead provided in the {full version of the paper}. 
First, we show that in any round of the algorithm the variables that decrease by a multiplicative factor $(1-\beta_2)$ dominate the potential increase due to \emph{all the variables} that decrease (see Lemma {4.21} in the full paper). 
This result is then used in Lemma \ref{lemma:potential-increase-alpha>1} to show the following lower bound on the potential increase:

\begin{lemma}\label{lemma:potential-increase-alpha>1}
Let $x^0$ and $x^1$ denote the values of $x$ before and after any fixed round, respectively, and let $S^+ = \{j: x_j^1 > x_j^0\}$, $S^- = \{j: x_j^1 < x_j^0\}$. The potential increase in the round is lower bounded as:
\begin{enumerate}[noitemsep, topsep=5pt]
\item $\Phi(x^1) - \Phi(x^0) \geq \Omega(\beta\gamma)\sum_{j\in\{S^+\cup S^-\}} x_j^0\sum_{i=1}^m y_i(x^0)A_{ij}$;
\item $\Phi(x^1) - \Phi(x^0) \geq \Omega\left(\frac{\beta}{(1-\beta)^{\alpha}}\right) \left(\sum_{j=1}^n x_j^0 \sum_{i=1}^m y_i(x^0) - (1+\gamma) \sum_{j=1}^nw_j (x_j^0)^{1-\alpha}\right)$;
\item $\Phi(x^1) - \Phi(x^0) \geq \Omega\left(\frac{\beta}{(1+\beta)^{\alpha}}\right) \left( (1-\gamma)\sum_{j=1}^n w_j (x_j^0)^{1-\alpha} - \sum_{j=1}^n x_j^0 \sum_{i=1}^m y_i(x^0)\right)$.
\end{enumerate}
\end{lemma}

Observe that for $\alpha > 1$ the objective function $p_\alpha(x)$, and, consequently, the potential function $\Phi(x)$ is negative for any feasible $x$. To yield a poly-logarithmic convergence time in $\wratio, m, n$, and $A_{\max}$, the idea is to show that the negative potential $-\Phi(x)$ decreases by some multiplicative factor whenever $x$ is not a ``good'' approximation to $x^*$ -- the optimal solution to $(P_\alpha)$. This idea, combined with the fact that the potential never decreases (and therefore $-\Phi(x)$ never increases) and with upper and lower bounds on the potential then leads to the desired convergence time. 
Consider the following definition of a stationary round:
\begin{definition}\label{def:stationary-round}
(Stationary round.) A round is stationary, if both:
\begin{enumerate}[topsep = 5pt]
\item $\sum_{j\in\{S^+\cup S^-\}} x_j^0\sum_{i=1}^m y_i(x)A_{ij} < \gamma \sum_{j=1}^n w_j {(x_j^0)}^{1-\alpha}$, and
\item $(1-2\gamma)\sum_{j=1}^n w_j {(x_j^0)}^{1-\alpha}\leq \sum_{j=1}^n x_j^0\sum_{i=1}^m y_i(x^0)A_{ij}$
\end{enumerate}
hold, where $S^+ = \{j: x_j^1 > x_j^0\}$, $S^- = \{j: x_j^1 < x_j^0\}$. Otherwise, the round is non-stationary.
\end{definition}
Recall the expression for the negative potential: $-\Phi(x) = \frac{1}{\alpha - 1}\sum_jw_j{x_j}^{1-\alpha} + \frac{1}{\kappa}\sum_i y_i(x)$. Then, using Lemma \ref{lemma:potential-increase-alpha>1}, it suffices to show that in a non-stationary round the decrease in the negative potential $-\Phi(x)$ is a multiplicative factor of the larger of the two terms $\frac{1}{\alpha - 1}\sum_jw_j{x_j}^{1-\alpha}$ and $\frac{1}{\kappa}\sum_i y_i(x)$. 
The last part of the proof is to show that the solution $x$ that corresponds to any stationary round is close to the optimal solution. This part is done by appropriately upper-bounding the duality gap. Denoting by $S^+\cup S^-$ the set of coordinates $j$ for which $x_j$ either increases or decreases in the observed stationary round and using Definition \ref{def:stationary-round}, we show that the terms $j\in \{S^+ \cup S^-\}$ contribute to the duality gap by no more than $O(\varepsilon\alpha)\cdot (-p_\alpha(x))$. The terms corresponding to $j\notin \{S^+\cup S^-\}$ are bounded recalling (from \textsc{$\alpha$-FairPSolver}) that for such terms $\frac{x_j^\alpha \sum_{i=1}^m y_i(x)A_{ij}}{w_j}\in (1-\gamma, 1 + \gamma)\equiv(1-\varepsilon/4, 1+\varepsilon/4)$. 
\fi
\subsection{Structural Properties of $\alpha-$Fair Allocations}
\paragraph{Lower Bound on the Minimum Allocated Value.}
Recall (from Section \ref{section:prelims}) 
that the optimal solution $x^*$ to $(P_\alpha)$ must lie in the positive orthant. We show in Lemma \ref{lemma:lower-bound} that not only does $x^*$ lie in the positive orthant, but the minimum element of $x^*$ can be bounded below as a function of the problem parameters. This lemma motivates the choice of parameters $\delta_j$ in \textsc{$\alpha$-FairPSolver} (Section \ref{section:algorithm}). 

\begin{lemma}\label{lemma:lower-bound}
Let $x^* = (x_1^*,...,x_n^*)$ be the optimal solution to $(P_\alpha)$. Then $\forall j\in \{1,...,n\}$:
\begin{itemize}
\itemsep0pt
\item $x_j^*\geq \big(\frac{w_j}{w_{\max}M}\min_{i: A_{ij}\neq 0}\frac{1}{n_i A_{ij}}\big)^{1/\alpha}$, if  $0<\alpha\leq1$,
\item $x_j^* \geq {{A_{\max}}}^{(1-\alpha)/\alpha}\big(\frac{w_j}{w_{\max}M}\big)^{1/\alpha}\min_{i: A_{ij}\neq 0}\frac{1}{n_i A_{ij}}$, if $\alpha>1$,
\end{itemize}
where $n_i = \sum_{j=1}^n \mathds{1}_{\{A_{ij}\neq 0\}}$\footnote{With the abuse of notation, $\mathds{1}_{\{e\}}$ is the indicator function of the expression $e$, i.e., 1 if $e$ holds, and 0 otherwise.} is the number of non-zero elements in the $i^{\text{th}}$ row of the constraint matrix $A$, and $M=\min\{m, n\}$.
\end{lemma}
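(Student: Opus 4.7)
The plan is to combine the KKT conditions with the $\alpha$-fairness variational characterization from Mo--Walrand. Starting from the gradient condition (\ref{eq:K4}), $(x_j^*)^\alpha\sum_i y_i^* A_{ij} = w_j$, rearranging gives
\[
x_j^* \;=\; \left(\frac{w_j}{\sum_i y_i^* A_{ij}}\right)^{1/\alpha},
\]
so the task reduces to producing an upper bound on $\sum_i y_i^* A_{ij}$ of the advertised form. First I would derive the identity
\[
\sum_i y_i^* \;=\; \sum_l w_l (x_l^*)^{1-\alpha}
\]
by multiplying (\ref{eq:K4}) by $x_j^*$, summing over $j$, and invoking complementary slackness (\ref{eq:K3}) to replace $\sum_j A_{ij}x_j^*$ by $1$ on the support of $y^*$. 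Since $A_{ij}\geq 1$ and $\sum_j A_{ij}x_j^*\leq 1$ force $x_l^*\leq 1$ for every $l$, when $\alpha\leq 1$ we get $(x_l^*)^{1-\alpha}\leq 1$ and hence $\sum_i y_i^*\leq W$. For $\alpha>1$ the exponent $1-\alpha$ is negative, so I would instead use $x_l^*\leq 1/A_{\max}$ (from the feasibility constraint that contains an entry of size $A_{\max}$) to bound $(x_l^*)^{1-\alpha}$ from above by a power of $A_{\max}$; this is precisely what produces the additional $A_{\max}^{(1-\alpha)/\alpha}$ prefactor in the second case of the claim.

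The second step converts this aggregate bound on $\sum_i y_i^*$ into a per-index bound of the form $\sum_i y_i^* A_{ij}\leq w_{\max}\, M\cdot n_i A_{ij}$ for each $i$ with $A_{ij}\neq 0$. Two ingredients enter. The factor $M=\min(m,n)$ reflects the fact that an optimal dual solution can be taken to be supported on at most $\min(m,n)$ constraints: the $m$ bound is trivial, while the $n$ bound follows because the primal problem has $n$ variables and a unique strictly positive optimum, so at most $n$ tight constraints can be linearly independent at $x^*$. The factor $n_i A_{ij}$, which localizes the bound to constraint $i$, I would obtain by applying the $\alpha$-fair variational inequality $\sum_l w_l(\hat x_l - x_l^*)/(x_l^*)^\alpha \leq 0$ to a carefully constructed feasible comparison vector $\hat x$ supported on $S_i=\{l:A_{il}\neq 0\}$ with $\hat x_j$ of order $1/(n_i A_{ij})$, handling separately the cases $x_j^*\geq \hat x_j/2$ (which is already strong enough) and $x_j^*<\hat x_j/2$ (which feeds back into the inequality).

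The main obstacle is constructing $\hat x$ in the second step so that the bound degrades only by $n_i A_{ij}$ and not by an additional $A_{\max}$ factor. The natural candidate $\hat x_k = 1/(n_i A_{ik})$ on $S_i$ keeps constraint $i$ tight but can overflow some other constraint $l\neq i$; this is avoided by using that $A_{ik}\geq 1$ for $k\in S_i$ to keep $A_{lk}/A_{ik}$ under control, possibly after a small rescaling. The $\alpha>1$ case then proceeds along exactly the same lines, with the only change being the modified bound on $(x_l^*)^{1-\alpha}$ described above and a careful accounting of how the resulting $A_{\max}^{\alpha-1}$ interacts with the $1/\alpha$-th root.
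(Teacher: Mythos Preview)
Your KKT identity $\sum_i y_i^* = \sum_l w_l (x_l^*)^{1-\alpha}$ is correct and the $\alpha\leq 1$ bound $\sum_i y_i^*\leq W$ follows. However, the $\alpha>1$ step is broken in a way that cannot be repaired along these lines. When $\alpha>1$ the exponent $1-\alpha$ is negative, so bounding $(x_l^*)^{1-\alpha}$ from \emph{above} requires a \emph{lower} bound on $x_l^*$---which is exactly the conclusion you are trying to establish. Your proposed fix, ``use $x_l^*\leq 1/A_{\max}$,'' goes the wrong direction: an upper bound on $x_l^*$ gives $(x_l^*)^{1-\alpha}\geq A_{\max}^{\alpha-1}$, not $\leq$. (It is also false for most $l$: only the variable hit by the $A_{\max}$ entry satisfies it.) So the KKT route, as written, is circular for $\alpha>1$.

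Independently, your ``second step'' does not connect the two halves of the argument. The variational inequality $\sum_l w_l(\hat x_l-x_l^*)/(x_l^*)^\alpha\leq 0$ makes no reference to $y^*$, so it cannot by itself sharpen a bound on $\sum_i y_i^* A_{ij}$; and your justification of the factor $M=\min(m,n)$ via ``dual support $\leq n$'' is not how $M$ arises in the paper. The paper's proof is a direct contradiction argument using only the variational characterization: assume $x_j^*<\mu_j(\alpha)$, increase $x_j$ by a small $\varepsilon_j$, and for every tight constraint $t$ containing $j$ select one variable $x_k$ with $x_k^*>1/(n_tA_{tk})$ (guaranteed by averaging) to decrease. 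The set $K$ of selected indices has $|K|\leq|T|\leq m$ and trivially $|K|\leq n$, whence $|K|\leq M$; the per-term comparison $w_j\varepsilon_j(x_k^*)^\alpha$ versus $w_k\varepsilon_k(x_j^*)^\alpha|K|$ then yields the contradiction, with the $A_{\max}^{(1-\alpha)/\alpha}$ factor in the $\alpha>1$ case emerging from bounding $(A_{tk}/A_{tj})^{\alpha-1}\leq A_{\max}^{\alpha-1}$. No global control of $\sum_l w_l(x_l^*)^{1-\alpha}$ is ever needed, which is precisely what lets the argument avoid the circularity that traps your approach.
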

\iffullpaper
\begin{proof}
Fix $\alpha$. Let:
\begin{equation*}\mu_j({\alpha}) = 
\begin{cases} \left(\frac{w_j}{w_{\max}M}\min_{i: A_{ij}\neq 0}\frac{1}{n_i A_{ij}}\right)^{1/\alpha}, & \mbox{if } \alpha\leq 1 \\
{{A_{\max}}}^{(1-\alpha)/\alpha}\left(\frac{w_j}{w_{\max}M}\right)^{1/\alpha}\min_{i: A_{ij}\neq 0}\frac{1}{n_i A_{ij}}, & \mbox{if } \alpha> 1 \end{cases}.
\end{equation*}

For the purpose of contradiction, suppose that $x^* = (x_1^*,...,x_n^*)$ is the optimal solution to $(P_\alpha)$, and $x_j^* < \mu_j(\alpha)$ for some fixed $j\in\{1,...,n\}$.  

To establish the desired result, we will need to introduce additional notation. We first break the set of (the indices of) constraints of the form $Ax\leq 1$ in which variable $x_j$ appears with a non-zero coefficient into two sets, $U$ and $T$:
\begin{itemize}
\item Let $U$ denote the set of the constraints from $(P_\alpha)$ that are not tight at the given optimal solution $x^*$, and are such that $A_{u, j}\neq 0$ for $u\in U$. Let $s_u = {1-\sum_{k=1}^n A_{uk}x_k}$ denote the slack of the constraint $u\in U$.
\item Let $T$ denote the set of tight constraints from $(P_\alpha)$ that are such that $A_{tj}\neq 0$ for $t\in T$. Observe that since $x^*$ is assumed to be optimal, $T\neq \emptyset$.
\end{itemize}

Let $\varepsilon_j = \min\left\{\mu_j(\alpha)-x_j^*, \min_{u\in U}s_u/A_{uj}\right\}$. Notice that by increasing $x_j$ to $x_j^* + \varepsilon_j$ none of the constraints from $U$ can be violated (although all the constraints in $T$ will; we deal with these violations in what follows).

In each constraint $t\in T$, there must exist at least one variable $x_k$ such that $x_k^*>\dfrac{1}{n_tA_{tk}}$, because $\sum_{l=1}^n A_{tl} x_{l}^*=1$, as each $t\in T$ is tight, and $
x_j^*<\mu_j(\alpha)\leq\min_{i: A_{ij}\neq 0}\frac{1}{n_i A_{ij}}\leq \frac{1}{n_t A_{tj}}.
$
Select one such $x_k$ in each constraint $t\in T$, and denote by $K$ the set of indices of selected variables. Observe that $|K|\leq |T|$ ($\leq M$), since an $x_k$ can appear in more than one constraint. 

For each $k\in K$, let $T_k$ denote the constraints in which $x_k$ is selected, and let 
\begin{equation}
\varepsilon_k = \max_{t\in T_k: A_{tk}\neq 0} \dfrac{A_{tj}\varepsilon_j}{A_{tk}}.\label{eq:epsilon-choice}
\end{equation}
If we increase $x_j$ by $\varepsilon_j$ and decrease $x_k$ by $\varepsilon_k$ $\forall k\in K$, each of the constraints $t\in T$ will be satisfied since, from (\ref{eq:epsilon-choice}) and from the fact that only one $x_k$ gets selected per constraint $t\in T$, $\varepsilon_j A_{tj}-\sum_{k\in K}\varepsilon_k A_{tk}\leq 0$. Therefore, to construct an alternative feasible solution $x'$, we set $x'_{j}=x_j^*+\varepsilon_j$, $x'_k = x_k^*-\varepsilon_k$ for $k\in K$, and $x'_l = x_l^*$ for all the remaining coordinates $l\in\{1,...,n\}\backslash (K \cup\{j\})$. 

Since $j$ is the only coordinate over which $x$ gets increased in $x'$, all the constraints $Ax'\leq 1$ are satisfied. For $x'$ to be feasible, we must have in addition that $x_k'\geq 0$ for $k\in K$. 
We show that $x_k' = x^*_k - \varepsilon_k \geq 0$ as follows:
\begin{align*}
\varepsilon_k &= \varepsilon_j \cdot \max_{t\in T_k: A_{tk}\neq 0}\frac{A_{tj}}{A_{tk}}\\
&\leq \mu_j(\alpha)\cdot \max_{t\in T_k: A_{tk}\neq 0}\frac{A_{tj}}{A_{tk}}\\
& \leq \min_{i: A_{ij}\neq 0}\frac{1}{n_i A_{ij}}\cdot \max_{t\in T_k: A_{tk}\neq 0}\frac{A_{tj}}{A_{tk}}\\
& \leq \max_{t\in T_k: A_{tk}\neq 0}\frac{1}{n_t A_{tj}}\frac{A_{tj}}{A_{tk}}\\
&\leq \max_{t\in T_k: A_{tk}\neq 0}\frac{1}{n_t A_{tk}}\\
&< x_k^*,
\end{align*}
where the second line follows from $\varepsilon_j \leq \mu_j(\alpha)-x_j^*\leq \mu_j(\alpha)$, and the last line follows from the choice of $x_k$.

The last part of the proof is to show that $\sum_{l=1}^n w_l\frac{x_{l}'-x_l^*}{{x_l^*}^{\alpha}}>0$, which contradicts the initial assumption that $x^*$ is optimal, by the definition of $\alpha$-fairness from Section \ref{section:prelims}. We have that:
\begin{align}
\sum_{l=1}^n w_l\frac{x_{l}'-x_l^*}{{x_l^*}^{\alpha}} &= w_j\frac{\varepsilon_j}{{x_j^*}^{\alpha}} - \sum_{k\in K}w_k \frac{\varepsilon_k}{{x_k^*}^{\alpha}}\notag\\
&= \sum_{k\in K} \left(w_j\frac{\varepsilon_j}{{x_j^*}^{\alpha}|K|} - w_k \frac{\varepsilon_k}{{x_k^*}^{\alpha}}\right)\notag\\
&= \sum_{k\in K} \left(\frac{w_j\varepsilon_j{x_k^*}^{\alpha} - w_k\varepsilon_k{x_j^*}^{\alpha}|K|}{{x_j^*}^{\alpha}{x_k^*}^{\alpha}|K|}\right) . \label{eq:sum-ind-terms}
\end{align}
Consider one term from the summation (\ref{eq:sum-ind-terms}). From the choice of $\varepsilon_k$'s, we know that for each $\varepsilon_k$ there exist $t\in T$ such that $\varepsilon_k = \dfrac{\varepsilon_jA_{tj}}{A_{tk}}$, and at the same time (by the choice of $x_k$) we have $x_k^* > \dfrac{1}{n_t A_{tk}}$, so that
\begin{equation}
w_j\varepsilon_j{x_k^*}^{\alpha}> w_j\dfrac{\varepsilon_k A_{tk}}{A_{tj}}\left(\frac{1}{A_{tk}n_t}\right)^{\alpha}> \frac{w_kw_j\varepsilon_k }{w_{\max}}\dfrac{A_{tk}}{A_{tj}}\left(\frac{1}{A_{tk}n_t}\right)^{\alpha}.\label{eq:xk-general}
\end{equation}

\noindent\textbf{Case 1.} Suppose first that $\alpha\leq 1$. Then ${x_k^*}^{\alpha} > \left(\frac{1}{A_{tk}n_t}\right)^{\alpha}\geq \frac{1}{A_{tk}n_t}$, as $A_{tk}\neq 0\Rightarrow A_{tk}\geq 1$.  Plugging into (\ref{eq:xk-general}), we have:
\begin{equation}
w_j\varepsilon_j{x_k^*}^{\alpha}> \frac{w_kw_j\varepsilon_k}{w_{\max}}\dfrac{ 1}{n_tA_{tj}}\label{eq:xk}.
\end{equation}

By the initial assumption, $x_j^* < \mu_j(\alpha) = \left(\frac{w_j}{w_{\max}M}\min_{i: A_{ij}\neq 0}\frac{1}{n_i A_{ij}}\right)^{1/\alpha}$, and therefore 
\begin{equation}
w_k\varepsilon_k{x_j^*}^{\alpha}|K|< \frac{w_k w_j \varepsilon_k }{w_{\max}}\frac{|K|}{M}\min_{i: A_{ij}\neq 0}\frac{1}{n_i A_{ij}}\leq \frac{w_k w_j \varepsilon_k }{w_{\max}}\frac{1}{n_t A_{tj}}, \label{eq:xj}
\end{equation}
since it must be $|K|\leq M$ ($=\min\{m, n\}$). From (\ref{eq:xk}) and (\ref{eq:xj}), we get that every term in the summation (\ref{eq:sum-ind-terms}) is strictly positive, which implies:
\begin{equation*}
\sum_{l=1}^n w_l\frac{x_{l}'-x_l^*}{{x_l^*}^{\alpha}}>0,
\end{equation*}
and therefore $x^*$ is not optimal.

\noindent\textbf{Case 2.} Now suppose that $\alpha>1$. Then 
\begin{equation*}
x_j^*<\mu_j(\alpha) = {{A_{\max}}}^{(1-\alpha)/\alpha}\left(\frac{w_j}{w_{\max}M}\right)^{1/\alpha}\min_{i: A_{ij}\neq 0}\frac{1}{n_i A_{ij}} \leq {{A_{\max}}}^{(1-\alpha)/\alpha}\left(\frac{w_j}{w_{\max}M}\right)^{1/\alpha}\frac{1}{n_t A_{tj}}.
\end{equation*}
Therefore:
\begin{align}
w_k\varepsilon_k{x_j^*}^{\alpha}|K|&< w_k\varepsilon_k \frac{w_j}{w_{\max}M} {A_{\max}}^{1-\alpha}\left(\frac{1}{n_tA_{tj}}\right)^{\alpha}|K|\notag\\
&\leq {w_k}\frac{w_j}{w_{\max}}{A_{\max}}^{1-\alpha}\varepsilon_k\left(\frac{1}{A_{tk}n_t}\right)^{\alpha}\dfrac{ {A_{tk}}^{\alpha}}{{A_{tj}}^{\alpha}}\notag\\
&={w_k}\frac{w_j}{w_{\max}}\frac{\varepsilon_k A_{tk}}{A_{tj}}\cdot\frac{(A_{tk}/A_{tj})^{\alpha-1}}{{A_{\max}}^{\alpha-1}}\left(\frac{1}{A_{tk}n_t}\right)^{\alpha}\notag\\
&\leq {w_k}\frac{w_j}{w_{\max}}\dfrac{\varepsilon_k A_{tk}}{A_{tj}}\left(\frac{1}{A_{tk}n_t}\right)^{\alpha},\label{eq:xj-alpha1}
\end{align}
as $|K|\leq M$, and $\frac{A_{tk}}{A_{tj}}\leq A_{\max}$ (since for any $i, j$: $1\leq A_{ij}\leq {A_{\max}}$).

Finally, from (\ref{eq:xk-general}) and (\ref{eq:xj-alpha1}) we get that every term in the summation (\ref{eq:sum-ind-terms}) is positive, which yields a contradiction.
\end{proof}
\fi

\paragraph{Asymptotics of $\alpha-$Fair Allocations}

The following lemma states that for sufficiently small (but not too small) $\alpha$, the values of the linear and the $\alpha-$fair objectives at their respective optimal solutions are approximately the same. This statement will then lead to a conclusion that to $\varepsilon-$approximately solve an $\alpha-$fair packing problem for a very small $\alpha$, one can always use an $\varepsilon-$approximation packing LP algorithm. 
\begin{lemma}\label{lemma:LP-close-to-small-alpha-fair}
Let $(P_{\alpha})$ be an $\alpha-$fair packing problem with optimal solution $x^*$, and $(P_0)$ be the LP with the same constraints and the same weights $w$ as $(P_{\alpha})$ and an optimal solution $z^*$. Then if $\alpha \leq \frac{\varepsilon/4}{\ln(nA_{\max}/\varepsilon)}$, we have that $\sum_j w_j z_j^* \geq (1-3\varepsilon)\sum_j \frac{(x_j^*)^{1-\alpha}}{1-\alpha}$, where $\varepsilon\in(0, 1/6]$.
\end{lemma}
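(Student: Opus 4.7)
The plan is to compare $p_\alpha(x^*)=\sum_j w_j\frac{(x_j^*)^{1-\alpha}}{1-\alpha}$ with the LP value $\sum_j w_j z_j^*$ by splitting coordinates into ``large'' and ``small'' and exploiting that LP-optimality gives $\sum_j w_j z_j^*\ge \sum_j w_j x_j^*$. Concretely, pick the threshold $\tau=\varepsilon/(nA_{\max})$ and partition the indices into $L=\{j:x_j^*\ge\tau\}$ and $S=\{j:x_j^*<\tau\}$. Because every non-zero entry of $A$ is $\ge 1$ and $Ax\le\mathds{1}$, every coordinate satisfies $x_j^*\le 1$, so $(x_j^*)^{-\alpha}\in[1,(x_j^*)^{-\alpha}]$.

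For $j\in L$ I use the hypothesis on $\alpha$. Since $\alpha\ln(nA_{\max}/\varepsilon)\le\varepsilon/4$,
\begin{equation*}
(x_j^*)^{-\alpha}\le\tau^{-\alpha}=e^{\alpha\ln(nA_{\max}/\varepsilon)}\le e^{\varepsilon/4}\le 1+\tfrac{\varepsilon}{2},
\end{equation*}
and likewise $(1-\alpha)^{-1}\le 1+\tfrac{\varepsilon}{2}$ since $\alpha\le\varepsilon/4$. Hence for every $j\in L$,
\begin{equation*}
\frac{(x_j^*)^{1-\alpha}}{1-\alpha}=x_j^*\cdot\frac{(x_j^*)^{-\alpha}}{1-\alpha}\le (1+\tfrac{\varepsilon}{2})^2\,x_j^*\le (1+\tfrac{3}{2}\varepsilon)\,x_j^*.
\end{equation*}
Summing and using LP-optimality,
\begin{equation*}
\sum_{j\in L}w_j\frac{(x_j^*)^{1-\alpha}}{1-\alpha}\le(1+\tfrac{3}{2}\varepsilon)\sum_{j}w_j x_j^*\le(1+\tfrac{3}{2}\varepsilon)\sum_j w_j z_j^*.
\end{equation*}

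For $j\in S$ I use that the map $x\mapsto x^{1-\alpha}/(1-\alpha)$ is increasing on $(0,1]$, so the contribution is at most $\tau^{1-\alpha}/(1-\alpha)\le(1+\tfrac{3}{2}\varepsilon)\tau$. Therefore
\begin{equation*}
\sum_{j\in S}w_j\frac{(x_j^*)^{1-\alpha}}{1-\alpha}\le(1+\tfrac{3}{2}\varepsilon)\tau W=(1+\tfrac{3}{2}\varepsilon)\cdot\frac{\varepsilon W}{nA_{\max}}.
\end{equation*}
To control this by the LP value I lower-bound $\sum_j w_j z_j^*$ using the feasible point $\tilde x_j=1/(nA_{\max})$, which satisfies $\sum_j A_{ij}\tilde x_j\le 1$ and therefore gives $\sum_j w_j z_j^*\ge W/(nA_{\max})$. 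Consequently
\begin{equation*}
\sum_{j\in S}w_j\frac{(x_j^*)^{1-\alpha}}{1-\alpha}\le(1+\tfrac{3}{2}\varepsilon)\varepsilon\sum_j w_j z_j^*.
\end{equation*}

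Adding the two sums yields $p_\alpha(x^*)\le(1+\tfrac{3}{2}\varepsilon)(1+\varepsilon)\sum_j w_j z_j^*$, and for $\varepsilon\le 1/6$ this gives $(1+\tfrac{3}{2}\varepsilon)(1+\varepsilon)\le 1/(1-3\varepsilon)$, completing the proof. The main delicate step is picking the threshold $\tau$: it has to be small enough that $\tau^{-\alpha}$ is near $1$ under the assumed bound on $\alpha$, yet large enough that the aggregated ``small'' mass $\tau W$ is $O(\varepsilon)$ times a \emph{provable} lower bound on the LP optimum; the uniform feasible allocation $1/(nA_{\max})$ is what makes both sides match with the same $nA_{\max}$ factor. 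The final bookkeeping on constants (needing to land at $1-3\varepsilon$ rather than, say, $1-4\varepsilon$) may require slightly tighter Taylor estimates than the ones sketched, which is the only calculational hazard.
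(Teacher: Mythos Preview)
Your proof is correct and follows essentially the same route as the paper: both split at the threshold $\tau=\varepsilon/(nA_{\max})$, bound large coordinates via $(x_j^*)^{-\alpha}\le \tau^{-\alpha}\approx 1$, and control the small-coordinate mass using the feasible uniform allocation $1/(nA_{\max})$ together with LP-optimality $\sum_j w_j z_j^*\ge\sum_j w_j x_j^*$. The only (minor) difference is that the paper compares the small-coordinate contribution to $p_\alpha(x^*)$ itself via the $(P_\alpha)$-optimality of $x^*$, whereas you compare it directly to the LP value via the $(P_0)$-optimality of $z^*$; your variant is slightly more direct and the constant bookkeeping you sketch indeed lands at $1-3\varepsilon$.
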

\iffullpaper
\begin{proof}
The proof outline is as follows. First, we show that the $\alpha-$fair objective $p_\alpha(x^*)$ can be upper-bounded by a linear objective as $p_\alpha(x^*)\equiv \sum_j w_j \frac{{x_j^*}^{1-\alpha}}{1-\alpha}\leq (1+O(\varepsilon))\sum_j w_j x_j^*$. Then, to complete the proof, we use the optimality of $z^*$ for the LP: $\sum_j w_j z_j^*\geq \sum_j w_j x_j^*$ ($\geq (1-O(\varepsilon)) \sum_j w_j \frac{{x_j^*}^{1-\alpha}}{1-\alpha}$ from the first part of the proof).

Let $g(x_j) = \frac{{x_j}^{1-\alpha}}{1-\alpha} - (1+\varepsilon)x_j$. Consider the case when $g(x_j)\leq 0$. Solving $g(x_j)\leq 0$ for $x_j$, we get that it should be 
\begin{equation}
x_j \geq \Big(\frac{1}{1-\alpha}\Big)^{1/\alpha}\cdot \Big(\frac{1}{1+\varepsilon}\Big)^{1/\alpha}.\label{eq:x-j-alpha-c}
\end{equation}

Choose $\alpha$ so that $\frac{1}{(1+\varepsilon)^{1/\alpha}} \leq \big(\frac{\varepsilon/4}{nA_{\max}}\big)$, which is equivalent to $\alpha \leq \frac{\ln(1+\varepsilon)}{\ln(4nA_{\max}/\varepsilon)}$. Then to have $g(x_j)\leq 0$, it suffices to have $x_j \geq \frac{\varepsilon}{nA_{\max}}$, because (i) $\big(\frac{1}{1-\alpha}\big)^{1/\alpha} \in [e, 4]$ for $\alpha \in [0, 1/2]$, where $e$ is the base of the natural logarithm, and (ii) $\frac{1}{(1+\varepsilon)^{1/\alpha}} \leq \big(\frac{\varepsilon/4}{nA_{\max}}\big)$ by the choice of $\alpha$.

Now, as $\alpha \leq \frac{\ln(1+\varepsilon)}{\ln(4nA_{\max}/\varepsilon)}$, summing over $j$ such that $x_j^*\geq \frac{\varepsilon}{nA_{\max}}$ we have:
\begin{align}
\sum_{j: x_j^*\geq \frac{\varepsilon}{nA_{\max}}}w_j \frac{(x_j^*)^{1-\alpha}}{1-\alpha} - (1+\varepsilon)\sum_{j: x_j^*\geq \frac{\varepsilon}{nA_{\max}}}w_j x_j^* = \sum_{j: x_j^*\geq \frac{\varepsilon}{nA_{\max}}}w_j g(x_j^*)
\leq 0 \label{eq:large-coord-approx}
\end{align}
Now we bound the rest of the terms in $p_\alpha(x^*)$, i.e., we consider $j: x_j^* < \frac{\varepsilon}{nA_{\max}}$. Observe that since $x_j = \frac{1}{nA_{\max}}$ for $j=\{1,...,n\}$ is a feasible solution to $(P_{\alpha})$ and $x^*$ is the optimal solution to $(P_{\alpha})$, we have that $\sum_j w_j \frac{(1/nA_{\max})^{1-\alpha}}{1-\alpha}\leq \sum_j w_j \frac{(x_j^*)^{1-\alpha}}{1-\alpha}$, which gives:
\begin{align}
\sum_{j: x_j^*< \frac{\varepsilon}{nA_{\max}}}w_j \frac{(x_j^*)^{1-\alpha}}{1-\alpha} &< \varepsilon^{1-\alpha}\sum_{j: x_j^*< \frac{\varepsilon}{nA_{\max}}}w_j\frac{(1/nA_{\max})^{1-\alpha}}{1-\alpha }\notag\\
& < \varepsilon^{1-\alpha} \sum_{j=1}^n w_j \frac{(x_j^*)^{1-\alpha}}{1-\alpha}\notag\\
&\leq 2\varepsilon \sum_{j=1}^n w_j \frac{(x_j^*)^{1-\alpha}}{1-\alpha} \notag.
\end{align}
Therefore:
\begin{align}
\sum_{j: x_j^*\geq \frac{\varepsilon}{nA_{\max}}}w_j \frac{(x_j^*)^{1-\alpha}}{1-\alpha} > (1-2\varepsilon) \sum_{j=1}^n w_j \frac{(x_j^*)^{1-\alpha}}{1-\alpha}.\label{eq:small-coord-approx}
\end{align}
Combining (\ref{eq:large-coord-approx}) and (\ref{eq:small-coord-approx}), we now get:
\begin{align}
\sum_{j=1}^n w_j \frac{(x_j^*)^{1-\alpha}}{1-\alpha} &< \frac{1+\varepsilon}{1-2\varepsilon}\cdot \sum_{j: x_j^*\geq \frac{\varepsilon}{nA_{\max}}}w_j x_j^*. \label{eq:alpha-LP}
\end{align}
Finally, since $z^*$ optimally solves $(P_0)$ (which has the same constraints and weights as $(P_{\alpha})$), we have that $x^*$ is feasible for $(P_0)$, and using (\ref{eq:alpha-LP}) and optimality of $z^*$, it follows that:
\begin{align}
\sum_{j=1}^n w_j z_j^* &\geq \sum_{j=1}^n w_j x_j^* \notag\\
&\geq \frac{1-2\varepsilon}{1+\varepsilon} \sum_{j=1}^n w_j \frac{(x_j^*)^{1-\alpha}}{1-\alpha}\notag\\
&\geq (1-3\varepsilon) \sum_{j=1}^n w_j \frac{(x_j^*)^{1-\alpha}}{1-\alpha},\notag
\end{align}
as claimed.
\end{proof}
\fi
Observing that for any $\alpha\in(0, 1)$, $\frac{(z_j^*)^{1-\alpha}}{1-\alpha}\geq z_j^*$ (since, due to the scaling, $z_j^*\in[0, 1]$), a simple corollary of Lemma \ref{lemma:LP-close-to-small-alpha-fair} is that an $\varepsilon-$approximation $z$ to $(P_0)$ ($\sum_j w_j z_j \geq (1-\varepsilon)\sum_j w_j z_j^*$) is also an $O(\varepsilon)-$approximation to $(P_{\alpha})$, for $\alpha \leq \frac{\varepsilon/4}{\ln(nA_{\max}/\varepsilon)}$. Thus, to find an $\varepsilon-$approximate solution for $\alpha \leq \frac{\varepsilon/4}{\ln(nA_{\max}/\varepsilon)}$, the packing LP algorithm of \cite{AwerbuchKhandekar2009} can be run, which means that there is a stateless distributed algorithm that converges in poly($\ln(\varepsilon^{-1}\wratio mnA_{\max})/\varepsilon$) time for $\alpha$ arbitrarily close to zero.

The following two lemmas show that when $\alpha$ is sufficiently close to 1, $(P_{\alpha})$ can be $\varepsilon-$approximated by $\varepsilon-$approximately solving $(P_1)$ with the same constraints and weights.

\begin{lemma}\label{lemma:alpha-close-to-1-below}
Let $x$ be an $\varepsilon-$approximate solution to a 1-fair packing problem $(P_1)$ returned by \textsc{$\alpha$-FairPSolver}. Then, for any $\alpha \in \left[1- 1/{\tau_0}, 1\right)$, where $\tau_0 = \frac{1}{\beta}\ln(\frac{1}{\delta_{\min}})$, $x$ is also a $2\varepsilon-$approximate solution to $(P_\alpha)$, where the only difference between $(P_1)$ and $(P_\alpha)$ is in the value of $\alpha$ in the objective.
\end{lemma}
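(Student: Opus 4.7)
The plan is to reformulate $p_\alpha(x) = \frac{W}{1-\alpha} + q_\alpha(x)$, where $q_\alpha(x) := \sum_{j=1}^n w_j \frac{x_j^{1-\alpha}-1}{1-\alpha}$, and then argue that when $1-\alpha \leq 1/\tau_0 = \beta/\ln(1/\delta_{\min})$, the constant $\frac{W}{1-\alpha}$ is so large that it dwarfs the variable part $q_\alpha$, making any feasible $x$ with coordinates bounded below by $\delta_{\min}$ automatically a good multiplicative approximation to $p_\alpha(x^*_\alpha)$. The hypothesis that $x$ is returned by \textsc{$\alpha$-FairPSolver} is used only to secure feasibility and $x_j \in [\delta_j, 1]$ for every $j$.

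The first technical step will be to establish the pointwise sandwich $p_1(x) \leq q_\alpha(x) \leq 0$ for all $x \in (0,1]^n$. The upper bound is immediate since $x_j^{1-\alpha} \leq 1$ when $x_j \leq 1$ and $1-\alpha > 0$. For the lower bound I would invoke the convexity of $t \mapsto a^t$: writing $a^{1-\alpha} = e^{(1-\alpha)\ln a}$ and using $e^t \geq 1+t$ gives $\frac{a^{1-\alpha}-1}{1-\alpha} \geq \ln a$ for every $a > 0$ and $\alpha < 1$. Applying the sandwich at both the optimum $x^*_\alpha$ (feasible, hence in $(0,1]^n$) and at the algorithm's output $x$ (with $x_j \in [\delta_{\min}, 1]$) yields
\begin{equation*}
p_\alpha(x^*_\alpha) - p_\alpha(x) \;=\; q_\alpha(x^*_\alpha) - q_\alpha(x) \;\leq\; 0 - p_1(x) \;\leq\; -\sum_{j=1}^n w_j \ln \delta_{\min} \;=\; W\ln(1/\delta_{\min}).
\end{equation*}

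The second step will be a lower bound on $p_\alpha(x)$ that harnesses the smallness of $1-\alpha$. From $1-\alpha \leq 1/\tau_0 = \beta/\ln(1/\delta_{\min})$ I obtain $\frac{W}{1-\alpha} \geq \frac{W\ln(1/\delta_{\min})}{\beta}$, and combining with $p_1(x) \geq -W\ln(1/\delta_{\min})$ gives
\begin{equation*}
p_\alpha(x) \;\geq\; \frac{W}{1-\alpha} + p_1(x) \;\geq\; W\ln(1/\delta_{\min})\Big(\tfrac{1}{\beta}-1\Big) \;\geq\; \frac{W\ln(1/\delta_{\min})}{2\beta},
\end{equation*}
where the last step uses $\beta \leq 1/2$. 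Dividing the two displays bounds the relative error by $2\beta$, and since $\beta = \gamma/(5(\kappa+1)) \leq \varepsilon/20$, this is at most $2\varepsilon$ as required. There is no serious obstacle to the proof: the trick is simply to recognize the decomposition, after which the dominant constant $\frac{W}{1-\alpha}$ does all the work — notably, the $\varepsilon$-approximation hypothesis for $(P_1)$ is not itself needed in the bound, only the structural constraint $x_j \in [\delta_j, 1]$ enforced by the algorithm.
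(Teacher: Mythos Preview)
Your proof is correct and takes a genuinely different route from the paper's.

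The paper argues via the Lagrangian duality gap: it takes the pair $(x, y(x))$ from a stationary round of the $\alpha=1$ run, uses the stationary-round estimate $\sum_j x_j \sum_i y_i A_{ij}\leq (1+2\gamma)W$ together with Lemma~\ref{lemma:approx-comp-slack} to control $\sum_i y_i$, invokes Lemma~\ref{lemma:cond-lower-bound} to get $\xi_j\geq (1-\gamma)^{\tau_0}$, and then bounds $G_\alpha(x,y(x))\leq 2\varepsilon\, p_\alpha(x)$ by direct calculation. Your argument bypasses duality entirely: the decomposition $p_\alpha(x)=\frac{W}{1-\alpha}+q_\alpha(x)$ with the sandwich $p_1(x)\leq q_\alpha(x)\leq 0$ shows that when $1-\alpha\leq 1/\tau_0$ the constant $\frac{W}{1-\alpha}\geq W\ln(1/\delta_{\min})/\beta$ overwhelms the variable part, so the relative gap is at most $2\beta\leq \varepsilon/10$. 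Your approach is shorter, uses only feasibility and $x_j\in[\delta_j,1]$, and actually yields a tighter constant; the paper's approach, by contrast, exercises the same duality-gap machinery used throughout the convergence analysis and therefore requires the full stationary-round structure. Your closing remark that the $\varepsilon$-approximation hypothesis for $(P_1)$ is not itself invoked (only the box constraint $x_j\in[\delta_j,1]$ enforced by the algorithm, with $\varepsilon$ entering through $\tau_0$) is accurate and worth noting.
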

\iffullpaper
\begin{proof}
Suppose that $x$ is a solution in some stationary round, provided by \textsc{$\alpha$-FairPSolver} run for $\alpha = 1$. Fix that round. It is clear that if $x$ is feasible in $(P_1)$, it is also feasible in $(P_\alpha)$, since all the constraints in $(P_1)$ and $(P_\alpha)$ are the same by the initial assumption. All that is required for a dual solution $y$ to be feasible is that $y_i \geq 0$, for all $i$, and therefore $y(x)$ is a feasible dual solution for $(P_\alpha)$. The rest of the proof follows by bounding the duality gap $G_\alpha(x, y(x))$. Recall from (\ref{eq:duality-gap-alpha}) that: 
\begin{align}\label{eq:duality-gap-alpha-recap}
G_{\alpha}(x, y(x)) = \sum_{j=1}^n w_j\frac{x_j^{1-\alpha}}{1-\alpha}\bigg(\Big(\frac{{x_j}^{\alpha}\sum_{i=1}^m y_i A_{ij}}{w_j}\Big)^{\frac{\alpha-1}{\alpha}}-1\bigg) +\sum_{i=1}^m y_i - \sum_{j=1}^n  w_j x_j^{1-\alpha}\cdot \Big(\frac{{x_j}^{\alpha}\sum_{i=1}^m A_{ij}y_i}{w_j}\Big)^{\frac{\alpha-1}{\alpha}}.
\end{align}

Since $x$ is a solution from a stationary round, from the second part of the definition of a stationary round (Definition \ref{def:alpha=1-stat-round}), we have that:
\begin{align*}
\sum_{j=1}^n x_j \sum_{i=1}^n y_i(x)A_{ij} \leq (1+2\gamma)\sum_{k=1}^n w_k.
\end{align*}
Further, from Lemma \ref{lemma:approx-comp-slack}:
\begin{align}
\sum_{i=1}^m y_i(x) \leq (1+3\varepsilon)\sum_{j=1}^n x_j \sum_{i=1}^n y_i(x)A_{ij} \leq (1+3\varepsilon)(1+2\gamma)\sum_{k=1}^n w_k. \label{eq:alpha-below-1-acs}
\end{align}

Next, we show that:
\begin{align}
{x_j}^{1-\alpha} \geq 1- \gamma,\quad \forall j. \label{eq:alpha-below=1-all-x-large}
\end{align}
Rearranging the terms and taking logarithms of both sides in (\ref{eq:alpha-below=1-all-x-large}), we obtain the  equivalent inequality $1-\alpha \leq \frac{\ln(1/(1-\gamma))}{\ln(1/x_j)}$. Recall from \textsc{$\alpha$-FairPSolver} that in every (except for, maybe, the first) round $x_j \geq \delta_j \geq \delta_{\min}$. As $\ln(1/(1-\gamma))\geq \gamma$, it therefore suffices to show that $1-\alpha \leq \frac{\gamma}{\ln(1/\delta_{\min})}$. But from the statement of the lemma, $1-\alpha\leq 1/\tau_0 < \frac{\gamma}{\ln(1/\delta_{\min})}$, completing the proof of (\ref{eq:alpha-below=1-all-x-large}).

Combining (\ref{eq:alpha-below-1-acs}) and (\ref{eq:alpha-below=1-all-x-large}), we get that:
\begin{align}\label{eq:alpha-below-1-bnd-1}
\sum_{i=1}^m y_i(x) \leq \frac{(1+3\varepsilon)(1+2\gamma)}{1-\gamma}\sum_{j=1}^n w_j {x_j}^{1-\alpha} \leq (1+5\varepsilon) \sum_{j=1}^n w_j {x_j}^{1-\alpha},
\end{align}
where the second inequality follows from $\varepsilon\leq 1/6$, $\gamma = \varepsilon/4$.

Using (\ref{eq:alpha-below-1-bnd-1}), we can bound the duality gap (Eq. (\ref{eq:duality-gap-alpha-recap})) as:
\begin{align}\label{eq:duality-gap-alpha-betterbound}
G_{\alpha}(x, y(x)) \leq \sum_{j=1}^n w_j \frac{{x_j}^{1-\alpha}}{1-\alpha} \left(\alpha\Big(\frac{{x_j}^{\alpha}\sum_{i=1}^m y_i A_{ij}}{w_j}\Big)^{\frac{\alpha-1}{\alpha}}-1 + (1-\alpha)(1+5\varepsilon)\right).
\end{align}
To complete the proof, recall from Lemma \ref{lemma:cond-lower-bound} that in any round of the algorithm, for all $j$: $ \frac{x_j \sum_{i=1}^m y_i(x)A_{ij}}{w_j} \geq (1-\gamma)^{\tau_0}$. As $\alpha < 1$ and $x_j \in [0, 1]$, $\forall j$, it holds that ${x_j}^{\alpha} \geq x_j$, $\forall j$, and therefore:
\begin{align}\label{eq:alpha-below-1-bnd-2}
\frac{{x_j}^{\alpha} \sum_{i=1}^m y_i(x)A_{ij}}{w_j} \geq (1-\gamma)^{\tau_0}, \quad \forall j.
\end{align}
Finally, recalling that $1-\alpha \leq 1/\tau_0$, and combining (\ref{eq:alpha-below-1-bnd-2}) with (\ref{eq:duality-gap-alpha-betterbound}), we get:
\begin{align}
G_{\alpha}(x, y(x))&\leq \sum_{j=1}^n w_j \frac{{x_j}^{1-\alpha}}{1-\alpha} \left(\alpha\Big(\frac{1}{1-\gamma}\Big)^{1/\alpha} -1 + (1-\alpha)(1+5\varepsilon) \right)\notag\\ 
&\leq \sum_{j=1}^n w_j \frac{{x_j}^{1-\alpha}}{1-\alpha}((1 + {2\gamma})^{1/\alpha} - 1  + (1-\alpha)(1+5\varepsilon))\notag\\
&\leq \sum_{j=1}^n w_j \frac{{x_j}^{1-\alpha}}{1-\alpha} (1 + \varepsilon - 1  + (1-\alpha)(1+5\varepsilon)) \notag\\
&\leq 2\varepsilon \sum_{j=1}^n w_j \frac{{x_j}^{1-\alpha}}{1-\alpha},\notag
\end{align}
where the third inequality follows from $\alpha \geq 1 - 1/\tau_0 \geq 1 - \frac{\gamma \varepsilon}{5}\geq 1 - \frac{\varepsilon^2}{20}$, and the fourth inequality follows from $1-\alpha < \varepsilon/2$ and $\varepsilon \leq 1/6$.
\end{proof}
\fi

\begin{lemma}\label{lemma:alpha-close-to-1-above}
Let $x$ be an $\varepsilon-$approximate solution to a 1-fair packing problem $(P_1)$ returned by \textsc{$\alpha$-FairPSolver}. Then, for any $\alpha \in (1, 1 + 1/{\tau_0}]$, where $\tau_0 = \frac{1}{\beta}\ln(\frac{1}{\delta_{\min}})$, $x$ is also a $2\varepsilon-$approximate solution to $(P_\alpha)$, where the only difference between $(P_1)$ and $(P_\alpha)$ is in the value of $\alpha$ in the objective.
\end{lemma}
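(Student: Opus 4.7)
The plan is to mirror the proof of Lemma~\ref{lemma:alpha-close-to-1-below}, replacing the direct manipulation of the duality gap with Proposition~\ref{prop:alpha>1-duality-gap}, and additionally to control the contribution of $S^-$ coordinates, a case that is absent in the $\alpha<1$ analysis (there one gets $\xi_j\ge\tilde\xi_j$ automatically). Write $\eta=\alpha-1\in(0,1/\tau_0]$. Since $x$ is produced by \textsc{$\alpha$-FairPSolver} in a stationary round for $\alpha=1$, it is primal-feasible for $(P_\alpha)$ (the constraints are unchanged) and $y(x)\ge 0$ is dual-feasible, so Lemmas~\ref{lemma:feasibility}--\ref{lemma:approx-comp-slack} and Definition~\ref{def:alpha=1-stat-round} give $\sum_j x_j\sum_i y_i(x)A_{ij}\in[(1-2\gamma)W,(1+2\gamma)W]$ and $\sum_{j\in S^+}w_j\le W/\tau_0$. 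First I would record the scale estimates that drive the whole argument: since $\eta\ln(1/\delta_{\min})\le\beta$ and $x_j\in[\delta_{\min},1]$, one has $x_j^{-\eta}\in[1,1+2\beta]$ and $x_j^{\eta}\in[1-\beta,1]$ for every $j$. Setting $\tilde\xi_j=x_j\sum_i y_i(x)A_{ij}/w_j$ and $\xi_j=x_j^\alpha\sum_i y_i(x)A_{ij}/w_j=x_j^{\eta}\tilde\xi_j$, this gives $\xi_j\in[(1-\beta)\tilde\xi_j,\tilde\xi_j]$, while $-p_\alpha(x)=\eta^{-1}\sum_j w_j x_j^{-\eta}\ge W/\eta$.

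Next I would invoke Proposition~\ref{prop:alpha>1-duality-gap} to obtain $G_\alpha(x,y(x))\le\sum_j w_j\frac{x_j^{1-\alpha}}{\alpha-1}\,r(\xi_j)$ with $r(t)=1+(1+3\varepsilon)\eta t-\alpha t^{\eta/(1+\eta)}$, and then split the indices into $J_1=\{j:\tilde\xi_j\in(1-\gamma,1+\gamma)\}$, $J_2=S^+$, and $J_3=S^-$. For $J_1$, $\xi_j\in[1-2\gamma,1+\gamma]$, and a Taylor expansion of the convex function $r$ around $t=1$ (where $r(1)=3\varepsilon\eta$, $r'(1)=3\varepsilon\eta$, and $|r''|=O(\eta)$ on the interval) gives $r(\xi_j)=O(\varepsilon\eta)$, so this block contributes at most $O(\varepsilon)W=O(\varepsilon\eta)(-p_\alpha(x))$ to the gap. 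For $J_2$, the elementary inequality $t^{\eta/(1+\eta)}\ge t$ for $t\in(0,1]$ yields $r(\xi_j)\le 1+3\varepsilon\eta\le 2$, and combined with $\sum_{j\in S^+}w_j\le W/\tau_0$ the contribution is at most $O(W/(\eta\tau_0))=O(1/\tau_0)(-p_\alpha(x))$.

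The main obstacle is the $J_3=S^-$ case, because the algorithm only ensures $\tilde\xi_j\ge 1+\gamma$ there and gives no uniform upper bound on individual $\tilde\xi_j$. The plan is to apply $e^y\ge 1+y$ at $y=\frac{\eta}{1+\eta}\ln\xi_j\ge 0$ to obtain $\alpha\xi_j^{\eta/(1+\eta)}\ge 1+\eta+\eta\ln\xi_j$, which yields $r(\xi_j)\le\eta[(1+3\varepsilon)\xi_j-1-\ln\xi_j]\le(1+3\varepsilon)\eta\xi_j$ for $\xi_j\ge 1$; this linearizes $r$ in $\xi_j$ at the price of a single factor of $\eta$. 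Combining with $\xi_j\le\tilde\xi_j$ and the stationarity bound $\sum_j w_j\tilde\xi_j\le(1+2\gamma)W$ then gives a $J_3$ contribution of $O(W)=O(\eta)(-p_\alpha(x))$. Finally, summing the three contributions and using $\eta\le 1/\tau_0\le\beta\ll\varepsilon$ yields $G_\alpha(x,y(x))\le 2\varepsilon(-p_\alpha(x))$, and $p_\alpha(x^*)-p_\alpha(x)\le G_\alpha(x,y(x))$ completes the proof.
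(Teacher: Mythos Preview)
Your proof is correct, but it takes a noticeably different route from the paper's. The paper does \emph{not} split into $J_1,J_2,J_3$. Instead it invokes Lemma~\ref{lemma:cond-lower-bound} to obtain the uniform lower bound $\tilde\xi_j\ge(1-\gamma)^{\tau_0}$ for \emph{every} $j$, then uses $x_j^{\alpha-1}\ge 1-\gamma$ (which follows from $\alpha-1\le 1/\tau_0$, exactly your scale estimate) to conclude $\xi_j^{(\alpha-1)/\alpha}\ge(1-\gamma)^{(\tau_0+1)\cdot(\alpha-1)/\alpha}\ge 1-\gamma$ for all $j$. This single uniform bound, plugged into the same duality-gap expression you start from, directly gives $G_\alpha(x,y(x))\le(-p_\alpha(x))\cdot[\alpha\gamma+4\varepsilon(\alpha-1)]\le\varepsilon(-p_\alpha(x))$ with no case analysis at all.

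Your approach bypasses Lemma~\ref{lemma:cond-lower-bound} entirely by exploiting the two \emph{aggregate} stationary-round conditions (the bound on $\sum_{j\in S^+}w_j$ and on $\sum_j x_j\sum_i y_iA_{ij}$) in place of the per-coordinate lower bound. The $e^y\ge 1+y$ linearization for $J_3$ and the identity $w_jx_j^{-\eta}\xi_j=x_j\sum_i y_iA_{ij}$ are a neat way to collapse the $S^-$ contribution back to the quantity the stationary round controls. The trade-off: the paper's argument is about three lines once Lemma~\ref{lemma:cond-lower-bound} is in hand, while yours is self-contained but carries the overhead of three separate estimates. Both yield bounds well inside the stated $2\varepsilon$.
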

\iffullpaper
\begin{proof}
Similar to the proof of Lemma \ref{lemma:alpha-close-to-1-below}, we will fix an $x$ from some stationary round of \textsc{$\alpha$-FairPSolver} run on $(P_1)$, and argue that the same $x$ $2\varepsilon-$approximates $(P_\alpha)$ by bounding the duality gap $G_{\alpha}(x, y(x))$, although we will need to use a different set of inequalities since now $\alpha > 1$. Similar to the proof of Lemma \ref{lemma:alpha-close-to-1-below}, as $x$ is (primal-)feasible for $(P_1)$, $x$ and $y(x)$ are primal- and dual-feasible for $(P_\alpha)$.

By the same token as in the proof of Lemma \ref{lemma:alpha-close-to-1-below}:
\begin{align*}
\sum_{i=1}^m y_i (x) \leq (1+3\varepsilon)(1+2\gamma)\sum_{j=1}^n w_j.
\end{align*}
As $\alpha > 1$ and $x_j \in (0, 1]$, $\forall j$, we have that ${x_j}^{1-\alpha}\geq 1$, $\forall j$, and therefore:
\begin{align}\label{eq:alpha-above-1-bnd-1}
\sum_{i=1}^m y_i (x) \leq (1+3\varepsilon)(1+2\gamma)\sum_{j=1}^n w_j {x_j}^{1-\alpha} \leq (1+4\varepsilon)\sum_{j=1}^n w_j {x_j}^{1-\alpha}.
\end{align}
Therefore, we can write for the duality gap:
\begin{align}
G_{\alpha}(x, y(x))&\leq \sum_{j=1}^n w_j \frac{{x_j}^{1-\alpha}}{1-\alpha} \left(\alpha\Big(\frac{{x_j}^{\alpha}\sum_{i=1}^m y_i A_{ij}}{w_j}\Big)^{\frac{\alpha-1}{\alpha}}-1 + (1-\alpha)(1+4\varepsilon)\right)\\
&= -\sum_{j=1}^n w_j \frac{{x_j}^{1-\alpha}}{1-\alpha} \left(-\alpha\Big(\frac{{x_j}^{\alpha}\sum_{i=1}^m y_i A_{ij}}{w_j}\Big)^{\frac{\alpha-1}{\alpha}} + 1 + (\alpha-1)(1+4\varepsilon)\right)
\label{eq:alpha-above-1-duality-gap}.
\end{align}
Notice that, as $\alpha >1$, the objective for $(P_\alpha)$, $\sum_{j=1}^n w_j \frac{{x_j}^{1-\alpha}}{1-\alpha}$, is now negative.

Using the same arguments as in the proof of Lemma \ref{lemma:alpha-close-to-1-below}, it is straightforward to show that ${x_j}^{\alpha-1}\geq 1-\gamma$, $\forall j$. From Lemma \ref{lemma:cond-lower-bound}, we have that $\frac{x_j\sum_{i}y_i(x)A_{ij}}{w_j}\geq (1-\gamma)^{\tau_0}$, $\forall j$, and therefore:
\begin{align}
\frac{{x_j}^{\alpha}\sum_{i=1}^m y_i(x)A_{ij}}{w_j} &= \frac{{x_j}^{1-\alpha}\cdot x_j\sum_{i=1}^my_i(x)A_{ij}}{w_j}\notag\\ 
&\geq (1-\gamma)^{\tau_0 + 1}. \label{eq:alpha-above-1-bnd-2}
\end{align}
Recalling that $\alpha - 1 \leq 1/\tau_0$ (by the statement of the lemma) and using (\ref{eq:alpha-above-1-bnd-2}), we have:
\begin{align}
\Big(\frac{{x_j}^{\alpha}\sum_{i=1}^m y_i A_{ij}}{w_j}\Big)^{\frac{\alpha-1}{\alpha}} &\geq (1-\gamma)^{(\tau_0+1)/(\tau_0(1+1/\tau_0))}\notag\\ 
&=(1-\gamma). \label{eq:alpha-above-1-bnd-3}
\end{align}
Finally, plugging (\ref{eq:alpha-above-1-bnd-3}) into (\ref{eq:alpha-above-1-duality-gap}), we have:
\begin{align}
G_{\alpha}(x, y(x))&\leq -\sum_{j=1}^n w_j \frac{{x_j}^{1-\alpha}}{1-\alpha} \left(-\alpha(1-\gamma) + 1 + (\alpha-1)(1+4\varepsilon)\right)\notag \\
&= -\sum_{j=1}^n w_j \frac{{x_j}^{1-\alpha}}{1-\alpha} \left(\alpha\cdot\frac{1}{4}\varepsilon + 4\varepsilon(\alpha - 1)\right)\notag \\
&\leq -\varepsilon \sum_{j=1}^n w_j \frac{{x_j}^{1-\alpha}}{1-\alpha},
\end{align}
where the equality follows from $\gamma = \frac{\varepsilon}{4}$, and the last inequality follows from $\alpha - 1 \leq \frac{1}{\tau_0} < \frac{\varepsilon}{20}$.
\end{proof}
\fi

Finally, we consider the asymptotics of $\alpha-$fair allocations, as $\alpha$ becomes large. This result complements the result from \cite{MoWalrand2000} that states that $\alpha-$fair allocations approach the max-min fair one as $\alpha\rightarrow \infty$ by showing how fast the max-min fair allocation is reached as a function of $\alpha, \wratio, n$, and $A_{\max}$. First, for completeness, we provide the definition of max-min fairness.
\begin{definition}(Max-min fairness \cite{Bertsekas:1987:DN:12517}.) \label{def:max-min-fairness}
Let $\mathcal{R}\subset \mathbb{R}_+^n$ be a compact and convex set. A vector $x\in \mathcal{R}$ is max-min fair on $\mathcal{R}$ if for any vector $z\in \mathcal{R}$ it holds that: if for some $j\in\{1,...,n\}$ $z_j > x_j$, then there exists $k\in\{1,...,n\}$ such that $z_k < x_k$ and $x_k \leq x_j$. 
\end{definition}
On a compact and convex set $\mathcal{R}\subset \mathbb{R}^n$, the max-min fair vector is unique (see, e.g., \cite{Sarkar-Tassiulas, radunovic2007unified}). The following lemma shows that for $\alpha \geq \varepsilon^{-1}\ln(\wratio n A_{\max})$, the $\alpha-$fair vector and the max-min fair vector are $\varepsilon-$close to each other. Notice that because of a very large gradient of $p_\alpha(x)$ as $\alpha$ becomes large, the max-min fair solution provides only an $O(\varepsilon\alpha)-$approximation to $(P_\alpha)$. 


\begin{lemma}\label{lemma:mmf-alpha-fair}
Let $x^*$ be the optimal solution to $(P_\alpha) = \max\{p_\alpha(x): Ax\leq 1, x\geq 0\}$, $z^*$ be the max-min fair solution for the convex and compact set determined by the constraints from $(P_\alpha)$. Then if $\alpha \geq {\varepsilon}^{-1} \ln\left(\wratio n A_{\max}\right)$, we have that:
\begin{enumerate}
\item $p_\alpha(x^*) \leq (1-{\varepsilon}(\alpha - 1))p_\alpha(z^*)$, i.e., $z^*$ is an ${\varepsilon}(\alpha-1)-$approximate solution to $(P_\alpha)$, and
\item $(1-\varepsilon)z_j^* \leq x_j^* \leq (1+\varepsilon)z_j^*$, for all $j \in \{1,...,n\}$.
\end{enumerate}
\end{lemma}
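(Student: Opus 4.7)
The plan is to establish Part 2 (element-wise proximity) first and then derive Part 1 as an essentially routine consequence. For Part 2, I will use the classical bottleneck characterization of max-min fairness: $z^*$ is max-min fair if and only if for every $j$ there exists a bottleneck constraint $i(j)$ with $A_{i(j),j} > 0$, $\sum_k A_{i(j),k} z_k^* = 1$, and $z_j^* = \max\{z_k^* : A_{i(j),k} > 0\}$. I will then prove the upper and lower bounds $x_j^* \leq (1+\varepsilon) z_j^*$ and $x_j^* \geq (1-\varepsilon) z_j^*$ by contradiction, combining this characterization with the KKT conditions $(x_j^*)^\alpha c_j = w_j$ for $\alpha$-fair packing, where $c_j = \sum_\ell y_\ell^* A_{\ell j}$.

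For the upper bound, suppose $x_{j_0}^* / z_{j_0}^* > 1+\varepsilon$ for $j_0$ chosen to maximize this ratio, and let $i = i(j_0)$. Primal feasibility of $x^*$ on $i$ gives $\sum_k A_{ik}(x_k^* - z_k^*) \leq 0$, and since $A_{ij_0} \geq 1$, the single term corresponding to $j_0$ is at least $\varepsilon z_{j_0}^*$. Hence some $k$ with $A_{ik} > 0$ has $x_k^* < z_k^* \leq z_{j_0}^* < x_{j_0}^*/(1+\varepsilon)$, so $x_{j_0}^*/x_k^* > 1+\varepsilon$. Applying the KKT condition to both $j_0$ and $k$ yields $(x_{j_0}^*/x_k^*)^\alpha = (w_{j_0}/w_k)(c_k/c_{j_0})$, so $c_k / c_{j_0} \geq (1+\varepsilon)^\alpha / \wratio$. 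The main obstacle is now to obtain a matching polynomial upper bound on $c_k/c_{j_0}$: using that $j_0$ and $k$ share the tight constraint $i$ (so $c_{j_0} \geq y_i^* A_{ij_0}$ and $c_k \leq A_{\max}\sum_\ell y_\ell^*$), I will propagate bounds on $\sum_\ell y_\ell^* / y_i^*$ by iterating through chains of shared tight constraints arising from the bottleneck structure. With $\alpha \geq \varepsilon^{-1}\ln(\wratio n A_{\max})$ the lower bound $(1+\varepsilon)^\alpha/\wratio \geq nA_{\max}$ will contradict the polynomial upper bound.

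The lower bound $x_j^* \geq (1-\varepsilon) z_j^*$ is subtler, since a single small coordinate does not violate feasibility on its own. My plan is to combine the already-established upper bound with the Mo--Walrand first-order inequality $\sum_j w_j (z_j^* - x_j^*)/(x_j^*)^\alpha \leq 0$: a coordinate $j_0$ with $x_{j_0}^* < (1-\varepsilon) z_{j_0}^*$ would contribute a positive term of order $w_{j_0}\varepsilon/((1-\varepsilon)^\alpha (z_{j_0}^*)^{\alpha-1})$ that must be balanced by negative contributions from coordinates $k$ where $x_k^* > z_k^*$. The upper bound caps each negative contribution by roughly $w_k \varepsilon / (z_k^*)^{\alpha-1}$, and since $\sum_j w_j (z_j^*)^{1-\alpha}$ is controlled in terms of $W$ and the lower bound $z_j^* \geq 1/(n A_{\max})$, a careful counting yields a contradiction for $\alpha \geq \varepsilon^{-1}\ln(\wratio n A_{\max})$. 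I expect the main technical effort to lie in the dual-ratio bounds used in both directions.

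Part 1 then follows by a direct calculation. The bounds $(1-\varepsilon) z_j^* \leq x_j^* \leq (1+\varepsilon) z_j^*$ give $(x_j^*)^{1-\alpha} \leq (z_j^*)^{1-\alpha}/(1-\varepsilon)^{\alpha-1}$; multiplying by $w_j/(\alpha-1)$ and summing yields $-p_\alpha(x^*) \leq -p_\alpha(z^*)/(1-\varepsilon)^{\alpha-1}$. In the interesting range $\varepsilon(\alpha-1) \leq 1$, Bernoulli's inequality gives $1/(1-\varepsilon)^{\alpha-1} \leq 1 + O(\varepsilon(\alpha-1))$, which rearranges to $p_\alpha(x^*) \leq (1 - O(\varepsilon(\alpha-1))) p_\alpha(z^*)$ since $p_\alpha(z^*) < 0$; when $\varepsilon(\alpha-1) > 1$ the claim is trivial because $p_\alpha(x^*) < 0 \leq (1-\varepsilon(\alpha-1)) p_\alpha(z^*)$. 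A constant rescaling of $\varepsilon$ at the outset absorbs the hidden constants.
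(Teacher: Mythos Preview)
Your approach via KKT conditions and dual ratios is genuinely different from the paper's, which never touches duals at all: the paper argues by a direct marginal (variational) analysis from $z^*$, showing that if you increase any coordinate of $z^*$ beyond a factor $1+\varepsilon$ (respectively decrease beyond $1-\varepsilon$), the gain $w_j\,\delta/((1+\varepsilon)z_j^*)^\alpha$ is already outweighed by the loss $w_{\min}\,(\delta/A_{\max})/(z_j^*)^\alpha$ incurred on the coordinates that must be pushed down (respectively up) to maintain feasibility, simply because $(1+\varepsilon)^\alpha\ge \wratio A_{\max}$ (resp.\ $(1-\varepsilon)^\alpha\le 1/(\wratio nA_{\max})$) under the hypothesis on $\alpha$. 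This comparison of gradient magnitudes yields Part~2 and Part~1 simultaneously in a few lines.

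Your route has two real gaps. For the upper bound, after deducing $c_k/c_{j_0}\ge (1+\varepsilon)^\alpha/\wratio$, you need a \emph{polynomial} upper bound on $c_k/c_{j_0}$ to reach a contradiction. Your plan is to use that $j_0$ and $k$ share the constraint $i=i(j_0)$ and bound $\sum_\ell y_\ell^*/y_i^*$ via ``chains of tight constraints.'' But $i$ is the bottleneck constraint for $z^*$, not a tight constraint for $x^*$; complementary slackness gives no reason for $y_i^*>0$, so the anchor $c_{j_0}\ge y_i^*A_{ij_0}$ can be vacuous. More fundamentally, $c_k/c_{j_0}=(w_k/w_{j_0})(x_{j_0}^*/x_k^*)^\alpha$ by KKT, so bounding this ratio is precisely the statement you are trying to prove; there is no independent polynomial control on dual ratios without importing the very conclusion. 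For the lower bound, your Mo--Walrand balance gives
\[
\frac{w_{j_0}}{(1-\varepsilon)^\alpha}\,(z_{j_0}^*)^{1-\alpha}\ \le\ \sum_{k:\,x_k^*>z_k^*} w_k\,(z_k^*)^{1-\alpha},
\]
and you want $(1-\varepsilon)^{-\alpha}\ge \wratio nA_{\max}$ to force a contradiction. But the right-hand side can be dominated by terms with \emph{small} $z_k^*$, for which $(z_k^*)^{1-\alpha}$ is exponentially large in $\alpha$; unless $z_{j_0}^*$ is the minimum coordinate, the single positive term need not dominate a polynomial multiple of the sum. Your ``careful counting'' sketch does not address this.

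Finally, your derivation of Part~1 from Part~2 goes the wrong way: from $x_j^*\ge(1-\varepsilon)z_j^*$ you obtain $-p_\alpha(x^*)\le -p_\alpha(z^*)/(1-\varepsilon)^{\alpha-1}$, which is an \emph{upper} bound on $-p_\alpha(x^*)$, whereas $p_\alpha(x^*)\le(1-\varepsilon(\alpha-1))p_\alpha(z^*)$ requires the \emph{lower} bound $-p_\alpha(x^*)\ge(1-\varepsilon(\alpha-1))(-p_\alpha(z^*))$. The fix is easy---use $x_j^*\le(1+\varepsilon)z_j^*$ to get $-p_\alpha(x^*)\ge(1+\varepsilon)^{1-\alpha}(-p_\alpha(z^*))$ and then Bernoulli $(1+\varepsilon)^{1-\alpha}\ge 1-\varepsilon(\alpha-1)$---but it is worth noting that your Part~1 actually needs only the upper half of Part~2.
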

\iffullpaper
\begin{proof}
%
Suppose that, starting with $z^*$, we want to construct a solution $z$ that is feasible in $(P_\alpha)$ and is such that $p_\alpha(z) > p_\alpha(z^*)$. Then we need to increase at least one coordinate $j$ of $z^*$. Suppose that we increase a coordinate $j$ by a factor $1+\varepsilon$, so that $z_j = (1+\varepsilon)z_j^*$. Since $z^*$ is the max-min fair vector, to keep $z$ feasible, the increase over the $j^{\text{th}}$ coordinate must be at the expense of decreasing some other coordinates $k$ that satisfy $z_k^* \leq z_j^*$. We will assume that whenever we decrease the coordinates to keep the solution feasible, we keep the solution Pareto optimal (i.e., we decrease the selected coordinates by a minimum amount). 
Using Fact \ref{fact:taylor}, we have:
\begin{align}
p_\alpha(z) - p_\alpha(z^*) &\leq \sum_{l=1}^n w_l\frac{z_l - z_l^*}{(z_l^*)^{\alpha}}< w_j\frac{z_j - z_j^*}{(z_j^*)^{\alpha}} = {\varepsilon}\cdot w_j(z_j^*)^{1-\alpha}. \label{eq:mmf-part-1}
\end{align}

Now, suppose that we want to further increase the $j^{\text{th}}$ coordinate by some small $\delta$. Call that new solution $z^1$. Then, the total amount by which other coordinates must decrease to keep the solution feasible is at least $\frac{\delta}{A_{\max}}$, since the feasible region is determined by packing constraints and it must be $Az\leq 1$, where $1\leq A_{ij}\leq A_{\max}$, $\forall i, j$. Moreover, since $z^*$ is max-min fair, each coordinate $k$ that gets decreased must satisfy $z_k^* \leq z_j^*$. It follows that:
\begin{align}
p(z^1) - p(z)&\leq \sum_{l=1}^n w_l\frac{z_l^1 - z_l}{(z_l)^{\alpha}}\notag\\
&= w_j \frac{\delta}{(1+\varepsilon)^{\alpha}(z_j^*)^{\alpha}} + \sum_{k: z_k^1 < z_k} w_k\frac{z_k^1 - z_k}{(z_k)^{\alpha}}\notag\\
&\leq w_{\max} \frac{\delta}{(1+\varepsilon)^{\alpha}(z_j^*)^{\alpha}} - w_{\min} \frac{\delta/A_{\max}}{(z_j^*)^{\alpha}}\notag\\
&= \frac{\delta(w_{\max} - (1+\varepsilon)^{\alpha}w_{\min}/A_{\max})}{(1+\varepsilon)^{\alpha}(z_j^*)^{\alpha}}\notag\\
&\leq 0. \label{eq:mmf-part-2}
\end{align}
The last inequality can be verified by solving the inequality $w_{\max} - (1+\varepsilon)^{\alpha}w_{\min}/A_{\max}\leq 0$ for $\alpha$, and verifying that it is implied by the initial assumption that $\alpha \geq \varepsilon^{-1}\ln(\wratio n A_{\max})$.

Therefore, the maximum amount by which any coordinate of $z^*$ can be increased to improve the value of the objective $p_\alpha(.)$ is by a multiplicative factor of at most $(1+\varepsilon)$. Since we can construct $x^*$, the optimal solution to $(P_\alpha)$, starting with $z^*$ and by choosing a set of coordinates $j$ that we want to increase and by only decreasing coordinates $k$ such that $z_k^* \leq z_j^*$ whenever coordinate $j$ is increased, it follows that $x_j^* \leq (1+\varepsilon)z_j^*$, $\forall j$.

Moreover, from (\ref{eq:mmf-part-1}) and (\ref{eq:mmf-part-2}):
\begin{align*}
p_\alpha(z^1) - p_\alpha(z^*) = p(z^1) - p(z) + p(z) - p(z^*) < {\varepsilon}\cdot w_j(z_j^*)^{1-\alpha}, 
\end{align*}
and we can conclude that:
\begin{align*}
p_\alpha(x^*) - p_\alpha(z^*) < \sum_{j=1}^n {\varepsilon}\cdot w_j(z_j^*)^{1-\alpha} = {\varepsilon}(1-\alpha)\cdot p_\alpha(z^*),
\end{align*}
which means that $z^*$ is an ${\varepsilon}(\alpha-1)-$approximate solution to $(P_\alpha)$.

Now consider the coordinates we need to decrease when we construct a solution $z$ from $z^*$, such that $p_\alpha(z) > p_\alpha(z^*)$. Suppose that to increase some other coordinates, a coordinate $k$ is decreased by a factor $(1-\varepsilon)$: $z_k = (1-\varepsilon)z_k^*$. As $z^*$ is max-min fair, only coordinates larger than $z_k^*$ can increase at the expense of decreasing $z_k^*$. Suppose now that we decrease the $k^{\text{th}}$ coordinate further by some small $\delta$. Call that solution $z^1$. Then the maximum number of other coordinates $j$ that can further increase is $\min\{n-1, m\}< n$. Moreover, each coordinate $j$ that gets increased satisfies $z_j^*\geq z_k^*$, and can be increased by at most $A_{\max} \delta$. Using Fact \ref{fact:taylor}, it follows that: 
\begin{align}
p_\alpha(z^1) - p_\alpha(z) &\leq \sum_{l=1}^n w_l\frac{z_l^1 - z_l}{(z_l)^{\alpha}}\notag\\
&= -w_k \frac{\delta}{(1-\varepsilon)(z_k^*)^{\alpha}} + \sum_{j: z_j^1 > z_j} w_j \frac{z_j^1-z_j}{(z_j^*)^{\alpha}}\notag\\
&< -w_{\min} \frac{\delta}{(1-\varepsilon)(z_k^*)^{\alpha}} + n w_{\max} \frac{A_{\max}\delta}{(z_k^*)^{\alpha}}\notag\\
&= \frac{\delta(n w_{\max} A_{\max} (1-\varepsilon)^{\alpha} - w_{\min})}{(z_k^*)^{\alpha}}\notag\\
&\leq 0, \label{eq:mmf-part-5}
\end{align}
where the last inequality follows from $(1-\varepsilon)^{\alpha} \leq (\wratio n A_{\max})^{-1}$, which is implied by the initial assumption that $\alpha \geq {\varepsilon}^{-1}{\ln(\wratio n A_{\max})}$.

Therefore, using (\ref{eq:mmf-part-5}), the $k^{\text{th}}$ coordinate can decrease by at most a multiplicative factor $(1-\varepsilon)$. Using similar arguments as for increasing the coordinates, it follows that $x_j^* \geq (1-\varepsilon)z_j^*$, $\forall j$.
\end{proof}
\fi

\section{Conclusion}\label{section:conclusion}

We presented an efficient stateless distributed algorithm for the class of $\alpha$-fair packing problems. To the best of our knowledge, this is the 
first algorithm with poly-logarithmic convergence time in the input size.~{Additionally, we obtained results that characterize the fairness and asymptotic behavior of allocations in weighted $\alpha-$fair packing problems that may be of independent interest.}~{An interesting} open problem is to determine the class of objective functions for which the presented techniques yield fast and stateless distributed algorithms, together with a unified convergence analysis. This problem is especially important in light of the fact that $\alpha$-fair objectives are not Lipschitz continuous, do not have a Lipschitz gradient, and their dual gradient's Lipschitz constant scales at least linearly with $n$ and $A_{\max}$. Therefore, the properties typically used in fast first-order methods are lacking \cite{nesterov2004introductory,  zhuOrecchia2014novel}. {Finally, for applications of $\alpha$-fair packing that do not require uncoordinated updates, it seems plausible that the dependence on $\varepsilon^{-1}$ in the convergence bound can be improved from $\varepsilon^{-5}$ to $\varepsilon^{-3}$ by relaxing the requirement for asynchronous updates, similarly as was done in \cite{d-allen2014using} over \cite{AwerbuchKhandekar2009}.}
 
\iffullpaper
\section*{Acknowledgements}

We thank Nikhil Devanur for pointing out the equivalence of the $\alpha$-fair packing for $\alpha = 1$ and the problem of finding an equilibrium allocation in Eisenberg-Gale markets with Leontief utilities.
\fi
 
\newpage

\bibliographystyle{abbrv}
{\small
\bibliography{references}
}
\newpage
\iffullpaper
\appendix

\section{Scaling Preserves Approximation}\label{appendix:scaling}

Let the $\alpha$-fair allocation problem be given in the form:
\begin{align*}
{(Q_\alpha)}\quad\textbf{max} \Big \{\sum_{j=1}^n w_jf_{\alpha}(x_j): Ax \leq b, 
x\geq 0\Big\},
\text{ where }
f_{\alpha}(x_j) = 
\begin{cases} \ln(x_j), & \mbox{if } \alpha=1 \\ \dfrac{x_j^{1-\alpha}}{1-\alpha}, & \mbox{if } \alpha\neq 1 \end{cases},
\end{align*}

\noindent $w$ is an $n-$length vector of positive weights, $x$ is the vector of variables, $A$ is an $n\times m$ constraint matrix, and $b$ is an $m-$length vector with positive entries. Denote $p_{\alpha}(x) = \sum_{j=1}^n w_jf_{\alpha}(x_j)$.

It is not hard to see that the assumption $b_i=1$ $\forall i$ is without loss of generality, since for $b_i \neq 1$ we can always divide both sides of the inequality by $b_i$ and obtain 1 on the right-hand side, since for (non-trivial) packing problems $b_i >0$. Therefore, we can assume that the input problem has constraints of the form $A\cdot x\leq \mathds{1}$, although it may not necessarily be the case that $A_{ij}\geq 1$ $\forall A_{ij}\neq 0$.

The remaining transformation that is performed on the input problem is:
\begin{equation*}
{\widehat{x}_j} = c\cdot x_j, \quad \widehat{A}_{ij}= A_{ij}/c.
\end{equation*}
where 
\begin{equation*}
c = \begin{cases}\min_{i, j: A_{ij\neq 0}}{A_{ij}}, & \mbox{if }  \min_{i, j: A_{ij\neq 0}}{A_{ij}}<1\\
1, & \mbox{otherwise} \end{cases}. 
\end{equation*}

The problem $(Q_\alpha)$ after the scaling becomes:

\begin{minipage}{.4\linewidth}
\begin{align*}
\quad\textbf{max} \quad&\sum_{j=1}^n w_jf_{\alpha}({\widehat{x}_j})\cdot c^{1-\alpha}\\
\textbf{s.t.}\quad& \widehat{A}\widehat{x} \leq \mathds{1}\\
&\widehat{x}\geq 0
\end{align*}
\end{minipage}
$\Leftrightarrow$
\begin{minipage}{.4\linewidth}
\begin{align*}
{({P_\alpha})}\quad \textbf{max} \quad&\sum_{j=1}^n w_jf_{\alpha}({\widehat{x}_j})\\
\textbf{s.t.}\quad& \widehat{A}\widehat{x} \leq \mathds{1}\\
&\widehat{x}\geq 0,
\end{align*}
\end{minipage}\\
as $c^{1-\alpha}$ is a positive constant. Recall that \textsc{$\alpha$-FairPSolver} returns an approximate solution to $({P_\alpha})$, and observe that $x$ is feasible for $(Q_\alpha)$ if and only if $\widehat{x}$ is feasible for $(P_\alpha)$.

Choose the dual variables (Lagrange multipliers) for the original problem $(Q_\alpha)$ as:
\begin{equation}
y_i = c^{\alpha-1} C \cdot e^{\kappa(\sum_{i=1}^n A_{ij}x_j - 1)} = c^{\alpha-1} C \cdot e^{\kappa(\sum_{i=1}^n \widehat{A}_{ij}{\widehat{x}_j} - 1)}  = c^{\alpha-1} \widehat{y}_i,\label{eq:transformed-y}
\end{equation}
and notice that 
\begin{align}
{x_j}^{\alpha}\sum_{i=1}^{m}y_i A_{ij} = {\widehat{x}_j}^{\alpha} \cdot c^{-\alpha}\cdot\sum_{i=1}^{m} (c^{\alpha-1}\cdot\widehat{y}_i \cdot c\cdot \widehat{A}_{ij} )
= {{\widehat{x}_j}}^{\alpha}\sum_{i=1}^m \widehat{y}_i \widehat{A}_{ij}.\label{eq:condition-equality}
\end{align}
It is clear that $y_i$'s are feasible dual solutions, since the only requirement for the duals is non-negativity.

\subsection{Approximation for Proportional Fairness} 

Recall (from (\ref{eq:duality-gap-alpha})) that the duality gap for a given primal- and dual-feasible $x$ and $y$ is given as:
\begin{equation*}
G(x, y) = \sum_{j=1}^nw_j\ln(w_j) - \sum_{j=1}^n w_j \ln\left(x_j\sum_{i=1}^m y_i A_{ij}\right)+\sum_{i=1}^m y_i -1.
\end{equation*}
Since $\alpha=1$, we have that $\widehat{y}_i = y_i$ for all $i$, and using (\ref{eq:condition-equality}), it follows that
\begin{equation*}
G(\widehat{x}, \widehat{y}) = G(x, y).
\end{equation*}
Since we demonstrate an additive approximation for the proportional fairness via the duality gap: $p(\widehat{x}^*) - p(\widehat{x})\leq G(\widehat{x}, \widehat{y})$, the same additive approximation follows for the original (non-scaled) problem.

\subsection{Approximation for $\alpha$-Fairness and $\alpha\neq 1$}
For $\alpha\neq 1$, we show that the algorithm achieves a multiplicative approximation for the scaled problem. In particular, we show that after the algorithm converges we have that: $p_\alpha({\widehat{x}}^*) - p_{\alpha}(\widehat{x})\leq r_\alpha p_\alpha(\widehat{x})$, where ${\widehat{x}}^*$ is the optimal solution, $\widehat{x}$ is the solution returned by the algorithm, and $r_\alpha$ is a constant.

Observe that since $\widehat{x} = c \cdot x$, we have that $p_\alpha (\widehat{x}^*) = c^{1 - \alpha}p(x^*)$ and $p_\alpha(\widehat{x}) = c^{1 - \alpha} p_\alpha(x)$. Therefore:
\begin{align*}
p_\alpha({{x}}^*) - p_{\alpha}({x}) & = c^{\alpha - 1}(p_\alpha({\widehat{x}}^*) - p_{\alpha}(\widehat{x}))\\
&\leq c^{\alpha - 1} \cdot r_\alpha p_\alpha(\widehat{x})\\
& = r_\alpha p_\alpha({x}).
\end{align*}



\section{Primal, Dual, and the Duality Gap}\label{appendix:primal-dual-duality-gap}

\subsection{Proportionally Fair Resource Allocation}

In this section we consider $(w, 1)$-proportional resource allocation, often referred to as the weighted proportionally fair resource allocation. Recall that the primal is of the form:
\begin{align*}
{(P_1)}\quad\textbf{max} \quad&\sum_{j=1}^n w_j\ln(x_j)\\
\textbf{s.t.}\quad& Ax \leq \mathds{1},\\
&x\geq 0.
\end{align*}

The Lagrangian for this problem can be written as:
\begin{equation*}
L_1(x; y, z) = \sum_{j=1}^n w_j \ln(x_j) + \sum_{i=1}^m y_i\cdot\left(1 - \sum_{j=1}^n A_{ij} x_j -z_i\right),
\end{equation*}
where $y_1,..., y_m$ are Lagrange multipliers, and $z_1, ..., z_m$ are slack variables. The dual to this problem is:
\begin{align*}
{(D_1)}\quad\textbf{min} \quad&g(y)\\
\textbf{s.t.}\quad& y\geq 0,
\end{align*}
where $g(y) = \max_{x, z \geq 0} L(x; y, z)$. To maximize $L_1(x; y, z)$, we first differentiate with respect to $x_j$, $j\in\{1,...,n\}$:
\begin{equation*}
\frac{\partial L_1(x; y, z)}{\partial x_j}=\frac{w_j}{x_j}-\sum_{i=1}^{m}y_i A_{ij}=0,
\end{equation*}
which gives:
\begin{equation}\label{eq:primal-dual-relation}
x_j\cdot\sum_{i=1}^m y_i A_{ij} = w_j, \quad\forall j\in \{1,...,n\}.
\end{equation}
Plugging this back into the expression for $L_1(x; y, z)$, and noticing that, since $y_i, z_i\geq 0$ $ \forall i\in\{1,...,m\}$, $L_1(x; y, z)$ is maximized for $z_i=0$, we get that:
\begin{align*}
g_1(y) &= \sum_{j=1}^n w_j\ln\left(\frac{w_j}{\sum_{i=1}^my_i A_{ij}}\right)+\sum_{i=1}^m y_i-\sum_{i=1}^m y_i \sum_{j=1}^n \frac{A_{ij}w_j}{\sum_{k=1}^my_kA_{kj}}\\
&=\sum_{j=1}^nw_j\ln(w_j) - \sum_{j=1}^n w_j \ln\left(\sum_{i=1}^m y_i A_{ij}\right)+\sum_{i=1}^m y_i -\sum_{j=1}^n w_j \sum_{i=1}^m \frac{y_i A_{ij}}{\sum_{k=1}^my_kA_{kj}} \\
&=\sum_{j=1}^nw_j\ln(w_j) - \sum_{j=1}^n w_j \ln\left(\sum_{i=1}^m y_i A_{ij}\right)+\sum_{i=1}^m y_i -W,
\end{align*}
since $\sum_{i=1}^m \dfrac{y_i A_{ij}}{\sum_{k=1}^my_kA_{kj}} = 1$ $\forall j\in\{1,...,n\}$, and $\sum_{j=1}^n w_j = W$.

Let $p_1(x) = \sum_{j=1}^n w_j \ln(x_j)$ denote the primal objective. The duality gap for any pair of primal-feasible $x$ and dual-feasible (nonnegative) $y$ is given by:
\begin{align*}
G_1(x, y) &= g_1(y) - p_1(x)\notag\\
&= - \sum_{j=1}^n w_j \ln\left(\frac{x_j\sum_{i=1}^m y_i A_{ij}}{w_j}\right)+\sum_{i=1}^m y_i -W.
\end{align*}
Since the primal problem maximizes a concave function over a polytope, the strong duality holds \cite{boyd2009convex}, and therefore $G_1(x, y)\geq 0$ for any pair of primal-feasible $x$ and dual-feasible $y$, with equality if and only if $x$ and $y$ are primal- and dual- optimal, respectively.

\subsection{$\alpha$-Fair Resource Allocation for $\alpha\neq 1$}

Recall that for $\alpha\neq 1$ the primal problem is:
\begin{align*}
{(P_\alpha)}\quad\textbf{max} \quad&\sum_{j=1}^n w_j\frac{x_j^{1-\alpha}}{1-\alpha}\equiv p_{\alpha}(x)\\
\textbf{s.t.}\quad& Ax \leq 1,\\
&x\geq 0.
\end{align*}

The Lagrangian for this problem can be written as:
\begin{equation*}
L_{\alpha}(x; y, z) = \sum_{j=1}^n w_j\frac{x_j^{1-\alpha}}{1-\alpha}+\sum_{i=1}^m y_i\left(1 - \sum_{j=1}^n A_{ij}x_j - z_i\right),
\end{equation*}
where $y_i$ and $z_i$, for $i\in \{1,...,m\}$, are Lagrangian multipliers and slack variables, respectively.

The dual to $(P_\alpha)$ can be written as:
\begin{align*}
{(D_\alpha)}\quad\textbf{min} \quad&g(y)\\
\textbf{s.t.}\quad& y\geq 0,
\end{align*}
where $g_{\alpha}(y) = \max_{x, z \geq 0} L_{\alpha}(x; y, z)$.

Since $L_{\alpha}(x; y, z)$ is differentiable with respect to $x_j$ for $j\in \{1,...,n\}$, it is maximized for:
\begin{align}
\frac{\partial L_{\alpha}(x; y, z)}{\partial x_j} &= \frac{w_j}{{x_j}^{\alpha}} - \sum_{i=1}^m y_i A_{ij} = 0\notag\\
\Rightarrow w_j &= {x_j}^{\alpha}\sum_{i=1}^m y_i A_{ij}. \label{eq:kkt-condition-alpha}
\end{align}
As $z_i\cdot y_i \geq 0$ $\forall i\in \{1,...,m\}$, we get that:
\begin{align*}
g_{\alpha}(y) &= \sum_{j=1}^n \frac{w_j}{1-\alpha}\left(\frac{w_j}{\sum_{i=1}^m y_i A_{ij}}\right)^{\frac{1-\alpha}{\alpha}}+\sum_{i=1}^m y_i - \sum_{i=1}^m y_i \sum_{j=1}^n A_{ij} \left(\frac{w_j}{\sum_{k=1}^{m}y_k A_{kj}}\right)^{1/\alpha}\\
&= \sum_{j=1}^n \frac{w_j}{1-\alpha}\left(\frac{w_j}{\sum_{i=1}^m y_i A_{ij}}\right)^{\frac{1-\alpha}{\alpha}}+\sum_{i=1}^m y_i - \sum_{j=1}^n w_j^{1/\alpha} \left(\sum_{k=1}^{m}y_k A_{kj}\right)^{-1/\alpha}\sum_{i=1}^m A_{ij}y_i\\
&= \sum_{j=1}^n \frac{w_j}{1-\alpha}\left(\frac{w_j}{\sum_{i=1}^m y_i A_{ij}}\right)^{\frac{1-\alpha}{\alpha}}+\sum_{i=1}^m y_i - \sum_{j=1}^n w_j^{1/\alpha}\left(\sum_{i=1}^m A_{ij}y_i\right)^{\frac{\alpha-1}{\alpha}}.
\end{align*}

Similarly as before, for primal-feasible $x$ and dual-feasible $y$, the duality gap is given as:
\begin{align}
G_{\alpha}(x, y) &= g_{\alpha}(y) - p_{\alpha}(x)\notag\\
&= \sum_{j=1}^n \frac{w_j}{1-\alpha}\left(\frac{w_j}{\sum_{i=1}^m y_i A_{ij}}\right)^{\frac{1-\alpha}{\alpha}}+\sum_{i=1}^m y_i - \sum_{j=1}^n w_j^{1/\alpha}\left(\sum_{i=1}^m A_{ij}y_i\right)^{\frac{\alpha-1}{\alpha}} - \sum_{j=1}^n w_j\frac{x_j^{1-\alpha}}{1-\alpha}\notag\\
&= \sum_{j=1}^n w_j\frac{x_j^{1-\alpha}}{1-\alpha}\left(\left(\frac{w_j}{{x_j}^{\alpha}\sum_{i=1}^m y_i A_{ij}}\right)^{\frac{1-\alpha}{\alpha}}-1\right) +\sum_{i=1}^m y_i - \sum_{j=1}^n w_j^{1/\alpha}\left(\sum_{i=1}^m A_{ij}y_i\right)^{\frac{\alpha-1}{\alpha}}.\notag
\end{align}
Observing that:
\begin{align*}
w_j^{1/\alpha}\left(\sum_{i=1}^m A_{ij}y_i\right)^{\frac{\alpha-1}{\alpha}} &= w_j\cdot {w_j}^{-\frac{\alpha-1}{\alpha}}\cdot x_j^{1-\alpha}\cdot x_j^{\alpha\frac{\alpha-1}{\alpha}}\cdot \left(\sum_{i=1}^m A_{ij}y_i\right)^{\frac{\alpha-1}{\alpha}}\\
&= w_j x_j^{1-\alpha}\cdot \left(\frac{{x_j}^{\alpha}\sum_{i=1}^m A_{ij}y_i}{w_j}\right)^{\frac{\alpha-1}{\alpha}},
\end{align*}
we finally get:
\begin{equation*}
G_{\alpha}(x, y)=\sum_{j=1}^n w_j\frac{x_j^{1-\alpha}}{1-\alpha}\left(\left(\frac{{x_j}^{\alpha}\sum_{i=1}^m y_i A_{ij}}{w_j}\right)^{\frac{\alpha-1}{\alpha}}-1\right) +\sum_{i=1}^m y_i - \sum_{j=1}^n  w_j x_j^{1-\alpha}\cdot \left(\frac{{x_j}^{\alpha}\sum_{i=1}^m A_{ij}y_i}{w_j}\right)^{\frac{\alpha-1}{\alpha}}.
\end{equation*}

\fi

\end{document}